\newcommand{\ignore}[1]{}
\newcommand{\myinput}[1]{\ifthenelse{\boolean{withimages}}{\input{#1}}{}}
\newcommand{\reflemma}[1]{Lemma~\ref{l:#1}}
\newcommand{\reflemmap}[2]{Lemma~\ref{l:#1}.\ref{p:#1-#2}}
\newcommand{\refrmk}[1]{Remark~\ref{rmk:#1}} 
\newcommand{\ie}{\textit{i.e.}\xspace}
\newcommand{\ih}{\textit{i.h.}\xspace}
\newcommand{\resp}{\textnormal{resp.}\xspace}
\newcommand{\ES}{\text{ES}\xspace}
\newcommand{\full}{\text{full}\xspace}
\newcommand{\Full}{\text{Full}\xspace}
\renewcommand{\full}{\text{strong}\xspace}
\renewcommand{\Full}{\text{Strong}\xspace}
\newcommand{\defeq}{\coloneqq} 
\newcommand{\eqdef}{\eqqcolon} 
\newcommand{\grameq}{\Coloneqq} 
\newcommand{\set}[1]{\{#1\}}
\newcommand{\nat}{\mathbb{N}}
\newcommand{\size}[1]{|#1|}
\newcommand{\mulsym}{\msym} 
\renewcommand{\l}{\lambda}
\newcommand{\isub}[2]{\{#1/#2\}}
\newcommand{\replace}[2]{#1{\shortleftarrow}#2}
\renewcommand{\isub}[2]{\{\replace{#1}{#2}\}}
\newcommand{\esub}[2]{[\replace{#1}{#2}]}
\renewcommand{\esub}[2]{[#1{\shortleftarrow}#2]}
\newcommand{\fv}[1]{{\sf fv}(#1)}
\newcommand{\rootRew}[1]{\mapsto_{#1}}
\newcommand{\Rew}[1]{\rightarrow_{#1}}
\newcommand{\lRew}[1]{\; \mbox{}_{#1}{\leftarrow}\ }
\newcommand{\lto}{\lRew{}}
\newcommand{\rtom}{\rootRew{\msym}}
\newcommand{\rtoe}{\rootRew{\esym}}
\newcommand{\betaplot}{\beta_v}
\newcommand{\tobvplot}{\Rew{\betaplot}} 
\newcommand{\rtobvplot}{\rootRew{\betaplot}} 
\newcommand{\esym}{{\mathsf e}}
\newcommand{\msym}{\mathsf{m}}
\newcommand{\ssym}{{\mathsf s}}
\newcommand{\wsym}{{\mathsf{o}}} 
\newcommand{\wmsym}{{\wsym\msym}} 
\newcommand{\wesym}{{\wsym\esym}} 
\newcommand{\omsym}{{\wmsym}} 
\newcommand{\oesym}{{\wesym}} 
\newcommand{\vmsym}{\mathsf{shuf}} 
\newcommand{\shuf}{\vmsym} 
\newcommand{\shufeqext}{\shufeqext} 
 \newcommand{\tom}{\Rew{\msym}}
 \newcommand{\toe}{\Rew{\esym}}
\newcommand{\tovsubo}{\Rew{\wsym}}
\newcommand{\tomo}{\Rew{\omsym}}
\newcommand{\toeo}{\Rew{\wsym{\esym}}}
\newcommand{\msolvsym}{\solvredsym\msym}
\newcommand{\esolvsym}{\solvredsym\esym}
\newcommand{\tovsubsolv}{\tosolv}
\newcommand{\tomsolv}{\Rew{\msolvsym}}
\newcommand{\toesolv}{\Rew{\esolvsym}}
\newcommand{\tm}{t}
\newcommand{\tmtwo}{u}
\newcommand{\tmthree}{r}
\newcommand{\tmfour}{q}
\newcommand{\tmfive}{p}
\newcommand{\tmsix}{s}
\newcommand{\tmp}{\tm'}
\newcommand{\tmtwop}{\tmtwo'}
\newcommand{\tmfourp}{\tmfour'}
\newcommand{\tmfivep}{\tmfive'}
\newcommand{\var}{x}
\newcommand{\vartwo}{y}
\newcommand{\varthree}{z}
\newcommand{\val}{v}
\newcommand{\ctxholep}[1]{\langle #1\rangle}
\newcommand{\ctxhole}{\ctxholep{\cdot}}
\newcommand{\ctx}{C}
\newcommand{\ctxp}[1]{\ctx\ctxholep{#1}}
\newcommand{\sctx}{L}
\newcommand{\sctxtwo}{\sctx'}
\newcommand{\sctxp}[1]{\sctx\ctxholep{#1}}
\newcommand{\sctxtwop}[1]{\sctxtwo\!\ctxholep{#1}}
\newcommand{\arbctxp}[1]{\arbctxp{#1}}
\newcommand{\arbctxtwop}[1]{\arbctxtwop{#1}}
\newcommand{\fctx}{F}
\newcommand{\fctxp}[1]{\fctx\ctxholep{#1}}
\newcommand{\deriv}{d}
\newcommand{\sizehole}[2]{|#2|_{#1}}
\newcommand{\sizem}[1]{\sizehole{\msym}{#1}} 
\newcommand{\sizes}[1]{\sizehole{\ssym}{#1}} 
\newcommand{\sizeo}[1]{\sizehole{\osym}{#1}} 
    \newtheorem{theorem}{Theorem}[section]
    \newtheorem{lemma}[theorem]{Lemma}
    \newtheorem{proposition}[theorem]{Proposition}
    \newtheorem{definition}[theorem]{Definition}
\newcommand{\itm}{i}
\newcommand{\itmtwo}{\itm'}
\newcommand{\propersym}{\mathsf{p}}
\newcommand{\pitm}{\itm_\propersym}
\newcommand{\pitmtwo}{\itmtwo_\propersym}
\newcommand{\fire}{f}
\newcommand{\firetwo}{\fire'}
\newcommand{\sfire}{\fire_\fullsym}
\newcommand{\vsub}{\mathsf{vsc}} 
\newcommand{\VSC}{\textnormal{VSC}\xspace}
\newcommand{\tovsub}{\Rew{\vsub}}
\newcommand{\tow}{\Rew{\wsym}} 
\newcommand{\towm}{\Rew{\wmsym}} 
\newcommand{\towe}{\Rew{\wsym{\esym}}} 
\newcommand{\osym}{{\mathsf o}}
\newcommand{\la}[1]{\lambda #1.}
\newcommand{\myproof}[1]{
\ifthenelse{\boolean{omitproofs}}{\begin{IEEEproof} Proof available but omitted for readability. \end{IEEEproof}}{#1}}
\newcommand{\gregoire}{Gr{\'{e}}goire\xspace}
\newcommand{\withproofs}[1]{\ifthenelse{\boolean{withproofs}}{#1}{}}
\newcommand{\withoutproofs}[1]{\ifthenelse{\boolean{withproofs}}{}{#1}}
\newcommand{\NoteProof}[1]{
	\marginnote{{\normalfont\scriptsize{Proof\,p.\,{\pageref{#1}}\,}}}}
\newcommand{\NoteState}[1]{
	\marginnote{{\normalfont\scriptsize{See p.\,{\pageref{#1}}}}}}
\renewcommand{\NoteProof}[1]{\marginnote{{Proof\,p.\,{\pageref{#1}}}}}
\renewcommand{\NoteState}[1]{\marginnote{{See\,p.\,{\pageref{#1}}\\\cref{#1}}}}
\crefname{proposition}{Prop.}{Props.}
\crefname{theorem}{Thm.}{Thms.}
\crefname{lemma}{Lemma}{Lemmas}
\crefname{corollary}{Cor.}{Cors.}
\crefname{section}{Sect.}{Sects.}
\Crefname{section}{Section}{Sections}
\newcommand{\vsubcalc}{\lambda_\vsub}
\newcommand{\shufcalc}{\lambda_\shuf}
\newcommand{\plotsym}{\mathsf{Plot}}
\newcommand{\plotcalc}{\lambda_{\plotsym}}
\newcommand{\doubt}[1]{}
\newcommand{\letexp}{\mathsf{let}}
\newcommand{\lambdamucalc}{\overline\lambda\mu\tilde{\mu}}
\newcommand{\Rule}{\mathsf{r}}
\newcounter{numberone}
\newcounter{numberoneroman}
\newcounter{numberonealph}
\newcommand{\cbn}{CbN\xspace}
\newcommand{\cbv}{CbV\xspace}
\newcommand{\ocbv}{Open \cbv}
\newcommand{\scbv}{Strong \cbv}
\newcommand{\mset}[1]{[#1]}
\newcommand{\emptymset}{\mset{\,}}
\renewcommand{\emptymset}{\zero}
\newcommand{\zero}{\mathbf{0}}
\newcommand{\ltype}{\typefont{A}}
\newcommand{\ltypetwo}{\typefont{B}}
\newcommand{\typctx}{\Gamma}
\newcommand{\typctxtwo}{\Delta}
\newcommand{\typctxthree}{\Sigma}
\newcommand{\tderthree}{\rho}
\newcommand{\hastype}{\!:\!}
\newcommand{\domain}[1]{\mathsf{dom}(#1)}
\newcommand{\mytr}[1]{\underline{#1}}
\newcommand{\auxtr}[1]{\overline{#1}}
\newcommand\Copy[2]{
        \marginpar{\scriptsize \ \ \hyperlink{hl-appendix-#1}{Proof p.\,{\pageref*{appendix-#1}}}}
	\immediate\write\@auxout{\unexpanded{\global\long\@namedef{mytext@#1}{#2}
  }}%
	#2%
}
\newcommand\Paste[1]{%
        \hypertarget{hl-appendix-#1}{}\label{appendix-#1}
	\renewcommand{\inappendix}[1]{}
	\ifcsname mytext@#1\endcsname
	\@nameuse{mytext@#1}%
	\else
	``??''
	\fi
	\renewcommand{\inappendix}[1]{#1}
}
\newcommand{\inappendix}[1]{#1}
\newcommand{\weakctx}{O}
\newcommand{\weakctxtwo}{\weakctx'}
\newcommand{\weakctxp}[1]{\weakctx\ctxholep{#1}}
\newcommand{\weakctxtwop}[1]{\weakctxtwo\ctxholep{#1}}
\newcommand{\openctx}{O}
\newcommand{\subctx}{\sctx}
\newcommand{\solvctx}{S}
\newcommand{\solvctxtwo}{\solvctx'}
\newcommand{\solvctxp}[1]{\solvctx\ctxholep{#1}}
\newcommand{\solvctxtwop}[1]{\solvctxtwo\ctxholep{#1}}
\newcommand{\larrow}[2]{#1 \multimap #2}
\newcommand{\ground}{\typefont{X}}
\newcommand{\Ax}{\mathsf{ax}}
\newcommand{\Es}{\mathsf{es}}
\newcommand{\derive}[2]{#1 \vartriangleright #2}
\newcommand{\concl}[4]{\derive{#1}{#2 \vdash #3 \hastype #4}}
\newcommand{\subctxp}[1]{\subctx\ctxholep{#1}}
\newcommand{\fullsym}{{\mathsf{f}}}
\renewcommand{\fullsym}{{\mathsf{s}}}
\newcommand{\sitm}{\itm_\fullsym}
\newcommand{\sitmtwo}{\itmtwo_\fullsym}
\newcommand\Crumb\mytr
\newcommand\CrumbAux\auxtr
\newcommand{\exder}{%
  \def\exderW[##1]{\triangleright_{##1}\ }%
  \def\exderWO{\triangleright\ }%
  \@ifnextchar[\exderW\exderWO%
  }
\newcommand{\tderiv}{\Phi}
\newcommand{\tderivtwo}{\Psi}
\newcommand{\tderivtwop}{\tderivtwo'} 
\newcommand{\tderivthree}{\Theta}
\newcommand{\typefont}[1]{{\mathsf{#1}}}
\newcommand{\mtype}{\typefont{M}}
\newcommand{\mtypetwo}{\typefont{N}}
\newcommand{\mtypethree}{\typefont{O}}
\newcommand\emptytype{\mathbf{0}}
\newcommand{\type}{\typefont{T}}
\newcommand{\imtype}{\inertop{\mtype}}
\newcommand{\imtypetwo}{\inertop{\mtypetwo}}
\newcommand{\inltype}{\inertop{\ltype}}
\newcommand\mplus{\uplus}
\newcommand{\tyjp}[4]{{#3} \vdash^{#1} #2 \hastype #4}
\newcommand{\namedtyjp}[5]{#1 \vartriangleright \tyjp{#2}{#3}{#4}{#5}}
\newcommand{\dom}[1]{\mathsf{dom}(#1)}
\newcommand{\ruleApp}{@}
\newcommand{\ruleFun}{\lambda}
\newcommand{\ruleES}{\mathsf{es}}
\newcommand{\ruleMany}{\mathsf{many}}
\newcommand{\ruleManyVar}{\ruleMany}
\newcommand{\ruleManyVal}{\ruleMany}
\newcommand{\ruleAp}{@}
\newcommand{\ruleAx}{\mathsf{ax}}
\newcommand{\I}{I}
\newcommand{\J}{J}
\renewcommand{\K}{K}
\newcommand{\iI}{{i \in \I}}
\newcommand{\jJ}{{j \in \J}}
\newcommand{\kK}{{k \in \K}}
\newcommand{\tarrow}[2]{#1 \multimap #2}
\newcommand{\ty}[2]{\tarrow{#1}{#2}}
\newcommand{\mult}[1]{[ #1 ] }
\newcommand{\bigmplus}{\biguplus}
\newcommand{\Id}{{\mathsf{I}}}
\newcommand{\solvredsym}{\mathsf{s}}
\newcommand{\tosolv}{\Rew{\solvredsym}}
\newcommand{\tosolvm}{\Rew{\solvredsym\mulsym}}
\newcommand{\tosolve}{\Rew{\solvredsym\esym}}
\newcommand{\sltype}{\ltype^{\solvsym}}
\newcommand{\smtype}{\mtype^{\solvsym}}
\newcommand{\smtypetwo}{\mtypetwo^{\solvsym}}
\newcommand{\unitarysym}{{\textsc{u}}}
\newcommand{\usltype}{\ltype^{\unitarysym\solvsym}}
\newcommand{\usmtype}{\mtype^{\unitarysym\solvsym}}
\newcommand{\inertsym}{{\textsc{i}}}
\newcommand{\inertop}[1]{#1^{\mathsf{i}}}
\newcommand{\isltype}{\ltype^{\inertsym\solvsym}}
\newcommand{\ismtype}{\mtype^{\inertsym\solvsym}}
\newcommand{\precisesym}{{\textsc{p}}}
\newcommand{\psmtype}{\mtype^{\precisesym\solvsym}}
\newcommand{\psmtypetwo}{\mtypetwo^{\precisesym\solvsym}}
\newcommand{\solvsym}{\textsc{s}}
\newcommand{\solvnf}{\fire_{\solvredsym}}
\newcommand{\solvnftwo}{\solvnf'}
\newcommand{\myparagraph}[1]{\emph{#1.}}
\newcommand{\myparagraphsp}[1]{\medskip

\myparagraph{#1}}
\newcommand\mydots{\hbox to .6em{.\hss.}}
\renewcommand{\Full}{Full\xspace}
\renewcommand{\full}{full\xspace}
\renewcommand{\tmthree}{s}
\author{\IEEEauthorblockN{Beniamino Accattoli,}
\IEEEauthorblockA{Inria \& LIX, \'Ecole Polytechnique \\}
\and
\IEEEauthorblockN{Giulio Guerrieri,}
\IEEEauthorblockA{Huawei Edinburgh Research Centre}
}
\begin{document}
\onecolumn
\title{Call-by-Value Solvability and Multi Types}
\maketitle

\begin{abstract}
This paper provides a characterization of call-by-value solvability using call-by-value multi types. Our work is based on Accattoli and Paolini's characterization of  call-by-value solvable terms as those terminating with respect to the \emph{solving strategy} of the \emph{value substitution calculus}, a refinement of Plotkin's call-by-value $\l$-calculus. Here we show that the solving strategy terminates on a term $\tm$ if and only if $\tm$ is typable in a certain way in the multi type system induced by \emph{Ehrhard's call-by-value relational semantics}. Moreover, we show how to extract from the type system exact bounds on the length of the solving evaluation and on the size of its normal form, adapting \emph{de Carvalho's technique} for call-by-name.

%
\end{abstract}


\section{Introduction}
\label{sect:intro}

Plotkin's call-by-value $\l$-calculus \cite{DBLP:journals/tcs/Plotkin75} is at 
the heart  of 
programming languages such as OCaml and proof assistants such as Coq. In the study of programming 
languages, call-by-value (\cbv) evaluation is usually \emph{weak}, that is, it does not reduce under 
abstractions, and terms are assumed to be \emph{closed}, \ie , without free variables.
These constraints give rise to an elegant framework---we call it \emph{Closed \cbv},  following 
\cite{DBLP:conf/aplas/AccattoliG16}.

It often happens, however, that one needs to go beyond Closed \cbv  by 
considering \emph{Strong \cbv}, which is the extended setting where reduction under abstractions is allowed 
and terms may be open, or the intermediate framework of \emph{Open \cbv}, where evaluation is weak 
but terms are not necessarily closed. 
The need arises, most notably, when describing the implementation model of Coq, as done by \gregoire and Leroy
\cite{DBLP:conf/icfp/GregoireL02}, to realize the essential conversion test for dependent types. Other 
motivations lie in the study of bisimulations by Lassen
\cite{DBLP:conf/lics/Lassen05}, partial evaluation \cite{Jones:1993:PEA:153676}, or various topics 
of a semantical or logical nature, recalled below.\smallskip

\myparagraphsp{Na\"ive Extension of \cbv} In call-by-name (\cbn) turning to open terms  or strong 
evaluation is harmless because \cbn does not impose any special form to the arguments of 
$\beta$-redexes. 
On the contrary, turning to Open  or Strong \cbv is delicate. While some fundamental properties such 
as confluence and standardization hold also in such cases, as showed by Plotkin's himself \cite{DBLP:journals/tcs/Plotkin75}, 
others---typically of a semantical nature---break as soon as one considers open terms. 

The problems of \scbv can be traced back to Plotkin's seminal paper,  
where he points out the 
incompleteness of \cbv with respect to CPS translations, an issue later solved with categorical 
tools by Moggi \cite{DBLP:conf/lics/Moggi89}. This 
led to a number of studies, among others
\cite{DBLP:journals/lisp/SabryF93,DBLP:journals/toplas/SabryW97,DBLP:journals/tcs/MaraistOTW99,
DBLP:conf/icfp/CurienH00,DBLP:journals/logcom/DyckhoffL07,DBLP:conf/tlca/HerbelinZ09}, that 
introduced many proposals of 
improved calculi~for~\cbv.

The relationship with denotational semantics is also problematic, as  first shown by Paolini and Ronchi della Rocca
\cite{DBLP:journals/ita/PaoliniR99,DBLP:conf/ictcs/Paolini01,parametricBook}. There are two subtle 
points: 
\begin{enumerate}
\item \emph{Solvability}: the adaptation of the notion of solvability---roughly, 
a form of 
meaningfulness for terms---to \cbv;
\item \emph{Adequacy}: denotational semantics that are 
\emph{adequate} for Closed \cbv \cite{DBLP:conf/csl/AbramskyM97,DBLP:journals/fuin/EgidiHR92,DBLP:journals/tcs/HondaY99,DBLP:journals/mscs/PravatoRR99} are no longer adequate for the extended settings. Roughly, 
there are terms that are semantically divergent, that is, with 
trivial semantics (or unsolvable), 
while they are normal forms with respect to Plotkin's rules, and 
so are expected to have non-trivial 
semantics (and be solvable). 
\end{enumerate}
 Both semantical issues of Open/Strong \cbv have been addressed in the literature, relying of linear logic tools. Adequacy is studied at length by Accattoli and Guerrieri in \cite{DBLP:conf/aplas/AccattoliG18,DBLP:journals/corr/abs-2104-13979}, thus we here focus on solvability.\smallskip

\paragraph*{Call-by-Value Solvability} About solvability, Accattoli and Paolini \cite{AccattoliPaolini12} 
characterize operationally solvable terms using a calculus isomorphic to the proof-net \cbv representation of $\l$-calculus, the \emph{value substitution calculus} (shortened to \VSC).
Namely, they introduce a \emph{solving evaluation strategy} (called \emph{stratified-weak} in \cite{AccattoliPaolini12}) in the \VSC that terminates if and only if the term is solvable---notably their strategy requires \emph{strong} evaluation. 
This is akin to what happens in \cbn, where solvable term are 
those for which head evaluation terminates. A key point is that  
such a characterization of \cbv solvability is impossible in Plotkin's original formulation of \cbv. 

We would like to mention that the literature contains also a detailed study of \cbv solvability due to Garcia-Perez and Nogueira \cite{DBLP:journals/corr/Garcia-PerezN16} based on a different approach, as they stick to Plotkin's call-by-value $\l$-calculus.


Here we continue Accattoli and Paolini's study of \cbv solvability, providing two contributions:
\begin{enumerate}
\item \emph{Types and solvability}: we characterize \cbv solvability using \emph{multi types}, a variant of intersection types surveyed below. Namely, we prove that a term is \cbv solvable if and only if it is typable in a certain way;
\item \emph{Bounds from types}: we show how to extract, from a certain class of typing derivations, the number of steps taken by the solving strategy on a solvable term, together with the size of the solving normal form. 
\end{enumerate}
Before giving more details about our results, we recall multi types and their use for extracting operational bounds.\smallskip

\paragraph*{Multi Types} Intersection types are one of the standard tools to study $\l$-calculi, mainly used to characterize termination properties---classical references are Coppo and Dezani \cite{DBLP:journals/aml/CoppoD78,DBLP:journals/ndjfl/CoppoD80}, Pottinger \cite{Pottinger80}, and Krivine \cite{Kri}.  In contrast to other type systems, they do not provide a logical interpretation, at least not as smoothly as for simple or polymorphic types---see Ronchi Della Rocca and Roversi's \cite{DBLP:conf/csl/RoccaR01} or Bono, Venneri, and Bettini's \cite{DBLP:journals/tcs/BonoVB08} for details. They are better understood, in fact, as syntactic presentations of denotational semantics: they are invariant under evaluation and type all and only the terminating terms, thus naturally providing an adequate denotational model.

Intersection types are a flexible tool that can
be formulated in various ways. A flavour that emerged 
in the last 10 years is that of \emph{non-idempotent} intersection
types, where the
intersection $A \cap A$ is not equivalent to $A$. They were first considered by Gardner \cite{DBLP:conf/tacs/Gardner94}, and then Kfoury \cite{DBLP:journals/logcom/Kfoury00}, Neergaard and Mairson \cite{DBLP:conf/icfp/NeergaardM04}, and de Carvalho \cite{Carvalho07,deCarvalho18} provided a first wave of works abut them---a survey can be
found in Bucciarelli, Kesner, and Ventura's~\cite{BKV17}. Non-idempotent intersections can be seen as multisets, which is why, to ease the
language, we prefer to call them \emph{multi types} rather than
\emph{non-idempotent intersection types}.

Multi types retain the denotational character of intersection types, and they actually refine it along two correlated lines. First, taking types with multiplicities gives rise to a \emph{quantitative} approach, that reflects 
resource consumption in the evaluation of terms. Second, such a quantitative feature turns out to coincide exactly with the one at work in linear logic. 
Some care is needed here: multi types do not correspond to linear logic formulas, rather to the relational denotational semantics of linear logic (two seminal references for such a semantic are Girard's~\cite{Girard88} and Bucciarelli and Ehrhard's \cite{DBLP:journals/apal/BucciarelliE01}; see also \cite{DBLP:conf/csl/Carvalho16,GuerrieriPellissierTortora16})---similarly to intersection types, they provide a denotational rather than a logical interpretation.\smallskip

\paragraph*{De Carvalho's Bounds from Multi Types} An insightful use of multi types is de Carvalho's connection between the size of types and the size of normal forms, and between the size of type derivations and evaluation lengths for the 
\cbn $\l$-calculus \cite{deCarvalho18}. He shows how from a certain class of typing derivations one can extract \emph{exact} bounds about the length of evaluations and the size of the normal form of a term, according to various notions of evaluation. In particular, he shows how to do it for \emph{head} reduction, which in \cbn is the strategy characterizing solvability, that is, for which terminating terms coincide with solvable terms.\smallskip

 \paragraph*{De Carvalho's Legacy} De Carvalho developed his results in his PhD defended in 2007 \cite{Carvalho07}, known by the community thanks to a technical report that was eventually published much later \cite{deCarvalho18}.
Soon after his PhD, he adapted his work to linear logic, 
with Pagani and Tortora de Falco \cite{DBLP:journals/tcs/CarvalhoPF11,DBLP:journals/iandc/CarvalhoF16}. 
A few years later, Bernadet and Graham-Lengrand adapted his work to measure the longest evaluation in the $\l$-calculus \cite{DBLP:journals/corr/BernadetL13}.

At the time, it was not known whether it would make sense to count the number of $\beta$-steps (or linear logic cut-elimination steps) as a reasonable measure of complexity. After this was clarified (in the positive, for $\beta$ steps) by Accattoli and Dal Lago \cite{DBLP:journals/corr/AccattoliL16}, de Carvalho's work has been revisited by Accattoli, Graham-Lengrand, and Kesner in 2018 \cite{DBLP:journals/pacmpl/AccattoliGK18}. The revisitation started a new wave of works adapting de Carvalho's study to many evaluation strategies and extensions of the $\l$-calculus, including call-by-value \cite{DBLP:conf/aplas/AccattoliG18}, call-by-need \cite{DBLP:conf/esop/AccattoliGL19}, fully lazy call-by-need \cite{DBLP:conf/fossacs/KesnerPV21}, a linear logic presentation of call-by-push-value \cite{DBLP:conf/flops/BucciarelliKRV20,DBLP:conf/csl/KesnerV22}, the $\lambda\mu$-calculus \cite{DBLP:conf/lics/KesnerV20}, the $\l$-calculus with pattern matching \cite{DBLP:conf/types/AlvesKV19}, the probabilistic $\l$-calculus \cite{DBLP:journals/pacmpl/LagoFR21}, and the abstract machine underlying the geometry of interaction \cite{DBLP:journals/pacmpl/AccattoliLV21,DBLP:conf/lics/AccattoliLV21}. \smallskip

\paragraph*{\ocbv and Multi Types} Of all these studies, the relevant one for this paper is Accattoli and Guerrieri's study of \ocbv  \cite{DBLP:conf/aplas/AccattoliG18}, which rests on Ehrhard's \cbv relational semantics \cite{DBLP:conf/csl/Ehrhard12} reformulated as multi types. They show that the evaluation of a term $\tm$ terminates in \ocbv  if and only if $\tm$ is typable with \cbv multi types, and that in that case $\tm$ is typable with the empty multi set, that they note $\zero$. Moreover, they show how to extract bounds to evaluation lengths and to the size of normal forms from type derivations, and characterize those type derivations that give exact bounds.\smallskip

\paragraph*{Contribution 1: Qualitative Characterization of Solvability} Here, we study the solving strategy which is an extension of evaluation in \ocbv. Namely, it is a strong strategy that iterates \ocbv under abstractions, but not all abstractions, only under \emph{head} abstractions. We study it via Ehrhard's \cbv multi type system, as in  \cite{DBLP:conf/aplas/AccattoliG18}. Since the strategy \emph{extends} open evaluation, the terms that are terminating for the solving strategy---that is, the solvable terms---cannot be characterized simply as the typable ones. 

The typical example is $\la\var\Omega$, which is normal for open evaluation but unsolvable (and thus diverging for the solving strategy). It is typable with $\zero$, and only with $\zero$. One is then tempted to characterize solvable terms as those ones typable with a type different from $\zero$. Unfortunately, things are slightly subtler, as $\la\vartwo\la\var\Omega$ is also normal for open evaluation but unsolvable, and it is typable with $\zero$ and, for instance, also with $\mset{\larrow\mtype\zero}$. We then define a notion of \emph{solvable} multi type, which is a multi type that is not $\zero$ and that contains types that (recursively) do not have $\zero$ on the right of $\multimap$. Our result is that a term is \cbv solvable if and only if it is typable with a solvable \cbv multi type.

Our notion of solvable type is not new, as in fact the idea appears in the literature in other papers using intersection types to study \cbv solvability \cite{DBLP:journals/ita/PaoliniR99,DBLP:conf/fscd/KerinecMR21}. Those type systems, however, are defective, and the characterizations that they claim are not in fact proved because their systems do not verify subject reduction  (for \cite{DBLP:journals/ita/PaoliniR99} subject expansion also fails), as we detail in Appendix \ref{sect:counter}. Therefore, a first contribution of this work is to recast their idea in a setting where it fully works.

\smallskip

\paragraph*{Contribution 2: Quantitative Characterization of Solvability} The second contribution of this work is that we exploit the quantitative character of multi types and extract bounds from type derivations, building over de Carvalho's technique. First, we show that every solvable derivations provides bounds to the length of the solving evaluation and to the size of the solving normal form. Second, we characterize the solvable derivations that provide exact bounds. This last part requires to introduce two refinements of solvable types, as solvable type derivations are in general too permissive with respect to quantitative bounds. 

On the one hand, we need to ensure that sub-terms evaluated by the solving strategy are typed \emph{exactly} once (with multi types a sub-term can naturally be typed many times). Such a constraint gives rise to \emph{unitary solvable types}. On the other hand, we need to ensure that sub-terms not touched by the solving strategy are \emph{not} typed. This is obtained by constraining types appearing in the typing context and on the left of $\multimap$, and gives rise to the notion of \emph{inertly solvable types}. We then prove that when a solvable type derivation is both unitary and inert---which we shall refer to as being \emph{precisely solvable}---then it captures \emph{exactly} the evaluation length and the size of the normal form for the solving strategy.

Such a quantitative study of \cbv solvability is the first one in the literature, and the described refinements of solvable types are also new.\smallskip

\paragraph*{Reasonable Cost Models} In \cite{DBLP:conf/lics/AccattoliCC21}, Accattoli \textit{et al}. prove that the number of $\beta$ steps of a strategy of the VSC, namely the \emph{external} one, is a reasonable time cost model for \cbv (where \emph{reasonable} means polynomially related to the time cost model of Turing machines). The solving strategy studied here is a sub-strategy of the external one, and so the result transfers. Thus, we obtain that multi types provide quantitative bounds that are meaningful from a computational complexity point of view.

\smallskip
\paragraph*{Syntactic Variants} We study \scbv via the VSC, to stress the linear logic background and foundation of our study. Everything could equivalently be  easily  reformulated inside (the intuitionistic and \cbv fragment of) Curien and Herbelin's
$\lambdamucalc$-calculus, following the isomorphism with the VSC developed in \cite{DBLP:conf/aplas/AccattoliG16} 
and preserving the number of steps/cost model. The solving strategy can also be reformulated using the \emph{shuffling calculus} $\shufcalc$ by Guerrieri and Carraro
\cite{DBLP:conf/fossacs/CarraroG14}, a calculus with commuting conversions used recently also by Manzonetto \textit{et al}.
\cite{DBLP:journals/fuin/ManzonettoPR19}.  Commuting conversion steps, however, cannot be counted via multi 
types (but they should), and the cost model of $\shufcalc$ is unclear (see 
\cite{DBLP:conf/aplas/AccattoliG16}). Therefore, $\shufcalc$ can be used for a qualitative study of \cbv solvability, but not for a
quantitative one as we do here. Calculi with $\letexp$-commutation rules such as the one by Herbelin and Zimmerman \cite{DBLP:conf/tlca/HerbelinZ09} simply 
can be seen as subcalculi of the VSC (up to structural equivalence, see \cite{AccattoliPaolini12}). Similar remarks apply to many other \cbv calculi
\cite{DBLP:conf/lics/Moggi89,DBLP:journals/lisp/SabryF93,DBLP:journals/toplas/SabryW97,DBLP:journals/tcs/MaraistOTW99, 
DBLP:journals/logcom/DyckhoffL07,DBLP:conf/csl/Santo20}. The key point of the VSC (valid also 
in $\lambdamucalc$) is that it does not need any commuting rule. 
\smallskip

\paragraph*{Proofs} Proofs are in the Appendix. \smallskip

\paragraph*{Historical Note} This paper has been uploaded to Arxiv in February 2022 but the results were developed by the authors in 2020. They were part of a larger and rejected submission to LICS 2021 that included also the results of \cite{DBLP:journals/corr/abs-2104-13979}.
\section{Value Substitution Calculus}
\label{sect:vsc}

Here we present the \emph{value substitution calculus} (\VSC for short) introduced by Accattoli and Paolini \cite{AccattoliPaolini12}, and we recall some properties.
The \VSC is a $\lambda$-calculus with let-expressions whose reduction rules mimic cut-elimination on proof-nets, via Girard's \cbv translation $(A \Rightarrow B)^v = \oc (A^v \multimap B^v)$ of intuitionistic logic into linear logic, as explained in \cite{DBLP:journals/tcs/Accattoli15}.  

In \VSC, $\beta$-redexes are decomposed via let-expressions, and the
\emph{by-value} restriction on evaluation is on the let-substitution rule, not on $\beta$-redexes, because only values
can be substituted.
A let-expression is formulated as an \emph{explicit substitution} or \emph{sharing} (\ES for short) $\tm\esub{\var}{\tmtwo}$ which binds $\var$ in $\tm$.
All along the
paper we use (many notions of) \emph{contexts}, \ie terms with a hole, noted $\ctxhole$. For now, we need \emph{substitution contexts} $\sctx$, which are simply lists of \ES. The grammars are:
\begin{center}
$\arraycolsep=3pt\begin{array}{rrl}
\textsc{Values } & \val & \grameq \var \mid  \la\var\tm 
\\
\textsc{Terms } & \tm,\tmtwo, \tmthree & \grameq \val \mid \tm\tmtwo 
\mid \tm \esub\var\tmtwo 
\\
\textsc{Substitution Ctxs } &\subctx & \grameq \ctxhole \mid \subctx \esub\var\tm
\end{array}$
\end{center}

The set of free variables of term $\tm$ is denoted by $\fv{\tm}$. 
Plugging a term $\tm$ in a context $\ctx$ is noted $\ctxp{\tm}$, possibly~capturing variables.
An \emph{answer} is a term of the shape $\sctxp\val$, where $\val$ is a value (\ie an abstraction) and $\sctx$ is a substitution context.
We use $\tm \isub{\var}{\tmtwo}$ for the capture-avoiding substitution of
$\tm$ for each free occurrence of $\var$ in $\tm$.
There are two kinds of rewrite rules, both work \emph{at a distance}, that is, up to a substitution context. 
\begin{center}
$\begin{array}{rr@{\ }l@{\ }l}
    \textsc{Multiplicative rule} & \subctxp{\la\var\tm}\tmtwo &  \rtom  & \subctxp{\tm\esub{\var}{\tmtwo}} \\
    \textsc{Exponential rule}  & \tm\esub\var{\subctxp{\val}} &  \rtoe  & \subctxp{\tm\isub{\var}{\val}} 
\end{array}$    
\end{center}

We shall consider three fragments of the  \VSC. 
They all contain the terms of \VSC, they differ only in the choice of evaluation contexts for the rewrite rules.\smallskip

\paragraph*{The Open VSC} We first focus on the open fragment of the \VSC, where rewriting is forbidden under abstraction and terms are possibly open (but not necessarily). This fragment has a nice inductive description of its normal forms, called fireballs, that is the starting point for many other definitions in the paper. Open contexts and rules are defined as follows.
\begin{center}
		$\begin{aligned}
		\textsc{Open ctxs} && \weakctx & \grameq \ctxhole \mid \weakctx \tm \mid \tm \weakctx \mid \weakctx \esub\var\tm \mid 
\tm\esub\var \weakctx 
		\end{aligned}$

		\begin{tabular}{ccc}
\textsc{Open rewrite rules}:
			&
			\multirow{2}{*}{\begin{prooftree}
					\hypo{\tm \rootRew{a} \tm'}		
					\infer1{\weakctxp{\tm} \Rew{\wsym a} \weakctxp{\tm'}}
			\end{prooftree}}
		\\
		($a \in \set{\msym,\esym}$)
	\end{tabular}\medskip

		$\begin{array}{cccc}
\textsc{Open reduction}:& \tovsubo  \, \defeq \, \tomo \cup \toeo
	\end{array}$
\end{center}

\begin{proposition}[Properties of the open reduction]
	\label{prop:ovsc-diamond}\label{prop:properties-open-reduction}
	\NoteProof{propappendix:properties-open-reduction}
	\begin{enumerate}
		\item \label{p:properties-open-reduction-diamond} $\tovsubo$ is diamond; $\tomo$ and $\toeo$ strongly commute.
		\item \label{p:properties-open-redction-harmony} A term is $\osym$-normal if and only if it is a fireball, where \emph{fireballs} (and \emph{proper inert terms}) are defined by:
	\end{enumerate}
\begin{alignat*}{5}
\textsc{Proper inert terms } \ & \pitm \grameq \var \fire \mid \pitm \fire \mid \pitm \esub{\var}{\pitmtwo}
\qquad \ &
\textsc{Fireballs } \ & \fire \grameq \val \mid \pitm \mid \fire \esub{\var}{\pitm}
\end{alignat*}
\end{proposition}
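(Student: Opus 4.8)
\smallskip
\noindent\textbf{Proof sketch.}
For the first claim I would argue locally. Given a term $\tm$ carrying two distinct redexes $R$ and $R'$---each of which may be multiplicative or exponential---the goal is to show that firing $R$ leaves a unique residual of $R'$, firing $R'$ leaves a unique residual of $R$, and the two residual steps close a one-step square (or the two reducts already coincide). Everything rests on two structural facts about the open rules, which I would state first: (i)~a $\tomo$-step neither erases nor duplicates any subterm---it merely moves the argument of the fired redex into a fresh \ES---so every other redex of $\tm$ has exactly one residual; and (ii)~a $\toeo$-step replaces a variable by a \emph{value}, and in the open fragment a value carries no redex (a variable trivially, an abstraction because $\osym$-reduction does not descend under abstractions), hence the substitution neither creates nor destroys redexes except at the consumed \ES, and again every surviving redex has a unique residual, the copied value carrying no redex along. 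With (i)--(ii) in hand, the analysis by relative position of $R$ and $R'$ (disjoint, or one inside the other) is routine; the only mildly delicate configurations are those in which the pattern of one redex straddles a substitution context through which the other redex fires---typically an answer $\sctxp{\la\var\tm}$ or a content $\sctxp\val$ whose $\sctx$ hosts the other redex---but since both rules preserve the shape of a substitution context, the straddled redex remains a redex after the step and the square still closes with one step per side. The diamond of $\tovsubo$ follows by splitting into the multiplicative, exponential, and mixed subcases: the mixed subcase is exactly the claimed strong commutation of $\tomo$ and $\toeo$, the two uniform subcases give the diamond of $\tomo$ and of $\toeo$ separately, and together they yield the diamond of $\tovsubo$.

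For the second claim the crux is the auxiliary observation---an easy induction on the grammar---that a proper inert term is never of the form $\sctxp\val$, and in particular is never an answer. Granting it, the implication ``fireball (or proper inert term) $\Rightarrow$ $\osym$-normal'' follows by mutual induction on the two grammars: a value is $\osym$-normal since the open rules do not descend under abstractions and a value is neither an application nor an \ES; an application $\var\fire$ or $\pitm\fire$ is not a $\tomo$-redex since $\var$, resp.\ $\pitm$, is not an answer, and its immediate subterms are $\osym$-normal by the induction hypothesis; and an \ES $\fire\esub\var\pitm$ or $\pitm\esub\var\pitmtwo$ is not a $\toeo$-redex since its content is a proper inert term, hence not of the form $\sctxp\val$. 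For the converse, ``$\osym$-normal $\Rightarrow$ fireball'', I would induct on the structure of $\tm$: a value is a fireball; if $\tm=\tmtwo\,\tmthree$ then $\tmtwo$ and $\tmthree$ are $\osym$-normal, as they occur in the open contexts $\weakctx\,\tmthree$ and $\tmtwo\,\weakctx$, hence fireballs by the induction hypothesis, and $\tmtwo$ is not an answer---otherwise $\tmtwo\,\tmthree$ would be a $\tomo$-redex---so inspecting the grammars shows $\tmtwo\,\tmthree$ to be a proper inert term; and if $\tm=\tmtwo\esub\var\tmthree$ then likewise $\tmtwo$ and $\tmthree$ are fireballs and $\tmthree$ is not an answer---otherwise $\tmtwo\esub\var\tmthree$ would be a $\toeo$-redex---so $\tmthree$ is a proper inert term and $\tmtwo\esub\var\tmthree$ is a fireball, of the form $\fire\esub\var\pitm$ if $\tmtwo$ is a value and of the form $\pitm\esub\var\pitmtwo$ if $\tmtwo$ is a proper inert term.

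I expect the real difficulty to lie in the first claim, and specifically in the residual bookkeeping imposed by the rules ``at a distance'': substitution contexts make ``one redex inside another'' far more liberal than in a plain $\l$-calculus, so the list of overlap configurations is longer than usual, and each one needs the remark that the step being performed does not disturb the substitution-context wrapping on which the other redex's pattern relies. The normal-form characterization is, by contrast, essentially bookkeeping once ``proper inert terms are never of the form $\sctxp\val$'' is in place, as that lemma is exactly what governs when the two root rules can fire.
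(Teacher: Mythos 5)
Your proposal is correct and takes essentially the same route as the paper: part 1 is carried out there by exactly the kind of exhaustive case analysis on the relative positions of two coinitial redexes that you describe (the paper proves diamond of $\tomo$, diamond of $\toeo$, and their strong commutation separately, each by induction on the position of the step, and then combines them via Hindley--Rosen), while part 2 is the standard mutual induction on the grammars, which the paper simply delegates to a citation of Accattoli and Paolini. One loose phrase worth fixing: an exponential step \emph{can} create multiplicative redexes (substituting an abstraction for a variable in applied position), so ``neither creates nor destroys redexes'' over-claims --- but since closing the square only needs the unique-residual half of that claim, which is the part your argument actually uses, nothing breaks.
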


Diamond of $\tovsubo$ and strong commutation of $\tomo$ and $\toeo$ are technical facts (see  \Cref{sect:preliminaries} for definitions) with relevant consequences:
$\tovsubo$ is confluent and its non-determinism is only apparent, because if an $\osym$-evaluation from $\tm$ reaches
a $\osym$-normal form $\tmtwo$, then \emph{every} $\osym$-evaluation from $\tm$ eventually ends in $\tmtwo$;
and all these $\osym$-evaluations have same length and same number of $\msym$-steps and $\esym$-steps.  
Same properties hold for other strategies we shall~consider. 


%
	
The \emph{Closed \VSC} is the restriction of the Open \VSC to closed terms. Its normal forms are the closed values, that is, closed abstractions (the only fireballs that are closed).

\begin{remark}[Inert terms]\label{rmk:inert-def}
	Proper inert terms have the form of applications iterated $n > 0$ times, possibly interleaved with ES, starting with a head free variable. 
	It is natural to see a variable as a ``degenerate'' case, and to define \emph{inert terms} as variables or proper inert terms.
	\begin{equation*}
		\textsc{Inert terms } \ \itm \grameq \var \mid \pitm 
	\end{equation*}
	
	In our study, inert terms (including variables) play a crucial role. 
	For instance, for inert terms  we shall prove stronger claims about typability to have the right inductive hypothesis.
	In \cbv, variables have somehow a double nature as both inert terms (the stronger statements for inert terms also hold for variables) and values (the exponential step can fire $\tm\esub{\var}{\vartwo}$). 
\end{remark}
	 
\paragraph*{The Strong/Full VSC} To avoid notation clashes between the Solvable \VSC and the Strong \VSC (both would start with 's'), we refer to the \emph{Strong} \VSC as the \emph{\Full \VSC}. The \Full \VSC is obtained by allowing rewriting rules  to be applied everywhere in a term. 
\begin{center}
		$\begin{array}{l r rclccc}
		\textsc{Full ctxs} & \fctx  \grameq \ctxhole \mid \fctx \tm \mid \tm \fctx \mid \la{\var}{\fctx} \mid 
\fctx \esub\var\tm \mid \tm \esub \var \fctx 
		\end{array}$\medskip

		\begin{tabular}{ccc}
\textsc{Full rewrite rules:}
			&
			\multirow{2}{*}
			{\begin{prooftree}
    \hypo{\tm \rootRew{a} \tm'}		
    \infer1{\fctxp\tm \Rew{a} \fctxp{\tm'}}
			\end{prooftree}}
		\\
		($a \in \set{\msym,\esym}$)
	\end{tabular}\medskip

		$\begin{array}{r l@{\ } l@{\ } lllllll}
  \textsc{\Full reduction:} & \tovsub  & \defeq  &\tom \cup \toe
\end{array}$
\end{center}
Reduction $\tovsub$ is not diamond: 
see all the $\vsub$-evaluations of $(\var\var) \esub{\var}{\la{\vartwo} \Id \Id}$ with~$\Id \defeq \la{\varthree}\varthree$.  

\begin{theorem}[Confluence, \cite{AccattoliPaolini12}]
The reduction $\tovsub$ is confluent.
\end{theorem}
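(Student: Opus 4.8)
\providecommand{\parstep}{\mathrel{\Rightarrow\mkern-14mu\Rightarrow}}

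\medskip

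\noindent\textit{Proof proposal.} The plan is to prove confluence by the Tait--Martin-L\"of method of parallel reductions, in the form streamlined by Takahashi. A detour of this kind is genuinely needed: as observed just above the statement, $\tovsub$ is not diamond, because an exponential step may duplicate a redex, so critical pairs cannot be closed in a single step. One introduces a \emph{parallel reduction} $\parstep$ that contracts several multiplicative and exponential redexes at once, sandwiched as $\tovsub \subseteq {\parstep} \subseteq \tovsub^{*}$, and then shows that $\parstep$ is diamond; since this forces $\parstep^{*} = \tovsub^{*}$, confluence of $\tovsub$ follows. The one genuinely new point with respect to the ordinary $\lambda$-calculus is that both rewriting rules act \emph{at a distance}, through a substitution context $\sctx$, so the definition and the analysis of $\parstep$ must build in the commutations that extrude substitution contexts.

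Concretely, I would first define $\parstep$ (on terms and, simultaneously, on substitution contexts) by induction: a base clause $\var \parstep \var$; a congruence clause for each constructor $\la\var\tm$, $\tm\tmtwo$, $\tm\esub\var\tmtwo$ reducing all immediate subterms in parallel; a clause firing a multiplicative redex at top level, sending $\sctxp{\la\var\tm}\,\tmtwo$ to $\sctx'\langle \tm'\esub\var{\tmtwo'}\rangle$ whenever $\sctx \parstep \sctx'$, $\tm \parstep \tm'$, $\tmtwo \parstep \tmtwo'$; and a clause firing an exponential redex at top level, sending $\tm\esub\var{\sctxp\val}$ to $\sctx'\langle \tm'\isub\var{\val'}\rangle$. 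The sandwich is then routine: $\tovsub \subseteq {\parstep}$ by induction on the context around the contracted root redex (the base case being exactly the two top-level firing clauses with all other subterms reduced reflexively), and ${\parstep} \subseteq \tovsub^{*}$ by induction on the derivation of $\parstep$, each parallel step being realized by finitely many $\tovsub$-steps.

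The heart of the proof is the diamond property of $\parstep$, obtained by Takahashi's full-development argument. For every term $\tm$ one defines $\tm^{\star}$, obtained by contracting in parallel \emph{all} the redexes already present in $\tm$ (a structural recursion; the only delicate cases are again the at-a-distance ones, where one fixes the convention of always extruding the substitution context). The key is the \emph{triangle lemma}: $\tm \parstep \tmtwo$ implies $\tmtwo \parstep \tm^{\star}$, proved by induction on the derivation of $\tm \parstep \tmtwo$; from it the diamond property is immediate, since $\tm \parstep \tmtwo_{1}$ and $\tm \parstep \tmtwo_{2}$ give $\tmtwo_{1} \parstep \tm^{\star}$ and $\tmtwo_{2} \parstep \tm^{\star}$. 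The triangle lemma rests on two auxiliary facts to be proved beforehand: (i) $\parstep$ is stable under value substitution, i.e.\ $\tm \parstep \tm'$ and $\val \parstep \val'$ imply $\tm\isub\var\val \parstep \tm'\isub\var{\val'}$ (this also covers the renaming case $\val = \vartwo$, since variables are values), used in the exponential case; and (ii) the substitution-context extrusions commute with $\parstep$, so that a parallel step performed inside or across an $\sctx$ can be rearranged into the canonical shape used by $\tm^{\star}$.

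I expect the main obstacle to be exactly this at-a-distance bookkeeping: when a multiplicative step creates a fresh explicit substitution and, simultaneously, an exponential step consumes a substitution several $\sctx$-layers away, the definitions of $\parstep$ and of $\tm^{\star}$ must agree on a canonical placement of every substitution context, so that the induction of the triangle lemma stays a short case analysis rather than a combinatorial explosion. Getting those two definitions right is the real work; everything else is standard. One could alternatively reuse the original argument of Accattoli and Paolini \cite{AccattoliPaolini12}, decompose $\tovsub = \tom \cup \toe$ and invoke Hindley--Rosen (with $\tom$ diamond, $\toe$ confluent, and the two commuting, in analogy with \Cref{prop:properties-open-reduction}), or repackage the same content via the $\mathsf{Z}$-property with the map $\tm \mapsto \tm^{\star}$ as witness.
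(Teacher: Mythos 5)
Your plan is sound, but note first that the paper does not prove this theorem at all: it is imported verbatim from \cite{AccattoliPaolini12}, where confluence of $\tovsub$ is obtained \emph{modularly}, by decomposing $\tovsub$ into $\tom$ and $\toe$, proving each confluent separately (using that both are strongly normalizing, cf.\ the first point of the Basic Properties lemma in the appendix, plus local confluence and Newman), and then showing they commute so that Hindley--Rosen applies. Your Tait--Martin-L\"of/Takahashi development is a genuinely different, self-contained route: it avoids any appeal to termination of the sub-reductions and, as you say, packages naturally as a $\mathsf{Z}$-property witness; the price is exactly the bookkeeping you identify, namely fixing a canonical extrusion of substitution contexts so that the full development $\tm^{\star}$ and the parallel step agree on where each $\sctx$ ends up, plus the value-substitution lemma for the exponential case. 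Both ingredients are standard and I see no step that would fail. One caution about your closing parenthetical, though: the analogy with the open case is imperfect. For the \emph{full} reductions, $\toe$ duplicates $\tom$-redexes occurring inside the substituted value (abstraction bodies are no longer inert), so $\tom$ and $\toe$ do \emph{not} strongly commute in the one-step sense of \Cref{prop:properties-open-reduction}; the Hindley--Rosen route instead needs a weaker local commutation (closing a $\tom$/$\toe$ peak with $\toe^{*}$ on one side and at most one $\tom$ step on the other) combined with strong normalization of $\toe$. That subtlety is precisely what your parallel-reduction detour is designed to absorb, so it is an argument in favour of your approach rather than against it.
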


\paragraph*{Plotkin \textit{vs} VSC}

Plotkin's original \cbv $\lambda$-calculus $\plotcalc$ \cite{DBLP:journals/tcs/Plotkin75} can be easily simulated in the \VSC.
The syntax of $\plotcalc$ is simply the same as in the \VSC but without \ES.
The reduction $\tobvplot$ in $\plotcalc$ is the closure under \full contexts (without \ES) of the rule 
\[(\la{\var}\tm)\val \rtobvplot \tm\isub{\var}{\val}\]

\begin{proposition}[Simulation]
	\label{prop:plotkin-vsc}
	\NoteProof{propappendix:plotkin-vsc}
	Let $\tm$ be a term without \ES. 
	If $\tm \tobvplot \tm'$ then $\tm \tom \!\cdot \toe  \tm'$.
\end{proposition}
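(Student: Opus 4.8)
The plan is to simulate each $\tobvplot$-step of $\plotcalc$ by exactly one multiplicative step followed by exactly one exponential step of the \VSC, exploiting the fact that the two at-a-distance rules of the \VSC degenerate to ordinary, undelimited rules when the surrounding substitution context is empty. Concretely, a $\plotcalc$-redex $(\la\var\tmtwo)\val$ is turned by $\rtom$ into the \ES $\tmtwo\esub{\var}{\val}$, whose content $\val$ is a value---hence an answer $\subctxp{\val}$ with the empty substitution context $\subctx \defeq \ctxhole$---so that $\rtoe$ can immediately fire and perform the substitution $\tmtwo\isub{\var}{\val}$, which is precisely the one used in the definition of $\tobvplot$.

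First I would establish the root case: reading $\la\var\tmtwo$ as $\subctxp{\la\var\tmtwo}$ with $\subctx \defeq \ctxhole$, the multiplicative rule gives $(\la\var\tmtwo)\val \rtom \tmtwo\esub{\var}{\val}$, and then the exponential rule (with $\subctx \defeq \ctxhole$ around the value $\val$) gives $\tmtwo\esub{\var}{\val} \rtoe \tmtwo\isub{\var}{\val}$; hence $(\la\var\tmtwo)\val \rtom\!\cdot\rtoe \tmtwo\isub{\var}{\val}$. Then I would lift this to arbitrary steps: by definition of $\tobvplot$, a step $\tm \tobvplot \tm'$ has the form $\tm = \fctxp{(\la\var\tmtwo)\val}$ and $\tm' = \fctxp{\tmtwo\isub{\var}{\val}}$ for some \full context $\fctx$ without \ES; since every \ES-free \full context is in particular a \full context of the \VSC, and since the \Full rewrite rules of the \VSC are by definition the closure of the root rules under \full contexts, the root decomposition lifts to $\tm = \fctxp{(\la\var\tmtwo)\val} \tom \fctxp{\tmtwo\esub{\var}{\val}} \toe \fctxp{\tmtwo\isub{\var}{\val}} = \tm'$, that is, $\tm \tom\!\cdot\toe \tm'$ as wanted. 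If one prefers, this last lifting is a straightforward induction on $\fctx$.

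I do not expect any real obstacle here. The only points deserving a line of care are that the intermediate term $\fctxp{\tmtwo\esub{\var}{\val}}$ genuinely carries an \ES---so the simulation takes place in the \VSC and not inside $\plotcalc$---and that both uses of the at-a-distance rules really involve the empty substitution context, so that the exponential step produces exactly the term $\tmtwo\isub{\var}{\val}$ prescribed by $\tobvplot$.
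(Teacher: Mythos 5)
Your proof is correct and follows essentially the same route as the paper: the root case $(\la\var\tmtwo)\val \rtom \tmtwo\esub{\var}{\val} \rtoe \tmtwo\isub{\var}{\val}$ using the degenerate (empty) substitution contexts, lifted to arbitrary positions by closure under \full contexts, i.e.\ by induction on the context. The paper's proof is exactly this induction, with the non-root cases dispatched as following ``easily'' from the inductive hypothesis.
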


There is no sensible way to simulate \VSC into $\plotcalc$.
Indeed \VSC is a proper extension of $\plotcalc$: \VSC makes divergent terms such as $(\la{\var}\delta)(\vartwo\vartwo)\delta$ and $\delta ((\la{\var}\delta) (\vartwo\vartwo))$ that are $\betaplot$-normal. Despite being an extension of $\plotcalc$, VSC does not loose the \cbv essence, as $(\la\var\vartwo) \Omega$ is strongly divergent in both $\plotcalc$ and VSC, while in \cbn it normalizes in one step, erasing $\Omega$.

\section{Call-by-Value Solvability and The Solving Strategy}
\label{sect:solving-strat}

In the $\lambda$-calculus, the notion of solvability identifies somehow ``meaningful'' terms.
This notion is well studied in the \cbn $\lambda$-calculus, with an elegant theory, see Barendregt \cite{Barendregt84}.
In \cbv, an elegant theory of solvability is still missing because of issues first observed by Ronchi Della Rocca and Paolini \cite{DBLP:journals/ita/PaoliniR99,parametricBook} (a survey is in \cite{DBLP:journals/corr/Garcia-PerezN16}), amounting to intrinsic limitations of  Plotkin's original \cbv $\lambda$-calculus \cite{DBLP:journals/tcs/Plotkin75}. Accattoli and Paolini showed that instead the \VSC is a good framework for \cbv solvability \cite{AccattoliPaolini12}.

Since the definition of solvability depends on the calculus and its evaluation, we give a parametric definition.

\begin{definition}[Solvability]
	Let $X$ be a calculus.
	A term $\tm$ in $X$ is \emph{$X$-solvable} if there are terms $\tmtwo_1, \dots , \tmtwo_k$ and variables $\var_1, \dots , \var_h$, with $h, k \geq 0$, such that
	$(\lambda \var_1 \dots \la{\var_h}\tm) \tmtwo_1 \dots \tmtwo_k$ $X$-evaluates to $\Id \defeq \la{\var}\var$.
\end{definition}

\paragraph*{The Solving VSC} 
Accattoli and Paolini \cite{AccattoliPaolini12}  characterize operationally \VSC-solvability: a term $\tm$ is \VSC-solvable if and only if $\tm$ $\solvredsym$-normalizes, for a suitable definition of solving reduction $\tosolv$ in between the full one $\tovsub$ and the open one $\tovsubo$, that is, restricting $\tovsub$ but extending $\tovsubo$, given below.
The \emph{Solving \VSC} is the \VSC endowed with the \emph{solving reduction} $\tosolv$ that we now define. 
It is obtained by extending the open rewriting rules under head 
abstractions only, via the notion of \emph{solving~context}.
\begin{center}
	$\begin{aligned}
	\textsc{Solving ctxs} && \solvctx &\grameq \openctx \mid \la{\var}\solvctx \mid \solvctx \tm \mid \solvctx\esub{\var}{\tm}
	\end{aligned}$
	\medskip
	
	\begin{tabular}{ccc}
		\textsc{Solving rewrite rules:}
		&
		\multirow{2}{*}
		{\begin{prooftree}
				\hypo{\tm \Rew{\wsym a} \tm'}		
				\infer1{\solvctxp\tm \Rew{\solvredsym a} \solvctxp{\tm'}}
		\end{prooftree}}
		\\
		($a \in \set{\msym,\esym}$)
	\end{tabular}\medskip
	
	$\begin{array}{cccc}
	\textsc{Solving reduction}:& \tovsubsolv  \, \defeq \, \tomsolv \cup \toesolv
	\end{array}$
	%
\end{center}
For instance, because of the extension under head abstractions,
$\la\var (\Id \Id) \tomsolv \la\var(\varthree\esub\varthree\Id) \toesolv \la\var\Id$.
But reduction under non-head abstractions is forbidden:
$\vartwo (\la\var(\Id \Id)) \not\tomsolv  \vartwo (\la\var(\varthree\esub\varthree\Id))$.

The solving strategy captures the fact that $\tm \defeq \la\var\Omega$ is \cbv unsolvable, as $\tovsubsolv$ diverges on $\tm$, while $\tmtwo \defeq \var (\la\var\Omega)$ is \cbv solvable, and indeed $\tovsubsolv$ terminates on $\tmtwo$. Note also the difference between \cbv and \cbn solvability: a term such as  $\tmthree \defeq \var\Omega$ is \cbv unsolvable (and $\tovsubsolv$ indeed diverges) while it is \cbn solvable (it is head normal). Every \cbv solvable term is also \cbn solvable, as the solving strategy is an extension of the head strategy, because it reduces arguments  both out of abstractions and under head abstractions.

\begin{proposition}[Properties of the solving reduction]
	\label{prop:solvsc-diamond}\label{prop:properties-solvable-reduction}
	\NoteProof{propappendix:properties-solvable-reduction}
	\hfill
	\begin{enumerate}
		\item \label{p:properties-solvable-reduction-diamond} $\tosolv$ is diamond; $\tosolvm$ and $\tosolve$ strongly commute.
		\item \label{p:properties-solvable-reduction-harmony} A term is $\solvredsym$-normal if and only if it is a solvable fireball, where \emph{solvable fireballs} are defined by:
	\end{enumerate}
	\begin{center}
		$\textsc{Solvable fireballs} \qquad \solvnf \grameq  \itm \mid \la{\var}\solvnf \mid \solvnf \esub{\var}{\pitm}$
	\end{center}
\end{proposition}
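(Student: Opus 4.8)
The plan is to derive both points by \emph{lifting} the corresponding facts about open reduction in \Cref{prop:properties-open-reduction}, exploiting that a solving context is just an open context nested inside a \emph{rigid skeleton} built only from head abstractions ($\la\var\ctxhole$), left-hand sides of applications ($\ctxhole\,\tm$) and left-hand sides of \ES ($\ctxhole\esub{\var}{\tm}$), through which no redex can be created or destroyed and into which no value can be copied. Both proofs then go by induction on the structure of the term involved (equivalently, on the shared skeleton-prefix of the two contexts at play), the base/open case being handled by \Cref{prop:properties-open-reduction} and the skeleton cases by routine context manipulation.

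\textbf{Point 1 (diamond and strong commutation).} I would show by induction on the term $u$ carrying the two redexes that any two distinct solving steps out of $u$ close in one step on each side, keeping track of which residual is multiplicative and which exponential. The variable case is vacuous; if $u=\la\var\tm$ both steps come from $\tm$ and one concludes by the induction hypothesis; if $u=\tm\tm'$ or $u=\tm\esub{\var}{\tm'}$ then each step is a root step, a solving step inside $\tm$, or an \emph{open} step inside $\tm'$, and the interesting sub-cases are: (i) both steps in the same component, closed by the diamond of $\tovsubo$ and the strong commutation of $\tomo$ and $\toeo$ (first point of \Cref{prop:properties-open-reduction}), or by the induction hypothesis; (ii) a root multiplicative step versus a step strictly inside, where the inner redex is merely \emph{relocated}; (iii) a root exponential step $\tm\esub{\var}{\sctxp\val} \tosolve \sctxp{\tm\isub{\var}{\val}}$ versus a step strictly inside. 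Sub-case (iii) is the crux, and the reason $\tosolv$ is diamond while $\tovsub$ is not: a step concurrent to the exponential one cannot lie \emph{inside} the substituted value $\val$, since $\val$ is an abstraction or a variable and open (hence solving) reduction never enters the body of an argument abstraction---so the exponential step copies no redex, the concurrent redex is only relocated, and the diagram stays diamond-shaped. Since the kind (multiplicative/exponential) of a step is intrinsic to its redex and untouched by the skeleton, the refined commutation statement follows by the same bookkeeping as in the open case.

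\textbf{Point 2 (characterization of $\solvredsym$-normal forms).} For the right-to-left inclusion I would induct on the grammar of solvable fireballs $\solvnf$, using that (a) proper inert terms are never answers (immediate induction on $\pitm$), so a solvable fireball is never itself a root multiplicative or exponential redex, and (b) fireballs are $\osym$-normal (second point of \Cref{prop:properties-open-reduction}): any solving context matching the shape of $\solvnf$ is either an open context---and then a redex inside it would be an open redex, excluded by (b) applied to the relevant fireball subterm---or it descends into a recursive $\solvnf$- or $\pitm$-subcomponent, where the induction hypothesis excludes the redex. For the left-to-right inclusion I would induct on $\tm$: peeling off head abstractions reduces to $\tm$ not being an abstraction; if $\tm=\tm_1\tm_2$ then $\solvredsym$-normality forces $\tm_1$ to be $\solvredsym$-normal (a solving redex in $\tm_1$ lifts through the solving context $\solvctx\,\tm_2$) and not of the form $\sctxp{\la\var\tm'}$ (else a root multiplicative step fires) and $\tm_2$ to be $\osym$-normal (an open redex in $\tm_2$ lifts through the open context $\tm_1\,\openctx$), so the induction hypothesis makes $\tm_1$ a non-answer solvable fireball---i.e.\ an inert term---and \Cref{prop:properties-open-reduction} makes $\tm_2$ a fireball, whence $\tm$ is a proper inert term; the case $\tm=\tm_1\esub{\var}{\tm_2}$ is symmetric and additionally forbids $\tm_2$ from having the shape $\sctxp\val$, so the $\osym$-normal $\tm_2$ is forced to be a proper inert term and $\tm_1$ a solvable fireball, making $\tm$ a solvable fireball; the variable case is immediate.

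\textbf{Main obstacle.} The delicate point is the diamond of $\tosolv$ in Point 1: it holds precisely because of the asymmetry of the solving-context grammar---arbitrary solving contexts to the left of an application or \ES, but only open contexts to the right, and descent only under \emph{head} abstractions---and turning the slogan ``solving steps never duplicate a solving redex'' into a clean proof requires analysing the residuals of the exponential step and checking that the subterms reachable on the ``forbidden'' side are $\osym$-normal non-values, hence neither the fired value nor the host of a copied redex. Everything else is a routine adaptation of the proof of \Cref{prop:properties-open-reduction}.
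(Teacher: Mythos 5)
Your proposal is correct, but it is worth noting that the paper does not actually prove this proposition: it discharges all three claims by citation to \cite{AccattoliPaolini12} (Lemmas 4, 8 and 11 there), so what you have written is a genuine direct proof rather than a reconstruction of the paper's argument. Your route is the natural one, and it is exactly the extension to solving contexts of the case analysis the appendix carries out for the open case (\Cref{prop:properties-open-reduction} via the lemma on basic properties of $\tovsubo$): induction on the position of the two redexes, with the open sub-case delegated to the diamond/strong-commutation of $\tomo$ and $\toeo$. You also correctly isolate the crux for Point~1 --- a root exponential step $\tm\esub{\var}{\sctxp{\val}}\tosolv\sctxp{\tm\isub{\var}{\val}}$ duplicates $\val$ but never a \emph{redex}, because the body of an abstraction in argument position of an \ES is reachable only through open contexts, which do not enter abstractions --- which is precisely why $\tosolv$ is diamond while $\tovsub$ is not. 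Two small points you gloss over: (i) in the left-to-right direction of Point~2 you use that a $\solvredsym$-normal, non-answer left subterm of an application is an \emph{inert term} and that an $\osym$-normal term not of the shape $\sctxp{\val}$ is a \emph{proper} inert term; both need a short auxiliary induction on the grammars of solvable fireballs and fireballs (the analogue of the appendix's ``shape of strong fireballs'' lemma), which you assert in passing rather than prove; (ii) in case (a) of the exponential crux, closing the diagram for a concurrent step \emph{inside} $\tm$ also uses that substituting a value for a variable preserves the kind of the residual redex and commutes with its contraction --- routine, but it is the place where the ``by-value'' restriction (only values are substituted) is actually used. Neither is a gap, just bookkeeping to make explicit if the sketch were expanded.
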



\section{Multi Types by Value}
\label{sect:types}

We present a  \emph{multi type system} for \cbv.
For Plotkin's \cbv $\l$-calculus, it has been introduced by Ehrhard \cite{DBLP:conf/csl/Ehrhard12}, as the \cbv 
version of de Carvalho's System $\mathsf{R}$ for \cbn \cite{Carvalho07,deCarvalho18}. Both systems can be seen as the restrictions to the \cbn/\cbv translations of the $\l$-calculus of the relational semantics of linear logic. The \cbv system in particular is studied with respect to \ocbv by Accattoli and Guerrieri in \cite{DBLP:conf/aplas/AccattoliG18}. \smallskip

\paragraph*{Multi Types} There are two layers of types, \emph{linear} and \emph{multi types}, mutually defined by: 
\begin{align*}
	\textsc{Linear types \ } \ltype, \ltypetwo &\grameq \ground \mid \larrow{\mtype}{\mtypetwo} \quad
	&
	\textsc{Multi types \ } \mtype, \mtypetwo &\grameq \mset{\ltype_1, \dots, \ltype_n} \quad n \geq 0
\end{align*}
where $\ground$ is an unspecified ground type and $\mset{\ltype_1, \dots, \ltype_n}$ is our notation for finite 
multisets.
The \emph{empty} multi type $\mset{\,}$ (obtained taking $n = 0$) is also denoted by $\emptymset$. 
A generic (multi or linear) type is denoted by $\type$.
A multi type $\mset{\ltype_1, \dots, \ltype_n}$ has to be intended as a conjunction $\ltype_1 \land \dots \land 
\ltype_n$ of linear types $\ltype_1, \dots, \ltype_n$, for a commutative, associative, non-idempotent conjunction 
$\land$ (morally a tensor $\otimes$), whose neutral element~is~$\emptymset$.

The intuition is that a linear type corresponds to a single use of a term $\tm$, and that $\tm$ is typed with a 
multiset 
$\mtype$ of $n$ linear types if it is going to be used (at most) $n$ times. The  meaning of \emph{using a term} is not 
easy to define precisely. Roughly, it means that if $\tm$ is part of a larger term $\tmtwo$, then (at most) $n$ copies 
of $\tm$ shall end up in evaluation position during the evaluation of $\tmtwo$. More precisely, the $n$ copies shall 
end 
up in evaluation positions where \mbox{they are applied to some terms.}

The derivation rules for the multi types system are in \Cref{fig:cbvtypes} (explanation follows).  The rules are the same as in Ehrhard \cite{DBLP:conf/csl/Ehrhard12}, up to the fact that they are extended to \ES. 

A \emph{multi} (\resp \emph{linear}) \emph{judgment} has the shape $\typctx \vdash \tm \hastype \type$ where $\tm$ is a term, $\type$ is a 
multi (\resp linear) type and $\typctx$ is a \emph{type context}, that is, a total function from variables to multi types 
such that  the set $\domain{\typctx} \defeq \{\var \mid \typctx(\var) \neq \emptymset\}$ is finite. \smallskip 

\begin{figure*}[t!]
	\begin{center}
	\scalebox{0.9}{
	$
			{\begin{prooftree}[label separation = .1em]
					\infer0
					[\scriptsize$\ruleAx$]
					{\var \hastype [\ltype] \vdash \var \hastype \ltype}
			\end{prooftree}}
			\quad
			{\begin{prooftree}[separation=1em, label separation = .1em]
					\hypo{\typctx \vdash \tm \hastype \mset{ \larrow{\mtype\!}{\!\mtypetwo} }}
					\hypo{\typctxtwo \vdash \tmtwo \hastype \mtype}
					\infer2[\scriptsize$\ruleAp$]
					{\typctx \uplus \typctxtwo \vdash \tm\tmtwo \hastype \mtypetwo}
			\end{prooftree}}
			\quad
			{\begin{prooftree}[label separation = .1em]
					\hypo{\tyjp{}{\tm}{\typctx, \var \hastype \mtype}{\mtypetwo}}
					\infer1[\scriptsize$\ruleFun$]
					{\tyjp{}{\la{\var}{\tm}}{\typctx}{\ty{\mtype}{\mtypetwo}}}
			\end{prooftree}}
			\quad
			{\begin{prooftree}[separation=1em, label separation = .1em]
					\hypo{\typctx, \var \hastype \mtype \vdash \tm \hastype \mtypetwo}
					\hypo{\typctxtwo \vdash \tmtwo \hastype \mtype}
					\infer2
					[\scriptsize$\ruleES$]
					{\typctx \uplus \typctxtwo \vdash \tm \esub\var\tmtwo \hastype \mtypetwo}
			\end{prooftree}}
			\quad
			{\begin{prooftree}[separation=1em, label separation = .1em]
				\hypo{\left[\tyjp{}{\val}{\typctx_{i}}{\ltype_{i}}\right]_{\iI}}
				\infer1
				[\scriptsize$\ruleMany$]
				{\tyjp{}{\val}{\biguplus_{\iI}\typctx_{i} }{\mult{\ltype_{i}}_{\iI}}}
				\end{prooftree}}
$
	}
\end{center}
	\caption{Call-by-Value Multi Type System.}
	\label{fig:cbvtypes}
\end{figure*}	
\paragraph*{Technicalities about Types} The type context $\typctx$ is \emph{empty} if $\dom{\typctx} = \emptyset$.  
\emph{Multi-set sum} $\mplus$ is extended to type contexts point-wise,
\ie\  $(\typctx \mplus \typctxtwo)(\var) \defeq \typctx(\var) \mplus \typctxtwo(\var)$ for each variable $\var$.
This notion is extended to a finite family of type contexts as expected, 
in particular $\bigmplus_{i \in J\!} \typctx_i$ is the empty context  when $J = \emptyset$.
A type context $\typctx$ is denoted by $\var_1 \hastype \mtype_1, \dots, \var_n \hastype \mtype_n$ (for some $n \in 
\nat$) if $\dom{\typctx} \subseteq \{\var_1, \dots, \var_n\}$ and $\typctx(\var_i) = \mtype_{i}$ for all $1 \leq i \leq 
n$.
Given two type contexts $\typctx$ and $\typctxtwo$ such that $\dom{\typctx} \cap \dom{\typctxtwo} = \emptyset$, the 
type 
context $\typctx, \typctxtwo$ is defined by $(\typctx, \typctxtwo)(\var) \defeq \typctx(\var)$ if $\var \in 
\dom{\typctx}$, $(\typctx, \typctxtwo)(\var) \defeq \typctxtwo(\var)$ if $\var \in \dom{\typctxtwo}$, and $(\typctx, 
\typctxtwo)(\var) \defeq \emptymset$ otherwise.
Note that $\typctx, \var \hastype \emptymset = \typctx$, where we implicitly assume $\var \notin \dom{\typctx}$. 

We write $\concl{\tderiv}{\typctx}{\tm}{\mtype}$ if $\tderiv$ is a (\emph{type}) \emph{derivation} (\ie a tree built up from the rules in \Cref{fig:cbvtypes}) with conclusion the multi judgment $\typctx \vdash \tm \hastype \mtype$.
In particular,  we write $\concl{\tderiv}{\,}{\tm}{\mtype}$ when $\typctx$ is empty.
We write $\derive{\tderiv}{\tm}$ if $\concl{\tderiv}{\typctx}{\tm}{\mtype}$ for some type context $\typctx$ and multi type $\mtype$.  \smallskip

\paragraph*{Explanations About the Rules of the Type System}
All rules but $\ruleAx$ and $\ruleFun$ assign a multi type to the term on the
right-hand side of a judgment.
Variables and abstractions are the only terms that can be typed by a linear type, via $\ruleAx$ and $\ruleFun$.
Rule $\ruleMany$ can be applied to variables and abstractions only, turning linear types into multi types: they have as many premises as the elements in the (possibly
empty) set of indices $I$ (when $I = \emptyset$, the rule has no premises, and it gives an empty multi type $\emptymset$). 
The $\ruleMany$ rules say how many ``copies'' of one occurrence of abstraction or variable in a term $\tm$ are needed to evaluate~$\tm$.
Essentially, they correspond to the promotion rule of linear logic,
which, in the \cbv representation of the $\lambda$-calculus, is indeed used for typing
abstractions and variables.\smallskip

\paragraph*{The Sizes of Type Derivations} Our study being quantitative, we need a notion of size of type derivations. In fact, we shall use \emph{two} notions of size.

\begin{definition}[Derivation size(s)]
\label{def:two-sizes}
	Let $\tderiv$ be a derivation. 
	The \emph{(general) size} $\size{\tderiv}$ of $\tderiv$ is the number of 
	rule occurrences in $\tderiv$ except for the rule $\ruleMany$.
	The \emph{multiplicative size} $\sizem{\tderiv}$ of $\tderiv$ is the number of occurrences of the rules $\lambda$ and 
$\ruleAp$ in $\tderiv$.
\end{definition}
The two sizes for a derivation play different qualitative and quantitative roles. \emph{Qualitative}: to have a 
combinatorial proof of the characterization of solvable terms, we need a measure that 
decreases for some kinds of $\tovsub$ steps; 
this role is played by the general size $\size{\!\cdot\!}$.
	\emph{Quantitative}: to count the number of $\tom$ steps in 
	solving evaluations, \ie the cost model;
	his role is played by the multiplicative size $\sizem{\!\cdot\!}$. \smallskip

\paragraph*{Substitution and Removal Lemmas}
The two next lemmas establish a key feature of this type system: in a typed term $\tm$,
substituting a value for
a variable as in the exponential step, or dually removing a value, preserves the type of $\tm$ and \emph{consumes} (dually, \emph{adds}) the multi type of the variable.
Besides, it also provides  quantitative information about the type derivation for $\tm$ before and after the substitution/removal.

\begin{lemma}[Substitution]
	\label{l:substitution}	
	\NoteProof{lappendix:substitution}
	Let $\tm$ be a term, $\val$ be a value and $\namedtyjp{\tderiv}{}{\tm}{\typctx, \var \hastype 
		\mtypetwo}{\mtype}$ and $\namedtyjp{\tderivtwo}{}{\val}{\typctxtwo}{\mtypetwo}$ be derivations.
	Then there is a derivation $\namedtyjp{\tderivthree}{}{\tm \isub{\var}{\val}}{\typctx \mplus \typctxtwo}{\mtype}$ 
	with $\sizem{\tderivthree} = \sizem{\tderiv} + \sizem{\tderivtwo}$ and $\size{\tderivthree} \leq \size{\tderiv} + 
	\size{\tderivtwo}$. 
\end{lemma}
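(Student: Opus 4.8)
The plan is to prove the statement by induction on the type derivation $\tderiv$. For the induction to go through, I would first slightly generalize the statement, allowing the type of $\tm$ (hence the judgment derived by $\tderiv$ and by $\tderivthree$) to be \emph{either} a linear type $\ltype$ \emph{or} a multi type $\mtype$, while keeping the hypothesis on $\tderivtwo$ unchanged (note that $\mtypetwo$, being the type that a type context assigns to the variable $\var$, is necessarily a multi type). This generalization is forced on us because, when $\tderiv$ ends with rule $\ruleMany$, the inductive hypothesis must be applied to its premises, which are \emph{linear} derivations of a value; the original statement is then the special case in which the type of $\tm$ is multi.

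Before the induction I would isolate an auxiliary \textbf{splitting fact}: if $\tderivtwo \triangleright \typctxtwo \vdash \val \hastype \mtype_1 \mplus \dots \mplus \mtype_n$, then there are derivations $\tderivtwo_i \triangleright \typctxtwo_i \vdash \val \hastype \mtype_i$ ($1 \le i \le n$) with $\typctxtwo = \typctxtwo_1 \mplus \dots \mplus \typctxtwo_n$, $\sizem{\tderivtwo} = \sum_i \sizem{\tderivtwo_i}$, and $\size{\tderivtwo} = \sum_i \size{\tderivtwo_i}$. This holds because a value can be given a multi type only by rule $\ruleMany$, so $\tderivtwo$ ends with $\ruleMany$ with linear premises typing $\val$; one partitions these premises according to a decomposition of the underlying multiset as $\mtype_1 \mplus \dots \mplus \mtype_n$ and re-applies $\ruleMany$ to each block. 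Since $\ruleMany$ contributes nothing to either size, the size identities are immediate. (In particular, for $n=1$ this just strips the outer $\ruleMany$, exposing the unique linear premise.)

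Then I would go through the five possible last rules of $\tderiv$. For $\ruleAx$, $\tm$ is a variable $\varthree$: if $\varthree \neq \var$ then $\mtypetwo = \emptymset$, which forces $\typctxtwo$ to be empty and $\tm\isub{\var}{\val} = \tm$, so $\tderivthree \defeq \tderiv$ works with both sizes unchanged; if $\varthree = \var$ then $\mtypetwo$ is a singleton, $\tm\isub{\var}{\val} = \val$, and the splitting fact with $n=1$ turns $\tderivtwo$ into the required $\tderivthree$ — this is the only case where the bound on $\size{\tderivthree}$ is a strict inequality rather than an equality, because the axiom rule of $\tderiv$ is dropped while $\sizem{}$ is unaffected (axioms do not count in $\sizem{}$). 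For $\ruleFun$, $\tm = \la\vartwo\tmtwo$; up to $\alpha$-renaming I take $\vartwo \notin \fv{\val} \cup \{\var\}$, so $\vartwo \notin \dom{\typctxtwo}$ and the context of the premise splits cleanly; I apply the inductive hypothesis to the premise together with $\tderivtwo$ and reapply $\ruleFun$. For $\ruleAp$ and $\ruleES$, the declaration $\var \hastype \mtypetwo$ splits over the two premises as $\var \hastype \mtypetwo_1$ and $\var \hastype \mtypetwo_2$ with $\mtypetwo = \mtypetwo_1 \mplus \mtypetwo_2$; I use the splitting fact to cut $\tderivtwo$ into $\tderivtwo_1, \tderivtwo_2$, apply the inductive hypothesis to each premise paired with the matching piece, and recombine with the same rule. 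For $\ruleMany$, $\tm$ is a value, $\var \hastype \mtypetwo$ is distributed over the (linear) premises, the splitting fact cuts $\tderivtwo$ accordingly, the inductive hypothesis is applied to each premise, and $\ruleMany$ is reapplied (the empty case $\mtype = \emptymset$ being trivial, as then $\typctxtwo$ is empty). In every inductive case the size accounting is routine: the rule reintroduced at the end adds the same amount to $\sizem{\tderivthree}$ and to $\size{\tderivthree}$ as in $\tderiv$ (or nothing, for $\ruleMany$), and all of $\tderivtwo$'s size is accounted for by the splitting fact.

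The main obstacle — and the point of the generalization and of the splitting fact — is the interplay between the two-layered structure of the system (linear versus multi judgments, with $\ruleMany$ as the only bridge) and the fact that the substituted value $\val$ carries a multi type that must be redistributed across several sub-derivations, each using $\var$ with its own multiplicity. One must ensure that the fragment of $\tderivtwo$ passed to each premise types $\val$ with exactly the multi type that the occurrences of $\var$ in that premise demand, which is precisely what the splitting fact delivers; everything else (the $\alpha$-renaming keeping $\vartwo$ out of $\fv{\val}$, the associativity/commutativity of $\mplus$ on type contexts, and the arithmetic of the two sizes) is bureaucratic.
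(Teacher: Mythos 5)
Your proposal is correct and matches the paper's proof in all essentials: the paper also isolates the same splitting lemma for value derivations (its Lemma on ``Typing of values: splitting'') and performs the same case-by-case redistribution of $\tderivtwo$ over the premises, with identical size bookkeeping and the same observation that strictness of $\size{\tderivthree} \leq \size{\tderiv}+\size{\tderivtwo}$ arises only at axioms for $\var$. The only difference is organizational: the paper inducts on the term $\tm$ (so each case bundles the final $\ruleMany$ with its $\ruleAx$ or $\ruleFun$ premises and never needs a linear-judgment version of the statement), whereas you induct on the derivation and therefore generalize to linear judgments to pass through $\ruleMany$ --- a sound and equivalent refactoring.
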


\begin{lemma}[Removal]
	\label{l:anti-substitution}
	\NoteProof{lappendix:anti-substitution}
	Let $\tm$ be a term, $\val$ be a value, and 
	$\namedtyjp{\tderiv}{}{\tm\isub{\var}{\val}}{\typctx}{\mtype}$
	be a derivation. 
	Then there are two  derivations $\namedtyjp{\tderivtwo}{}{\tm}{\typctxtwo, \var \hastype 
		\mtypetwo}{\mtype}$ 
	and $\namedtyjp{\tderivthree}{}{\val}{\typctxthree}{\mtypetwo}$ such that $\typctx = \typctxtwo \mplus \typctxthree$ 
	with $\sizem{\tderiv} = \sizem{\tderivtwo} + \sizem{\tderivthree}$ and $\size{\tderiv} \leq \size{\tderivtwo} + 
	\size{\tderivthree}$.
\end{lemma}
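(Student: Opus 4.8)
The statement is the dual of the Substitution Lemma, and the plan is to prove it by structural induction on $\tm$, relying on one auxiliary fact: a \emph{merging lemma for values}. The latter says that for any value $\val$ and any finite family of derivations $(\tderivthree_i \triangleright \typctxthree_i \vdash \val \hastype \mtype_i)_{i \in I}$ there is a derivation $\tderivthree \triangleright \bigmplus_{i \in I} \typctxthree_i \vdash \val \hastype \bigmplus_{i \in I} \mtype_i$ with $\sizem{\tderivthree} = \sum_{i \in I} \sizem{\tderivthree_i}$ and $\size{\tderivthree} = \sum_{i \in I} \size{\tderivthree_i}$. This holds because a derivation typing a value by a multi type necessarily ends with $\ruleMany$, so one just places all the linear premises of the $\tderivthree_i$ under a single $\ruleMany$; both size identities are then immediate, since $\ruleMany$ is counted by neither $\size{\cdot}$ nor $\sizem{\cdot}$. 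Throughout I also use the standard fact that $\dom{\typctx} \subseteq \fv{\tm}$ for every derivation $\tderiv \triangleright \typctx \vdash \tm \hastype \mtype$.

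The base cases are the two kinds of variable. If $\tm = \var$ then $\tm\isub{\var}{\val} = \val$, so $\tderiv$ is already a derivation of $\val$: take $\tderivthree \defeq \tderiv$ (hence $\typctxthree \defeq \typctx$ and $\mtypetwo \defeq \mtype$) and let $\tderivtwo$ be the derivation of $\var \hastype \mtype \vdash \var \hastype \mtype$ obtained by applying $\ruleMany$ to one instance of $\ruleAx$ per element of $\mtype$ (so $\typctxtwo$ is empty). If $\tm = \vartwo \neq \var$ then $\tm\isub{\var}{\val} = \vartwo$ and $\var \notin \dom{\typctx}$, so take $\tderivtwo \defeq \tderiv$, $\mtypetwo \defeq \emptymset$, and let $\tderivthree$ be the $\ruleMany$ with no premises typing $\val$ by $\emptymset$. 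In both cases $\typctx = \typctxtwo \mplus \typctxthree$ and the size (in)equalities hold by inspection. Note that the slack in $\size{\tderiv} \leq \size{\tderivtwo} + \size{\tderivthree}$ already shows up in the first base case, where $\tderivtwo$ re-derives the substituted variable, while $\sizem{\cdot}$ stays exact because $\ruleAx$ does not count towards it; this asymmetry then propagates unchanged through the induction.

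In the inductive cases I remove the last rule(s) of $\tderiv$ according to the shape of $\tm$. If $\tm = \tmtwo\tmthree$, then $\tderiv$ ends with $\ruleAp$ on premises typing $\tmtwo\isub{\var}{\val}$ and $\tmthree\isub{\var}{\val}$; apply the i.h. to both, rebuild $\tderivtwo$ with an $\ruleAp$, and merge the two resulting derivations of $\val$ to obtain $\tderivthree$; the context $\typctx$ then splits as required by commutativity and associativity of $\mplus$. The cases $\tm = \tmtwo\esub{\vartwo}{\tmthree}$ (last rule $\ruleES$) and $\tm = \la\vartwo\tmtwo$ (last rules $\ruleMany$ above $\ruleFun$'s, after $\alpha$-renaming so that $\vartwo \notin \fv{\val} \cup \{\var\}$) are handled the same way, applying the i.h. to $\tmtwo$ (and $\tmthree$); in the two binder cases I use $\dom{\typctxthree} \subseteq \fv{\val}$ together with $\vartwo \notin \fv{\val}$ to deduce from the i.h. that $\vartwo$ retains its full multi type in the context returned for $\tmtwo$, so that $\ruleFun$ (\resp $\ruleES$) can be reapplied. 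The size bookkeeping is routine: every reused structural rule is counted on both sides, merging is exact, and the inequality for $\size{\cdot}$ is inherited from the premises.

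The argument is essentially mechanical. The only points that need care are establishing and then consistently using the merging lemma for values — it is what recombines the copies of $\val$ scattered by the substitution — and the type-context accounting: tracking where the substituted variable $\var$ and the bound variables sit in each context, and ensuring that exactly the ``$\var$-column'' is transferred to the derivation of $\val$. I do not expect a genuine obstacle.
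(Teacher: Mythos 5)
Your proof follows essentially the same route as the paper's: induction on $\tm$, with the same treatment of the two variable base cases and an $n$-ary merging lemma for values (the paper states the nullary and binary cases of this lemma and applies them repeatedly) to recombine the scattered copies of $\val$ in the application, abstraction and \ES cases. The size bookkeeping, including your observation that the only slack in $\size{\cdot}$ originates at the substituted-variable base case while $\sizem{\cdot}$ stays exact, matches the paper's proof.
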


\paragraph*{The Special Role of Inert Terms} In the characterizations via multi types of the following two sections, inert terms play a crucial role. In statements about solvable normal forms, they usually satisfy stronger properties, essential for the induction to go through. Predicates shall also \emph{spread} on inert terms: if assumed on the type context, they transfer to the right-hand type, which is in turn the key step to propagate the predicate on sub-derivations. 

\section{Multi Types for Open \cbv}
\label{sect:open}

Here we recall the relationship between \cbv multi types and \ocbv developed by Accattoli and Guerrieri in 
\cite{DBLP:conf/aplas/AccattoliG18}. The reason is threefold: 
\begin{enumerate}
\item \emph{Building block}: the solvable case relies on the open one, because the solving strategy is an iteration of open evaluation under head abstractions. 
\item \emph{Blueprint}: the open case provides the blueprint for the solvable case. 
\item \emph{Adapting a few details}: the development in \cite{DBLP:conf/aplas/AccattoliG18} needs to be slightly adapted to our present framework. 
Namely, here we use the Open VSC 
instead of the \emph{split fireball calculus} used in \cite{DBLP:conf/aplas/AccattoliG18} (another formalism for \ocbv),
and we include a \emph{ground type} $\ground$, necessary to later deal with the solving strategy.
\end{enumerate}

\paragraph*{The Open Size of Terms} For our quantitative study, we need a notion of term size, introduced here. Now, we actually need a notion of size for each evaluation strategy that we aim at measuring via multi types. Essentially, the size counts the constructors of a term that can be traversed by the strategy. The \emph{open size} $\sizeo{\tm}$ of a term $\tm$, then, is its number of applications out of abstractions, \ie 
	\begin{align*}
	\sizeo{\var} &\defeq 0 
	&
	\sizeo{\la{\var}{\tm}} & \defeq  0
	& 
	\sizeo{\tm\tmtwo} & \defeq  \sizeo{\tm} + \sizeo{\tmtwo} + 1 
	&
	\sizeo{\tm \esub{\var}{\tmtwo}} & \defeq  \sizeo{\tm} + \sizeo{\tmtwo}.
	\end{align*}

\paragraph*{Overview of the Characterization}
Qualitatively, the open evaluation of $\tm$ terminates if and only if $\tm$ is typable.
Since $\tovsubo$ does not reduce under abstractions, every abstraction is $\osym$-normal and hence must be typable: for this reason, $\la{\var}\delta\delta$ is typable with $\emptytype$ (take the derivation only made of one rule $\ruleManyVal$ with $0$ premises), though $\delta\delta$ is not.

Quantitatively, the multiplicative size 
$\sizem{\tderiv}$ of every type derivation $\tderiv$ for $\tm$ provides bounds the sum of the length of the open evaluation of $\tm$ plus 
the open size of its open normal form. To obtain exact bounds, then, one has to consider only type derivations satisfying a further \emph{tight} predicate, defined below over the auxiliary \emph{inert} predicate for type derivations. 

\paragraph*{Inert and Tight Derivations} 
A multi type is \emph{ground} if it is of the form $n\mset{\ground} \defeq \mset{ \ground, \dots, \ground}$  ($n$  times $\ground$) for any $n \geq 0$ (so, $0\mset{\ground} = \emptytype$).
\emph{Inert types} are defined below, with $n \geq 0$.
\begin{align*}
	\textsc{Inert multi type }& \imtype \grameq 	\mset{ \inltype_1, \dots, \inltype_n}
	&
	\textsc{Inert linear type }& \inltype \grameq \ground \mid \larrow{n\mset{\ground}}{\imtype} 
\end{align*}
A type context $\var_1 \hastype \mtype_1, \dots, \var_n \hastype \mtype_n$ is \emph{inert} if $\mtype_1, \dots, 
\mtype_n$ are inert multi types. 
Note that 
every ground multi type is inert.

A derivation $\namedtyjp{\tderiv}{}{\tm}{\typctx}{\mtype}$ is \emph{inert} if $\typctx$ is an inert type context and $\mtype$ is an inert multi type. If, moreover, $\mtype$ is a ground multi type, then $\tderiv$ is \emph{tight}.
Note that the definitions of inert and tight derivation depend only on its final judgment.

Tight
derivations are those we are actually interested in, but often, for the induction to go through, we have to consider the wider class of inert derivations.
For instance, tight derivations may have a complex
structure, having sub-derivations for inert terms that might
not be tight, but only inert.
Additionally, inert types spread on inert terms---the first key property of inert terms.

\begin{lemma}
	[Spreading of inertness on judgments]
	\label{l:spread-inert}
	\NoteProof{lappendix:spread-inert}
	Let $\concl{\tderiv}{\typctx}{\itm}{\mtype}$ be a derivation and $\itm$ be an inert term. 
	If $\typctx$ is a inert type context, then $\mtype$ is a inert multi type (and so $\tderiv$ is inert).
\end{lemma}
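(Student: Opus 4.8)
The plan is to prove the statement by induction on the inert term $\itm$, exploiting that the shape of $\itm$ fixes the last rule of $\tderiv$: since the conclusion is a multi judgment, for $\itm = \var$ that rule is $\ruleMany$, for $\itm$ a proper inert application it is $\ruleAp$, and for $\itm$ a proper inert explicit substitution it is $\ruleES$ (see \Cref{fig:cbvtypes}). Two elementary observations about inert types from \Cref{sect:open} will be used repeatedly. First, inertness of type contexts splits over $\uplus$: if $\typctx \uplus \typctxtwo$ is inert then so are $\typctx$ and $\typctxtwo$, because the linear types occurring in $(\typctx \uplus \typctxtwo)(\var)$ are exactly those of $\typctx(\var)$ together with those of $\typctxtwo(\var)$. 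Second, if a \emph{singleton} multi type $\mset{\larrow{\mtypetwo}{\mtype}}$ is inert, then $\larrow{\mtypetwo}{\mtype}$ is an inert linear type, which being an arrow must have the form $\larrow{n\mset{\ground}}{\imtype}$ with $\imtype$ an inert multi type; in particular its codomain $\mtype$ is inert. This second observation is what lets inertness cross an application.

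For the base case $\itm = \var$, the derivation ends with $\ruleMany$ over instances of $\ruleAx$, so $\typctx = \var \hastype \mtype$ and $\mtype = \typctx(\var)$ is inert because $\typctx$ is. For $\itm = \var\fire$ or $\itm = \pitm\fire$, the last rule is $\ruleAp$: its left premise types the head in a context $\typctxtwo$ with a singleton multi type $\mset{\larrow{\mtypetwo}{\mtype}}$ whose codomain $\mtype$ is the conclusion type, its right premise types the argument $\fire$ in a context $\typctxthree$, and $\typctx = \typctxtwo \uplus \typctxthree$. By the first observation $\typctxtwo$ is inert; note that we need not inspect the type of the argument $\fire$, which may well be non-inert. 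If the head is $\var$ then $\typctxtwo = \var \hastype \mset{\larrow{\mtypetwo}{\mtype}}$, so the singleton is inert; if the head is $\pitm$, the induction hypothesis applied to $\pitm$ with the inert context $\typctxtwo$ gives that the singleton is inert. Either way, the second observation yields that $\mtype$ is an inert multi type.

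In the remaining case $\itm = \pitm \esub{\var}{\pitmtwo}$---with $\pitm$ and $\pitmtwo$ proper inert, hence inert, terms---the derivation ends with $\ruleES$, with premises $\typctxtwo, \var \hastype \mtypetwo \vdash \pitm \hastype \mtype$ and $\typctxthree \vdash \pitmtwo \hastype \mtypetwo$, and conclusion context $\typctx = \typctxtwo \uplus \typctxthree$. By the first observation $\typctxtwo$ and $\typctxthree$ are inert. Applying the induction hypothesis to $\pitmtwo$, with the inert context $\typctxthree$, shows that $\mtypetwo$ is inert, hence $\typctxtwo, \var \hastype \mtypetwo$ is an inert type context, and a second use of the induction hypothesis, now on $\pitm$, gives that $\mtype$ is inert. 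In every case $\tderiv$ is then inert, its conclusion context being inert by hypothesis and its conclusion type having just been shown inert.

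I do not expect a real obstacle: the only point deserving attention is the explicit-substitution case, where the induction hypothesis must be invoked twice---first on the content $\pitmtwo$, to learn that the multi type of the substituted variable is inert, and only afterwards on the body $\pitm$. Everything else is a routine case analysis once the two observations above are in place; the fireball structure of the arguments in proper inert terms is harmless precisely because inertness has to be transmitted only through the heads of applications and the contents of explicit substitutions, never through the arguments of applications.
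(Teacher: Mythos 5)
Your proof is correct and follows essentially the same route as the paper's: induction on the structure of the inert term, splitting inertness of the type context across $\uplus$, pushing it through the head of applications via the inert-arrow codomain, and, in the explicit-substitution case, first applying the induction hypothesis to the substitution's content to recover inertness of the bound variable's multi type before treating the body. The only (inessential) difference is that the paper performs the induction on the equivalent unified grammar $\itm \grameq \var \mid \itm\fire \mid \itm\esub{\var}{\pitm}$ of \Cref{rmk:inert-alternative}, which merges your two application sub-cases into one.
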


\paragraph*{Correctness}
Open correctness establishes that all typable terms $\osym$-normalize and the multiplicative size of the derivation bounds the number of $\tomo$ steps plus the open size of the $\osym$-normal form; this bound is exact if the derivation is tight. 
Open correctness is proved following a standard scheme in two stages: $(a)$ \emph{quantitative} subject reduction states that every $\tovsubo$ step preserves types and decreases the general size of a derivation, and that any $\tomo$ step decreases by an exact quantity the multiplicative size of a derivation;
$(b)$ a lemma states that the multiplicative size of any derivation typing a $\osym$-normal form $\tm$ provides an \emph{upper bound} to the open size of $\tm$, and if moreover the derivation is tight then the bound~is~\emph{exact}. The lemma follows---its unusual statement puts forward the special role of inert terms.
\newcounter{l:size-fireballs}
\addtocounter{l:size-fireballs}{\value{theorem}}
\begin{lemma}[Size of fireballs]
	\label{l:size-fireballs}
	\NoteProof{lappendix:size-fireballs}
	Let $\fire$ be a fireball. 
	If $\namedtyjp{\tderiv}{}{\fire}{\typctx}{\mtype}$
	then $\sizem{\tderiv} \geq \sizeo{\fire}$.
	If, moreover, $\typctx$ is inert and ($\mtype$ is ground inert or $\fire$ is inert), then 
$\sizem{\tderiv} = \sizeo{\fire}$.
\end{lemma}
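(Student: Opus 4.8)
The plan is to prove both statements at once, by induction on the fireball $\fire$, inspecting the last rule of $\tderiv$: it is determined by the shape of $\fire$ and, when $\fire$ is a proper inert term, by the production used for it. Throughout, I rely on the elementary fact that a sub-multiset of an inert multi type is again an inert multi type, so that whenever $\typctx = \typctx_1 \mplus \typctx_2$ with $\typctx$ inert, both $\typctx_1$ and $\typctx_2$ are inert; this is what lets the inertness hypotheses flow into the sub-derivations.

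For the inequality $\sizem{\tderiv} \geq \sizeo{\fire}$ the cases are direct. If $\fire = \val$ is a value then $\sizeo{\val} = 0 \leq \sizem{\tderiv}$. If $\fire$ is a proper inert term of shape $\vartwo\fire'$ or $\pitm'\fire'$, the last rule is $\ruleAp$, which contributes $1$ to $\sizem{\cdot}$; its right premise types $\fire'$ and its left premise types the head---either a variable, whose derivations have multiplicative size $0$, or a proper inert term---so the induction hypothesis gives $\sizem{\tderiv} \geq 1 + \sizeo{\fire'}$, \resp $\sizem{\tderiv} \geq 1 + \sizeo{\pitm'} + \sizeo{\fire'}$, which is exactly $\sizeo{\fire}$. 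If $\fire$ has shape $\pitm'\esub{\vartwo}{\pitmtwo}$ or $\fire'\esub{\vartwo}{\pitm}$, the last rule is $\ruleES$, which does not count in $\sizem{\cdot}$, and summing the two induction hypotheses for the two sub-terms matches $\sizeo{\cdot}$ of an explicit substitution on the nose.

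For the ``moreover'' part I rerun the same case analysis, now threading the inertness hypotheses and using the Spreading Lemma (\Cref{l:spread-inert}) to turn inertness of a typing context into inertness of the right-hand type. Two base cases. (i) If $\fire$ is a variable then $\sizem{\tderiv} = 0 = \sizeo{\fire}$ with no hypothesis needed, since a derivation of a variable uses only $\ruleAx$ and $\ruleMany$. (ii) If $\fire$ is an abstraction then it is not inert, so we must be in the ``$\mtype$ is ground inert'' case; but an abstraction cannot be typed with the linear type $\ground$ (the only rule typing an abstraction with a linear type, $\ruleFun$, produces an arrow), so the multiset $\mtype = n\mset{\ground}$ must be $\emptymset$, whence $\tderiv$ is a single $\ruleMany$ with no premise and $\sizem{\tderiv} = 0 = \sizeo{\fire}$. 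In the application cases $\vartwo\fire'$ and $\pitm'\fire'$, the head's type $\mset{\larrow{\mtype'}{\mtypetwo}}$ is an inert multi type---it is a sub-multiset of $\typctx(\vartwo)$ when the head is a variable, and it is inert by Spreading in the inert context $\typctx_1$ when the head is a proper inert term---hence $\larrow{\mtype'}{\mtypetwo}$ is an inert linear type, forcing $\mtype' = n\mset{\ground}$ (ground inert) and $\mtypetwo$ inert; the induction hypothesis then applies to $\fire'$ through the ground-inert branch and to the head through the inert branch, and equality propagates to $\tderiv$. In the explicit-substitution cases $\pitm'\esub{\vartwo}{\pitmtwo}$ and $\fire'\esub{\vartwo}{\pitm}$, Spreading applied to the (proper inert) right sub-term in the inert context $\typctx_2$ shows that its type $\mtype'$ is inert, so the context $\typctx_1, \vartwo \hastype \mtype'$ of the left premise is inert and the induction hypothesis applies to the left sub-term---through the ground-inert branch if $\mtype$ is ground inert, and otherwise through the inert branch, since then $\fire$ is inert, hence a proper inert term, hence so is its left sub-term.

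I expect the main difficulty to be the bookkeeping of the two alternatives of the second hypothesis, ``$\mtype$ is ground inert'' versus ``$\fire$ is inert'', so that the appropriate one is available at each recursive call. The delicate case is $\fire = \fire'\esub{\vartwo}{\pitm}$, where $\fire$ itself need not be inert: there the induction hypothesis for $\fire'$ must be reached via the ground-inert branch, and the whole argument hinges on checking---via Spreading---that every typing context handed to a recursive call stays inert.
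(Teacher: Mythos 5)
Your proof is correct and follows essentially the same route as the paper's: a structural induction over fireballs (the paper phrases it as a mutual induction on fireballs and inert terms), with inert type contexts split across premises, the Spreading Lemma forcing the head's arrow type to have a ground domain in the application case, and the ground-inert branch of the induction hypothesis carrying the non-inert cases (abstractions and $\fire'\esub{\vartwo}{\pitm}$). The bookkeeping of the two disjuncts that you flag as the main difficulty is handled exactly as in the paper.
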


\emph{Example}. The non-inert fireball $\delta \defeq \la{\var}{\var\var}$ is typable and the last rule of any derivation $\concl{\tderiv}{\typctx}{\delta}{\mtype}$ is 
$\ruleManyVal$.
If $\mtype$ is not ground then $\ruleManyVal$ has at least one premise that types the subterm $\var\var$, so $\sizem{\tderiv} > \sizeo{\delta}$.
If $\mtype$ is ground, then $\mtype = \emptytype$ and $\ruleManyVal$ has no premises, with $\sizem{\tderiv} = 0 = \sizeo{\delta}$.
For inert terms, \Cref{l:size-fireballs} says that their open size is the multiplicative size of their \emph{inert} derivations; 
by spreading of inertness (\Cref{l:spread-inert}), it amounts to say that~the~type~context~is~inert.

\begin{proposition}[Open quantitative subject reduction]
	\label{prop:weak-subject-reduction}
	\NoteProof{propappendix:weak-subject-reduction}
	Let $\namedtyjp{\tderiv}{}{\tm}{\typctx}{\mtype}$ be a derivation.
	\begin{enumerate}
		\item\emph{Multiplicative step:} if $\tm \towm \tm'$ then there is a derivation 
$\namedtyjp{\tderiv'}{}{\tm'}{\typctx}{\mtype}$ with
		$\sizem{\tderiv'} = \sizem{\tderiv} - 2$ and $\size{\tderiv'} = \size{\tderiv} - 1$; 
		\item\emph{Exponential step:} if $\tm \towe \tm'$ then there is a derivation 
$\namedtyjp{\tderiv'}{}{\tm'}{\typctx}{\mtype}$ such that
		$\sizem{\tderiv'} = \sizem{\tderiv}$ and $\size{\tderiv'} < \size{\tderiv}$.
	\end{enumerate}
\end{proposition}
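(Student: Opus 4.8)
The plan is to prove both items simultaneously by induction on the open context $\weakctx$ witnessing the step, i.e.\ $\tm = \weakctxp{r}$, $\tm' = \weakctxp{r'}$, and $r\rtom r'$ (resp.\ $r\rtoe r'$) is the root redex; since open contexts never go under $\lambda$, there are only four inductive cases (the hole is in the left or right side of an application, or in the body or the content of an \ES) besides the base case $\weakctx=\ctxhole$. The inductive cases are uniform and shallow: the last rule of $\tderiv$ is forced to be $\ruleAp$ or $\ruleES$, one of its premises types the sub-term containing the hole, the induction hypothesis replaces that premise's sub-derivation by one with the announced size variation, and re-applying the same last rule on the other (untouched) premise gives $\tderiv'$ with the same final judgment and the same variation---because $\ruleAp$ and $\ruleES$ are kept, and $\ruleAp$ counts $1$ for both $\size{\cdot}$ and $\sizem{\cdot}$ while $\ruleES$ counts $1$ for $\size{\cdot}$ and $0$ for $\sizem{\cdot}$. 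So all the content sits in the base case $\weakctx=\ctxhole$, where I would first record a routine auxiliary lemma (induction on $\subctx$): a derivation with conclusion $\typctx\vdash\subctxp{\tm}\hastype\mtype$ decomposes, up to the obvious rearrangement, as a stack of $\lvert\subctx\rvert$ instances of $\ruleES$ (over sub-derivations for the contents of the \ES of $\subctx$) on top of a derivation of $\typctx_0\vdash\tm\hastype\mtype$, and conversely; this ``$\subctx$-layer'' contributes $0$ to $\sizem{\cdot}$ and a fixed amount $n$ to $\size{\cdot}$, unchanged in either direction.

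\textbf{Multiplicative base case:} $\tm=\subctxp{\la\var\tmthree}\,\tmtwo\rtom\subctxp{\tmthree\esub\var\tmtwo}=\tm'$. The last rule of $\tderiv$ is $\ruleAp$, with premises $\typctx_1\vdash\subctxp{\la\var\tmthree}\hastype\mset{\larrow{\mtypetwo}{\mtype}}$ and $\concl{\tderivtwo}{\typctx_2}{\tmtwo}{\mtypetwo}$, $\typctx=\typctx_1\mplus\typctx_2$. By the auxiliary lemma the first premise is a $\subctx$-layer over a derivation typing the abstraction $\la\var\tmthree$ with a \emph{singleton} multi type; since an abstraction gets a multi type only via $\ruleMany$, that is a $\ruleMany$ with exactly one premise, which is a $\ruleFun$, built on a derivation $\concl{\tderivthree}{\typctx_1', \var\hastype\mtypetwo}{\tmthree}{\mtype}$. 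I form $\tderiv'$ by feeding $\tderivthree$ and $\tderivtwo$ into a fresh $\ruleES$ typing $\tmthree\esub\var\tmtwo$ and putting the $\subctx$-layer back on top; its conclusion is again $\typctx\vdash\tm'\hastype\mtype$. Relative to $\tderiv$, this $\tderiv'$ has lost one $\ruleAp$, one $\ruleFun$ and one $\ruleMany$, and gained one $\ruleES$: hence $\sizem{\tderiv'}=\sizem{\tderiv}-2$ ($\ruleMany$ and $\ruleES$ being invisible to $\sizem{\cdot}$) and $\size{\tderiv'}=\size{\tderiv}-2+1=\size{\tderiv}-1$ (only $\ruleMany$ being invisible to $\size{\cdot}$).

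\textbf{Exponential base case:} $\tm=\tmthree\esub\var{\subctxp{\val}}\rtoe\subctxp{\tmthree\isub\var\val}=\tm'$. The last rule of $\tderiv$ is $\ruleES$, with premises $\concl{\tderivtwo}{\typctx_1,\var\hastype\mtypetwo}{\tmthree}{\mtype}$ and a derivation of $\typctx_2\vdash\subctxp{\val}\hastype\mtypetwo$, $\typctx=\typctx_1\mplus\typctx_2$; by the auxiliary lemma the latter is a $\subctx$-layer over a derivation $\concl{\tderivthree}{\typctx_2'}{\val}{\mtypetwo}$. The Substitution lemma (\Cref{l:substitution}) applied to $\tderivtwo$ and $\tderivthree$ yields a derivation of $\typctx_1\mplus\typctx_2'\vdash\tmthree\isub\var\val\hastype\mtype$ with multiplicative size $\sizem{\tderivtwo}+\sizem{\tderivthree}$ and general size $\leq\size{\tderivtwo}+\size{\tderivthree}$; reinstating the $\subctx$-layer on it gives $\tderiv'$ with conclusion $\typctx\vdash\tm'\hastype\mtype$. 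Writing $m$ for the $\sizem{\cdot}$-weight of the $\subctx$-layer (just the $\sizem{\cdot}$ of its \ES-contents) and $n$ as above, we get $\sizem{\tderiv}=\sizem{\tderivtwo}+\sizem{\tderivthree}+m=\sizem{\tderiv'}$, and $\size{\tderiv}=1+\size{\tderivtwo}+\size{\tderivthree}+n$ (the $1$ being the top $\ruleES$), while $\size{\tderiv'}\leq\size{\tderivtwo}+\size{\tderivthree}+n=\size{\tderiv}-1<\size{\tderiv}$.

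\textbf{Expected main obstacle.} The delicate point is the bookkeeping around substitution contexts in the two base cases: the auxiliary ``typing versus $\subctx$'' lemma must be stated and proved first, and one must observe that the $\ruleMany$ node wrapping the abstraction in the multiplicative case is invisible to both sizes---this is exactly what turns a naive $\size{\cdot}$-decrease of $2$ into a decrease of $1$. The exponential case yields only a strict, not exact, $\size{\cdot}$-decrease because it inherits the inequality of the Substitution lemma: inside $\tmthree$ the value $\val$ may be typed several times, or with $\emptytype$, so the output of the Substitution lemma need not attain $\size{\tderivtwo}+\size{\tderivthree}$.
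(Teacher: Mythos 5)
Your proposal is correct and follows essentially the same route as the paper's proof: induction on the open context, with the multiplicative root case handled by peeling off the substitution-context layer of $\ruleES$ rules to expose the $\ruleMany$/$\ruleFun$ pair over the abstraction body, and the exponential root case reduced to the Substitution lemma (\Cref{l:substitution}), whose inequality is exactly what makes the $\size{\cdot}$-decrease strict rather than exact. The only difference is presentational: you factor the ``$\subctx$-layer'' decomposition into a stated auxiliary lemma, whereas the paper carries it out inline by displaying the forced shape of the derivation; your size bookkeeping matches the paper's exactly.
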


As the general size of derivations decreases after any $\tovsubo$ step, 
we have a combinatorial proof of open correctness.

\begin{theorem}[Open correctness]
	\label{thm:open-correctness}
	\NoteProof{thmappendix:open-correctness}
	Let 
	$\derive{\tderiv}{\tm}$ be a derivation.
	Then there is a $\osym$-normalizing evaluation $\deriv \colon \tm \tovsubo^* \tmtwo$ with $2\sizem{\deriv} + 
	\sizeo{\tmtwo} \leq \sizem{\tderiv}$.
	And if $\tderiv$ is tight, then $2\sizem{\deriv} + \sizeo{\tmtwo} = \sizem{\tderiv}$.
\end{theorem}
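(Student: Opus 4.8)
The plan is to prove the statement by induction on the general size $\size{\tderiv}$, using the quantitative subject reduction (\Cref{prop:weak-subject-reduction}) to make progress and the size-of-fireballs lemma (\Cref{l:size-fireballs}) as the base case. First I would observe that, since $\tm$ is typable by $\tderiv$ and $\tovsubo$ is confluent with the harmony property (\Cref{prop:properties-open-reduction}), any $\osym$-normal form reachable from $\tm$ is a fireball; moreover by the diamond property all $\osym$-normalizing evaluations from $\tm$ have the same length and the same number of $\msym$-steps, so it is enough to exhibit one such evaluation together with the claimed bound. The key point is that \Cref{prop:weak-subject-reduction} guarantees $\size{\!\cdot\!}$ strictly decreases along every $\tovsubo$ step (by $1$ for an $\msym$-step, strictly for an $\esym$-step), so the induction is well-founded.

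The two cases are as follows. If $\tm$ is $\osym$-normal, it is a fireball $\fire$, we take the empty evaluation $\deriv$ (so $\sizem{\deriv} = 0$), and \Cref{l:size-fireballs} gives $\sizeo{\fire} \leq \sizem{\tderiv}$, which is exactly $2\sizem{\deriv} + \sizeo{\tmtwo} \leq \sizem{\tderiv}$; if $\tderiv$ is tight then $\typctx$ is inert and $\mtype$ is ground inert, so the ``moreover'' part of \Cref{l:size-fireballs} yields equality. If instead $\tm$ is not $\osym$-normal, pick a step $\tm \tovsubo \tm'$. By \Cref{prop:weak-subject-reduction} there is a derivation $\tderiv'$ of $\tm'$ with the same type context and type (so $\tderiv'$ is tight whenever $\tderiv$ is) and with $\size{\tderiv'} < \size{\tderiv}$; the induction hypothesis gives a $\osym$-normalizing $\deriv' \colon \tm' \tovsubo^* \tmtwo$ with $2\sizem{\deriv'} + \sizeo{\tmtwo} \leq \sizem{\tderiv'}$, an equality in the tight case. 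Prefixing the chosen step yields $\deriv \colon \tm \tovsubo^* \tmtwo$. In the $\msym$-case, $\sizem{\deriv} = \sizem{\deriv'} + 1$ and $\sizem{\tderiv} = \sizem{\tderiv'} + 2$, so $2\sizem{\deriv} + \sizeo{\tmtwo} = 2\sizem{\deriv'} + 2 + \sizeo{\tmtwo} \leq \sizem{\tderiv'} + 2 = \sizem{\tderiv}$, with equality propagated in the tight case. In the $\esym$-case, $\sizem{\deriv} = \sizem{\deriv'}$ and $\sizem{\tderiv} = \sizem{\tderiv'}$, so the inequality transfers verbatim; equality transfers in the tight case as well.

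I expect the main subtlety not to be the counting but the interaction between the choice of step and the tight/inert bookkeeping: one must ensure that at each inductive step the derivation handed to the induction hypothesis is still tight (respectively inert) when the original one is. This is exactly what the phrasing of \Cref{prop:weak-subject-reduction} secures, since both multiplicative and exponential steps preserve $\typctx$ and $\mtype$, hence inertness and groundness of the final judgment, hence tightness. A second minor point is justifying that it suffices to track a single evaluation: this rests on the diamond property of $\tovsubo$ and the fact (stated after \Cref{prop:properties-open-reduction}) that all $\osym$-evaluations to normal form share length and number of $\msym$- and $\esym$-steps, so the bounds obtained for one evaluation are the bounds for all of them. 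No genuinely hard combinatorics are involved; the argument is a clean induction wrapping the two quantitative lemmas.
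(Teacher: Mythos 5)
Your proposal is correct and follows essentially the same route as the paper's proof: induction on the general size $\size{\tderiv}$, with \Cref{l:size-fireballs} handling the normal-form base case and \Cref{prop:weak-subject-reduction} supplying both the well-foundedness and the quantitative bookkeeping for the inductive step, tightness being preserved because the final judgment is unchanged. The only addition is your appeal to the diamond property to argue that one evaluation suffices, which is harmless but not needed since the statement only asks for the existence of one normalizing evaluation with the stated bound.
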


\paragraph*{Completeness}
Open completeness establishes that every $\osym$-normalizing term is typable, and with a tight derivation $\tderiv$ such that $\sizem\tderiv$ is exactly the number of $\tomo$ steps plus the open size of the $\osym$-normal form. 
The proof technique is standard: $(a)$ \emph{quantitative} subject expansion states that typability can be pulled back along $\tovsubo$ steps, increasing $\size\tderiv$;
$(b)$ a lemma states that every $\osym$-normal form is typable with a \emph{tight} derivation---inert terms verify a stronger statement. 

\newcounter{prop:precise-open-typability-nf}
\addtocounter{prop:precise-open-typability-nf}{\value{theorem}}
\begin{lemma}[Tight typability of open normal forms]
	\label{prop:precise-open-typability-nf} 
	\NoteProof{propappendix:precise-open-typability-nf}
	\begin{enumerate}
		\item \emph{Inert:}\label{p:precise-open-typability-nf-inert} if $\tm$ is an inert term then, for any inert multi 
type $\mtype$, there is an inert derivation $\concl{\tderiv}{\typctx}{\tm}{\mtype}$.
		\item \emph{Fireball:}\label{p:precise-open-typability-nf-fireball} if $\tm$ is a fireball then there is a tight derivation $\concl{\tderiv}{\typctx}{\tm}{\emptytype}$.		
	\end{enumerate}
\end{lemma}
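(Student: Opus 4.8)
The plan is to prove the two statements \emph{simultaneously} by induction on $\tm$, following the mutually recursive grammars of proper inert terms and fireballs (recall that inert terms are exactly the variables and the proper inert terms). The induction is well founded even though the proper-inert case of the fireball statement appeals to the inert statement \emph{on the same term}: that appeal is immediate, since $\emptytype$ is an inert multi type and an inert derivation with a ground right-hand type is by definition tight, while the inert statement itself recurses only on strictly smaller terms --- formally one may order the goals lexicographically by (size of $\tm$, ``inert before fireball''). Two facts are used throughout: $\emptytype$ is at once ground and inert, and $\larrow{\emptytype}{\mtype}$ is an inert linear type whenever $\mtype$ is an inert multi type (it is the case $n=0$ of $\larrow{n\mset{\ground}}{\imtype}$).

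\emph{Inert statement.} Fix an inert multi type $\mtype = \mset{\inltype_1,\dots,\inltype_n}$ and inspect the inert term $\tm$. If $\tm = \var$, use $\ruleAx$ once per $\inltype_i$ followed by $\ruleMany$, obtaining $\var \hastype \mtype \vdash \var \hastype \mtype$, whose context is inert. If $\tm = \itm'\fire$ is built by application (so $\itm'$ is a variable or a strictly smaller proper inert term, and $\fire$ a strictly smaller fireball), apply $\ruleAp$ with argument type $\emptytype$: the head must be typed with $\mset{\larrow{\emptytype}{\mtype}}$, an inert multi type, obtained either directly by $\ruleAx$ then $\ruleMany$ (when $\itm' = \var$) or by the induction hypothesis for the inert statement on $\itm'$; the argument $\fire$ is typed with $\emptytype$ by the induction hypothesis for the fireball statement, which also makes its context inert. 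The $\ruleAp$-conclusion's context, being a multiset union of inert contexts, is inert, and its right-hand type is the given $\mtype$. If $\tm = \pitm'\esub{\var}{\pitm''}$, invoke the induction hypothesis for the inert statement on $\pitm'$ with target $\mtype$, write the resulting inert context as $\typctx,\var\hastype\mtype'$ (so $\mtype'$ is inert), invoke the induction hypothesis for the inert statement on $\pitm''$ with target $\mtype'$, and combine with $\ruleES$; all contexts involved are inert.

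\emph{Fireball statement.} Inspect the fireball $\tm$. If $\tm = \val$ is a value (variable or abstraction), the single-rule derivation $\ruleMany$ with no premises gives $\vdash \val \hastype \emptytype$ with empty --- hence inert --- context: it is tight. If $\tm = \pitm$ is a proper inert term, apply the inert statement with the inert multi type $\emptytype$; the resulting inert derivation has a ground right-hand type, so it is tight. If $\tm = \fire'\esub{\var}{\pitm}$, take by the induction hypothesis a tight derivation of $\fire'$ with type $\emptytype$, write its (inert) context as $\typctx,\var\hastype\mtype$ (so $\mtype$ is inert), type $\pitm$ with $\mtype$ using the inert statement, and glue with $\ruleES$; the conclusion has inert context and right-hand type $\emptytype$, so it is tight.

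The closest thing to an obstacle is the bookkeeping of the shared variable in the two explicit-substitution cases: one must extract from the inductively built derivation of the body the multi type $\mtype'$ it assigns to the bound variable, note it is inert because that derivation's context is, and feed it back as the target type of the induction hypothesis for the sub-term sitting in the substitution, so that the premises of $\ruleES$ have matching types. Everything else is a routine assembly of the rules of \Cref{fig:cbvtypes}, repeatedly exploiting that $\emptytype$ is both ground and inert and that arrows from $\emptytype$ into an inert multi type are inert.
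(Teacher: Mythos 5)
Your proof is correct and follows essentially the same route as the paper's: a mutual induction on the grammars of inert terms and fireballs, using that $\emptytype$ is both ground and inert, that $\larrow{\emptytype}{\imtype}$ is an inert linear type, and extracting the bound variable's (inert) multi type from the body's context in the explicit-substitution cases. The well-foundedness point you flag is handled in the paper simply by observing that the inert statement is stronger than the fireball one on inert terms, which amounts to the same lexicographic ordering.
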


\begin{proposition}[Open quantitative subject expansion]
	\label{prop:weak-subject-expansion}
	\NoteProof{propappendix:weak-subject-expansion}
	Let $\namedtyjp{\tderiv'}{}{\tm'}{\typctx}{\mtype}$ be a derivation.
	\begin{enumerate}
		\item\emph{Multiplicative step:} if $\tm \towm \tm'$ then there is a derivation 
$\namedtyjp{\tderiv}{}{\tm}{\typctx}{\mtype}$ with
		$\sizem{\tderiv'} = \sizem{\tderiv} - 2$ and $\size{\tderiv'} = \size{\tderiv} - 1$; 
		\item\emph{Exponential step:} if $\tm \towe \tm'$ then there is a derivation 
$\namedtyjp{\tderiv}{}{\tm}{\typctx}{\mtype}$ such that
		$\sizem{\tderiv'} = \sizem{\tderiv}$ and $\size{\tderiv'} < \size{\tderiv}$.
	\end{enumerate}
\end{proposition}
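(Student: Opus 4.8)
The plan is to follow the standard two-stage subject-expansion scheme, dual to \Cref{prop:weak-subject-reduction} run backwards: first handle the two \emph{root} rewrite rules (both at a distance, \ie\ modulo a substitution context $\subctx$), then lift to arbitrary open contexts by induction on the open context $\weakctx$ witnessing the step. For the root multiplicative step $\subctxp{\la\var\tm}\tmtwo \rtom \subctxp{\tm\esub\var\tmtwo}$ I would use that the type system is syntax-directed (every term constructor is typed by a unique rule, up to the auxiliary $\ruleMany$, applicable only to values), so the shape of a derivation $\tderiv'$ of $\subctxp{\tm\esub\var\tmtwo}$ is forced: it ends with the $\ruleES$ instances coming from the explicit substitutions of $\subctx$, and then with one $\ruleES$ for $\esub\var\tmtwo$ whose premises are a derivation of $\tm$ (with $\var$ in the type context) and a derivation of $\tmtwo$. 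From these I would rebuild a derivation $\tderiv$ of $\subctxp{\la\var\tm}\tmtwo$: apply $\ruleFun$ to the derivation of $\tm$, then $\ruleMany$ with a single premise to obtain the singleton multi type $\mset{\larrow\mtype\mtypetwo}$, then $\ruleAp$ against the derivation of $\tmtwo$, and finally re-thread the $\ruleES$'s of $\subctx$ (a routine inner induction on $\subctx$). The type context is unchanged; relative to $\tderiv'$ one $\ruleES$ has been traded for one $\ruleAp$ and one $\ruleFun$, giving $\size{\tderiv} = \size{\tderiv'} + 1$, while one $\ruleAp$, one $\ruleFun$ (both counted by $\sizem{\cdot}$) and one uncounted $\ruleMany$ have been added, giving $\sizem{\tderiv} = \sizem{\tderiv'} + 2$, as required.

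For the root exponential step $\tm\esub\var{\subctxp\val} \rtoe \subctxp{\tm\isub\var\val}$ the key ingredient is the Removal Lemma (\Cref{l:anti-substitution}). A derivation $\tderiv'$ of $\subctxp{\tm\isub\var\val}$ first decomposes, through the $\ruleES$'s of $\subctx$, into a derivation $\tderiv'_0$ of $\tm\isub\var\val$ plus derivations of the subterms held by $\subctx$. Applying \Cref{l:anti-substitution} to $\tderiv'_0$ yields a derivation $\tderivtwo$ of $\tm$ with $\var \hastype \mtypetwo$ in its type context and a derivation $\tderivthree$ of $\val$ with type $\mtypetwo$, splitting the type context of $\tderiv'_0$ and satisfying $\sizem{\tderiv'_0} = \sizem{\tderivtwo} + \sizem{\tderivthree}$ and $\size{\tderiv'_0} \leq \size{\tderivtwo} + \size{\tderivthree}$. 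I would then build $\tderiv$ with a fresh $\ruleES$ for $\esub\var{\subctxp\val}$ whose left premise is $\tderivtwo$ and whose right premise re-threads the $\ruleES$'s of $\subctx$ around $\tderivthree$, reusing the derivations of the subterms held by $\subctx$; the type contexts recombine to the original $\typctx$. The added $\ruleES$ is invisible to $\sizem{\cdot}$, so $\sizem{\tderiv} = \sizem{\tderiv'}$ by the $\sizem$-equality above, while it is counted by $\size{\cdot}$, so $\size{\tderiv'} \leq \size{\tderiv} - 1 < \size{\tderiv}$.

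For the context closure, a step $\tm \towm \tm'$ (\resp\ $\tm \towe \tm'$) comes with an open context $\weakctx$ and terms $r \rtom r'$ (\resp\ $r \rtoe r'$) with $\tm = \weakctxp{r}$ and $\tm' = \weakctxp{r'}$, and I would argue by induction on $\weakctx$. The base case $\weakctx = \ctxhole$ is the matching root case above; note that here $r$ is always an application or an explicit-substitution term, never a value, so all its derivations are multi judgments, matching the hypothesis of the statement. In each of the four inductive cases I would isolate inside $\tderiv'$ the sub-derivation typing the reduct sitting in the sub-context of $\weakctx$ --- in every one of these positions a subterm is typed by a multi judgment, directly or via $\ruleMany$ if it happens to be a value --- apply the induction hypothesis to it, and reassemble $\tderiv$ with the same outermost typing rule; since no rule occurrence is introduced beyond the one already present, the two size equalities (\resp\ the equality and the strict inequality) propagate additively and verbatim. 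I expect the only real friction to be making the substitution-context re-threading in the two root cases precise and compatible with $\alpha$-renaming (the plugged argument $\tmtwo$, or the value $\val$, crosses the binders of $\subctx$); everything else is bookkeeping dual to \Cref{prop:weak-subject-reduction}.
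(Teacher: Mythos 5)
Your proposal is correct and follows essentially the same route as the paper's proof: induction on the open context, with the root multiplicative case handled by reading off the forced shape of the derivation of $\subctxp{\tm\esub\var\tmtwo}$ and rebuilding via $\ruleFun$, a singleton $\ruleMany$, and $\ruleAp$, and the root exponential case handled via the Removal Lemma (\Cref{l:anti-substitution}), with identical size bookkeeping. The contextual cases and the observation that the relevant sub-derivations are multi judgments also match the paper's treatment.
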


As the general size of derivations increases 
after a backward $\osym$-step,
a combinatorial proof of open completeness~follows.

\begin{theorem}[Open completeness]
	\label{thm:open-completeness}
	\NoteProof{thmappendix:open-completeness}
	Let $\deriv \colon \tm \tovsubo^* \tmtwo$ be an $\osym$-normalizing evaluation. 
	Then there is a tight derivation $\concl{\tderiv}{\typctx}{\tm}{\emptytype}$ such that $2\sizem{\deriv} + \sizes{\fire} = \sizem{\tderiv}$.
\end{theorem}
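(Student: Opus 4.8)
The plan is to proceed by induction on the length of the evaluation $\deriv \colon \tm \tovsubo^* \tmtwo$, using the two standard ingredients already available: tight typability of open normal forms (\Cref{prop:precise-open-typability-nf}) as the base case, and open quantitative subject expansion (\Cref{prop:weak-subject-expansion}) for the inductive step. Note first that the claimed quantity is $\sizes{\fire}$, which by the open-size conventions should be read as the open size $\sizeo{\tmtwo}$ of the normal form reached; I would silently align notation with \Cref{l:size-fireballs} and \Cref{thm:open-correctness}, where the relevant measure is $\sizeo{\cdot}$.

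First I would handle the base case, when $\deriv$ has length $0$, so $\tm = \tmtwo$ is $\osym$-normal, hence a fireball by \Cref{prop:properties-open-reduction}(\ref{p:properties-open-redction-harmony}). By \Cref{prop:precise-open-typability-nf}(\ref{p:precise-open-typability-nf-fireball}) there is a tight derivation $\concl{\tderiv}{\typctx}{\tm}{\emptytype}$. Tightness means $\typctx$ is inert and the right-hand type $\emptytype$ is ground inert, so \Cref{l:size-fireballs} applies and yields $\sizem{\tderiv} = \sizeo{\tm}$; since $\sizem{\deriv} = 0$ this is exactly $2\sizem{\deriv} + \sizeo{\tmtwo} = \sizem{\tderiv}$.

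For the inductive step, write $\deriv$ as a first step $\tm \tovsubo \tm'$ followed by a shorter $\osym$-normalizing evaluation $\deriv' \colon \tm' \tovsubo^* \tmtwo$. By the induction hypothesis there is a tight derivation $\concl{\tderiv'}{\typctx}{\tm'}{\emptytype}$ with $2\sizem{\deriv'} + \sizeo{\tmtwo} = \sizem{\tderiv'}$. Now apply open quantitative subject expansion (\Cref{prop:weak-subject-expansion}) to the step $\tm \tovsubo \tm'$: in the multiplicative case one gets a derivation $\concl{\tderiv}{\typctx}{\tm}{\emptytype}$ with $\sizem{\tderiv} = \sizem{\tderiv'} + 2$, and since the final judgment is unchanged $\tderiv$ is again tight; combining with $\sizem{\deriv} = \sizem{\deriv'} + 1$ gives $2\sizem{\deriv} + \sizeo{\tmtwo} = 2\sizem{\deriv'} + 2 + \sizeo{\tmtwo} = \sizem{\tderiv'} + 2 = \sizem{\tderiv}$. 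In the exponential case subject expansion gives $\tderiv$ with $\sizem{\tderiv} = \sizem{\tderiv'}$, again tight by preservation of the final judgment, and since an exponential step does not change $\sizem{\deriv}$ relative to $\sizem{\deriv'}$ — i.e. $\sizem{\deriv}$ counts only $\tomo$ steps — the equality is preserved. This closes the induction.

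The main subtlety, and the point I would be most careful about, is the bookkeeping of \emph{which} multiplicative-size quantity $\sizem{\deriv}$ denotes: by the conventions surrounding \Cref{thm:open-correctness} it counts only the $\msym$-steps of $\deriv$, not all steps, which is exactly what makes the exponential case of the inductive step go through cleanly (an $\esym$-step leaves both $\sizem{\deriv}$ and $\sizem{\tderiv}$ untouched, by \Cref{prop:weak-subject-expansion}(2)). A secondary point is that tightness is a property of the final judgment only (as remarked after the definition of inert/tight derivations), so it is automatically transported backwards along subject expansion since the judgment $\typctx \vdash \tm \hastype \emptytype$ is preserved — no separate argument is needed there. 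Everything else is routine arithmetic; the genuine content has been pushed into Lemmas \ref{prop:precise-open-typability-nf} and Proposition \ref{prop:weak-subject-expansion}.
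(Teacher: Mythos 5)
Your proof is correct and follows essentially the same route as the paper's: induction on the length of the evaluation, with tight typability of open normal forms plus \Cref{l:size-fireballs} for the base case and open quantitative subject expansion (whose preservation of the final judgment transports tightness backwards) for the inductive step. Your reading of the statement's $\sizes{\fire}$ as $\sizeo{\tmtwo}$ is also the intended one, consistent with \Cref{thm:open-correctness}.
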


%
\section{Multi Types for \cbv Solvability}
\label{sect:solvable}
Here we provide the main results of the paper.  \smallskip

\paragraph*{Solvable size} First of all, we need a notion of size for normal forms of the solving strategy. The \emph{solvable size} $\sizes{\tm}$ of a term $\tm$ is its number of applications plus its number of abstractions 
not in argument position: 
\begin{align*}
\sizes{\var} &\defeq 0 
&
\sizes{\la{\var}{\tm}} & \defeq  \sizes{\tm} + 1 
& 
\sizes{\tm\tmtwo} & \defeq  \sizes{\tm} + \sizeo{\tmtwo} + 1 
&
\sizes{\tm \esub{\var}{\tmtwo}} & \defeq  \sizes{\tm} + \sizeo{\tmtwo}.
\end{align*}

\begin{figure*}[!t]
\begin{center}
	\scalebox{.9}{
$\begin{array}{rr@{\ }c@{\ }l@{\qquad} rr@{\ }c@{\ }l}
 \text{Solvable multi type } & \smtype &\grameq &	\mset{\sltype_1, \dots, \sltype_n} \ \ n > 0
 &
 \text{Solvable linear type} & \sltype &\grameq & \ground \mid  \larrow{\mtype}{\smtype}
 \\
 \text{Unitary s. multi type} & \usmtype &\grameq &	\mset{ \usltype}
 &
 \text{Unitary s. linear type} &\usltype &\grameq & \ground \mid \larrow{\mtype}{\usmtype}
 \\
 \text{Inertly s. multi type} & \ismtype &\grameq & \mset{\isltype_1, \dots, \isltype_n} \ \ n > 0
 &
 \text{Inertly s. linear type} & \isltype &\grameq & \ground \mid \larrow{\imtype}{\ismtype}
 \end{array}$
}
 \end{center}
 \caption{Kinds of solvable types. A 
 	type is \emph{precisely solvable} if it is unitary and inertly solvable.}
 \label{fig:solvable-types}
\end{figure*}

\paragraph*{Solvable Multi Types}
The qualitative characterization of solvable terms with multi types is fairly simple: they are those terms typable with a \emph{solvable multi type}, defined in \Cref{fig:solvable-types}. 
There are two ingredients: descending under abstraction, and selecting only head abstractions. They corresponds on types to, respectively, being typable with something that is not $\emptytype$, and having non-$\emptytype$ type on the right of the linear arrow---recursively.

 Being typable with something that is not $\emptytype$, ultimately requires 
a ground multi type $n\mset{\ground}$ different from $\emptytype$ in the type system (in contrast to the open case, where there is no need for $\ground$). 
The point is exemplified by the term $\la{\vartwo}{\delta\delta}$ (with $\delta = \la{\var}{\var\var}$): it is $\osym$-normal and it can only be typed by $\emptytype$, but it is not $\solvredsym$-normalizing and it cannot be typed by solvable types. \smallskip

\paragraph*{Precisely Solvable Multi Types} The quantitative properties of solvable terms rest on two orthogonal predicates (see \Cref{fig:solvable-types}). 

The \emph{unitary} one ensures that each solving multiplicative step is counted \emph{exactly} once. Solvable types guarantee that each such step is counted, but it might be counted more than once. The constraint amounts to 
asking that the topmost and right-hand multisets are singletons. This is the key requirement for obtaining that in the statement of subject reduction the general size of the derivation decreases by exactly one at each multiplicative~step. 

The \emph{inert} predicate instead ensures that the type derivation does not type sub-terms that are not accessible to the solving strategy. Typing such sub-terms is harmless for a qualitative study, but it is problematic for a quantitative one, as it does introduces a mismatch between the size of solving normal forms and the size of their type derivations. The constraint amounts to asking that the left-hand multisets are inert. 

Solvable types that are both unitary and inert are called \emph{precise}, and provide exact~bounds. \smallskip

\paragraph*{Correctness}
Solvable correctness claims that all terms typable with a solvable type $\mtype$ are $\solvredsym$-normalizing, and the multiplicative size of the derivation bounds the number of $\tosolvm$ steps plus the solvable size of the $\solvredsym$-normal form; this bound is exact if the type context is inert is $\mtype$ is precisely~solvable.

\newcounter{l:size-solvable-nf}
\addtocounter{l:size-solvable-nf}{\value{theorem}}
\begin{lemma}[Size of solvable fireballs]
	\label{l:size-solvable-nf}
	\NoteProof{lappendix:size-solvable-nf}
	Let $\solvnf$ be a solvable fireball.
	If $\namedtyjp{\tderiv}{}{\solvnf}{\typctx}{\mtype}$ with $\mtype$ solvable (\resp $\typctx$  inert and $\mtype$ precisely solvable),
	then $\sizem{\tderiv} \geq \sizes{\solvnf}$ (\resp $\sizem{\tderiv} = \sizes{\solvnf}$).
\end{lemma}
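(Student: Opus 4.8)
The plan is to prove the statement by structural induction on the solvable fireball $\solvnf$, following its grammar $\solvnf \grameq \itm \mid \la\var\solvnf \mid \solvnf\esub\var\pitm$, and to dispatch the inert-term leaves to the already-established \Cref{l:size-fireballs} (inert terms are fireballs). A preliminary observation makes the bookkeeping clean: for every inert term $\itm$ one has $\sizes\itm = \sizeo\itm$. This is an easy induction on $\itm$: the two sizes agree on variables, and for applications and ES the $\sizes{\cdot}$ and $\sizeo{\cdot}$ clauses add the same quantity on the right (the open size of a fireball, resp.\ of a proper inert term) and recurse on the left component, which is again inert. With this in hand, the base case $\solvnf = \itm$ is immediate: \Cref{l:size-fireballs} applied to $\concl\tderiv\typctx\itm\mtype$ gives $\sizem\tderiv \geq \sizeo\itm = \sizes\itm$, and when $\typctx$ is inert the second clause of \Cref{l:size-fireballs} applies (its disjunct "the fireball is inert" holds, since $\itm$ is inert), yielding $\sizem\tderiv = \sizeo\itm = \sizes\itm$.

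For the case $\solvnf = \la\var\solvnftwo$: the derivation $\tderiv$ must end with a $\ruleMany$ whose premises are linear derivations of $\la\var\solvnftwo$, each ending with $\ruleFun$, so $\sizem\tderiv = \sum_{i}(1 + \sizem{\tderiv_i})$ where $\tderiv_i$ is the premise of the $i$-th $\ruleFun$, of the form $\concl{\tderiv_i}{\typctx_i, \var\hastype\mtype_i}{\solvnftwo}{\mtypetwo_i}$, and the $i$-th linear type of $\mtype$ is $\larrow{\mtype_i}{\mtypetwo_i}$ (it cannot be $\ground$, as $\ruleFun$ produces an arrow). Since $\mtype$ is solvable it is non-empty, so there is at least one such premise, and each $\mtypetwo_i$ is a solvable multi type; the IH gives $\sizem{\tderiv_i} \geq \sizes\solvnftwo$, whence $\sizem\tderiv \geq 1 + \sizes\solvnftwo = \sizes{\la\var\solvnftwo}$. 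For the equality direction, $\mtype$ precisely solvable is a singleton $\mset{\larrow{\imtype}{\mtypetwo_1}}$ with $\imtype$ inert and $\mtypetwo_1$ precisely solvable; the unique premise then has context $\typctx, \var\hastype\imtype$, inert because $\typctx$ is, so the IH yields $\sizem{\tderiv_1} = \sizes\solvnftwo$ and $\sizem\tderiv = 1 + \sizem{\tderiv_1} = \sizes{\la\var\solvnftwo}$.

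For the case $\solvnf = \solvnftwo\esub\var\pitm$: the derivation $\tderiv$ ends with $\ruleES$, with premises $\concl{\tderiv_1}{\typctx_1,\var\hastype\mtype_1}{\solvnftwo}{\mtype}$ and $\concl{\tderiv_2}{\typctx_2}{\pitm}{\mtype_1}$, and $\typctx = \typctx_1 \uplus \typctx_2$, $\sizem\tderiv = \sizem{\tderiv_1} + \sizem{\tderiv_2}$. Applying the IH to $\tderiv_1$ (its right type $\mtype$ is solvable) and \Cref{l:size-fireballs} to $\tderiv_2$ (the proper inert term $\pitm$ is a fireball) gives $\sizem{\tderiv_1} \geq \sizes\solvnftwo$ and $\sizem{\tderiv_2} \geq \sizeo\pitm$, hence $\sizem\tderiv \geq \sizes\solvnftwo + \sizeo\pitm = \sizes{\solvnftwo\esub\var\pitm}$. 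For equality, $\typctx$ inert and the fact that inert multi types are closed under taking sub-multisets give that $\typctx_1$ and $\typctx_2$ are both inert; spreading of inertness (\Cref{l:spread-inert}) applied to $\tderiv_2$ then shows $\mtype_1$ is inert, so $\typctx_1, \var\hastype\mtype_1$ is inert, and since $\mtype$ is precisely solvable the IH gives $\sizem{\tderiv_1} = \sizes\solvnftwo$; \Cref{l:size-fireballs} with $\typctx_2$ inert and $\pitm$ inert gives $\sizem{\tderiv_2} = \sizeo\pitm$, so $\sizem\tderiv = \sizes{\solvnftwo\esub\var\pitm}$.

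The main obstacle I expect is the equality direction: faithfully transmitting the invariant "typing context inert, right type precisely solvable" down to the recursive premise. This forces one to unpack the definition of precisely solvable as unitary and inertly solvable, so as to extract \emph{simultaneously} the singleton shape (which makes each $\ruleFun$/$\ruleAp$ count exactly once, matching the corresponding size clause) and the inert left-hand side of the arrow (which is exactly what is appended to the context of the recursive premise). In the ES case the subtle point is that we need the ES subject $\pitm$ to be typed with an inert multi type in order for the context of $\tderiv_1$ to stay inert; this is not part of the precise-solvability hypothesis but follows from spreading of inertness once we know $\typctx_2$ is inert. Everything else is routine size arithmetic already in the style of the open case.
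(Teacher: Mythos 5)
Your proof is correct and follows essentially the same route as the paper's: induction on the grammar of solvable fireballs, with the inert base case discharged by \Cref{l:size-fireballs} together with the observation $\sizes{\itm}=\sizeo{\itm}$ (the paper's \Cref{l:sizes-inert}), the abstraction case using non-emptiness of solvable multi types for the bound and the singleton/inert-left shape of precisely solvable types for the exact count, and the ES case combining the inductive hypothesis with \Cref{l:size-fireballs}, splitting of inert contexts, and spreading of inertness (\Cref{l:spread-inert}).
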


Let us see with an example why in \Cref{l:size-solvable-nf} we need the restrictions on the type and type contexts to have (exact) bounds. 
Consider the solvable fireball (but non-inert) $\delta = \la{\var}{\var\var}$, which is typable and the last rule of any derivation $\concl{\tderiv}{\typctx}{\delta}{\mtype}$ is $\ruleManyVal$.
The solvable size of $\delta$ is $\sizes{\delta} = \sizes{\var\var} + 1$, so in order that $\sizem{\tderiv} \geq \sizes{\delta}$, $\ruleManyVal$ needs to have at least one premise (otherwise $\sizem{\tderiv} = 0 < \sizes{\delta}$), 
which is exactly what the solvability of multi type $\mtype$ guarantees.
If moreover we want $\sizem{\tderiv} = \sizes{\delta}$, 
any premise of $\ruleMany$ typing $\var\var$ must be an \emph{inert} derivation $\concl{\tderivtwo}{\var \hastype \mtypetwo}{\var\var}{\mtypethree}$, thus $\sizem{\tderivtwo} =  \sizes{\var\var}$, and $\ruleManyVal$ must have only one premise. 
Summing up, $\mtype= \mset{\larrow{\mtypetwo}{\mtypethree}}$, which is both unitary and inertly (as $\mtypetwo$ is inert) solvable, \ie precisely~solvable.

\begin{proposition}[Solving quantitative subject reduction]
	\label{prop:solvable-subject-reduction}
	\NoteProof{propappendix:solvable-subject-reduction}
	Assume $\concl{\tderiv}{\typctx}{\tm}{\mtype}$, with $\mtype$ solvable (\resp unitary solvable).
	\begin{enumerate}
		\item \emph{Multiplicative step:} if $\tm \tosolvm \tm'$ then there is a derivation 
$\concl{\tderiv'}{\typctx}{\tm'}{\mtype}$ such that $\sizem{\tderiv'} \leq \sizem{\tderiv}-2$ and $\size{\tderiv'} < 
\size{\tderiv}$
		(\resp $\sizem{\tderiv'} = \sizem{\tderiv}-2$ and $\size{\tderiv'} = \size{\tderiv}-1$);
		
		\item \emph{Exponential step:} if $\tm \tosolve \tm'$ then there is a derivation 
$\concl{\tderiv'}{\typctx}{\tm'}{\mtype}$ such that
		$\sizem{\tderiv'} = \sizem{\tderiv}$ and $\size{\tderiv'} < \size{\tderiv}$.
	\end{enumerate}
\end{proposition}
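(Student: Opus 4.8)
The plan is to prove Proposition~\ref{prop:solvable-subject-reduction} by induction on the solving context $\solvctx$ witnessing the reduction step $\tm \tosolv \tm'$, following closely the pattern of the open case (Proposition~\ref{prop:weak-subject-reduction}), but tracking the extra structure imposed by solvable (resp.\ unitary solvable) types as one descends under head abstractions. Recall that $\solvctx \grameq \openctx \mid \la{\var}\solvctx \mid \solvctx \tm \mid \solvctx\esub{\var}{\tm}$, so the induction has a base case where the redex is fired inside an open context, and three inductive cases for the genuinely ``strong'' part of the strategy.

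\textbf{Base case ($\solvctx = \openctx$).} Here the step is an open step $\tm \towm \tm'$ or $\tm \towe \tm'$ occurring inside some open context, hence $\tm \tovsubo \tm'$. I would invoke open quantitative subject reduction (Proposition~\ref{prop:weak-subject-reduction}) directly: it gives a derivation $\concl{\tderiv'}{\typctx}{\tm'}{\mtype}$ with $\sizem{\tderiv'} = \sizem{\tderiv}-2$, $\size{\tderiv'} = \size{\tderiv}-1$ for a multiplicative step, and $\sizem{\tderiv'} = \sizem{\tderiv}$, $\size{\tderiv'} < \size{\tderiv}$ for an exponential step. Note that $\mtype$ is unchanged, so it is still solvable (resp.\ unitary solvable), and the bounds obtained are at least as strong as those claimed (the non-unitary bound is an inequality, so $=\sizem{\tderiv}-2$ entails $\leq \sizem{\tderiv}-2$).

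\textbf{Inductive cases.} For $\solvctx = \solvctx'\tm_2$: the last rule of $\tderiv$ must be $\ruleAp$, with a left premise $\concl{\tderivone}{\typctxone}{\solvctx'\langle r\rangle}{\mset{\larrow{\mtype''}{\mtype}}}$ and a right premise typing $\tm_2$; here the crucial observation is that $\mset{\larrow{\mtype''}{\mtype}}$ is a solvable (resp.\ unitary solvable) multi type because $\mtype$ is, so the inductive hypothesis applies to $\tderivone$ and the step $\solvctx'\langle r\rangle \tosolv \solvctx'\langle r'\rangle$; then reassemble with $\ruleAp$ and add the right-premise sizes, which are unaffected. For $\solvctx = \solvctx'\esub{\var}{\tm_2}$: analogous, with $\ruleES$ in place of $\ruleAp$, and the right premise types the (unchanged) ES argument. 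For $\solvctx = \la{\var}\solvctx'$: the last rule of $\tderiv$ is $\ruleMany$ (since $\mtype \neq \emptytype$ as it is solvable, $\ruleMany$ has $\geq 1$ premise) collecting premises that each end with $\ruleFun$, typing $\solvctx'\langle r\rangle$ with a linear arrow $\larrow{\mtype_i}{\smtype_i}$ where each $\smtype_i$ is solvable by definition of solvable linear type; in the unitary case $\ruleMany$ has exactly one premise and $\smtype$ is unitary solvable, so the inductive hypothesis gives exactly the additive size behaviour on that single sub-derivation and it lifts verbatim through $\ruleFun$ and $\ruleMany$; in the merely solvable case one sums the (inequality) bounds over the $\geq 1$ premises, which is why the non-unitary statement is only $\leq \sizem{\tderiv}-2$ and a strict decrease of $\size{\cdot}$ rather than an exact $-1$.

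\textbf{Main obstacle.} The delicate point is the propagation of the \emph{solvability predicate on types down into the sub-derivation}, and conversely ensuring that the reassembled derivation still has a solvable right-hand type: this is exactly what forces the recursive clause ``non-$\emptytype$ on the right of $\multimap$'' in the definition of solvable type, and it is why $\ruleMany$ under a head abstraction must have at least one premise. A secondary subtlety is bookkeeping the two sizes simultaneously: $\size{\cdot}$ must decrease strictly in \emph{every} case (including exponential steps and every inductive case, using the open-case strict decrease at the leaf), since this is what the combinatorial termination argument for solvable correctness will ultimately consume, while $\sizem{\cdot}$ must decrease by exactly $2$ per multiplicative step only in the unitary regime where each sub-term is typed exactly once. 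I expect the unitary case to go through cleanly precisely because the singleton constraint removes the branching in $\ruleMany$; the non-unitary case is easier for the inequality but requires care that the summation of several decreasing sub-bounds still yields the global strict decrease of $\size{\cdot}$.
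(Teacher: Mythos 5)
Your proposal is correct and follows essentially the same route as the paper's proof: induction on the solving context, discharging the base case via open subject reduction (where solvability of $\mtype$ is not needed), propagating (unitary) solvability to the left premise of $\ruleAp$ as a singleton arrow type, and handling the abstraction case by noting that $\ruleMany$ has $n\geq 1$ premises whose right-hand types are solvable by definition, with $n=1$ in the unitary case yielding exact bounds and the summation over $n\geq 1$ premises yielding the inequalities. The bookkeeping you describe matches the paper's computations.
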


Quantitative solving subject reduction (\Cref{prop:solvable-subject-reduction})---as well as expansion (\Cref{prop:solvable-subject-expansion} below)---holds only for a restricted set of multi types, the solvable ones. 
The reason is evident if we consider the term $\la{\vartwo}{\delta\delta}$: it is 
and typable only through a derivation $\concl{\tderiv}{\,}{\la{\vartwo}\delta\delta}{\emptytype}$ with $\sizem{\tderiv} = 0 = \size{\tderiv}$, but $\la{\vartwo}{\delta\delta} \tosolvm \la{\vartwo}\var\esub{\var}{\delta}$ and $\emptytype$ is not a solvable multi type.

\begin{theorem}[Solving correctness]
	\label{thm:solvable-correctness}
	\NoteProof{thmappendix:solvable-correctness}
	Let $\concl{\tderiv}{\typctx}{\tm}{\mtype}$ be a derivation with $\mtype$ solvable (\resp $\typctx$ inert and $\mtype$ precisely solvable).
	Then, 
	there is an $\solvredsym$-normalizing evaluation $\deriv \colon \tm \tosolv^* \tmtwo$ 
	with $2\sizem{\deriv} + \sizes{\tmtwo} \leq \sizem{\tderiv}$ (\resp $2\sizem{\deriv} + \sizes{\tmtwo} = \sizem{\tderiv}$).
\end{theorem}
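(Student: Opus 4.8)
The plan is to combine the two supporting results in the by-now standard way: quantitative subject reduction (\Cref{prop:solvable-subject-reduction}) to push the measure along the evaluation, and the size bound for normal forms (\Cref{l:size-solvable-nf}) to close the recursion. Concretely, I would prove the statement by well-founded induction on $\size{\tderiv}$, which is legitimate since by \Cref{prop:solvable-subject-reduction} every $\tosolv$ step strictly decreases the general size of the derivation, both for multiplicative and for exponential steps; in particular this already shows that $\tm$ is $\solvredsym$-strongly normalizing, so the evaluation constructed below terminates on a $\solvredsym$-normal form.

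The key preliminary observation is that \Cref{prop:solvable-subject-reduction} returns, after each step, a derivation with the \emph{same} type context $\typctx$ and the \emph{same} multi type $\mtype$. Hence the hypothesis ``$\mtype$ solvable'' is preserved verbatim; and for the refined statement, a precisely solvable type is in particular unitary solvable, so \Cref{prop:solvable-subject-reduction} applies, and the pair ``$\typctx$ inert, $\mtype$ precisely solvable'' is preserved as well. Thus the induction hypothesis is always available with the hypotheses it needs.

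For the base case, if $\tm$ is $\solvredsym$-normal then by \Cref{prop:properties-solvable-reduction}.\ref{p:properties-solvable-reduction-harmony} it is a solvable fireball, so \Cref{l:size-solvable-nf} applied to $\tderiv$ gives $\sizem{\tderiv} \geq \sizes{\tm}$ (\resp $\sizem{\tderiv} = \sizes{\tm}$); take $\deriv$ to be the empty evaluation, so $\tmtwo = \tm$, $\sizem{\deriv} = 0$, and $2\sizem{\deriv} + \sizes{\tmtwo} = \sizes{\tm} \leq \sizem{\tderiv}$ (\resp $=$). For the inductive case, if $\tm$ is not $\solvredsym$-normal pick a step $\tm \tosolv \tm'$; by \Cref{prop:solvable-subject-reduction} there is $\concl{\tderiv'}{\typctx}{\tm'}{\mtype}$ with $\size{\tderiv'} < \size{\tderiv}$, so the induction hypothesis gives an $\solvredsym$-normalizing $\deriv' \colon \tm' \tosolv^* \tmtwo$ with $2\sizem{\deriv'} + \sizes{\tmtwo} \leq \sizem{\tderiv'}$ (\resp $=$). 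Prepending the step yields $\deriv \colon \tm \tosolv^* \tmtwo$. If the step is $\tosolvm$, then $\sizem{\deriv} = \sizem{\deriv'} + 1$ and $\sizem{\tderiv'} \leq \sizem{\tderiv} - 2$ (\resp $= \sizem{\tderiv} - 2$), whence $2\sizem{\deriv} + \sizes{\tmtwo} = 2\sizem{\deriv'} + 2 + \sizes{\tmtwo} \leq \sizem{\tderiv'} + 2 \leq \sizem{\tderiv}$, every inequality being an equality in the refined statement. If the step is $\tosolve$, then $\sizem{\deriv} = \sizem{\deriv'}$ and $\sizem{\tderiv'} = \sizem{\tderiv}$, whence $2\sizem{\deriv} + \sizes{\tmtwo} = 2\sizem{\deriv'} + \sizes{\tmtwo} \leq \sizem{\tderiv'} = \sizem{\tderiv}$, again with equality in the refined statement.

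The computations are routine; the two points of genuine care are that the type-level predicates stay invariant along $\tosolv$---which holds precisely because $\typctx$ and $\mtype$ are literally unchanged---and that the ``flavour'' of the hypothesis in \Cref{prop:solvable-subject-reduction} (unitary solvable) matches the one in \Cref{l:size-solvable-nf} (precisely solvable together with an inert context), which is exactly why the refined statement couples inertness of $\typctx$ with precise solvability of $\mtype$. One may also note, using the diamond of $\tosolv$ (\Cref{prop:properties-solvable-reduction}.\ref{p:properties-solvable-reduction-diamond}), that all $\solvredsym$-normalizing evaluations from $\tm$ have the same length and the same numbers of $\msym$- and $\esym$-steps, so the bound is independent of the choices made in the induction.
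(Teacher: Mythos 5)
Your proof is correct and follows essentially the same route as the paper's: induction on the general size $\size{\tderiv}$, closing the base case with the size lemma for solvable fireballs and the inductive case with quantitative subject reduction, observing that the type context and type are preserved (and that precisely solvable implies unitary solvable) so the hypotheses carry through. The additional remarks on strong normalization and the diamond property are sound but not needed for the statement as given.
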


\paragraph*{Completeness}
For the typability of solvable normal forms, the ground type $\ground$ plays a crucial role, since $\mset{\ground}$ is both a precisely solvable and an inert multi type, and hence we can apply \Cref{prop:precise-open-typability-nf} when $\solvnf$ is an inert term.

\newcounter{prop:solvable-typability-nf}
\addtocounter{prop:solvable-typability-nf}{\value{theorem}}
\begin{lemma}[Precisely solvable typability of solvable fireballs]
	\label{prop:precise-solvable-typability-nf}
	\NoteProof{propappendix:precise-solvable-typability-nf}
		If $\tm$ is a solvable fireball, then there is a derivation 
$\concl{\tderiv}{\typctx}{\tm}{\mtype}$ with $\typctx$ inert and $\mtype$ precisely solvable.
\end{lemma}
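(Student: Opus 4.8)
The plan is to proceed by structural induction on the solvable fireball $\solvnf$, building the type derivation bottom-up and invoking \Cref{prop:precise-open-typability-nf}.\ref{p:precise-open-typability-nf-inert} (tight typability of inert open normal forms) both in the base case and to type the inert sub-terms sitting inside ESs. This mirrors the fireball item of \Cref{prop:precise-open-typability-nf}, as anticipated by the ``blueprint'' remark. The pivot of the argument is the observation that the singleton ground multi type $\mset{\ground}$ lies in the intersection of two relevant classes: it is an inert multi type (every ground multi type is inert), so it is an admissible target for \Cref{prop:precise-open-typability-nf}.\ref{p:precise-open-typability-nf-inert}; and it is precisely solvable, being a singleton (hence unitary) of the linear type $\ground$, which is both unitary and inertly solvable. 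This is exactly why the ground type $\ground$ was added to the system, as anticipated in the overview.

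In the base case $\solvnf = \itm$, applying \Cref{prop:precise-open-typability-nf}.\ref{p:precise-open-typability-nf-inert} to the inert term $\itm$ with the inert multi type $\mset{\ground}$ gives an inert derivation $\concl{\tderiv}{\typctx}{\itm}{\mset{\ground}}$; since $\typctx$ is inert and $\mset{\ground}$ is precisely solvable, we are done. In the abstraction case $\solvnf = \la{\var}{\solvnf'}$, the induction hypothesis yields $\concl{\tderivtwo}{\typctxtwo}{\solvnf'}{\mtype}$ with $\typctxtwo$ inert and $\mtype$ precisely solvable; write $\typctxtwo = \typctx, \var \hastype \mtypetwo$, so that $\mtypetwo$ is an inert multi type and $\typctx$ is still an inert context. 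Applying $\ruleFun$ and then $\ruleMany$ with a single premise produces $\concl{\tderiv}{\typctx}{\la{\var}{\solvnf'}}{\mset{\larrow{\mtypetwo}{\mtype}}}$; since $\mtypetwo$ is inert and $\mtype$ is precisely solvable, $\larrow{\mtypetwo}{\mtype}$ is a precisely solvable linear type, hence its singleton is a precisely solvable multi type.

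In the ES case $\solvnf = \solvnf' \esub{\var}{\pitm}$, the induction hypothesis gives $\concl{\tderivtwo}{\typctxtwo}{\solvnf'}{\mtype}$ with $\typctxtwo$ inert and $\mtype$ precisely solvable; write $\typctxtwo = \typctx, \var \hastype \mtypetwo$ with $\mtypetwo$ an inert multi type. Since $\pitm$ is an inert term, \Cref{prop:precise-open-typability-nf}.\ref{p:precise-open-typability-nf-inert} provides an inert derivation $\concl{\tderivthree}{\typctxthree}{\pitm}{\mtypetwo}$ with $\typctxthree$ inert. Rule $\ruleES$ then yields $\concl{\tderiv}{\typctx \uplus \typctxthree}{\solvnf' \esub{\var}{\pitm}}{\mtype}$, and $\typctx \uplus \typctxthree$ is an inert type context while $\mtype$ is precisely solvable, as required.

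The routine facts used along the way are purely bookkeeping on the grammars of \Cref{fig:solvable-types} and on the inert-type grammar: removing a variable binding from an inert type context, and the point-wise multiset sum of two inert type contexts, both preserve inertness; and $\larrow{\imtype}{\mtype}$ is a precisely solvable linear type whenever $\imtype$ is an inert multi type and $\mtype$ is a precisely solvable multi type (so a singleton of it is a precisely solvable multi type). I do not expect a real obstacle here beyond this bookkeeping; the single conceptual point---and the place where the design of the type system pays off---is recognizing that $\mset{\ground}$ belongs simultaneously to the inert and to the precisely solvable multi types, which is what lets the open-case lemma be reused verbatim both for the base case and for the inert sub-terms of ESs.
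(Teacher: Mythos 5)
Your proposal is correct and follows essentially the same route as the paper: induction on the grammar of solvable fireballs, reusing the inert-typability lemma for open normal forms both in the base case (with target $\mset{\ground}$, observed to be simultaneously inert and precisely solvable) and for the proper inert terms in ESs, with the same handling of the abstraction case via a single-premise $\ruleMany$. No gaps.
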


\newcounter{prop:solvable-subject-expansion}
\addtocounter{prop:solvable-subject-expansion}{\value{theorem}}
\begin{proposition}[Solving quantitative subject expansion]
	\label{prop:solvable-subject-expansion}
	\NoteProof{propappendix:solvable-subject-expansion}
	Assume $\concl{\tderiv'\!}{\typctx}{\tm'\!}{\mtypetwo}$ with $\mtypetwo$ solvable (\resp unitary~solvable).
	\begin{enumerate}
		\item \emph{Multiplicative step:} if $\tm \tosolvm \tm'$ then there is a derivation 
$\concl{\tderiv}{\typctx}{\tm}{\mtypetwo}$ with
		$\sizem{\tderiv'} \leq \sizem{\tderiv}-2$ and \mbox{$\size{\tderiv'} < \size{\tderiv}$}
		(\resp $\sizem{\tderiv'} = \sizem{\tderiv}-2$ and $\size{\tderiv'} = \size{\tderiv}-1$);
		\item \emph{Exponential step:} if $\tm \tosolve \tm'$ then there is a derivation 
$\concl{\tderiv}{\typctx}{\tm}{\mtypetwo}$ such that
		$\sizem{\tderiv'} = \sizem{\tderiv}$ and $\size{\tderiv'} < \size{\tderiv}$.
	\end{enumerate}
\end{proposition}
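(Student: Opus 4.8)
This is the exact mirror of Solving quantitative subject reduction (\Cref{prop:solvable-subject-reduction}), so I would prove it by the very same induction on the solving context $\solvctx$ witnessing the step, establishing the two versions (solvable / unitary solvable) simultaneously. Write $\tm = \solvctxp{r}$ and $\tm' = \solvctxp{r'}$ with $r \Rew{\wsym a} r'$ an open $a$-step, $a \in \set{\msym,\esym}$, and build $\tderiv$ from $\tderiv'$ by descending through $\solvctx$ and replacing, at the hole, the sub-derivation for $r'$ by one for $r$. First I would dispatch the base case $\solvctx = \openctx$: since the composition of two open contexts is again an open context, the whole step $\tm \Rew{\wsym a} \tm'$ is then an open step, so Open quantitative subject expansion (\Cref{prop:weak-subject-expansion}) applies directly and yields $\tderiv$ with the \emph{exact} relations --- $\sizem{\tderiv'} = \sizem{\tderiv} - 2$ and $\size{\tderiv'} = \size{\tderiv} - 1$ for a multiplicative step, $\sizem{\tderiv'} = \sizem{\tderiv}$ and $\size{\tderiv'} < \size{\tderiv}$ for an exponential one --- so both the solvable and the unitary solvable claims hold here, irrespective of $\mtypetwo$. (The Removal lemma, \Cref{l:anti-substitution}, enters only inside the proof of \Cref{prop:weak-subject-expansion}, for the exponential case.)

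For the inductive cases $\solvctx$ is $\solvctx'\esub{\var}{\tmtwo}$, or $\solvctx'\tmtwo$, or $\la{\var}{\solvctx'}$ for a strictly smaller solving context $\solvctx'$, and I would inspect the last rule of $\tderiv'$. If $\solvctx = \solvctx'\esub{\var}{\tmtwo}$, that rule is $\ruleES$, whose left premise types $\solvctx'[r']$ with the \emph{same} type $\mtypetwo$; I apply the induction hypothesis there and re-apply $\ruleES$; since $\ruleES$ adds $1$ to $\size{\cdot}$ and $0$ to $\sizem{\cdot}$ on both sides, the size relations are inherited verbatim from the two left premises. If $\solvctx = \solvctx'\tmtwo$, that rule is $\ruleAp$, whose left premise types $\solvctx'[r']$ with some $\mset{\larrow{\mtypethree}{\mtypetwo}}$: this is a singleton whose unique linear type has $\mtypetwo$ to the right of $\multimap$, hence is solvable (\resp unitary solvable) whenever $\mtypetwo$ is, so the induction hypothesis applies; I re-apply $\ruleAp$ with the unchanged right premise, and since $\ruleAp$ adds $1$ to both sizes on both sides, the relations are again inherited. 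If $\solvctx = \la{\var}{\solvctx'}$, then $\tm' = \la{\var}{\solvctx'[r']}$ is an abstraction typed with $\mtypetwo = \mset{\sltype_1,\dots,\sltype_n}$; the only rule typing an abstraction with a multi type is $\ruleMany$, so $\tderiv'$ ends in $\ruleMany$ with $n \geq 1$ premises (solvability of $\mtypetwo$ gives $n > 0$), each typing the abstraction with a linear type and hence necessarily ending in $\ruleFun$, and thus carrying a sub-derivation typing $\solvctx'[r']$ with the codomain $\smtype_i$, where $\sltype_i = \larrow{\mtype_i}{\smtype_i}$ and $\smtype_i$ is solvable because $\mtypetwo$ is (and, in the unitary case, $n = 1$ and $\smtype_1$ is unitary solvable). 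I would apply the induction hypothesis to each of the $n$ sub-derivations, re-wrap each with $\ruleFun$, and collect them under a fresh $\ruleMany$; for a multiplicative step this gives $\sizem{\tderiv'} \leq \sizem{\tderiv} - 2n \leq \sizem{\tderiv} - 2$ and $\size{\tderiv'} < \size{\tderiv}$ in the solvable case, and the exact $\sizem{\tderiv'} = \sizem{\tderiv} - 2$ and $\size{\tderiv'} = \size{\tderiv} - 1$ in the unitary case ($n = 1$); for an exponential step, $\sizem{\tderiv'} = \sizem{\tderiv}$ and $\size{\tderiv'} < \size{\tderiv}$.

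The main obstacle will be keeping the type decorating the hole solvable (\resp unitary solvable) as the induction descends, so that the induction hypothesis is applicable at each level: this is immediate for $\ruleES$ (same type) and for $\ruleAp$ (a singleton arrow with codomain $\mtypetwo$), but for $\la{\var}{\solvctx'}$ it rests precisely on solvable multi types being non-empty --- which is also exactly why the statement genuinely fails for arbitrary $\mtypetwo$, as the $\la{\vartwo}{\delta\delta}$ example after \Cref{prop:solvable-subject-reduction} shows. The remaining subtlety is purely arithmetical: in the non-unitary solvable case the abstraction case may replicate the hole $n > 1$ times, which is why only the inequality $\sizem{\tderiv'} \leq \sizem{\tderiv} - 2$ with $\size{\tderiv'} < \size{\tderiv}$ survives there, whereas constraining the topmost and right-hand multisets to be singletons (unitary solvability) forces $n = 1$ throughout and restores the exact equalities for multiplicative steps; exponential steps remain non-exact on $\size{\cdot}$ because they contract a value an unpredictable number of times. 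I do not expect any genuinely new difficulty beyond those already handled in \Cref{prop:weak-subject-expansion} and in the subject-reduction counterpart.
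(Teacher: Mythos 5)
Your proposal is correct and follows essentially the same route as the paper's proof: induction on the solving context, with the base case discharged by open subject expansion, the ES and application cases inheriting the bounds from a single recursive call (the application case relying on the singleton arrow type with codomain $\mtypetwo$ being solvable, resp.\ unitary solvable), and the abstraction case splitting into $n>0$ premises, which is exactly where the non-unitary case degrades to inequalities via $\sizem{\tderiv'} \leq \sizem{\tderiv} - 2n$. The only (immaterial) presentational difference is that you carry the two versions simultaneously, whereas the paper first proves the unitary case and then redoes only the abstraction case for the general solvable one.
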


\begin{theorem}[Solving completeness]
	\label{thm:solvable-completeness}
	\NoteProof{thmappendix:solvable-completeness}
	Let $\deriv \colon \tm \tosolv^* \tmtwo$ be an $\solvsym$-normalizing evaluation. 
	Then there is a derivation $\concl{\tderiv}{\typctx}{\tm}{\mtypetwo}$ with $\typctx$ inert, $\mtypetwo$ precisely solvable 	and \mbox{$2\sizem{\deriv} + \sizes{\tmtwo} = \sizem{\tderiv}$}.
\end{theorem}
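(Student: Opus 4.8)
The plan is to prove the statement by induction on the length of the evaluation $\deriv$, following exactly the scheme already used for open completeness (\Cref{thm:open-completeness}): the typability of $\solvsym$-normal forms with a precisely solvable derivation is the base case, and quantitative solving subject expansion (\Cref{prop:solvable-subject-expansion}) propagates this typability backwards along the evaluation while tracking the two sizes. In the base case $\deriv$ is empty, so $\tm = \tmtwo$ is $\solvsym$-normal and hence a solvable fireball by \Cref{prop:properties-solvable-reduction}; \Cref{prop:precise-solvable-typability-nf} then gives a derivation $\concl{\tderiv}{\typctx}{\tm}{\mtypetwo}$ with $\typctx$ inert and $\mtypetwo$ precisely solvable. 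These are precisely the hypotheses under which the exact half of \Cref{l:size-solvable-nf} applies, yielding $\sizem{\tderiv} = \sizes{\tm} = \sizes{\tmtwo}$; since $\sizem{\deriv} = 0$, this is the wanted identity $2\sizem{\deriv} + \sizes{\tmtwo} = \sizem{\tderiv}$.

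For the inductive step, I would write $\deriv$ as a first step $\tm \tosolv \tmthree$ followed by a strictly shorter $\solvsym$-normalizing evaluation $\deriv' \colon \tmthree \tosolv^* \tmtwo$, which ends in the same normal form $\tmtwo$. The induction hypothesis provides $\concl{\tderivtwo}{\typctx}{\tmthree}{\mtypetwo}$ with $\typctx$ inert, $\mtypetwo$ precisely solvable, and $2\sizem{\deriv'} + \sizes{\tmtwo} = \sizem{\tderivtwo}$. Since a precisely solvable type is in particular unitary solvable, the ``unitary solvable'' branch of \Cref{prop:solvable-subject-expansion} applies to the single step $\tm \tosolv \tmthree$ and returns a derivation $\concl{\tderiv}{\typctx}{\tm}{\mtypetwo}$ with the very same type context and type, so $\typctx$ remains inert and $\mtypetwo$ remains precisely solvable. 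If the step is multiplicative, then $\sizem{\tderiv} = \sizem{\tderivtwo} + 2$ while $\sizem{\deriv} = \sizem{\deriv'} + 1$, so $\sizem{\tderiv} = 2(\sizem{\deriv'} + 1) + \sizes{\tmtwo} = 2\sizem{\deriv} + \sizes{\tmtwo}$; if the step is exponential, then $\sizem{\tderiv} = \sizem{\tderivtwo}$ and $\sizem{\deriv} = \sizem{\deriv'}$ (exponential steps do not contribute to $\sizem{\cdot}$ on evaluations), so again $\sizem{\tderiv} = 2\sizem{\deriv} + \sizes{\tmtwo}$. This closes the induction.

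I expect the genuine difficulty to lie entirely in the two ingredients this argument takes for granted, namely \Cref{prop:precise-solvable-typability-nf} and \Cref{prop:solvable-subject-expansion}, whose proofs must control how the inert, unitary, and inertly-solvable predicates spread through derivations (relying, e.g., on the spreading of inertness for inert terms). Within the present proof the only delicate points are pure bookkeeping: checking that ``inert type context'' and ``precisely solvable type'' survive subject expansion---immediate, since \Cref{prop:solvable-subject-expansion} returns the same $\typctx$ and $\mtypetwo$---and keeping clearly distinct the total length of $\deriv$, its number of multiplicative steps $\sizem{\deriv}$, and the derivation and term sizes $\sizem{\cdot}$ and $\sizes{\cdot}$. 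Beyond that I anticipate no obstacle.
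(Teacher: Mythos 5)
Your proof is correct and follows essentially the same route as the paper's: induction on the length of $\deriv$, with \Cref{prop:precise-solvable-typability-nf} and \Cref{l:size-solvable-nf} handling the base case and the unitary branch of \Cref{prop:solvable-subject-expansion} (applicable since precisely solvable implies unitary solvable) handling the inductive step. The bookkeeping of $\sizem{\deriv}$ versus $\sizem{\tderiv}$ and the preservation of the inert context and precisely solvable type under expansion are exactly as in the paper.
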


\IEEEpeerreviewmaketitle


\bibliographystyle{IEEEtranS}
\bibliography{main.bbl}

\clearpage
\appendices

\section{Counterexamples}
\label{sect:counter}

\subsection{Counterexample to subject reduction and expansion in Paolini and Ronchi Della Rocca \cite{DBLP:journals/ita/PaoliniR99}}

In \cite{DBLP:journals/ita/PaoliniR99}, the idempotent intersection type system introduced to characterize \cbv solvability is defined as follows.

\emph{Types} and \emph{intersection types} are defined by mutual induction according to the grammar below, where $\alpha$ and $\nu$ are two distinct constants, and $\{\sigma_1, \dots, \sigma_n\}$ is a non-empty finite set of types:
\begin{align*}
\text{types} \qquad \sigma, \tau &\Coloneqq \alpha \mid \nu \mid S \Rightarrow \tau 
& &&
\text{intersection types} \qquad S &\Coloneqq \{\sigma_1, \dots, \sigma_n\}  \qquad (n \geq 1)
\end{align*}

An \emph{environment} $B$ is a (total) function mapping variables to finite sets of types such that $\dom{B} = \{\var \mid B(\var) \neq \emptyset\}$ is finite.
We write $B = \var_1 : S_1, \dots, \var_n : S_n$ if $\dom{B} = \{\var_1, \dots, \var_n\}$ and $\var_1, \dots, \var_n$ are pairwise disjoint. 
Given two environments $B$ and $B'$, we write $B \cup B'$ for their pointwise union, \ie, $(B \cup B')(x) = B(x) \cup B'(x)$ for every variable $x$.

The inference rules of the type system are the following (see \cite[Definition 6.2]{DBLP:journals/ita/PaoliniR99}):\footnote{In \cite[Definition 6.2]{DBLP:journals/ita/PaoliniR99}, the rule $\Rightarrow_{\nu E}$ is not included, but it is needed otherwise the \cbv solvble term $(\la{\varthree}\var)\la{\vartwo}\Omega$ would not be typable.}
\begin{gather*}
\begin{prooftree}
	\infer0[$\text{var}$]{\var: \{\sigma\} \vdash \var : \sigma}
\end{prooftree}
\qquad
\begin{prooftree}
	\hypo{B \vdash \tm : \{\sigma_1, \dots, \sigma_n\} \Rightarrow \tau}
	\hypo{(B_i \vdash \tmtwo : \sigma_i)_{1 \leq i \leq n}}
	\hypo{n \geq 1}
	\infer3[$\Rightarrow_E$]{B \cup \bigcup_{i=1}^n B_i \vdash  \tm\tmtwo : \tau}
\end{prooftree}
\qquad
\begin{prooftree}
	\hypo{B \vdash \tm : \{\nu\} \Rightarrow \tau}
	\hypo{B' \vdash \tmtwo : \nu}
	\infer2[$\Rightarrow_{\nu E}$]{B \cup B' \vdash  \tm\tmtwo : \tau}
\end{prooftree}
\\
\begin{prooftree}
\infer0[$\nu$]{\vdash \la{\var}{\tm} : \nu}
\end{prooftree}
\qquad
\begin{prooftree}
\hypo{B \vdash \tm : \tau}
\hypo{\var \notin \dom{B}}
\infer2[$\Rightarrow_{\nu I}$]{B \vdash \la{\var}{\tm} : \{\nu\} \Rightarrow \tau}
\end{prooftree}
\qquad
\begin{prooftree}
\hypo{B \vdash \tm : \tau}
\hypo{\var \notin \dom{B}}
\infer2[$\Rightarrow_{0 I}$]{B \vdash \la{\var}{\tm} : \tau}
\end{prooftree}	
\qquad
\begin{prooftree}
\hypo{B, x : S \vdash \tm : \tau}
\infer1[$\Rightarrow_{I}$]{B \vdash \la{\var}{\tm} : S \Rightarrow \tau}
\end{prooftree}
\end{gather*}

Let $\tm \defeq \la{\var}(\la{\varthree}{\var})({\var\var)}$ and $\tm' \defeq \la{\var}\var$.
According to the reduction defined in \cite{DBLP:journals/ita/PaoliniR99} to characterize operationally \cbv solvability, we have $\tm \to \tm'$.

The only possible judgments for $\tm$ are $\vdash \tm : \nu$ and $\vdash \tm : \{\{\sigma\} \Rightarrow \nu, \sigma, \tau\} \Rightarrow \tau$, for every types $\sigma, \tau$.
Indeed, 
\begin{align*}
\begin{prooftree}
\infer0[$\nu$]{\vdash \la{\var}(\la{\varthree}\var)(\var\var) : \nu}
\end{prooftree}
&&
\begin{prooftree}
	\infer0[$\text{var}$]{x : \{\tau\} \vdash \var : \tau}
	\infer1[$\Rightarrow_{\nu I}$]{x : \{\tau\} \vdash \la{\varthree}\var : \{\nu\} \Rightarrow \tau}
	\infer0[$\text{var}$]{x : \{\{\sigma\} \Rightarrow \nu\} \vdash x : \{\sigma\} \Rightarrow \nu}
	\infer0[$\text{var}$]{x : \{\sigma\} \vdash x : \sigma}
	\infer2[$\Rightarrow_E$]{x : \{\{\sigma\} \Rightarrow \nu, \sigma\} \vdash \var\var : \nu}
	\infer2[$\Rightarrow_{\nu E}$]{x : \{\{\sigma\} \Rightarrow \nu, \sigma, \tau\} \vdash (\la{\varthree}\var)(\var\var) : \tau}
	\infer1[$\Rightarrow_I$]{\vdash \la{\var}(\la{\varthree}\var)(\var\var) : \{\{\sigma\} \Rightarrow \nu, \sigma, \tau\} \Rightarrow \tau}
\end{prooftree}
\end{align*}
and no other judgments are derivable for $\tm$.

The only possible judgments for $\tm'$ are $\vdash \tm' : \nu$ and $\vdash \tm' : \{\tau\} \Rightarrow \tau$, for every type $\tau$.
Indeed, 
\begin{align*}
\begin{prooftree}
\infer0[$\nu$]{\vdash \la{\var}(\la{\varthree}\var)(\var\var) : \nu}
\end{prooftree}
&&
\begin{prooftree}
\infer0[$\text{var}$]{x : \{\tau\} \vdash \var : \tau}
\infer1[$\Rightarrow_{\nu I}$]{\vdash \la{\var}\var : \{\tau\} \Rightarrow \tau}
\end{prooftree}
\end{align*}
and no other judgments are derivable for $\tm'$.

Summing up, subject reduction (as it is stated in \cite[Lemma 6.4]{DBLP:journals/ita/PaoliniR99}) does not hold because $\vdash \tm : \{\{\sigma\} \Rightarrow \nu, \sigma, \tau\} \Rightarrow \tau$ but there is no environment $B$ such that $B \vdash \tm' : \{\{\sigma\} \Rightarrow \nu, \sigma, \tau\} \Rightarrow \tau$.
Subject expansion (as it is stated in \cite[Lemma 6.6]{DBLP:journals/ita/PaoliniR99}) does not hold because $\vdash \tm' : \{\tau\} \Rightarrow \tau$ for every type $\tau$, but there is no environment $B$ and no type $\tau$ such that $\vdash \tm : \{\tau\} \Rightarrow \tau$.

\subsection{Counterexample to subject reduction in Kerinec et al. \cite{DBLP:conf/fscd/KerinecMR21} (due to Delia Kesner)}

In \cite{DBLP:conf/fscd/KerinecMR21}, the non-idempotent intersection type system introduced to characterize \cbv solvability is defined as follows.

Given a countable set of constants $a, b, c, \dots$, \emph{types} and \emph{multiset types} are defined by mutual induction according to the grammar below, where $\mset{\alpha_1, \dots, \alpha_n}$ with $n \geq 0$ is a (possibly empty) finite multiset of types:
\begin{align*}
\text{types} \qquad \alpha, \beta &\Coloneqq a \mid \mset{\,} \mid M \Rightarrow \alpha 
&&& &&
\text{multiset types} \qquad M &\Coloneqq \mset{\alpha_1, \dots, \alpha_n}  \qquad (n \geq 0, \ \alpha_i \neq \mset{\,} \text{ for all } 1 \leq i \leq n)
\end{align*}

An \emph{environment} $\typctx$ is a (total) function mapping variables to multiset types such that $\dom{\typctx} = \{\var \mid \typctx(\var) \neq \mset{\,}\}$ is finite.
We write $\typctx = \var_1 : S_1, \dots, \var_n : S_n$ if $\dom{\typctx} \subseteq \{\var_1, \dots, \var_n\}$ and $\var_1, \dots, \var_n$ are pairwise disjoint. 
Given two environments $\typctx$ and $\typctx'$, we write $\typctx + \typctx'$ for their pointwise multiset union, \ie, $(\typctx + \typctx')(x) = \typctx(x) + \typctx'(x)$ for every variable $x$.

The inference rules of the type system are the following (see \cite[Definition 10]{DBLP:conf/fscd/KerinecMR21}):
\begin{gather*}
	\begin{prooftree}
		\infer0[$\text{var}$]{\var: \mset{\alpha} \vdash \var : \alpha}
	\end{prooftree}
	\qquad
	\begin{prooftree}
		\hypo{\typctx \vdash \tm : M \Rightarrow \alpha}
		\hypo{\typctx' \vdash \tmtwo : M}
	\infer2[$\text{app}$]{\Gamma + \Gamma' \vdash  \tm\tmtwo : \alpha}
	\end{prooftree}
	\\
	\begin{prooftree}
		\hypo{\val \text{ variable or abstraction}}
		\infer1[$\text{val}_0$]{\vdash \val : \mset{\,}}
	\end{prooftree}
	\qquad
	\begin{prooftree}
		\hypo{\typctx_1 \vdash \tm : \alpha_1}
		\hypo{\cdots}
		\hypo{\typctx_n \vdash \tm : \alpha_n}
		\hypo{n > 0}
		\infer4[$\text{val}_{>0}$]{\sum_{i=1}^n \typctx_i \vdash \tm : \mset{\alpha_1, \dots, \alpha_n}}
	\end{prooftree}	
	\qquad
	\begin{prooftree}
	\hypo{\typctx, x : M \vdash \tm : \alpha}
	\infer1[$\text{lam}$]{\typctx \vdash \la{\var}{\tm} : M \Rightarrow \alpha}
	\end{prooftree}
\end{gather*}

Let $\tm \defeq w((\lambda x.w')(zy))$ and $\tm' \defeq (\lambda x.ww')(zy)$.
According to the reduction defined in \cite{DBLP:conf/fscd/KerinecMR21} to characterize operationally \cbv solvability, we have $\tm \to \tm'$ (via rule $\sigma_{3}$).

The judgment $w:[\mset{a_1,a_2} \Rightarrow \alpha], z:\mset{\mset{b_1} \Rightarrow \mset{}, \mset{b_2} \Rightarrow \mset{}}, y:\mset{b_1,b_2}, w': \mset{a_1,a_2} \vdash \tm : \alpha$ is derivable for $\tm$, for every type $\alpha$ and every pairwise distinct constants $a_1, a_2, b_1, b_2$.
Indeed, 
\begin{gather*}
	\begin{prooftree}[label separation=0.2em]
		\infer0[\footnotesize{$\text{var}$}]{w:\mset{\mset{a_1,a_2} \Rightarrow \alpha} \vdash w : \mset{a_1,a_2} \Rightarrow \alpha}
		\hypo{}
		\ellipsis{$\Pi_1$}{\typctx_1 \vdash (\lambda x.w')(zy) : a_1}
		\hypo{}
		\ellipsis{$\Pi_2$}{\typctx_2 \vdash (\lambda x.w')(zy) : a_2}
		\infer[separation=5em]2[\footnotesize{$\text{val}_{>0}$}]{z:[\mset{b_1} \Rightarrow \mset{}, \mset{b_2} \Rightarrow \mset{}], y:\mset{b_1,b_2}, w': \mset{a_1,a_2} \vdash (\lambda x.w')(zy) : \mset{a_1,a_2}}
		\infer2[\footnotesize{$\text{app}$}]{w:[\mset{a_1,a_2} \Rightarrow \alpha], z:\mset{\mset{b_1} \Rightarrow \mset{}, \mset{b_2} \Rightarrow \mset{}}, y:\mset{b_1,b_2}, w': \mset{a_1,a_2} \vdash w((\lambda x.w')(zy)) : \alpha}
	\end{prooftree}
\end{gather*}
\noindent where, for $i \in \{1,2\}$, we set  $\typctx_i \defeq z:[\mset{b_i} \Rightarrow \mset{}], y:\mset{b_i}, w': \mset{a_i}$ (hence $\typctx_1+ \typctx_2 = z:[\mset{b_1} \Rightarrow \mset{}, \mset{b_2} \Rightarrow \mset{}], y:\mset{b_1,b_2}, w': \mset{a_1,a_2}$)  and  
\begin{gather*}
	\Pi_i \defeq 
	\begin{prooftree}
	\infer0[$\text{var}$]{w': \mset{a_i} \vdash w' : a_i}
	\infer1[$\text{lam}$]{w': \mset{a_i} \vdash \lambda x.w' : \mset{} \Rightarrow a_i}
	\infer0[$\text{var}$]{z:\mset{\mset{b_i} \Rightarrow \mset{}} \vdash z : \mset{b_i} \Rightarrow \mset{}}
	\infer0[$\text{var}$]{y:\mset{b_i} \vdash y : b_i}
	\infer1[$\text{val}_{>0}$]{y:\mset{b_i} \vdash y : \mset{b_i}}
	\infer2[$\text{app}$]{z:\mset{\mset{b_i} \Rightarrow \mset{}}, y:\mset{b_i} \vdash zy : \mset{}}
	\infer2[$\text{app}$]{z:[\mset{b_i} \Rightarrow \mset{}], y:\mset{b_i}, w': \mset{a_i} \vdash (\lambda x.w')(zy) : a_i}
	\end{prooftree}
\end{gather*}

But the judgment $w:[\mset{a_1,a_2} \Rightarrow \alpha], z:\mset{\mset{b_1} \Rightarrow \mset{}, \mset{b_2} \Rightarrow \mset{}}, y:\mset{b_1,b_2}, w': \mset{a_1,a_2} \vdash \tm' : \alpha$ is not derivable for $\tm'$, essentially because it is impossible to derive the judgment $y : \mset{b_1, b_2}, z : \mset{\mset{b_1} \Rightarrow \mset{}, \mset{b_2} \Rightarrow \mset{}} \vdash zy : \mset{}$.
Therefore, subject reduction \cite[Proposition 14.i]{DBLP:conf/fscd/KerinecMR21} does not hold.

\section{Preliminaries and Notations in Rewriting}
\label{sect:preliminaries}
For a relation $R$ on a set of terms, $R^*$ is its reflexive-transitive closure. 
Given a relation $\Rew{\Rule}$, an $\Rule$-\emph{evaluation}
(or simply evaluation if unambiguous) $\deriv$ is a finite sequence of terms $(\tm_i)_{0 \leq i \leq n}$ (for some $n \geq 0$) such that $\tm_i \Rew{\Rule} \tm_{i+1}$ for all $1 \leq i < n$, and we write $\deriv \colon \tm \Rew{\Rule}^* \tmtwo$ if $\tm_0 = \tm$ and $\tm_n = \tmtwo$. The
\emph{length} $n$ of $\deriv$ is denoted by $\size{\deriv}$, and $\size{\deriv}_a$ is the number of $a$-\emph{steps} (\ie the number of $\tm_i \Rew{a} \tm_{i+1}$ for some $1 \leq i \leq n$) in $\deriv$, for a given subrelation $\Rew{a}$ of $\Rew{\Rule}$.

A term $\tm$ is $\Rule$-\emph{normal} if there is no $\tmtwo$ such that $\tm \Rew{\Rule} \tmtwo$.
An evaluation $\deriv \colon \tm \Rew{\Rule}^* \tmtwo$ is \emph{$\Rule$-normalizing} if $\tmtwo$ is $\Rule$-normal.
A term $\tm$ is \emph{weakly $\Rule$-normalizing} if there is a $\Rule$-normalizing evaluation $\deriv \colon \tm \Rew{\Rule}^* \tmtwo$; and $\tm$ is \emph{strongly $\Rule$-normalizing} if there no infinite sequence $(\tm_i)_{i \in \nat}$  such that $\tm_0  = \tm$ and $\tm_i \Rew{\Rule} \tm_{i+1}$ for all $i \in \nat$.
Clearly, strong $\Rule$-normalization implies weak $\Rule$-normalization.

A relation $\Rew{\Rule}$ is \emph{diamond} if $\tmtwo_1 \,{}_\Rule\!\!\lto \tm \Rew{\Rule} \tmtwo_2$ and $\tmtwo_1 \neq \tmtwo_2$ imply $\tmtwo_1 \Rew{\Rule} \tmthree \, {}_\Rule\!\!\lto \tmtwo_2$ for some $\tmthree$. 
As a consequence:
\begin{enumerate}
	\item $\Rew{\Rule}$ is confluent (\ie $\tmtwo_1 \,{}_\Rule^*\!\!\lto \tm \Rew{\Rule}^* \tmtwo_2$  implies $\tmtwo_1 \Rew{\Rule}^* \tmthree \, {}_\Rule^*\!\!\lto \tmtwo_2$ for some $\tmthree$); \item any term $\tm$ has at most one normal form (\ie if $\tm \Rew{\Rule}^* \tmtwo$  and $\tm \Rew{\Rule}^* \tmthree$ with $\tmtwo$ and $\tmthree$ $\Rule$-normal, then $\tmtwo = \tmthree$);
	\item all $\Rule$-evaluations with the same start and end terms have the same length (\ie if $\deriv \colon \tm \Rew{\Rule}^* \tmtwo$  and $\deriv' \colon \tm \Rew{\Rule}^* \tmtwo$ then $\size{\deriv} = \size{\deriv'}$);
	\item $\tm$ is weakly $\Rule$-normalizing iff it is strongly $\Rule$-normalizing.
\end{enumerate}

Two relations $\Rew{\Rule_1}$ and $\Rew{\Rule_2}$ \emph{strongly commute} if $\tmtwo_1 \,{}_{\Rule_1}\!\!\!\lto \tm \Rew{\Rule_2} \tmtwo_2$ implies $\tmtwo_1 \Rew{\Rule_2} \tmthree \, {}_{\Rule_2}\!\!\!\lto \tmtwo_2$ for some $\tmthree$. 
If $\Rew{\Rule_1}$ and $\Rew{\Rule_2}$ strongly commute and are diamond, then 
\begin{enumerate}
	\item $\Rew{\Rule} \, = \, \Rew{\Rule_1} \!\cup \Rew{\Rule_2}$ is diamond,
	\item all $\Rule$-evaluations with the same start and end terms have the same number of any kind of steps (\ie if $\deriv \colon \tm \Rew{\Rule}^* \tmtwo$  and $\deriv' \colon \tm \Rew{\Rule}^* \tmtwo$ then $\size{\deriv}_{\Rule_1} = \size{\deriv'}_{\Rule_1}$ and $\size{\deriv}_{\Rule_2} = \size{\deriv'}_{\Rule_2}$).
\end{enumerate}

It is a strong form of confluence and implies \emph{uniform normalization} (if there is a normalizing sequence from $\tm$ then there are no diverging sequences from $\tm$) and the 
\emph{random descent property} (all normalizing sequences from $\tm$ have the same length)

\section{Proofs of \Cref{sect:vsc} (Value Substitution Calculus)}

\begin{lemma}[Basic Properties of $\vsubcalc$]
	\label{l:basic-value-substitution}
	\hfill
	\begin{enumerate}
		\item\label{p:basic-value-substitution-tom-toe-terminates} $\tom$ and $\toe$ are strongly normalizing (separately).
		\item\label{p:basic-value-substitution-tom-toe-diamond-open} $\tomo$ and $\toeo$ are diamond  (separately).
		
		\item\label{p:basic-value-substitution-tom-toe-commute-open}  $\tomo$ and $\toeo$ strongly commute.
	\end{enumerate}
\end{lemma}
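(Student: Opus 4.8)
The plan is to handle the three items separately: item~1 by exhibiting a decreasing measure, and items~2--3 by a local case analysis on the relative positions of two redexes.

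For item~1, the case of $\tom$ is immediate: a multiplicative step $\subctxp{\la\var\tm}\tmtwo \rtom \subctxp{\tm\esub\var\tmtwo}$ turns one application node into an \ES node and leaves every other constructor untouched, so the number of applications of a term strictly decreases along $\tom$; being a natural number, this witnesses strong normalization. The case of $\toe$ is the delicate one, and I expect it to be the main obstacle here. A naive size fails, because an exponential step $\tm\esub\var{\subctxp\val} \rtoe \subctxp{\tm\isub\var\val}$ may duplicate $\val$ (together with the \ES occurring inside it) once per free occurrence of $\var$ in $\tm$, and this can make the term grow. The remedy is a measure that ``pays in advance'' for these duplications, weighting each \ES roughly by the size of its content times the number of copies it is destined to spawn; equivalently---and this is the route best fitting a paper that stresses its linear logic background---one reads $\toe$ as exponential cut-elimination on the call-by-value proof nets, which merely pushes each exponential box towards its auxiliary doors and hence terminates, or one simulates $\toe$ inside a calculus of explicit substitutions in the style of $\lambda\mathsf{x}$, whose substitution-propagation reduction is strongly normalizing by a classical lexicographic argument.

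For items~2 and~3, I would prove the (strong) diamond and the strong commutation by the textbook case analysis on two distinct redexes $R_1 \neq R_2$ fired from the same term (both multiplicative, both exponential, or one of each, all under open contexts). The structural fact that makes the diagrams close with exactly one step per side is that open contexts never enter abstractions: in a multiplicative redex $\subctxp{\la\var\tm}\tmtwo$ the body $\tm$ lies under a $\lambda$, and in an exponential redex $\tm\esub\var{\subctxp\val}$ the value $\val$, if an abstraction, has its body under a $\lambda$. Hence $R_2$ can only lie in a subterm that an open context reaches and that firing $R_1$ neither erases nor duplicates---inside $\subctx$ or the argument $\tmtwo$ in the multiplicative case, inside $\subctx$ or the \ES-body $\tm$ in the exponential case, and in neither case under the abstraction involved in $R_1$---because a multiplicative step only relocates subterms and an exponential step only replaces variable occurrences. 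Thus $R_2$ has a unique residual after $R_1$ and symmetrically, and firing the residuals reconverges in one step on each side. Disjoint redexes trivially permute, and nested redexes are handled as just described; when both steps are exponential, the coincidence of the two ways around the square uses the substitution-composition identity $\tm\isub\var{\val_1}\isub\vartwo{\val_2} = \tm\isub\vartwo{\val_2}\isub\var{\val_1\isub\vartwo{\val_2}}$ (for $\var \neq \vartwo$ and $\var \notin \fv{\val_2}$), while the mixed and purely multiplicative cases only need that substitution commutes with the term constructors. For item~3 one additionally observes that an exponential step cannot destroy a multiplicative redex (its head abstraction is left intact) and a multiplicative step cannot destroy an exponential redex (the content of the \ES, a value in a substitution context, is preserved), so both directions remain single steps. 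The one bookkeeping point deserving care is that after $\subctxp{\la\var\tm}\tmtwo \rtom \subctxp{\tm\esub\var\tmtwo}$ the argument $\tmtwo$ and the list $\subctx$ enter the scope of new binders: no variable capture occurs, by the Barendregt convention. Combining this with the fact (recalled in the preliminaries) that the union of two strongly commuting diamond relations is diamond then gives the diamond of $\tovsubo$.
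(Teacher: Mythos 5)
Your proposal is correct and, for the two items the paper actually proves in detail (the diamond of $\tomo$ and $\toeo$ and their strong commutation), it follows essentially the same route: an exhaustive local analysis of the relative positions of two co-initial redexes, closing each diagram in one step precisely because open contexts never enter abstractions---so the subterm hosting the second redex is neither erased nor duplicated by firing the first---with the substitution-composition identity handling the nested exponential/exponential case; this is exactly the induction on contexts carried out in the appendix. For item 1 the paper simply cites \cite{AccattoliPaolini12}; your argument for $\tom$ (the number of applications strictly decreases) is complete, and for $\toe$ you correctly identify duplication as the obstacle and name the standard remedy (a measure charging each \ES in advance for the copies its content will spawn), but you list three alternative strategies without carrying any of them out, so this is the one place where the proposal remains a sketch rather than a proof---though the intended weighted measure does go through.
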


\begin{proof}
	The statements of \reflemma{basic-value-substitution} are a refinement of some results proved in \cite{AccattoliPaolini12}, where $\tovsubo$ is denoted by $\to_\mathsf{w}$.
	\begin{enumerate}
		\item See \cite[Lemma~3]{AccattoliPaolini12}.
		
		\item We prove that $\tomo$ is diamond, \ie if $\tmtwo \lRew{\wmsym} \tm \tomo \tmthree$ with $\tmtwo \neq \tmthree$ then there exists $\tmp \in \Lambda_\vsub$ such that $\tmtwo \tomo \tmp \lRew{\wmsym} \tmthree$.
		The proof is by induction on the definition of $\tomo$. 
		Since there $\tm \tomo \tmthree \neq \tmtwo$ and the reduction $\tomo$ is weak, there are only eight cases:
		\begin{itemize}
			\item \emph{Step at the Root for $\tm \!\tomo\! \tmtwo$ and Application Right for $\tm \!\tomo\! \tmthree$}, \ie $\tm \defeq \sctxp{\la\var\tmfive}\tmfour \rtom \sctxp{\tmfive\esub{\var}{\tmfour}} \eqdef \tmtwo$ and $\tm \!\rtom\! \sctxp{\la\var\tmfive}\tmfourp\! \eqdef \tmthree$ with $\tmfour \!\tomo\! \tmfourp$: then, $\tmtwo \!\tomo\! \sctxp{\tmfive\esub{\var}{\tmfourp}} \!\lRew{\wmsym}\! \tmthree$;
			
			\item \emph{Step at the Root for $\tm \tomo \tmtwo$ and Application Left for $\tm \tomo \tmthree$}, \ie, for some $n > 0$, 
			$$
			\tm \defeq (\la\var\tmfive)\esub{\var_1}{\tm_1}\dots\esub{\var_n}{\tm_n}\tmfour \allowbreak\rtom \tmfive\esub{\var}{\tmfour}\esub{\var_1}{\tm_1}\dots\esub{\var_n}{\tm_n} \eqdef \tmtwo
			$$
			whereas $\tm \tomo \allowbreak (\la\var\tmfive)\esub{\var_1}{\tm_1}\dots\esub{\var_j}{\tmp_j}\dots\esub{\var_n}{\tm_n}\tmfour \eqdef \tmthree$ with $\tm_j \tomo \tmp_j$ for some $1 \leq j \leq n$: then, 
			\begin{align*}
			\tmtwo \tomo \allowbreak \tmfive\esub{\var}{\tmfour}\esub{\var_1}{\tm_1}\dots\esub{\var_j}{\tmp_j}\dots\esub{\var_n}{\tm_n} \lRew{\wmsym} \tmthree;
			\end{align*}
			\item \emph{Application Left for $\tm \tomo \tmtwo$ and Application Right for $\tm \tomo \tmthree$}, \ie $\tm \defeq \tmfour\tmfive \tomo \tmfourp\tmfive \eqdef \tmtwo$ and $\tm \tomo \tmfour\tmfivep \eqdef \tmthree$ with $\tmfour \tomo \tmfourp$ and $\tmfive \tomo \tmfivep$: then, $\tmtwo \tomo \tmfourp\tmfivep\! \lRew{\wmsym} \tmthree$;
			\item \emph{Application Left for both $\tm \tomo \tmtwo$ and $\tm \tomo \tmthree$}, \ie $\tm \defeq \tmfour\tmfive \tomo \tmfourp\tmfive \eqdef \tmtwo$ and $\tm \tomo \tmfour''\tmfive \eqdef \tmthree$ with $\tmfourp \lRew{\wmsym} \tmfour \tomo \tmfour''$: by \ih, there exists $\tmfour_0 \in \Lambda_\vsub$ such that $\tmfourp \tomo \tmfour_0 \lRew{\msym} \tmfour''$, hence $\tmtwo \tomo \tmfour_0\tmfive \lRew{\msym} \tmthree$;
			\item \emph{Application Right for both $\tm \tomo \tmtwo$ and $\tm \tomo \tmthree$}, \ie $\tm \defeq \tmfive\tmfour \tomo \tmfive\tmfourp \eqdef \tmtwo$ and $\tm \tomo \tmfive\tmfour'' \eqdef \tmthree$ with $\tmfourp \lRew{\wmsym} \tmfour \tomo \tmfour''$: by \ih, there exists $\tmfour_0 \in \Lambda_\vsub$ such that $\tmfourp \tomo \tmfour_0 \lRew{\wmsym} \tmfour''$, hence $\tmtwo \tomo \tmfive\tmfour_0 \lRew{\wmsym} \tmthree$;
			\item \emph{$\mathsf{ES}$ left for $\tm \tomo \tmtwo$ and $\mathsf{ES}$ right for $\tm \tomo \tmthree$}, \ie $\tm \defeq \tmfour\esub\var\tmfive \tomo \tmfourp\esub\var\tmfive \eqdef \tmtwo$ and $\tm \tomo \tmfour\esub\var\tmfivep \eqdef \tmthree$ with $\tmfour \tomo \tmfourp$ and $\tmfive \tomo \tmfivep$: then, 
			$$
			\tmtwo \tomo \tmfourp\esub\var\tmfivep\! \lRew{\wmsym} \tmthree
			$$
			\item \emph{$\mathsf{ES}$ left for both $\tm \tomo \tmtwo$ and $\tm \tomo \tmthree$}, \ie $\tm \defeq \tmfour\esub\var\tmfive \tomo \tmfourp\esub\var\tmfive \eqdef \tmtwo$ and $\tm \tomo \tmfour''\esub\var\tmfive \eqdef \tmthree$ with $\tmfourp \lRew{\wmsym} \tmfour \tomo \tmfour''$: by \ih, there exists $\tmfour_0 \in \Lambda_\vsub$ such that $\tmfourp \tomo \tmfour_0 \lRew{\wmsym} \tmfour''$, hence $\tmtwo \tom \tmfour_0\esub\var\tmfive \lRew{\wmsym} \tmthree$;
			\item \emph{$\mathsf{ES}$ right for both $\tm \tomo \tmtwo$ and $\tm \tomo \tmthree$}, \ie $\tm \defeq \tmfive\esub\var\tmfour \tomo \tmfive\esub\var\tmfourp \eqdef \tmtwo$ and $\tm \tomo \tmfive\esub\var{\tmfour''} \eqdef \tmthree$ with $\tmfourp \lRew{\wmsym} \tmfour \tom \tmfour''$: by \ih, there exists $\tmfour_0 \in \Lambda_\vsub$ such that $\tmfourp \tomo \tmfour_0 \lRew{\wmsym} \tmfour''$, hence $\tmtwo \tom \tmfive\esub\var{\tmfour_0} \lRew{\wmsym} \tmthree$.
		\end{itemize}
		
		\smallskip
		We prove that $\toeo$ is diamond, \ie if $\tmtwo \lRew{\wesym} \tm \toeo \tmthree$ with $\tmtwo \neq \tmthree$ then there exists $\tmfour \in \Lambda_\vsub$ such that $\tmtwo \toeo \tmp \lRew{\esym} \tmthree$.
		The proof is by induction on the definition of $\toeo$. 
		Since there $\tm \toeo \tmthree \neq \tmtwo$ and the reduction $\toeo$ is weak, there are only eight cases:
		\begin{itemize}
			\item \emph{Step at the Root for $\tm \!\toeo\! \tmtwo$} and \emph{$\mathsf{ES}$ left for $\tm \!\toeo\! \tmthree$}, \ie $\tm \defeq \tmfour\esub\var{\sctxp{\val}} \rtoe \sctxp{\tmfour\isub{\var}{\val}} \eqdef \tmtwo$ and $\tm \!\rtoe\! \tmfourp\esub\var{\sctxp{\val}}\! \eqdef \tmthree$ with $\tmfour \!\toeo\! \tmfourp$: then, 
			$$
			\tmtwo \!\toeo\! \sctxp{\tmfourp\esub{\var}{\val}} \!\lRew{\wesym}\! \tmthree
			$$
			
			\item \emph{Step at the Root for $\tm \toeo \tmtwo$} and \emph{$\mathsf{ES}$ right for $\tm \toeo \tmthree$}, \ie, for some $n > 0$, $\tm \defeq \tmfour\esub\var{\val\esub{\var_1}{\tm_1}\dots\esub{\var_n}{\tm_n}} \allowbreak\rtoe \tmfour\isub{\var}{\val}\esub{\var_1}{\tm_1}\dots\esub{\var_n}{\tm_n} \eqdef \tmtwo$ whereas $\tm \toeo \allowbreak \tmfour\esub{\var}{\val\esub{\var_1}{\tm_1}\dots\esub{\var_j}{\tmp_j}\dots\esub{\var_n}{\tm_n}} \eqdef \tmthree$ with $\tm_j \toeo \tmp_j$ for some $1 \leq j \leq n$: then, 
			\begin{align*}
			\tmtwo \toeo \allowbreak \tmfour\isub{\var}{\val}\esub{\var_1}{\tm_1}\dots\esub{\var_j}{\tmp_j}\dots\esub{\var_n}{\tm_n} \lRew{\wesym} \tmthree;
			\end{align*}
			\item \emph{Application Left for $\tm \toeo \tmtwo$} and \emph{Application Right for $\tm \toeo \tmthree$}, \ie $\tm \defeq \tmfour\tmfive \toeo \tmfourp\tmfive \eqdef \tmtwo$ and $\tm \toeo \tmfour\tmfivep \eqdef \tmthree$ with $\tmfour \toeo \tmfourp$ and $\tmfive \toeo \tmfivep$: then, $\tmtwo \toeo \tmfourp\tmfivep\! \lRew{\wesym} \tmthree$;
			\item \emph{Application Left for both $\tm \toeo \tmtwo$ and $\tm \toeo \tmthree$}, \ie $\tm \defeq \tmfour\tmfive \toeo \tmfourp\tmfive \eqdef \tmtwo$ and $\tm \toeo \tmfour''\tmfive \eqdef \tmthree$ with $\tmfourp \lRew{\wesym} \tmfour \toeo \tmfour''$: by \ih, there exists $\tmfour_0 \in \Lambda_\vsub$ such that $\tmfourp \toeo \tmfour_0 \lRew{\wesym} \tmfour''$, hence $\tmtwo \toeo \tmfour_0\tmfive \lRew{\wesym} \tmthree$;
			\item \emph{Application Right for both $\tm \toeo \tmtwo$ and $\tm \toeo \tmthree$}, \ie $\tm \defeq \tmfive\tmfour \toeo \tmfive\tmfourp \eqdef \tmtwo$ and $\tm \toeo \tmfive\tmfour'' \eqdef \tmthree$ with $\tmfourp \lRew{\wesym} \tmfour \toeo \tmfour''$: by \ih, there exists $\tmfour_0 \in \Lambda_\vsub$ such that $\tmfourp \toeo \tmfour_0 \lRew{\wesym} \tmfour''$, hence $\tmtwo \toeo \tmfive\tmfour_0 \lRew{\wesym} \tmthree$;
			\item \emph{$\mathsf{ES}$ left for $\tm \toeo \tmtwo$} and \emph{$\mathsf{ES}$ right for $\tm \toeo \tmthree$}, \ie $\tm \defeq \tmfour\esub\var\tmfive \toeo \tmfourp\esub\var\tmfive \eqdef \tmtwo$ and $\tm \toeo \tmfour\esub\var\tmfivep \eqdef \tmthree$ with $\tmfour \toeo \tmfourp$ and $\tmfive \toeo \tmfivep$: then, $\tmtwo \toeo \tmfourp\esub\var\tmfivep\! \lRew{\wesym} \tmthree$;
			\item \emph{$\mathsf{ES}$ left for both $\tm \toe \tmtwo$ and $\tm \toe \tmthree$}, \ie $\tm \defeq \tmfour\esub\var\tmfive \toe \tmfourp\esub\var\tmfive \eqdef \tmtwo$ and $\tm \toe \tmfour''\esub\var\tmfive \eqdef \tmthree$ with $\tmfourp \lRew{\esym} \tmfour \toe \tmfour''$: by \ih, there exists $\tmfour_0 \in \Lambda_\vsub$ such that $\tmfourp \toe \tmfour_0 \lRew{\esym} \tmfour''$, hence $\tmtwo \toe \tmfour_0\esub\var\tmfive \lRew{\esym} \tmthree$;
			\item \emph{$\mathsf{ES}$ right for both $\tm \toe \tmtwo$ and $\tm \toe \tmthree$}, \ie $\tm \defeq \tmfive\esub\var\tmfour \toe \tmfive\esub\var\tmfourp \eqdef \tmtwo$ and $\tm \toe \tmfive\esub\var{\tmfour''} \eqdef \tmthree$ with $\tmfourp \lRew{\esym} \tmfour \toe \tmfour''$: by \ih, there exists $\tmfour_0 \in \Lambda_\vsub$ such that $\tmfourp \toe \tmfour_0 \lRew{\esym} \tmfour''$, hence $\tmtwo \toe \tmfive\esub\var{\tmfour_0} \lRew{\esym} \tmthree$.
		\end{itemize}
		
		\smallskip
		
		Note that in \cite[Lemma~11]{AccattoliPaolini12} it has just been proved the strong confluence of $\tovsub$, not of $\tom$ or $\toe$.
		
		\item We show that $\toeo$ and $\tomo$ strongly commute, \ie if $\tmtwo \lRew{\wesym} \tm \tomo \tmthree$, then $\tmtwo \neq \tmthree$ and there is $\tmp \in \Lambda_\vsub$ such that $\tmtwo \tomo \tmp \lRew{\wesym} \tmthree$. 
		The proof is by induction on the definition of $\tm \toeo \tmtwo$. 
		The proof that $\tmtwo \neq \tmthree$ is left to the reader.
		Since the $\toe$ and $\tom$ cannot reduce under $\l$'s, all values are $\omsym$-normal and $\oesym$-normal. So, there are the following cases.
		\begin{itemize}
			\item \emph{Step at the Root for $\tm \toeo \tmtwo$} and \emph{$\mathsf{ES}$ left for $\tm \tomo \tmthree$}, \ie $\tm \defeq \tmfour\esub\varthree{\sctxp\val} \toeo \sctxp{\tmfour\isub\varthree{\val}} \eqdef \tmtwo$ and $\tm \tomo \tmfourp\esub\varthree{\sctxp{\val}} \eqdef \tmthree$ with $\tmfour \tomo \tmfourp$: then 
			$$
			\tmtwo \tomo \sctxp{\tmfourp\isub\varthree{\val}} \lRew{\wesym} \tmtwo
			$$
			\item \emph{Step at the Root for $\tm \toeo \tmtwo$} and \emph{$\mathsf{ES}$ right for $\tm \tomo \tmthree$}, \ie 
			\begin{align*}
			\tm &\defeq \tmfour\esub\varthree{\val\esub{\var_1}{\tm_1}\dots\esub{\var_n}{\tm_n}} \\
			&\toeo \tmfour\isub\varthree{\val}\esub{\var_1}{\tm_1}\dots\esub{\var_n}{\tm_n} \eqdef \tmtwo
			\end{align*}
			and $\tm \tomo \tmfour\esub\varthree{\val\esub{\var_1}{\tm_1}\dots\esub{\var_j}{\tmp_j}\dots\esub{\var_n}{\tm_n}} \eqdef \tmthree$
			for some $n > 0$, and $\tm_j \tomo \tmp_j$ for some $1 \leq j \leq n$: then, $\tmtwo \tomo \tmfour\isub\varthree{\val}\esub{\var_1}{\tm_1}\dots\esub{\var_j}{\tmp_j}\dots\esub{\var_n}{\tm_n} \lRew{\wesym} \tmthree$;
			
			\item \emph{Application Left for $\tm \toe \tmtwo$} and \emph{Application Right for $\tm \tom \tmthree$}, \ie $\tm \defeq \tmfour\tmfive \toeo \tmfourp\tmfive \eqdef \tmtwo$ and $\tm \tomo \tmfour\tmfivep \eqdef \tmthree$ with $\tmfour \toe \tmfourp$ and $\tmfive \tomo \tmfivep$: then, $\tm \tomo \tmfourp\tmfivep \lRew{\wesym} \tmtwo$;
			\item \emph{Application Left for both $\tm \toeo \tmtwo$ and $\tm \tomo \tmthree$}, \ie $\tm \defeq \tmfour\tmfive \toeo \tmfourp\tmfive \eqdef \tmtwo$ and $\tm \tomo \tmfour''\tmfive \eqdef \tmthree$ with $\tmfourp \lRew{\wesym} \tmfour \tomo \tmfour''$: by \ih, there exists $\tmsix \in \Lambda_\vsub$ such that $\tmfourp \tomo \tmsix \lRew{\wesym} \tmfour''$, hence $\tmtwo \tomo \tmsix\tmfive \lRew{\wesym} \tmthree$;
			\item \emph{Application Left for $\tm \toeo \tmtwo$} and \emph{Step at the Root for $\tm \tomo \tmthree$}, \ie $\tm \defeq (\la\var\tmfive){\esub{\var_1}{\tm_1}\dots\esub{\var_n}{\tm_n}}\tmfour \toeo (\la\var\tmfive){\esub{\var_1}{\tm_1}\dots\esub{\var_j}{\tmp_j}\dots\esub{\var_n}{\tm_n}}\tmfour \eqdef \tmtwo$ with $n > 0$ and $\tm_j \toeo \tmp_j$ for some $1 \leq j \leq n$, and 
			$$
			\tm \tomo \allowbreak \tmfive\esub\var\tmfour{\esub{\var_1}{\tm_1}\dots\esub{\var_n}{\tm_n}} \eqdef \tmthree
			$$ 
			Then,
			\begin{equation*}
			\tmtwo \tomo \tmfive\esub\var\tmfour{\esub{\var_1}{\tm_1}\dots\esub{\var_j}{\tmp_j}\dots\esub{\var_n}{\tm_n}} \lRew{\wesym} \tmthree;
			\end{equation*}
			\item \emph{Application Right for $\tm \toeo \tmtwo$} and \emph{Application Left for $\tm \tom \tmthree$}, \ie $\tm \defeq \tmfive\tmfour \toeo \tmfive\tmfourp \eqdef \tmtwo$ and $\tm \tomo \tmfivep\tmfour \eqdef \tmthree$ with $\tmfour \toeo \tmfourp$ and $\tmfive \tomo \tmfivep$: then, $\tmtwo \tomo \tmfivep\tmfourp \lRew{\wesym} \tmthree$;
			\item \emph{Application Right for both $\tm \toeo \tmtwo$ and $\tm \tomo \tmthree$}, \ie $\tm \defeq \tmfive\tmfour \toeo \tmfive\tmfourp \eqdef \tmtwo$ and $\tm \tomo \tmfive\tmfour'' \eqdef \tmthree$ with $\tmfourp \lRew{\wesym} \tmfour \tomo \tmfour''$: by \ih, there exists $\tmsix \in \Lambda_\vsub$ such that $\tmfourp \tomo \tmsix \lRew{\wesym} \tmfour''$, hence $\tmtwo \tomo \tmfive\tmsix \lRew{\wesym} \tmthree$;
			\item \emph{Application Right for $\tm \toeo \tmtwo$} and \emph{Step at the Root for $\tm \tomo \tmthree$}, \ie $\tm \defeq \sctxp{\la\var\tmfive}\tmfour \toeo \sctxp{\la\var\tmfive}\tmfourp \eqdef \tmtwo$ with $\tmfour \toeo \tmfourp$\!, and $\tm \tomo \allowbreak \sctxp{\tmfive\esub\var\tmfour} \eqdef \tmthree$: then, $\tmtwo \tomo \sctxp{\tmfive\esub\var\tmfourp} \lRew{\wesym} \tmthree$;
			\item \emph{$\mathsf{ES}$ left for $\tm \toeo \tmtwo$} and \emph{$\mathsf{ES}$ right for $\tm \tomo \tmthree$}, \ie $\tm \defeq \tmfour\esub\var\tmfive \toeo \tmfourp\esub\var\tmfive \eqdef \tmtwo$ and $\tm \tomo \tmfour\esub\var\tmfivep \eqdef \tmthree$ with $\tmfour \toe \tmfourp$ and $\tmfive \tomo \tmfivep$: then, $\tmtwo \tomo \tmfourp\esub\var\tmfivep \lRew{\wesym} \tmthree$;
			\item \emph{$\mathsf{ES}$ left for both $\tm \toeo \tmtwo$ and $\tm \tomo \tmthree$}, \ie $\tm \defeq \tmfour\esub\var\tmfive \toe \tmfourp\esub\var\tmfive \eqdef \tmtwo$ and $\tm \tomo \tmfour''\esub\var\tmfive \eqdef \tmthree$ with $\tmfourp \lRew{\wesym} \tmfour \tomo \tmfour''$: by \ih,there is $\tmsix \in \Lambda_\vsub$ such that $\tmfourp \tomo \tmsix \lRew{\wesym} \tmfour''$, hence $\tmtwo \tomo \tmsix\esub\var\tmfive \lRew{\wesym} \tmthree$;
			\item \emph{$\mathsf{ES}$ right for $\tm \toeo \tmtwo$} and \emph{$\mathsf{ES}$ left for $\tm \tomo \tmthree$}, \ie $\tm \defeq \tmfive\esub\var\tmfour \toeo \tmfive\esub\var\tmfourp \eqdef \tmtwo$ and $\tm \tomo \tmfivep\esub\var\tmfour \eqdef \tmthree$ with $\tmfour \toeo \tmfourp$ and $\tmfive \tomo \tmfivep$: then, $\tmtwo \tomo \tmfivep\esub\var\tmfourp \lRew{\wesym} \tmthree$;
			\item \emph{$\mathsf{ES}$ right for both $\tm \toeo \tmtwo$ and $\tm \tomo \tmthree$}, \ie $\tm \defeq \tmfive\esub\var\tmfour \toeo \tmfive\esub\var{\tmfourp} \eqdef \tmtwo$ and $\tm \tomo \tmfive\esub\var{\tmfour''} \eqdef \tmthree$ with $\tmfour \lRew{\esym} \tmfourp \tomo \tmfour''$: by \ih, there is $\tmsix \in \Lambda_\vsub$ such that $\tmfour \tomo \tmsix \lRew{\wesym} \tmfour''$, so $\tmtwo \tomo \tmfive\esub\var\tmsix \lRew{\wesym} \tmthree$.
			\qedhere
		\end{itemize}
	\end{enumerate}
\end{proof}

\begin{proposition}[Properties of the open reduction]
	\label{propappendix:properties-open-reduction}
	\NoteState{prop:properties-open-reduction}
	\begin{enumerate}
		\item \label{pappendix:properties-open-reduction-diamond} The reduction $\tovsubo$ is diamond.
		\item \label{pappendix:properties-open-redction-harmony} A term is $\osym$-normal if and only if it is a fireball, where \emph{fireballs} (and \emph{inert terms}) are defined by:
	\end{enumerate}
	\begin{alignat*}{7}
	\textsc{Proper inert terms } \ & \pitm \grameq  \var\fire \mid \pitm \fire \mid \pitm \esub{\var}{\pitmtwo}
	& \qquad
	\textsc{Fireballs } \ & \fire \grameq \val \mid \pitm \mid \fire \esub{\var}{\pitm}
	\end{alignat*}
\end{proposition}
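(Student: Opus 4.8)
The first item is immediate. By definition $\tovsubo = \tomo \cup \toeo$, and by \reflemma{basic-value-substitution}(\ref{p:basic-value-substitution-tom-toe-diamond-open})--(\ref{p:basic-value-substitution-tom-toe-commute-open}) the relations $\tomo$ and $\toeo$ are diamond (separately) and strongly commute; hence by the rewriting facts recalled in \Cref{sect:preliminaries} (the union of two diamond, strongly commuting relations is diamond) the relation $\tovsubo$ is diamond. So the only real content is the second item, which I would prove by establishing the two inclusions separately, by induction.

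For the inclusion \emph{every fireball (resp.\ proper inert term) is $\osym$-normal} I would argue by mutual induction on the two grammars. The induction hypothesis yields normality of all proper sub-terms, and since $\tovsubo$ never reduces under $\lambda$, a value is normal irrespective of its body; it therefore remains only to exclude \emph{root} redexes. For that I would carry along the auxiliary invariant that \emph{no proper inert term has the shape $\sctxp\val$} (a value inside a substitution context), so in particular no proper inert term is an answer $\sctxp{\la\var\tm}$. This invariant is proved by the same mutual induction: the base productions of $\pitm$ are applications, which are never of that shape, while $\pitm\esub\var\pitmtwo = \sctxp\val$ would force $\pitm$ itself to be of that shape, contradicting the hypothesis. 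Granting the invariant: in the application productions $\var\fire$ and $\pitm\fire$ the left sub-term is not an answer, so there is no root $\msym$-step; in the ES productions $\fire\esub\var\pitm$ and $\pitm\esub\var\pitmtwo$ the content of the outermost ES ($\pitm$, resp.\ $\pitmtwo$) is not of the shape $\sctxp\val$, so there is no root $\esym$-step.

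For the converse inclusion, \emph{every $\osym$-normal form is a fireball}, I would proceed by induction on the structure of $\tm$. Normality of $\tm$ forces normality of its immediate sub-terms together with the absence of a root redex: if $\tm$ is an application, its left sub-term is not an answer; if $\tm$ is an ES, the content of the outermost ES is not of the shape $\sctxp\val$. For the induction to close one has to strengthen the statement on inert terms, as is standard for such harmony results — roughly, additionally proving that an $\osym$-normal term whose head (obtained by peeling substitution contexts and arguments) is a variable rather than an abstraction must be a proper inert term — so that the stronger information about the left sub-term of an application can be fed back into the induction. A case split on the top constructor of $\tm$ (value; application; ES) then places $\tm$ in the appropriate production of the fireball / proper-inert grammar via the induction hypothesis.

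I expect the main obstacle to be precisely the bookkeeping around substitution contexts: the notions of \emph{answer}, of \emph{redex at a distance}, and the inert grammar all interact through $\sctx$, so the strengthened statement for the second inclusion (and the auxiliary invariant ``$\pitm$ is never $\sctxp\val$'' for the first) must be phrased carefully enough to propagate through the application and ES cases while still specializing to the plain statement at the top level. Everything else — the root-redex analysis and the structural case work — is routine.
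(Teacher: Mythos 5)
Your proposal is correct. For item 1 it coincides with the paper's proof: diamond of $\tomo$ and $\toeo$ plus their strong commutation (\reflemma{basic-value-substitution}) give diamond of the union by the Hindley--Rosen-style fact recalled in \Cref{sect:preliminaries}. For item 2 you diverge from the paper, which gives no argument at all and simply defers to \cite[Lemma~5]{AccattoliPaolini12}; you instead sketch a self-contained harmony proof, and the sketch is the standard, correct one. Your auxiliary invariant for soundness --- no proper inert term has the shape $\sctxp\val$, hence in particular none is an answer --- is precisely what is needed to exclude root $\msym$- and root $\esym$-redexes at a distance, and it does propagate through the productions as you describe (note that the full invariant, not just ``not an answer'', is needed for the $\esym$ case, since the exponential rule fires on \emph{any} value in a substitution context, variables included; you state it in the right form). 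For completeness, your warning that the strengthening must be phrased carefully is well taken, and the one concrete point to watch is this: the strengthened statement should be about \emph{inert terms} $\itm$ in the sense of \Cref{rmk:inert-alternative} ($\itm \grameq \var \mid \itm\fire \mid \itm\esub{\var}{\pitm}$), not about proper inert terms as literally generated by the displayed grammar. Otherwise an $\osym$-normal term such as $(\var\esub{\vartwo}{\pitm})\fire$ is not reached by the productions $\var\fire \mid \pitm\fire$, since its left subterm is neither a variable nor, literally, a proper inert term; reading ``proper inert'' as ``inert and not a variable'', as the remark intends, the application and ES cases close, because ``$\osym$-normal with head variable'' yields an inert left subterm and $\itm\fire$, $\itm\esub{\var}{\pitm}$ are again inert, hence fireballs. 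With that adjustment your case analysis goes through.
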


\begin{proof}
	\begin{enumerate}
		\item Strong commutation of $\tomo$ and $\toeo$ is proven in \reflemmap{basic-value-substitution}{tom-toe-commute-open}.
		Diamond of $\tovsubo$ follows from that, from diamond for $\tomo$ and $\toeo$ (\reflemmap{basic-value-substitution}{tom-toe-diamond-open}), 
		and from Hindley-Rosen lemma (\cite[Prop. 3.3.5]{Barendregt84}).
		
		\item See \cite[Lemma~5]{AccattoliPaolini12}, where $\tovsubo$ is denoted by $\to_\mathsf{w}$.
		\qedhere
	\end{enumerate}
	
\end{proof}

\begin{remark}[Alternative definition of inert terms]\label{rmk:inert-alternative}
	Given the definitions above of proper inert terms $\pitm$ and fireballs $\fire$, inert terms $\itm$  (\Cref{rmk:inert-def}) can equivalently be defined by:
	\begin{align}\label{eq:inert}
		\textsc{Inert term} \  \itm &\grameq \var \mid \itm \fire \mid \itm \esub{\var}{\pitm}
	\end{align}
	The proof of the equivalence of the two definitions is straightforward.
	From now on, whenever we prove a property by induction on inert terms, we refer to the definition in \eqref{eq:inert}. 
\end{remark}

\begin{lemma}[Sizes for inert terms]
	\label{l:sizes-inert}
	For every inert term $\itm$, $\sizeo{\itm} = \sizes{\itm}$.
\end{lemma}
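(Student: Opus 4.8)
The proof is a straightforward induction on the structure of inert terms, using the alternative grammar of \Cref{rmk:inert-alternative}, namely $\itm \grameq \var \mid \itm\fire \mid \itm\esub{\var}{\pitm}$. The key observation is that $\sizeo{\cdot}$ and $\sizes{\cdot}$ differ \emph{only} on abstractions (where $\sizeo{\la{\var}{\tm}} = 0$ but $\sizes{\la{\var}{\tm}} = \sizes{\tm}+1$), and that in the grammar of inert terms an abstraction can never occur in head position: it may occur only in argument position (inside $\fire$) or inside an ES (as $\pitm$, which never starts with an abstraction), and in both of those positions the two size measures already agree by definition, since $\sizes{\tm\tmtwo}$ and $\sizes{\tm\esub{\var}{\tmtwo}}$ both use $\sizeo{\cdot}$ on the right component.

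Concretely, I would proceed as follows. For the base case $\itm = \var$, we have $\sizeo{\var} = 0 = \sizes{\var}$. For $\itm = \itm'\fire$, we compute $\sizeo{\itm'\fire} = \sizeo{\itm'} + \sizeo{\fire} + 1$ and $\sizes{\itm'\fire} = \sizes{\itm'} + \sizeo{\fire} + 1$, so the two agree by the induction hypothesis $\sizeo{\itm'} = \sizes{\itm'}$. For $\itm = \itm'\esub{\var}{\pitm}$, we compute $\sizeo{\itm'\esub{\var}{\pitm}} = \sizeo{\itm'} + \sizeo{\pitm}$ and $\sizes{\itm'\esub{\var}{\pitm}} = \sizes{\itm'} + \sizeo{\pitm}$, and again conclude by the induction hypothesis. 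There is no real obstacle here: the statement is a sanity-check lemma recording that, for the fragment of terms we care about, the open and solvable size measures coincide, so that later size bounds stated with $\sizeo{\cdot}$ on inert subterms can be freely read with $\sizes{\cdot}$. The only point to be a little careful about is to invoke the alternative (right-comb) grammar for inert terms from \Cref{rmk:inert-alternative} rather than the original definition via proper inert terms, exactly as announced at the end of that remark.
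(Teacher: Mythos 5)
Your proof is correct and matches the paper's argument exactly: an induction on the alternative grammar of inert terms from \Cref{rmk:inert-alternative}, with the variable, application, and explicit-substitution cases handled by unfolding the definitions of $\sizeo{\cdot}$ and $\sizes{\cdot}$ and applying the induction hypothesis to the left component. Nothing is missing.
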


\begin{proof}
	By induction of the inert term $\itm$ (see \Cref{rmk:inert-alternative}).
	Cases:
	\begin{itemize}
		\item \emph{Variable}, \ie $\itm = \var$. 
		Then, $\sizeo{\itm} = 0 = \sizes{\itm}$.
		
%
		\item \emph{Application}, \ie $\itm = \itmtwo \fire$. 
		By \ih, $\sizeo{\itmtwo} = \sizes{\itmtwo}$.
		So, $\sizeo{\itm} = 1 + \sizeo{\itmtwo} + \sizeo{\fire} = 1 + \sizes{\itmtwo} + \sizeo{\fire} = \sizes{\itm}$.

		\item \emph{Explicit substitution}, \ie $\itm = \itmtwo \esub{\var}{\pitm}$. 
		By \ih, $\sizeo{\itmtwo} = \sizes{\itmtwo}$.
		Thus, $\sizeo{\itm} = \sizeo{\itmtwo} + \sizeo{\pitm} = \sizes{\itmtwo} + \sizeo{\pitm} = \sizes{\itm}$.
		\qedhere
	\end{itemize}
\end{proof}

\begin{lemma}[Shape of strong fireballs]
	\label{l:shape-of-strong-fireballs}
	Let $\tm$ be a \full fireball. Then exactly one of the following holds:
	\begin{itemize}
		\item either $\tm$ is a \full inert term,
		
		\item or $\tm$ is a \full value.
		
	\end{itemize}
\end{lemma}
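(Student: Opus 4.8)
The plan is to proceed by structural induction on the grammar defining \full fireballs, mirroring the analyses already carried out for (open) fireballs in \Cref{prop:properties-open-reduction} and for solvable fireballs in \Cref{prop:properties-solvable-reduction}. For each clause of the grammar I would determine into which of the two classes the produced term falls (\emph{existence}), and separately establish that the two classes cannot overlap (\emph{uniqueness}).

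\emph{Existence.} The clauses producing a term directly in one class are immediate: an abstraction $\la\var\fire$ whose body is a \full fireball is a \full value by definition, and a variable or a \full proper inert term is a \full inert term by definition. The only compound clause is the explicit-substitution one, $\tm = \tmtwo\esub\var\pitm$ with $\tmtwo$ a strictly smaller \full fireball and $\pitm$ a \full proper inert term. Here the induction hypothesis gives that $\tmtwo$ is either a \full inert term or a \full value; in both cases, postfixing the \ES $\esub\var\pitm$ keeps us in the \emph{same} class, because the grammar of \full inert terms is closed under postfixing an \ES whose content is a \full proper inert term (cf.\ the alternative presentation of inert terms in \Cref{rmk:inert-alternative}), and likewise that of \full values. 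This settles the "at least one" half.

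\emph{Uniqueness.} For the "at most one" half I would prove that no term is simultaneously a \full inert term and a \full value, which is best done by first stripping the outermost substitution context: write $\tm = \sctxp\tmtwo$ with $\sctx$ the maximal list of outer \ES, so that $\tmtwo$ is not itself an explicit substitution, and note that this decomposition is unique. A short induction on the two grammars then shows that if $\tm$ is a \full inert term then $\tmtwo$ is a variable or an application, whereas if $\tm$ is a \full value then $\tmtwo$ is an abstraction; since no term is both an abstraction and a variable-or-application, the two classes are disjoint.

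\emph{Main obstacle.} The argument is essentially syntactic bookkeeping; the only delicate point is the treatment of the explicit-substitution clause and, symmetrically, the disjointness proof, both of which require commuting through the outer \ES. Isolating the right auxiliary observation — that membership in either class is insensitive to the outermost substitution context, and that after stripping it the exposed head is an abstraction for \full values and a variable or an application for \full inert terms — makes both halves routine, with no need for any rewriting or typing machinery beyond the grammars already in place.
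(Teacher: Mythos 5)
Your proposal is correct and follows essentially the same elementary route as the paper: a structural induction on the grammars, with no rewriting or typing machinery, where the ``at least one'' half is routine (the paper in fact leaves it to the reader) and the work is in disjointness. The paper proves disjointness by two separate inductions --- a full inert term is not a full value (induction on the inert term), and a term $\sctxp{\la{\var}{\sfire}}$ is not a full inert term (induction on the length of $\sctx$) --- which is just a repackaging of your observation that stripping the maximal outer substitution context exposes a variable or an application for inert terms and an abstraction for values.
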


\begin{proof}
	Proving that at least one of the two holds is left to the reader. 
	We now prove that only one of them holds:
	
	\begin{itemize}
		\item Let $\tm$ be a \full inert term. We prove that $\tm$ is not a \full value by structural induction on $\tm$:
		\begin{itemize}
			\item \emph{Variable}: Trivial.
			
			\item \emph{Application}: Trivial.
			
			\item \emph{$\mathsf{ES}$}; \ie, $\tm = \sitm \esub{\var}{\sitmtwo}$: Then $\sitm$ is not a \full value---by \ih---, and so neither is $\tm$.
			
		\end{itemize}
		
		\item Let $\tm = \sctxp{\la{\var}{\sfire}}$, with $\sctx = \esub{\var_{1}}{{\sitm}_{1}} \dots \esub{\var_{n}}{{\sitm}_{n}}$, with $n \geq 0$. 
		We prove that $\tm$ is not a \full inert term by induction on $n$:
		\begin{itemize}
			\item \emph{Empty substitution context}; \ie, $\sctx = \ctxhole$: Trivial.
			
			\item \emph{Non-empty substitution context}; \ie, $\sctx = \sctxtwo \esub{\var}{{\sitm}_{n+1}}$: As $\sctxtwop{\la{\var}{\fire}}$ is not a \full inert term by \ih, then neither is $\sctxtwop{\la{\var}{\fire}} \esub{\var}{{\sitm}_{n+1}} = \tm$.
			\qedhere
		\end{itemize}
		
	\end{itemize}
\end{proof}

\begin{proposition}[Simulation]
	\label{propappendix:plotkin-vsc}
	\NoteState{prop:plotkin-vsc}
	Let $\tm$ be a term without \ES. 
	If $\tm \tobvplot \tm'$ then $\tm \tom \cdot \toe \tm'$.
\end{proposition}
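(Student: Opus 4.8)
The plan is to unfold the definitions and observe that a single $\beta_v$-redex of Plotkin's calculus is simulated \emph{at the root} by exactly one multiplicative step followed by one exponential step, and then to lift this through the surrounding context.

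First I would make the hypothesis explicit: by definition of $\tobvplot$, there are a \full context $\fctx$ without \ES, a variable $\var$, a term $\tmtwo$ and a value $\val$ such that $\tm = \fctxp{(\la{\var}{\tmtwo})\val}$ and $\tm' = \fctxp{\tmtwo\isub{\var}{\val}}$. Since $\tm$ contains no \ES, the abstraction $\la{\var}{\tmtwo}$ in the redex is not wrapped in any substitution context, so the multiplicative rule at a distance applies with the empty substitution context $\subctx = \ctxhole$, giving $(\la{\var}{\tmtwo})\val \rtom \tmtwo\esub{\var}{\val}$. Likewise $\val$, being a syntactic value (hence not of the form $\subctxp{\valtwo}$ for a non-trivial $\subctx$), lets the exponential rule at a distance apply with $\subctx = \ctxhole$, giving $\tmtwo\esub{\var}{\val} \rtoe \tmtwo\isub{\var}{\val}$. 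Note that the intermediate term $\tmtwo\esub{\var}{\val}$ does contain an \ES: this is harmless, since from now on we are reasoning inside the \VSC, not inside $\plotcalc$.

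Then I would plug these two root steps back into $\fctx$. Since $\fctx$ is a \full context without \ES, it is in particular a \full context of the \VSC (the \VSC \full contexts form a superset of the \ES-free ones), and since $\tom$ and $\toe$ are by definition the closures of the root rules $\rtom$ and $\rtoe$ under \full contexts, we obtain $\tm = \fctxp{(\la{\var}{\tmtwo})\val} \tom \fctxp{\tmtwo\esub{\var}{\val}} \toe \fctxp{\tmtwo\isub{\var}{\val}} = \tm'$, which is exactly $\tm \tom \cdot \toe \tm'$. (If one prefers, the passage from the root steps to the full steps can be carried out by a trivial induction on $\fctx$.)

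There is essentially no obstacle here: the proof is routine. The only points requiring (minor) care are that the two substitution contexts involved in the root steps are both empty \emph{precisely because} $\tm$ is \ES-free, and that producing the intermediate term $\fctxp{\tmtwo\esub{\var}{\val}}$ — which leaves the \ES-free fragment — is legitimate, since the statement asserts a reduction in the \VSC and not in $\plotcalc$.
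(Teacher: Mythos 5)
Your proof is correct and follows essentially the same route as the paper's: the root $\beta_v$-redex is simulated by one multiplicative step producing $\tmtwo\esub{\var}{\val}$ followed by one exponential step, both with empty substitution context, and the result is then lifted through the surrounding \full context (the paper phrases this lifting as an induction on the context, which you note as an alternative). No gap.
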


\begin{proof}
	By induction on the context $\ctx$ such that $\tm = \ctxp{\tmthree} \tobvplot \ctxp{\tmthree'} = \tm'$ with $\tmthree \rtobvplot \tmthree'$.
			
	If $\ctx = \ctxhole$, then we have the root-step $\tm = (\la{\var}\tmthree)\val \rtobvplot \tmthree \isub{\var}{\val} = \tm'$.  
	Then, $\tm \rtom \tmthree \esub{\var}{\val} \rtoe \tmthree \isub{\var}{\val} = \tm'$.
			
	The cases where $\ctx \neq \ctxhole$ follow  from the \ih easily.
\end{proof}
\section{Proofs of \Cref{sect:solving-strat} (Call-by-Value Solvability and The Solving Strategy)}

\begin{proposition}[Properties of the solving strategy]
	\label{propappendix:properties-solvable-reduction}
	\NoteState{prop:properties-solvable-reduction}
	\hfill
	\begin{enumerate}
		\item \label{pappendix:properties-solvable-reduction-diamond} $\tosolv$ is diamond; $\tosolvm$ and $\tosolve$ strongly commute.
		\item \label{pappendix:properties-solvable-reduction-harmony} A term is $\solvredsym$-normal if and only if it is a solvable fireball, where \emph{solvable fireballs} are defined by:
	\end{enumerate}
	\begin{center}
		$\textsc{Solvable fireballs} \qquad \solvnf \grameq  \itm \mid \la{\var}\solvnf \mid \solvnf \esub{\var}{\pitm}$
	\end{center}
\end{proposition}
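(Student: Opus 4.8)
The plan is to treat the two items separately, each leveraging the corresponding facts for the open reduction in \Cref{prop:properties-open-reduction}. For the diamond and strong commutation of item~1, I would first show that $\tosolvm$ and $\tosolve$ are each diamond and that they strongly commute, by the same local analysis of overlapping redexes carried out for $\tomo$ and $\toeo$ (see \Cref{l:basic-value-substitution}), now with the single extra context former $\solvctx = \la\var\solvctx'$ to handle. The key observation is that the head layers of a solving context ($\la\var\ctxhole$, $\ctxhole\,\tm$, $\ctxhole\esub\var\tm$) are rigid: given two $\solvredsym$-redexes of a term, either their positions are disjoint---and the two steps commute with the obvious residuals---or one position is a prefix of the other, in which case the common part is forced to be a head layer and the genuine overlap happens inside a shared open subcontext, where the required local peaks are already closed by the open case. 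Diamond of $\tosolv = \tosolvm \cup \tosolve$ then follows from the Hindley--Rosen lemma, exactly as for $\tovsubo$.

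For item~2, the $(\Leftarrow)$ direction---solvable fireballs are $\solvredsym$-normal---goes by induction on $\solvnf$. For $\solvnf = \la\var\solvnf'$ and $\solvnf = \solvnf'\esub\var\pitm$, any $\solvredsym$-redex must lie under the head $\lambda$ (resp.\ on the left of the ES), contradicting the \ih on $\solvnf'$; or inside $\pitm$, contradicting that proper inert terms are $\osym$-normal (\Cref{prop:properties-open-reduction}); or sit at the root, which is impossible since $\la\var\solvnf'$ is a value and $\solvnf'\esub\var\pitm$ is not an exponential redex because $\pitm$ is not an answer. The base case $\solvnf = \itm$ needs an auxiliary induction on inert terms via the grammar $\itm \grameq \var \mid \itm\fire \mid \itm\esub\var\pitm$ of \Cref{rmk:inert-alternative}: a variable is $\solvredsym$-normal; for $\itm\fire$ and $\itm\esub\var\pitm$ the argument $\fire$ (resp.\ $\pitm$) is $\osym$-normal, $\itm$ is $\solvredsym$-normal by \ih, and $\itm$ is never an answer, ruling out a root redex.

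For the $(\Rightarrow)$ direction---$\solvredsym$-normal implies solvable fireball---I would induct on the structure of $\tm$. If $\tm = \la\var\tm'$, then $\tm'$ is $\solvredsym$-normal (a redex in $\tm'$ lifts through $\la\var\ctxhole$), hence a solvable fireball by \ih. If $\tm = \tm_1\tm_2$, then $\tm_1$ is $\solvredsym$-normal, hence a solvable fireball by \ih, and it is not an answer (otherwise $\tm$ has a root multiplicative redex), so---by a short shape lemma in the spirit of \Cref{l:shape-of-strong-fireballs}---$\tm_1$ is an inert term; moreover $\tm_2$ is $\osym$-normal (an open redex in $\tm_2$ lifts through the open, hence solving, context $\tm_1\openctx$), hence a fireball, so $\tm = \itm\fire$ is proper inert. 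If $\tm = \tm_1\esub\var\tm_2$, then again $\tm_1$ is a solvable fireball by \ih and $\tm_2$ is $\osym$-normal hence a fireball; the crucial point is that $\tm_2$ cannot be of the form $\sctxp\val$ for \emph{any} value $\val$---variables included, since the exponential rule also fires on variable arguments---for otherwise $\tm$ has a root exponential redex, and a fireball with this property is precisely a proper inert term $\pitm$, so $\tm = \solvnf\esub\var\pitm$.

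The main obstacle I anticipate is exactly this last case: one must resist identifying ``$\tm_2$ is a fireball that is not an answer'' with ``$\tm_2 = \pitm$'', since the former also includes terms $\sctxp\var$, and instead invoke the precise side condition of the exponential rule to force $\tm_2 = \pitm$. More generally, the delicate bookkeeping throughout is keeping straight the three related notions---\emph{answer}, \emph{inert term}, and \emph{proper inert term}---and their interaction with being a value; once that is handled, everything else reduces to the open-reduction results and routine case analysis.
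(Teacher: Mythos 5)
Your proposal is correct, but it takes a genuinely different route from the paper: the paper does not reprove this proposition at all, it imports both items from Accattoli and Paolini \cite{AccattoliPaolini12} (diamond and strong commutation from their Lemmas 11 and 4, the characterisation of $\solvredsym$-normal forms from their Lemma 8, with $\tosolv$ there written $\to_{\mathsf{sw}}$), whereas you give a self-contained argument in the style of the appendix's treatment of $\tovsubo$ in \Cref{l:basic-value-substitution}. Your item 2 is essentially the argument that must underlie the citation, and you correctly isolate the two delicate points: a $\solvredsym$-normal solvable fireball in function position is forced to be an \emph{inert term} (not merely ``not an answer''), and in the \ES case the argument must fail to be $\sctxp\val$ for \emph{any} value $\val$, variables included, which is exactly what pins it down to a proper inert term. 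For item 1 your strategy (diamond of $\tosolvm$ and $\tosolve$ separately, strong commutation, then Hindley--Rosen) is the right one, but the claim that every genuine overlap ``happens inside a shared open subcontext, already closed by the open case'' is too quick: the former $\la\var\solvctx$ creates critical pairs absent from the open analysis, e.g.\ a root step $\sctxp{\la\var\tm}\tmtwo \rtom \sctxp{\tm\esub\var\tmtwo}$ against a solving step under that head $\la\var$ inside $\tm$, and a root step $\tm\esub\var{\sctxp{\val}} \rtoe \sctxp{\tm\isub\var\val}$ against a solving step under a head abstraction of $\tm$, the latter requiring the (easy but not free) stability of solving steps under the substitution $\isub{\var}{\val}$ of a value. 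What actually saves diamond, and is worth stating explicitly, is that solving contexts never enter the argument of an \ES nor a non-head abstraction, so a substituted value cannot carry a solving redex that would be duplicated---precisely the point where unrestricted $\tovsub$ fails to be diamond. This is extra casework your sketch elides rather than a gap; the trade-off is the usual one between keeping the paper short via citation and making the development uniform and self-contained.
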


\begin{proof}
\hfill
	\begin{enumerate}
		\item For diamond, see \cite[Lemma 11]{AccattoliPaolini12}, where $\tovsubsolv$ is noted  $\to_\mathsf{sw}$. 
		For strong commutation, see \cite[Lemma 4]{AccattoliPaolini12}, where $\tomsolv$ is noted  $\to_\mathsf{db}$, and $\toesolv$ is noted  $\to_\mathsf{vs}$.
		\item See \cite[Lemma 8]{AccattoliPaolini12}, where $\tovsubsolv$ is noted $\to_\mathsf{sw}$.
		\qedhere
	\end{enumerate}
\end{proof}

\section{Proofs of \Cref{sect:types} (Multi Types by Value)}

First, we observe the following property: given a derivation for a term $\tm$, all variables associated with a non-empty multi type in the type context are free variables~of~ $\tm$.
\begin{remark}
	\label{rmk:free-variables}
	If $\namedtyjp{\tderiv}{}{\tm}{\typctx}{\mtype}$ then $\dom{\typctx} \subseteq \fv{\tm}$.
	The proof is by straightforward induction on the derivation $\tderiv$.
\end{remark}

\begin{lemma}[Typing of values: splitting]
	\label{l:typing-value-splitting}
	Let $\namedtyjp{\tderiv}{}{\val}{\typctx}{\mtype}$ (for $\val$ value).
	\begin{enumerate}
		\item \label{p:typing-value-splitting-one} If $\mtype = \emptytype$, then $\dom{\typctx} = \emptyset$ and 
		$\sizem{\tderiv} = 0 = \size{\tderiv}$. 
		
		\item \label{p:typing-value-splitting-two} For every splitting $\mtype = \mtype_{1} \mplus \mtype_{2}$, there exist 
		type derivations $\namedtyjp{\tderiv_{1}}{}{\val}{\typctx_{1}}{\mtype_{1}}$ and 
		$\namedtyjp{\tderiv_{2}}{}{\val}{\typctx_{2}}{\mtype_{2}}$ such that $\typctx = \typctx_{1} \mplus \typctx_{2}$, 
		$\sizem{\tderiv} = \sizem{\tderiv_{1}} + \sizem{\tderiv_{2}}$ and $\size{\tderiv} = \size{\tderiv_{1}} + 
		\size{\tderiv_{2}}$.
		
	\end{enumerate}
\end{lemma}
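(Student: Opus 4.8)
The plan is to proceed by a single inspection of the last rule of $\tderiv$. Since $\val$ is a value---a variable or an abstraction---and the type it receives in the conclusion of $\tderiv$ is a \emph{multi} type, this last rule can only be $\ruleMany$: rules $\ruleAx$ and $\ruleFun$ assign \emph{linear} types, whereas rules $\ruleAp$ and $\ruleES$ have an application or an \ES on the right-hand side of the judgment. Hence $\tderiv$ is obtained by applying $\ruleMany$ to a family of sub-derivations $\concl{\tderiv_i}{\typctx_i}{\val}{\ltype_i}$ indexed by a (possibly empty) finite set $I$, with $\mtype = \mset{\ltype_i}_{i \in I}$ and $\typctx = \bigmplus_{i \in I} \typctx_i$. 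By \Cref{def:two-sizes}, the rule $\ruleMany$ is counted neither by $\size{\cdot}$ nor by $\sizem{\cdot}$, so I would immediately record $\size{\tderiv} = \sum_{i \in I} \size{\tderiv_i}$ and $\sizem{\tderiv} = \sum_{i \in I} \sizem{\tderiv_i}$; these two identities drive both parts of the statement. In particular, for the first part, if $\mtype = \emptytype$ then $I = \emptyset$, so $\typctx = \bigmplus_{i \in \emptyset}\typctx_i$ is the empty type context (whence $\dom{\typctx} = \emptyset$) and both sums are empty, giving $\sizem{\tderiv} = 0 = \size{\tderiv}$.

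For the second part, I would observe that a splitting $\mtype = \mtype_1 \mplus \mtype_2$ of the multiset $\mtype = \mset{\ltype_i}_{i \in I}$ induces a partition $I = I_1 \uplus I_2$ of the index set, respecting the multiplicities of repeated linear types, such that $\mtype_k = \mset{\ltype_i}_{i \in I_k}$ for $k \in \{1,2\}$. Then I take $\tderiv_k$ to be the derivation obtained by applying $\ruleMany$ to the sub-family $(\concl{\tderiv_i}{\typctx_i}{\val}{\ltype_i})_{i \in I_k}$; its conclusion is $\typctx_k \vdash \val \hastype \mtype_k$ with $\typctx_k \defeq \bigmplus_{i \in I_k}\typctx_i$. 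Associativity and commutativity of $\mplus$ give $\typctx = \typctx_1 \mplus \typctx_2$, and splitting the two size sums of the previous paragraph along $I = I_1 \uplus I_2$ yields $\size{\tderiv} = \size{\tderiv_1} + \size{\tderiv_2}$ and $\sizem{\tderiv} = \sizem{\tderiv_1} + \sizem{\tderiv_2}$.

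I do not expect any real obstacle: the argument is just a case analysis on the type system. The only points requiring care are recalling exactly which rule each size measure ignores (\Cref{def:two-sizes})---namely $\ruleMany$, for both---and the routine bookkeeping that a decomposition of a multiset into a multiset sum corresponds to a partition of the underlying index family with the right multiplicities. Note also that the second part does not subsume the first: instantiating it with $\mtype_1 = \mtype_2 = \emptytype$ only gives $\sizem{\tderiv} = \sizem{\tderiv_1} + \sizem{\tderiv_2}$, not $\sizem{\tderiv} = 0$, so the first part is genuinely established by the $I = \emptyset$ argument above.
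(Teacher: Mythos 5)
Your proof is correct and follows essentially the same route as the paper's: both identify the last rule as $\ruleMany$ (forced since $\val$ is a value typed with a multi type), read off part~1 from the zero-premise case, and for part~2 split the index family of premises according to the multiset decomposition and reapply $\ruleMany$ to each half, with the context and size equations following from associativity of $\mplus$ and the fact that neither size measure counts $\ruleMany$.
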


\begin{proof}\hfill
	\begin{enumerate}
		\item By a simple inspection of the typing rules, $\mtype = \emptytype$ and the fact that $\val$ is a value imply 
		that 
		$\tderiv$ is of the form
		\begin{prooftree}
			\hypo{}
			\infer1[\footnotesize$\ruleMany$]{\tyjp{}{\val}{}{\emptytype}}
		\end{prooftree}
		where $\dom{\typctx} = \emptyset$ and $\sizem{\tderiv} = 0 = \size{\tderiv}$.
		
		\item Let
		\begin{equation*}
		\tderiv = 
		\begin{prooftree}
		\hypo{}
		\ellipsis{$\tderiv_{i}$}{\tyjp{}{\val}{\typctx_{i}}{\ltype_{i}}}
		\delims{\left(}{\right)_{\iI}}
		\infer1[\footnotesize$\ruleMany$]{\tyjp{}{\val}{\bigmplus_{\iI} \typctx_{i}}{\mult{\ltype}_{\iI}}}
		\end{prooftree}
		\end{equation*}
		with $\bigmplus_{\iI} \typctx_{i} = \typctx$ and $\mult{\ltype}_{\iI} = \mtype = \mtype_{1} \mplus \mtype_{2}$. Let 
		$I_{1}$ and $I_{2}$ be sets of indices such that $I = I_{1} \cup I_{2}$, $\mtype_{1} = \mult{\ltype_{i}}_{i \in I_{1}}$ 
		and $\mtype_{2} = \mult{\ltype_{i}}_{i \in I_{2}}$. 
		As $\val$ is a value, 	we can then derive
		\begin{equation*}
		\tderiv_{1} = 
		\begin{prooftree}
		\hypo{}
		\ellipsis{$\tderiv_{i}$}{\tyjp{}{\val}{\typctx_{i}}{\ltype_{i}}}
		\delims{\left(}{\right)_{i \in I_1}}
		\infer1[\footnotesize$\ruleMany$]{\tyjp{}{\val}{\bigmplus_{i \in I_{1}} \typctx_{i}}{\mult{\ltype}_{i \in I_{1}}}}
		\end{prooftree}
		\end{equation*}
		and 
		\begin{equation*}
		\tderiv_{2} = 
		\begin{prooftree}
		\hypo{}
		\ellipsis{$\tderiv_{i}$}{\tyjp{}{\val}{\typctx_{i}}{\ltype_{i}}}
		\delims{\left(}{\right)_{i \in I_2}}
		\infer1[\footnotesize$\ruleMany$]{\tyjp{}{\val}{\bigmplus_{i \in I_{2}} \typctx_{i}}{\mult{\ltype}_{i \in I_{2}}}}
		\end{prooftree}
		\end{equation*}
		noting that 
		$$
		\typctx = \bigmplus_{\iI} \typctx_{i} = \left( \bigmplus_{i \in I_{1}} \typctx_{i} \right) \mplus 
		\left(\bigmplus_{i \in I_{2}} \typctx_{i} \right)
		$$
		with 
		$$
		\sizem{\tderiv} = \sum_{\iI} \sizem{\tderiv_{i}} = \left( \sum_{i \in I_{1}} \sizem{\tderiv_{i}} \right) + \left( 
		\sum_{i \in I_{2}} \sizem{\tderiv_{i}} \right) = \sizem{\tderiv_{1}} + \sizem{\tderiv_{2}}
		$$
		and 
		$$
		\size{\tderiv} = \sum_{\iI} \size{\tderiv_{i}} = \left( \sum_{i \in I_{1}} \size{\tderiv_{i}} \right) + \left( 
		\sum_{i \in I_{2}} \size{\tderiv_{i}} \right) = \size{\tderiv_{1}} + \size{\tderiv_{2}}
		$$
		
	\end{enumerate}
\end{proof}

\begin{lemma}[Substitution]
	\label{lappendix:substitution}	
	\NoteProof{l:substitution}
	Let $\tm$ be a term, $\val$ be a value and $\namedtyjp{\tderiv}{}{\tm}{\typctx, \var \hastype 
		\mtypetwo}{\mtype}$ and $\namedtyjp{\tderivtwo}{}{\val}{\typctxtwo}{\mtypetwo}$ be derivations.
	Then there is a derivation $\namedtyjp{\tderivthree}{}{\tm \isub{\var}{\val}}{\typctx \mplus \typctxtwo}{\mtype}$ 
	with $\sizem{\tderivthree} = \sizem{\tderiv} + \sizem{\tderivtwo}$ and $\size{\tderivthree} \leq \size{\tderiv} + 
	\size{\tderivtwo}$. 
\end{lemma}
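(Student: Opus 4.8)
The plan is to prove the statement by induction on the type derivation $\tderiv$, generalizing the claim so that $\tderiv$ (and hence $\tderivthree$) is allowed to conclude a \emph{linear} judgment $\typctx, \var \hastype \mtypetwo \vdash \tm \hastype \ltype$ as well: this is needed because the premises of the rule $\ruleMany$ are linear judgments. The engine of the proof is the splitting lemma for typed values (\Cref{l:typing-value-splitting}): whenever the multi type $\mtypetwo$ of $\var$ is split as $\mtypetwo_1 \mplus \mtypetwo_2$ in some subderivation, $\tderivtwo$ is split accordingly into derivations of $\val$ with types $\mtypetwo_1$ and $\mtypetwo_2$, with typing contexts summing to $\typctxtwo$ and both sizes summing; this is what lets each occurrence of $\var$ pick up its own fragment of $\tderivtwo$.

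First I would dispatch the base case where $\tderiv$ is an axiom $\var \hastype \mset{\ltype} \vdash \var \hastype \ltype$, so that $\typctx$ is empty, $\mtypetwo = \mset{\ltype}$, $\mtype = \ltype$, and $\tm\isub{\var}{\val} = \val$. Since $\tderivtwo$ types the value $\val$ with the singleton $\mset{\ltype}$, it must end with $\ruleMany$ on exactly one premise $\concl{\tderivthree}{\typctxtwo}{\val}{\ltype}$, which is the sought derivation. Here $\sizem{\tderivthree} = \sizem{\tderivtwo} = \sizem{\tderiv} + \sizem{\tderivtwo}$ since $\sizem{\tderiv} = 0$ and the outer $\ruleMany$ of $\tderivtwo$ is invisible to $\sizem$, while for the general size only $\size{\tderivthree} \leq \size{\tderiv} + \size{\tderivtwo}$ survives, the slack being exactly the axiom on $\var$ that the substitution discards. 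The symmetric base case $\tm = \vartwo \neq \var$ forces $\mtypetwo = \emptytype$ by \Cref{rmk:free-variables}, hence $\typctxtwo$ is empty and $\tderivtwo$ has null size by \Cref{l:typing-value-splitting}, and one takes $\tderivthree \defeq \tderiv$.

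Then I would go through the inductive cases by the last rule of $\tderiv$. For $\ruleFun$ (with $\tm = \la{\vartwo}{\tmfour}$) and for $\ruleES$ (with $\tm = \tmfour\esub{\vartwo}{\tmfive}$) I first $\alpha$-rename so that $\vartwo \neq \var$ and $\vartwo \notin \fv{\val}$, so that substitution commutes with the constructor and, by \Cref{rmk:free-variables}, $\vartwo$ does not occur in $\typctxtwo$; then I split $\tderivtwo$ via \Cref{l:typing-value-splitting} when (as in $\ruleES$ and $\ruleAp$) two premises share $\var$, apply the induction hypothesis to the premise(s), and reapply the same rule. The $\ruleMany$ case (with $\tm$ a value) uses the iterated form of \Cref{l:typing-value-splitting} to distribute $\tderivtwo$ over the family of premises, applies the induction hypothesis to each linear premise, and closes again with $\ruleMany$; this is legitimate because $\tm\isub{\var}{\val}$ is still a value. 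In every case the accounting is uniform: $\sizem$ adds up on the nose because $\ruleAx$, $\ruleMany$ and $\ruleES$ count for nothing in $\sizem$ while every $\ruleFun$/$\ruleAp$ of $\tderiv$ reappears unchanged in $\tderivthree$, and $\size$ only satisfies $\leq$, the defect being entirely due to the $\var$-axioms erased by the substitution.

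The step I expect to be the main obstacle is the interplay between the $\ruleMany$ case and the $\var$-axiom base case: when $\tm$ is literally the variable $\var$ occurring $n$ times in typed position, the substitution replaces $n$ axioms by $n$ fragments of copies of $\tderivtwo$, and one has to verify that recombining them through $\ruleMany$ reconstructs a derivation of $\val$ with the original multi type, with typing contexts and both sizes adding up correctly. This is precisely what \Cref{l:typing-value-splitting} is designed to deliver, but the bookkeeping must be carried out carefully, and it is also the point where the general-size equality necessarily degrades to an inequality.
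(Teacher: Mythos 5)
Your proposal is correct and follows essentially the same route as the paper's proof: the argument is driven by the value-splitting lemma (\Cref{l:typing-value-splitting}) to distribute $\tderivtwo$ over the occurrences of $\var$, with the same case analysis and the same size accounting (the $\leq$ on the general size coming precisely from the discarded axioms on $\var$). The only difference is that you induct on the derivation, generalized to linear judgments, whereas the paper inducts on the term and absorbs the $\ruleMany$-over-premises block into each case; since the system is syntax-directed this is immaterial.
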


\begin{proof}
	By induction on the term $\tm$.
	Cases:
	\begin{itemize}
		\item \emph{Variable}, then are two sub-cases:
		\begin{enumerate}
			\item $\tm = \var$, then $\tm \isub{\var}{\val} = \var \isub{\var}{\val} = \val$ and 			
			$\sizem{\tderiv} = 0$ and $\size{\tderiv} = 1$.
			
			the derivation $\tderiv$ has necessarily the form (for some $n \in \nat$)
			\begin{equation*}
			\tderiv = 
			\begin{prooftree}
			\infer0[\footnotesize$\Ax$]{\tyjp{}{\var}{\var \hastype \mset{\ltype_1}}{\ltype_1}}
			\hypo{\overset{n \in \nat}{\ldots}}
			\infer0[\footnotesize$\Ax$]{\tyjp{}{\var}{\var \hastype \mset{\ltype_n}}{\ltype_n}}
			\infer3[\footnotesize$\ruleManyVar$]{\tyjp{}{\var}{\var \hastype 
					\mset{\ltype_1,\dots,\ltype_n}}{\mset{\ltype_1,\dots,\ltype_n}}}
			\end{prooftree}
			\end{equation*}
			with $\mtype = \mset{\ltype_1, \dots, \ltype_n} = \mtypetwo$ and $\dom{\typctx} = \emptyset$.
			Thus, $\sizem{\tderiv} = 0$ and $\size{\tderiv} = n$.
			Let $\tderivthree = \tderivtwo$: so, $\namedtyjp{\tderivthree}{}{\tm \isub{\var}{\val}}{\typctx \mplus 
				\typctxtwo}{\mtype}$ (since $\typctx \mplus \typctxtwo = \typctxtwo$) with $\sizem{\tderivthree} = \sizem{\tderivtwo} = 
			\sizem{\tderivtwo} + \sizem{\tderiv}$ and $\size{\tderivthree} = \size{\tderivtwo} \leq \size{\tderivtwo} + 
			\size{\tderiv}$ (note that $\size{\tderivthree} = \size{\tderiv} + \size{\tderivtwo}$ if and only if $n=0$).
			
			\item $\tm = \varthree \neq \var$, then $\tm \isub{\var}{\val} = \varthree$ and 
			$\sizem{\tderiv} = 0$, $\size{\tderiv} = 1$, 
			the derivation $\tderiv$ has necessarily the form (for some $n~\in~\nat$)
			\begin{equation*}
			\tderiv = 
			\begin{prooftree}
			\infer0[\footnotesize$\Ax$]{\tyjp{}{\varthree}{\varthree \hastype \mset{\ltype_1}}{\ltype_1}}
			\hypo{\overset{n \in \nat}{\ldots}}
			\infer0[\footnotesize$\Ax$]{\tyjp{}{\varthree}{\varthree \hastype \mset{\ltype_n}}{\ltype_n}}
			\infer3[\footnotesize$\ruleManyVar$]{\tyjp{}{\varthree}{\varthree \hastype 
					\mset{\ltype_1,\dots,\ltype_n}}{\mset{\ltype_1,\dots,\ltype_n}}}
			\end{prooftree}
			\end{equation*}
			where $\mtype = \mset{\ltype_1, \dots, \ltype_n}$ and  $\mtypetwo = \emptymset$ and $\typctx = \varthree \hastype 
			\mtype$ (while $\typctx(\var) = \emptymset$).
			Thus, $\sizem{\tderiv} = 0$ and $\size{\tderiv} = n$.
			By \reflemmap{typing-value-splitting}{one}, from $\namedtyjp{\tderivtwo}{}{\val}{\typctxtwo}{\emptytype}$ it 
			follows that $\sizem{\tderivtwo} = 0 = \size{\tderivtwo}$ and $\dom{\typctxtwo} = \emptyset$.  
			Therefore, $\typctx \mplus \typctxtwo = \typctx$.
			Let $\tderivthree = \tderiv$: so, $\namedtyjp{\tderivthree}{}{\tm \isub{\var}{\val}}{\typctx \mplus 
				\typctxtwo}{\mtype}$  with $\sizem{\tderivthree} = \sizem{\tderiv} = \sizem{\tderiv} + \sizem{\tderivtwo}$ and 
			$\size{\tderivthree} = \size{\tderiv} = \size{\tderiv} + \size{\tderivtwo}$.
		\end{enumerate}
		
		\item \emph{Application}, \ie $\tm = \tmtwo\tmthree$. 
		Then $\tm \isub{\var}{\val} = \tmtwo \isub{\var}{\val} \tmthree \isub{\var}{\val}$ and necessarily
		\begin{equation*}
		\tderiv = 
		\begin{prooftree}
		\hypo{}
		\ellipsis{$\tderiv_{1}$}{\typctx_1, \var \hastype \mtypetwo_1 \vdash \tmtwo \hastype 
			\mset{\larrow{\mtypethree}{\mtype}}}
		\hypo{}
		\ellipsis{$\tderiv_{2}$}{\typctx_2, \var \hastype \mtypetwo_2 \vdash \tmthree \hastype \mtypethree}
		\infer2[\footnotesize$\ruleAp$]{\typctx, \var \hastype \mtypetwo \vdash \tmtwo \tmthree \hastype \mtype}
		\end{prooftree}
		\end{equation*}
		with $\sizem{\tderiv} = \sizem{\tderiv_{1}} + \sizem{\tderiv_{2}} + 1$, $\size{\tderiv} = \size{\tderiv_{1}} + 
		\size{\tderiv_{2}} + 1$, $\typctx = \typctx_1 \mplus \typctx_2$ and $\mtypetwo = \mtypetwo_2 \mplus \mtypetwo_2$. 
		According to \reflemmap{typing-value-splitting}{two} applied to $\tderivtwo$ and to the decomposition $\mtypetwo = 
		\mtypetwo_1 \mplus \mtypetwo_2$, there are contexts $\typctxtwo_1, \typctxtwo_2$ and derivations 
		$\namedtyjp{\tderivtwo_{1}}{}{\val}{\typctxtwo_1}{\mtypetwo_1}$ and 
		$\namedtyjp{\tderivtwo_{2}}{}{\val}{\typctxtwo_2}{\mtypetwo_2}$ such that $\typctxtwo = \typctxtwo_{1} \mplus 
		\typctxtwo_2$, $\sizem{\tderivtwo} = \sizem{\tderivtwo_1} + \sizem{\tderivtwo_2}$ and $\size{\tderivtwo} = 
		\size{\tderivtwo_{1}} + \size{\tderivtwo_{2}}$.
		
		By \ih, there are derivations $\namedtyjp{\tderivthree_1}{}{\tmtwo \isub{\var}{\val}}{\typctx_1 \mplus 
			\typctxtwo_1}{\mult{\larrow{\mtypethree}{\mtype}}}$ and $\namedtyjp{\tderivthree_2}{}{\tmthree 
			\isub{\var}{\val}}{\typctx_2 \mplus \typctxtwo_2}{\mtypethree}$ such that $\sizem{\tderivthree_{i}} = 
		\sizem{\tderiv_{i}} + \sizem{\tderivtwo_{i}}$ and $\size{\tderivthree_{i}} \leq \size{\tderiv_{i}} + 
		\size{\tderivtwo_{i}}$ for all $i \in \{1,2\}$.
		Since $\typctx \mplus \typctxtwo = \typctx_1 \mplus \typctxtwo_1 \mplus \typctx_2 \mplus \typctxtwo_2$, we can 
		build the derivation
		\begin{equation*}
		\tderivthree = 
		\begin{prooftree}
		\hypo{}
		\ellipsis{$\tderivthree_1$}{\typctx_1 \mplus \typctxtwo_1 \vdash \tmtwo\isub{\var}{\val} \hastype 
			\mset{\larrow{\mtypethree}{\mtype}}}
		\hypo{}
		\ellipsis{$\tderivthree_2$}{\typctx_2 \mplus \typctxtwo_2 \vdash \tmthree\isub{\var}{\val} \hastype \mtypethree}
		\infer2[\footnotesize$\ruleAp$]{\typctx \mplus \typctxtwo \vdash \tmtwo \isub{\var}{\val} \tmthree 
			\isub{\var}{\val} \hastype \mtype}
		\end{prooftree}
		\end{equation*}
		where $\sizem{\tderivthree} = \sizem{\tderivthree_{1}} + \size{\tderivthree_{2}} + 1 = \sizem{\tderiv_{1}} + 
		\sizem{\tderivtwo_{1}} + \sizem{\tderiv_{2}} + \sizem{\tderivtwo_{2}} + 1 = \sizem{\tderiv} + \sizem{\tderivtwo}$ and 
		$\size{\tderivthree} = \size{\tderivthree_{1}} + \size{\tderivthree_{2}} + 1 \leq \sizem{\tderiv_{1}} + 
		\sizem{\tderivtwo_{1}} + \sizem{\tderiv_{2}} + \size{\tderivtwo_{2}} + 1 = \size{\tderiv} + \size{\tderivtwo}$.
		
		\item \emph{Abstraction}, \ie $\tm = \la{\vartwo}{\tmtwo}$.
		We can suppose without loss of generality that $\vartwo \notin \fv{\val} \cup \{\var \}$, hence $\tm 
		\isub{\var}{\val} = \la{\vartwo}{\tmtwo\isub{\var}\val}$ and  $\tderiv$ is necessarily of the form (for some $n \in 
		\nat$) 
		\begin{equation*}
		\begin{prooftree}[separation=1em]
		\hypo{}
		\ellipsis{$\tderiv_{i}$}{\typctx_{i}, \vartwo \hastype \mtypethree_{i}, \var \hastype \mtypetwo_{i} \vdash \tmtwo 
			\hastype \mtype_{i}}
		\infer1[\footnotesize$\ruleFun$]{\tyjp{}{\la{\vartwo}{\tmtwo}}{\typctx_{i}, \var \hastype 
				\mtypetwo_{i}}{\ty{\mtypethree_{i}\!}{\!\mtype_{i}}}}
		\delims{\left(}{\right)_{1\leq i \leq n}}
		\infer1[\footnotesize$\ruleManyVal$]{\tyjp{}{\la{\vartwo}{\tmtwo}}{\bigmplus_{i=1}^{n} \typctx_{i} , \var \hastype 
				\bigmplus_{i=1}^{n} \mtypetwo_i}{\bigmplus_{i=1}^{n} \mset{\larrow{\mtypethree_i}{\mtype_i}}}}
		\end{prooftree}
		\end{equation*}
		with $\sizem{\tderiv} = \sum_{i=1}^{n} (\sizem{\tderiv_{i}} + 1)$ and $\size{\tderiv} = \sum_{i=1}^n 
		(\size{\tderiv_i} + 1)$.
		Since $\vartwo \notin \fv{\val}$, then $\vartwo \notin \domain{\typctxtwo}$ (\refrmk{free-variables}), and so 
		$\namedtyjp{\tderivtwo}{}{\val}{\typctxtwo, \vartwo \hastype \emptymset}{\mtypetwo}$. 
		Now, there are two subcases:
		\begin{itemize}
			\item \emph{Empty multi type}: If $n = 0$,  then $\mtypetwo = \emptymset = \mtype$ and $\dom{\typctx} = 
			\emptyset$, with $\sizem{\tderiv} = 0 = \size{\tderiv}$. 
			According to \reflemmap{typing-value-splitting}{one} applied to $\tderivtwo$, $\dom{\typctxtwo} = \emptyset$ with 
			$\sizem{\tderivtwo} = 0 = \size{\tderivtwo}$.
			We can then build the derivation 
			\begin{equation*}
			\tderivthree = 
			\begin{prooftree}
			\infer0[\footnotesize$\ruleManyVal$]{\tyjp{}{\la{\vartwo}(\tmtwo\isub{\var}{\val})}{}{\emptymset}}
			\end{prooftree}
			\end{equation*}
			where $\sizem{\tderivthree} = 0 = \sizem{\tderiv} + \sizem{\tderivtwo}$ and $\size{\tderivthree} = 0 = 
			\size{\tderiv} + \size{\tderivtwo}$, and $\concl{\tderivthree}{\typctx \mplus 
				\typctxtwo}{\tm\isub{\var}{\val}}{\mtype}$ since $\dom{\typctx \mplus \typctxtwo} = \emptyset$.
			
			\item\emph{Non-empty multi type}: If $n > 0$, then we can decompose $\tderivtwo$ according to the partitioning 
			$\mtypetwo = \biguplus_{i=1}^n \mtypetwo_i$ by repeatedly applying \reflemmap{typing-value-splitting}{two}, and hence 
			for all $1 \leq i \leq n$ there are context $\typctxtwo_{i}$ and a derivation 
			$\namedtyjp{\tderivtwo_i}{}{\val}{\typctxtwo_{i} ; \vartwo \hastype \emptytype}{\mtypetwo_i}$ such that 
			$\sizem{\tderivtwo} = \sum_{i=1}^n \sizem{\tderivtwo_{i}}$ and $\size{\tderivtwo} = \sum_{i=1}^{n} 
			\size{\tderivtwo_{i}}$.
			By \ih, for all $1 \leq i \leq n$, there is a derivation 
			$\namedtyjp{\tderivthree_{i}}{}{\tmtwo\isub{\var}{\val}}{\typctx_i \mplus \typctxtwo_i, \vartwo \hastype 
				\mtypethree_i}{\mtype_i}$ such that $\sizem{\tderivthree_{i}} = \sizem{\tderiv_{i}} + \sizem{\tderivtwo_{i}}$ and 
			$\size{\tderivthree_{i}} \leq \size{\tderiv_{i}} + \size{\tderivtwo_{i}}$.
			Since $\typctx \mplus \typctxtwo = \bigmplus_{i=1}^n (\typctx_i \mplus \typctxtwo_i)$, we can build 
			$\tderivthree$ as
			\begin{equation*}
			\begin{prooftree}[separation=1em]
			\hypo{}
			\ellipsis{$\tderivthree_i$}{\typctx_i \mplus \typctxtwo_i, \vartwo \hastype \mtypethree_i \vdash \tmtwo 
				\isub{\var}{\val} \hastype \mtype_i}
			\infer1[\footnotesize$\ruleFun$]{\tyjp{}{\la{\vartwo}{(\tmtwo \isub{\var}{\val})}}{\typctx_{i} \mplus 
					\typctxtwo_{i}}{\ty{\mtypethree_{i}\!}{\!\mtype_{i}}}}
			\delims{\left(}{\right)_{1\leq i \leq n}}
			\hypo{}
			\infer2[\footnotesize$\ruleManyVal$]{\typctx \mplus \typctxtwo \vdash \la{\vartwo}{(\tmtwo \isub{\var}{\val})} 
				\hastype \mtype}
			\end{prooftree}
			\end{equation*}
			noting that $\sizem{\tderivthree} = \sum_{i=1}^n (\sizem{\tderivthree_{i}} + 1) = \sum_{i=1}^n 
			(\sizem{\tderiv_{i}} + \sizem{\tderivtwo_{i}} + 1) = \sum_{i=1}^{n} (\sizem{\tderiv_{i}} + 1) + \sum_{i=1}^{n} 
			\sizem{\tderivtwo_{i}} = \sizem{\tderiv} + \sizem{\tderivtwo} $ and that $\size{\tderivthree} = \sum_{i=1}^{n} 
			(\size{\tderivthree_{i}} + 1) \leq \sum_{i=1}^{n} (\size{\tderiv_{i}} + \size{\tderivtwo_{i}} + 1) = \sum_{i=1}^{n} 
			(\size{\tderiv_{i}} + 1) + \sum_{i=1}^{n} \size{\tderivtwo_{i}} = \size{\tderiv} + \size{\tderivtwo}$.
		\end{itemize}
		
		\item \emph{Explicit substitution}, \ie $\tm = \tmtwo \esub{\vartwo}{\tmthree}$. 
		We can suppose without loss of generality that $\vartwo \notin \fv{\val} \cup \{\var \}$, hence $\tm 
		\isub{\var}{\val} = \tmtwo\isub{\var}{\val} \esub{\vartwo}{\tmthree\isub{\var}\val}$ and necessarily
		\begin{equation*}
		\tderiv = 
		\begin{prooftree}
		\hypo{}
		\ellipsis{$\tderiv_{1}$}{\typctx_1 , \var \hastype \mtypetwo_1 , \vartwo \hastype \mtypethree \vdash \tmtwo 
			\hastype \mtype}
		\hypo{}
		\ellipsis{$\tderiv_{2}$}{\typctx_2, \var \hastype \mtypetwo_2 \vdash \tmthree \hastype \mtypethree}
		\infer2[\footnotesize$\Es$]{\typctx, \var \hastype \mtypetwo \vdash \tmtwo \esub{\vartwo}{\tmthree} \hastype \mtype}
		\end{prooftree}
		\end{equation*}
		with $\sizem{\tderiv} = \sizem{\tderiv_{1}} + \sizem{\tderiv_{2}}$, $\size{\tderiv} = \size{\tderiv_{1}} + 
		\size{\tderiv_{2}} + 1$, $\typctx = \typctx_1 \mplus \typctx_2$ and $\mtypetwo = \mtypetwo_2 \mplus \mtypetwo_2$. 
		According to \reflemmap{typing-value-splitting}{two} applied to $\tderivtwo$ and to the decomposition $\mtypetwo = 
		\mtypetwo_1 \mplus \mtypetwo_2$, there are contexts $\typctxtwo_1, \typctxtwo_2$ and derivations 
		$\namedtyjp{\tderivtwo_{1}}{}{\val}{\typctxtwo_{1}}{\mtypetwo_{1}}$ and 
		$\namedtyjp{\tderivtwo_{2}}{}{\val}{\typctxtwo_{2}}{\mtypetwo_{2}}$ such that $\typctxtwo = \typctxtwo_{1} \mplus 
		\typctxtwo_{2}$, $\sizem{\tderivtwo} = \sizem{\tderivtwo_{1}} + \sizem{\tderivtwo_{2}}$ and $\size{\tderivtwo} = 
		\size{\tderivtwo_{1}} + \size{\tderivtwo_{2}}$.
		
		By \ih, there are derivations $\namedtyjp{\tderivthree_{1}}{}{\tmtwo\isub{\var}{\val}}{\typctx_{1} \mplus 
			\typctxtwo_{1}, \vartwo \hastype \mtypethree}{\mtype}$ and $\namedtyjp{\tderivthree_{2}}{}{\tmthree 
			\isub{\var}{\val}}{\typctx_{2} \mplus \typctxtwo_{2}}{\mtypethree}$ such that $\sizem{\tderivthree_{i}} = 
		\sizem{\tderiv_{i}} + \sizem{\tderivtwo_{i}}$ and $\size{\tderivthree_{i}} \leq \size{\tderiv_{i}} + 
		\size{\tderivtwo_{i}}$ for all $i \in \{1,2\}$.
		Since $\typctx \mplus \typctxtwo = \typctx_1 \mplus \typctxtwo_1 \mplus \typctx_2 \mplus \typctxtwo_2$, we can 
		build the derivation
		\begin{equation*}
		\tderivthree = 
		\begin{prooftree}
		\hypo{}
		\ellipsis{$\tderivthree_1$}{\tyjp{}{\tmtwo\isub{\var}{\val}}{\typctx_{1} \mplus \typctxtwo_{1}, \vartwo \hastype 
				\mtypethree}{\mtype}}
		\hypo{}
		\ellipsis{$\tderivthree_2$}{\typctx_2 \mplus \typctxtwo_2 \vdash \tmthree\isub{\var}{\val} \hastype \mtypethree}
		\infer2[\footnotesize$\Es$]{\typctx \mplus \typctxtwo \vdash \tmtwo \isub{\var}{\val} \esub{\vartwo} {\tmthree 
				\isub{\var}{\val}} \hastype \mtype}
		\end{prooftree}
		\end{equation*}
		verifying that $\sizem{\tderivthree} = \sizem{\tderivthree_{1}} + \sizem{\tderivthree_{2}} = \sizem{\tderiv_{1}} + 
		\sizem{\tderivtwo_{1}} + \sizem{\tderiv_{2}} + \sizem{\tderivtwo_{2}} = \sizem{\tderiv} + \sizem{\tderivtwo}$ and 
		$\size{\tderivthree} = 1 + \size{\tderivthree_{1}} + \size{\tderivthree_{2}} \leq 1 + (\size{\tderiv_{1}} + 
		\size{\tderivtwo_{1}}) + (\size{\tderiv_{2}} + \size{\tderivtwo_{2}}) = \size{\tderiv} + \size{\tderivtwo}$.
		\qedhere
	\end{itemize}	
\end{proof}

\begin{lemma}[Typing of values: merging]
	\label{l:typing-value-complete} 
	Let $\val$ be a value.
	\begin{enumerate}
		\item \label{p:typing-value-complete-empty} There is a derivation $\namedtyjp{\tderiv}{}{\val}{}{\zero}$ with 
		$\sizem{\tderiv} = 0 = \size{\tderiv}$.
		
		\item \label{p:typing-value-complete-merge} For every derivations 
		$\namedtyjp{\tderiv_{1}}{}{\val}{\typctx_{1}}{\mtype_{1}}$ and 
		$\namedtyjp{\tderiv_{2}}{}{\val}{\typctx_{2}}{\mtype_{2}}$, there exists a  derivation 
		$\namedtyjp{\tderiv}{}{\val}{\typctx_{1} \mplus \typctx_2}{\mtype_{1} \mplus \mtype_{2}}$ such that $\sizem{\tderiv} = 
		\sizem{\tderiv_{1}} + \sizem{\tderiv_{2}}$ and $\size{\tderiv} = \size{\tderiv_{1}} + \size{\tderiv_{2}}$.
		
	\end{enumerate}
\end{lemma}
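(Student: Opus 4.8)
The plan is to prove both points by inspecting the possible shape of a type derivation whose subject is a value $\val$ and whose conclusion is a \emph{multi} judgment: such a derivation necessarily ends with an instance of rule $\ruleManyVal$, since it cannot end with $\ruleAx$ or $\ruleFun$ (which assign \emph{linear} types) nor with $\ruleAp$ or $\ruleES$ (whose subjects are applications and explicit substitutions, never values). In this sense the statement is essentially the converse of \Cref{l:typing-value-splitting}: instead of splitting one $\ruleManyVal$ rule into two, here we \emph{merge} the premises of two $\ruleManyVal$ rules into one.

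For Point~\ref{p:typing-value-complete-empty}, I take for $\tderiv$ the instance of $\ruleManyVal$ with the empty family of premises ($I = \emptyset$), which derives $\vdash \val \hastype \zero$ with empty type context; since $\ruleMany$ is counted by neither notion of size, $\sizem{\tderiv} = 0 = \size{\tderiv}$, as required.

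For Point~\ref{p:typing-value-complete-merge}, by the observation above I can write $\tderiv_1$ (\resp $\tderiv_2$) as an instance of $\ruleManyVal$ with premises $\bigl(\concl{\tderiv_{1,i}}{\typctx_{1,i}}{\val}{\ltype_{1,i}}\bigr)_{i \in I_1}$ (\resp $\bigl(\concl{\tderiv_{2,j}}{\typctx_{2,j}}{\val}{\ltype_{2,j}}\bigr)_{j \in I_2}$), where $I_1$ and $I_2$ may be assumed disjoint, $\typctx_1 = \bigmplus_{i \in I_1} \typctx_{1,i}$, $\mtype_1 = \mset{\ltype_{1,i}}_{i \in I_1}$, and symmetrically for $\tderiv_2$. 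I then let $\tderiv$ be the instance of $\ruleManyVal$ on $\val$ whose family of premises is the disjoint union of the two families, indexed by $I_1 \sqcup I_2$. Its conclusion is $\typctx_1 \mplus \typctx_2 \vdash \val \hastype \mtype_1 \mplus \mtype_2$, using associativity and commutativity of multiset sum on type contexts and on multi types, and $\sizem{\tderiv} = \sum_{i \in I_1} \sizem{\tderiv_{1,i}} + \sum_{j \in I_2} \sizem{\tderiv_{2,j}} = \sizem{\tderiv_1} + \sizem{\tderiv_2}$, and likewise $\size{\tderiv} = \size{\tderiv_1} + \size{\tderiv_2}$, since the added $\ruleMany$ rule contributes $0$ to both sizes.

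There is no genuine obstacle here: the only point needing a line of justification is the preliminary remark that a multi judgment for a value is always concluded by $\ruleManyVal$, which is immediate by case analysis on the last rule; the rest is bookkeeping on index sets and multiset sums. It is worth stressing, though, that this lemma is strictly more than the converse of \Cref{l:typing-value-splitting}: there $\tderiv_1$ and $\tderiv_2$ come from splitting a single given derivation, whereas here they are arbitrary, so the merged $\tderiv$ is genuinely built from scratch.
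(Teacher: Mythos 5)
Your proof is correct and follows essentially the same route as the paper: point~1 is the nullary instance of $\ruleMany$, and point~2 observes that both given derivations must end in $\ruleMany$ and merges their premise families over the disjoint union of index sets, with the size equalities following because $\ruleMany$ contributes to neither notion of size. Your explicit preliminary remark that a multi judgment for a value can only be concluded by $\ruleMany$ is a small addition the paper leaves implicit, but it changes nothing of substance.
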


\begin{proof}
	\begin{enumerate}
		\item 
		Let $\tderiv$ be the following type derivation (applying the rule $\ruleMany$ with $n = 0$) such that 
		$\sizem{\tderiv} = 0 = \size{\tderiv}$:
		\begin{equation*}
		\tderiv =
		\begin{prooftree}
		\hypo{}
		\infer1[\footnotesize$\ruleMany$]{\tyjp{}{\val}{}{\emptytype}}
		\end{prooftree}\ .
		\end{equation*}
		
		\item Let 
		\begin{equation*}
		\tderiv_{1} =
		\begin{prooftree}
		\hypo{}
		\ellipsis{$\tderiv_{i}$}{\tyjp{}{\val}{\typctx_{i}}{\ltype_{i}}}
		\delims{ \left( }{ \right)_{\iI} }
		\infer1[\footnotesize$\ruleMany$]{\tyjp{}{\val}{\bigmplus_{\iI} \typctx_{i}}{\bigmplus_{\iI}\mult{\ltype_{i}}}}
		\end{prooftree}
		\end{equation*}
		with $\typctx_{1} = \bigmplus_{\iI} \typctx_{i}$ and $\mtype_{1} = \bigmplus_{\iI}\mult{\ltype_{i}}$, and let
		\begin{equation*}
		\tderiv_{2} =
		\begin{prooftree}
		\hypo{}
		\ellipsis{$\tderiv_{j}$}{\tyjp{}{\val}{\typctx_{j}}{\ltype_{j}}}
		\delims{ \left( }{ \right)_{\jJ} }
		\infer1[\footnotesize$\ruleMany$]{\tyjp{}{\val}{\bigmplus_{\jJ} \typctx_{j}}{\bigmplus_{\jJ}\mult{\ltype_{j}}}}
		\end{prooftree}
		\end{equation*}
		with $\typctx_{2} = \bigmplus_{\jJ} \typctx_{j}$ and $\mtype_{2} = \bigmplus_{\jJ}\mult{\ltype_{j}}$.
		
		We can derive $\tderiv$ by setting $K = I \cup J$ and then
		\begin{equation*}
		\tderiv =
		\begin{prooftree}
		\hypo{}
		\ellipsis{$\tderiv_{k}$}{\tyjp{}{\val}{\typctx_{k}}{\ltype_{k}}}
		\delims{ \left( }{ \right)_{\kK} }
		\infer1[\footnotesize$\ruleMany$]{\tyjp{}{\val}{\bigmplus_{\kK} \typctx_{k}}{\bigmplus_{\kK}\mult{\ltype_{k}}}}
		\end{prooftree}
		\end{equation*}
		trivially verifying the statement.
		\qedhere
	\end{enumerate}
\end{proof}

\begin{lemma}[Removal]
	\label{lappendix:anti-substitution}
	\NoteProof{l:anti-substitution}
	Let $\tm$ be a term, $\val$ be a value, and 
	$\namedtyjp{\tderiv}{}{\tm\isub{\var}{\val}}{\typctx}{\mtype}$
	be a type derivation. Then there are two  derivations $\namedtyjp{\tderivtwo}{}{\tm}{\typctxtwo, \var \hastype 
		\mtypetwo}{\mtype}$ 
	and $\namedtyjp{\tderivthree}{}{\val}{\typctxthree}{\mtypetwo}$ such that $\typctx = \typctxtwo \mplus \typctxthree$ 
	with $\sizem{\tderiv} = \sizem{\tderivtwo} + \sizem{\tderivthree}$ and $\size{\tderiv} \leq \size{\tderivtwo} + 
	\size{\tderivthree}$.
\end{lemma}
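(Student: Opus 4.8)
The plan is to prove the statement by structural induction on the term $\tm$, running the argument of the Substitution Lemma (\reflemma{substitution}) backwards: whereas that proof used \reflemmap{typing-value-splitting}{two} to \emph{split} the derivation of $\val$ over the occurrences of the substituted variable, here I use \reflemmap{typing-value-complete}{merge} to \emph{merge} the (possibly many) value derivations that the free occurrences of $\var$ in $\tm$ leave inside $\tderiv$ into a single derivation $\tderivthree$ of $\val$, whose type $\mtypetwo$ collects all their types.

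The base case is the variable. If $\tm = \var$ then $\tm\isub{\var}{\val} = \val$, so $\tderiv$ is literally a derivation of $\val$, and since $\val$ is a value its last rule is $\ruleMany$, say with $n \geq 0$ premises and conclusion $\mtype = \mset{\ltype_1,\dots,\ltype_n}$. I take $\tderivthree \defeq \tderiv$ (hence $\typctxthree \defeq \typctx$ and $\mtypetwo \defeq \mtype$) and let $\tderivtwo$ be the derivation of $\var \hastype \mtype \vdash \var \hastype \mtype$ obtained by $\ruleMany$ over $n$ instances of $\ruleAx$, so that $\typctxtwo$ is empty; then $\typctx = \typctxtwo \mplus \typctxthree$, $\sizem{\tderivtwo} = 0$ gives $\sizem{\tderiv} = \sizem{\tderivtwo} + \sizem{\tderivthree}$, and $\size{\tderivtwo} = n$ gives $\size{\tderiv} \leq \size{\tderivtwo} + \size{\tderivthree}$ --- this is exactly the place where a strict inequality on the general size is unavoidable, which is why the statement only asks for $\leq$. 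If instead $\tm = \vartwo \neq \var$, then $\tm\isub{\var}{\val} = \vartwo$ with $\var \notin \dom{\typctx}$: I set $\tderivtwo \defeq \tderiv$ and $\mtypetwo \defeq \emptytype$, and take $\tderivthree$ to be the trivial derivation $\namedtyjp{\tderivthree}{}{\val}{}{\emptytype}$ of \reflemmap{typing-value-complete}{empty}.

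For the inductive cases I peel off the last rule of $\tderiv$. If $\tm = \tmtwo\tmthree$ then $\tm\isub{\var}{\val} = (\tmtwo\isub{\var}{\val})(\tmthree\isub{\var}{\val})$ and $\tderiv$ ends with $\ruleAp$ on sub-derivations of the two factors; I apply the \ih to each, obtaining derivations $\tderivtwo_1,\tderivtwo_2$ of $\tmtwo,\tmthree$ (with $\var$ typed by $\mtypetwo_1,\mtypetwo_2$) and value derivations $\tderivthree_1,\tderivthree_2$, then merge the latter by \reflemmap{typing-value-complete}{merge} into $\tderivthree$ of type $\mtypetwo_1 \mplus \mtypetwo_2$ and reapply $\ruleAp$ to $\tderivtwo_1,\tderivtwo_2$; the additivity of $\mplus$ on contexts and the ``$+1$'' of $\ruleAp$ on both sizes make the bookkeeping go through. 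The case $\tm = \tmtwo\esub{\vartwo}{\tmthree}$ is identical using $\ruleES$, which contributes to the general size but not the multiplicative one. For $\tm = \la{\vartwo}{\tmtwo}$ I assume w.l.o.g.\ $\vartwo \notin \fv{\val}\cup\{\var\}$, so $\tm\isub{\var}{\val} = \la{\vartwo}(\tmtwo\isub{\var}{\val})$ and $\tderiv$ ends with $\ruleMany$ over $n \geq 0$ premises each ending with $\ruleFun$; for $n = 0$ I argue as in the $\tm = \vartwo$ subcase, and for $n > 0$ I apply the \ih to each of the $n$ inner derivations, merge the $n$ resulting value derivations of $\val$ with \reflemmap{typing-value-complete}{merge}, and rebuild via $\ruleFun$ then $\ruleMany$.

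I do not expect a real obstacle: the only delicate points are (i) the variable base case, where one must accept the strict inequality on $\size{\cdot}$ noted above, and (ii) the abstraction case, where I need the $n$ extracted value sub-derivations to be mergeable into a derivation that can sit again under $\ruleFun$ --- which is fine because, $\vartwo$ not being free in $\val$, none of those sub-derivations types $\vartwo$ (\refrmk{free-variables}), so their merge has a context not mentioning $\vartwo$ and the reconstruction is legitimate.
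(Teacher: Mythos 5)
Your proposal is correct and follows essentially the same route as the paper's proof: the same structural induction on $\tm$, the same two variable base cases (with $\tderivthree \defeq \tderiv$ when $\tm = \var$ and the empty-typed value derivation when $\tm \neq \var$), and the same use of the value-merging lemma --- the dual of the splitting lemma used in the Substitution Lemma --- to recombine the value sub-derivations produced by the inductive hypothesis in the application, explicit-substitution and abstraction cases. Your two ``delicate points'' (the possible strictness of the inequality on $\size{\cdot}$ in the variable case, and the fact that $\vartwo$ does not occur in the merged value derivation's context in the abstraction case) are exactly the points the paper's proof also handles.
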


\begin{proof}
	By induction on the term $\tm$.
	Cases:
	\begin{itemize}
		\item \emph{Variable}, then are two sub-cases (let $\mtype = \mset{\ltype_1, \dots, \ltype_n}$ for some $n \in 
		\nat$):
		\begin{enumerate}
			\item $\tm = \var$, then $\tm \isub{\var}{\val} = \val$. 
			Let $\typctxthree = \typctx$, let $\typctxtwo$ be the empty context (\ie $\dom{\typctxtwo} = \emptyset$), let 
			$\mtypetwo = \mtype$ and let $\tderivtwo$ be the derivation 
			\begin{equation*}
			\tderivtwo = 
			\begin{prooftree}
			\infer0[\footnotesize$\Ax$]{\tyjp{}{\var}{\var \hastype \mset{\ltype_1}}{\ltype_1}}
			\hypo{\overset{n \in \nat}{\ldots}}
			\infer0[\footnotesize$\Ax$]{\tyjp{}{\var}{\var \hastype \mset{\ltype_n}}{\ltype_n}}
			\infer3[\footnotesize$\ruleManyVar$]{\tyjp{}{\var}{\var \hastype 
					\mset{\ltype_1,\dots,\ltype_n}}{\mset{\ltype_1,\dots,\ltype_n}}}
			\end{prooftree}
			\end{equation*}
			Thus, $\concl{\tderivtwo}{\typctxtwo, \var \hastype \mtypetwo}{\tm}{\mtype}$ with $\sizem{\tderivtwo} = 0$ and 
			$\size{\tderivtwo} = n$.
			Let $\tderivthree = \tderiv$: so, $\namedtyjp{\tderivthree}{}{\val}{\typctxthree}{\mtypetwo}$ and $\typctxthree 
			\mplus \typctxtwo = \typctx$ with $\sizem{\tderiv} = \sizem{\tderivthree} = \sizem{\tderivtwo} + \sizem{\tderivthree}$ 
			and $\size{\tderiv} = \size{\tderivthree} \leq \size{\tderivtwo} + \size{\tderivthree}$.
			
			\item $\tm = \varthree \neq \var$, then $\tm \isub{\var}{\val} = \varthree$ and 
			the derivation $\tderiv$ has necessarily the form (for some $n~\in~\nat$)
			\begin{equation*}
			\tderiv = 
			\begin{prooftree}
			\infer0[\footnotesize$\Ax$]{\tyjp{}{\varthree}{\varthree \hastype \mset{\ltype_1}}{\ltype_1}}
			\hypo{\overset{n \in \nat}{\ldots}}
			\infer0[\footnotesize$\Ax$]{\tyjp{}{\varthree}{\varthree \hastype \mset{\ltype_n}}{\ltype_n}}
			\infer3[\footnotesize$\ruleManyVar$]{\tyjp{}{\varthree}{\varthree \hastype 
					\mset{\ltype_1,\dots,\ltype_n}}{\mset{\ltype_1,\dots,\ltype_n}}}
			\end{prooftree}
			\end{equation*}
			where $\mtype = \mset{\ltype_1, \dots, \ltype_n}$ and $\typctx = \varthree \hastype \mtype$ (while $\typctx(\var) 
			= \emptymset$).
			Thus, $\sizem{\tderiv} = 0$ and $\size{\tderiv} = n$.
			Let $\typctxthree$ be the empty context (\ie $\dom{\typctxthree} = 0$) and  $\mtypetwo = \emptymset$.
			By \reflemmap{typing-value-complete}{empty}, there is a derivation 
			$\namedtyjp{\tderivthree}{}{\val}{}{\emptytype}$ (and hence 
			$\namedtyjp{\tderivthree}{}{\val}{\typctxthree}{\mtypetwo}$) such that $\sizem{\tderivthree} = 0 = 
			\size{\tderivthree}$.  
			Let $\typctxtwo = \typctx$  and $\tderivtwo = \tderiv$:
			therefore, $\typctxthree \mplus \typctxtwo = \typctx$
			and $\namedtyjp{\tderivtwo}{}{\tm }{\typctxtwo, \var \hastype \mtypetwo}{\mtype}$  with $\sizem{\tderiv} = 
			\sizem{\tderivtwo} = \sizem{\tderivtwo} + \sizem{\tderivthree}$ and $\size{\tderiv} = \size{\tderivtwo} \leq 
			\size{\tderivtwo} + \size{\tderivthree}$.
		\end{enumerate}
		
		\item \emph{Application}, \ie $\tm = \tm_1\tm_2$. 
		Then $\tm \isub{\var}{\val} = \tm_1 \isub{\var}{\val} \tm_2 \isub{\var}{\val}$ and necessarily
		\begin{equation*}
		\tderiv = 
		\begin{prooftree}
		\hypo{}
		\ellipsis{$\tderiv_{1}$}{\typctx_1 \vdash \tm_1\isub{\var}{\val} \hastype \mset{\larrow{\mtypethree}{\mtype}}}
		\hypo{}
		\ellipsis{$\tderiv_{2}$}{\typctx_2 \vdash \tm_2\isub{\var}{\val} \hastype \mtypethree}
		\infer2[\footnotesize$\ruleAp$]{\typctx \vdash \tm_1\isub{\var}{\val} \tm_2\isub{\var}{\val} \hastype \mtype}
		\end{prooftree}
		\end{equation*}
		with $\sizem{\tderiv} = \sizem{\tderiv_{1}} + \sizem{\tderiv_{2}} + 1$, $\size{\tderiv} = \size{\tderiv_{1}} + 
		\size{\tderiv_{2}} + 1$ and $\typctx = \typctx_1 \mplus \typctx_2$. 		
		By \ih, for all $i \in \{1,2\}$, there are derivations $\namedtyjp{\tderivtwo_i}{}{\tm_i}{\typctxtwo_i, \var 
			\hastype \mtypetwo_i}{\mult{\larrow{\mtypethree}{\mtype}}}$ and $\namedtyjp{\tderivthree_i}{}{\val 
		}{\typctxthree_i}{\mtypetwo_i}$ such that $\typctx_i = \typctxtwo_i \mplus \typctxthree_i$ with $\sizem{\tderiv_{i}} = 
		\sizem{\tderivtwo_{i}} + \sizem{\tderivthree_{i}}$ and $\size{\tderiv_{i}} \leq \size{\tderivtwo_{i}} + 
		\size{\tderivthree_{i}}$.
		According to \reflemmap{typing-value-complete}{merge} applied to $\tderivthree_1$ and $\tderivthree_2$, there is a 
		derivation $\namedtyjp{\tderivthree}{}{\val}{\typctxthree}{\mtypetwo}$ where $\mtypetwo = \mtypetwo_1 \mplus 
		\mtypetwo_2$ and $\typctxthree = \typctxthree_1 \mplus \typctxthree_2$, such that  $\sizem{\tderivtwo} = 
		\sizem{\tderivtwo_1} + \sizem{\tderivtwo_2}$ and $\size{\tderivtwo} = \size{\tderivtwo_{1}} + \size{\tderivtwo_{2}}$.
		We can build the derivation (where $\typctxtwo = \typctxtwo_1 \mplus \typctxtwo_2$)
		\begin{equation*}
		\tderivtwo = 
		\begin{prooftree}
		\hypo{}
		\ellipsis{$\tderivtwo_1$}{\typctxtwo_1, \var \hastype \mtypetwo_1 \vdash \tm_1 \hastype 
			\mset{\larrow{\mtypethree}{\mtype}}}
		\hypo{}
		\ellipsis{$\tderivtwo_2$}{\typctxtwo_2, \var \hastype \mtypetwo_2 \vdash \tm_2 \hastype \mtypethree}
		\infer2[\footnotesize$\ruleAp$]{\typctxtwo, \var \hastype \mtypetwo \vdash \tm \hastype \mtype}
		\end{prooftree}
		\end{equation*}
		with $\sizem{\tderiv} = \sizem{\tderiv_{1}} + \size{\tderiv_{2}} + 1 = \sizem{\tderivtwo_{1}} + 
		\sizem{\tderivthree_{1}} + \sizem{\tderivtwo_{2}} + \sizem{\tderivthree_{2}} + 1 = \sizem{\tderivtwo} + 
		\sizem{\tderivthree}$ 
		and
		$\size{\tderiv} = \size{\tderiv_{1}} + \size{\tderiv_{2}} + 1 \leq \sizem{\tderivtwo_{1}} + 
		\sizem{\tderivthree_{1}} + \sizem{\tderivtwo_{2}} + \size{\tderivthree_{2}} + 1 = \size{\tderivtwo} + 
		\size{\tderivthree}$.
		
		\item \emph{Abstraction}, \ie $\tm = \la{\vartwo}{\tmtwo}$.
		We can suppose without loss of generality that $\vartwo \notin \fv{\val} \cup \{\var \}$, hence $\tm 
		\isub{\var}{\val} = \la{\vartwo}{\tmtwo\isub{\var}\val}$ and necessarily, for some $n \in \nat$,
		\begin{equation*}
		\tderiv =
		\begin{prooftree}[separation=1em]
		\hypo{}
		\ellipsis{$\tderiv_i$}{\typctx_i, \vartwo \hastype \mtypethree_i \vdash \tmtwo\isub{\var}{\val} \hastype \mtype_i}
		
		\infer1[\footnotesize$\ruleFun$]{\tyjp{}{\la{\vartwo}{\tmtwo\isub{\var}{\val}}}{\typctx_{i}}{\ty{\mtypethree_{i}\!}{
					\!\mtype_{1}}}}
		\delims{\left(}{\right)_{1 \leq i \leq n}}
		\hypo{}
		\infer2[\footnotesize$\ruleManyVal$]{\tyjp{}{\la{\vartwo}{\tmtwo\isub{\var}{\val}}}{\typctx}{\mtype}}
		\end{prooftree}
		\end{equation*}
		with $\typctx = \bigmplus_{i=1}^{n} \typctx_i$, $\mtype = \bigmplus_{i=1}^{n} 
		\mset{\larrow{\mtypethree_i\!}{\!\mtype_i}}$, $\sizem{\tderiv} = n + \sum_{i=1}^{n} \sizem{\tderiv_{i}}$ and 
		$\size{\tderiv} = n + \sum_{i=1}^n \size{\tderiv_i} $.
		%
		$\namedtyjp{\tderivtwo}{}{\val}{\typctxtwo, \vartwo \hastype \emptymset}{\mtypetwo}$. 
		There are two subcases:
		\begin{itemize}
			\item \emph{Empty multi type}: If $n = 0$,  then $\mtype = \emptymset$ and $\dom{\typctx} = \emptyset$, with 
			$\sizem{\tderiv} = 0 = \size{\tderiv}$. 
			We can  build the derivation 
			\begin{equation*}
			\tderivtwo = 
			\begin{prooftree}
			\infer0[\footnotesize$\ruleManyVal$]{\tyjp{}{\la{\vartwo}\tmtwo}{}{\emptymset}}
			\end{prooftree}
			\end{equation*}
			where $\sizem{\tderivtwo} = 0 = \size{\tderivtwo}$.
			Let $\mtypetwo = \emptymset$ and $\typctxtwo$ be the empty context (\ie $\dom{\typctxtwo} = \emptyset$): then 
			$\concl{\tderivtwo}{\typctxtwo, \var \hastype \mtypetwo}{\tm}{\mtype}$.			
			According to \reflemmap{typing-value-complete}{empty}, there is a derivation 
			$\concl{\tderivthree}{}{\val}{\emptymset}$ with $\sizem{\tderivthree} = 0 = \size{\tderivthree}$. 
			Let $\typctxthree$ be the empty context (\ie $\dom{\typctxthree} = \emptyset$): so, 
			$\concl{\tderivthree}{\typctxthree}{\val}{\mtypetwo}$ with $\typctx = \typctxtwo \mplus \typctxthree$ and 
			$\sizem{\tderiv} = 0 = \sizem{\tderivtwo} + \sizem{\tderivthree}$ and $\size{\tderiv} = 0 \leq \size{\tderivtwo} + 
			\size{\tderivthree}$.
			
			\item\emph{Non-empty multi type}: If $n > 0$ then by \ih, for all $1 \leq i \leq n$, there are derivations 
			$\concl{\tderivtwo_i}{\typctxtwo_i, \vartwo \hastype \mtypethree_i, \var \hastype \mtypetwo_i}{\tmtwo}{\mtype_i}$ and 
			$\concl{\tderivthree_i}{\typctxthree_i}{\val}{\mtypetwo_i}$ such that $\typctx_i = \typctxtwo_i \mplus \typctxthree_i$ 
			with $\sizem{\tderiv_i} = \sizem{\tderivtwo_i} + \sizem{\tderivthree_i}$ and $\size{\tderiv_i} \leq \size{\tderivtwo_i} 
			+ \size{\tderivthree_i}$.
			We can build the derivation
			\begin{equation*}
			\tderivtwo = 
			\begin{prooftree}[separation=1em]
			\hypo{}
			\ellipsis{$\tderivtwo_i$}{\typctxtwo_i ; \vartwo \hastype \mtypethree_i ; \var \hastype \mtypetwo_i \vdash \tmtwo 
				\hastype \mtype_i}
			\infer1[\footnotesize$\ruleFun$]{\tyjp{}{\la{\vartwo}{\tmtwo}}{\typctxtwo_{i} ; \var \hastype 
					\mtypetwo_{i}}{\ty{\mtypethree_{i}\!}{\!\mtype_{i}}}}
			\delims{ \left( }{ \right)_{1 \leq i \leq n} }
			\infer1[\footnotesize$\ruleManyVal$]{\tyjp{}{\la{\vartwo}{\tmtwo}}{\bigmplus_{i=1}^n \typctxtwo_i ; \var \hastype 
					\mplus_{i=1}^n \mtypetwo_i}{\bigmplus_{i=1}^{n} \mult{\ty{\mtypethree_{i}\!}{\!\mtype_{i}}}}}
			\end{prooftree}
			\end{equation*}
			Thus, $\sizem{\tderivtwo} = n + \sum_{i=1}^{n} \sizem{\tderivtwo_{i}}$ and $\size{\tderivtwo} = n + \sum_{i=1}^n 
			\size{\tderivtwo_i} $.
			By repeatedly applying \reflemmap{typing-value-complete}{merge}, there is a derivation 
			$\concl{\tderivthree}{\typctxthree}{\val}{\mtypetwo}$ with $\typctxthree = \bigmplus_{i=1}^n \typctxthree_i$ such that 
			$\sizem{\tderivthree} = \sum_{i=1}^n\sizem{\tderivthree_i}$ and $\size{\tderivthree} = 
			\sum_{i=1}^n\size{\tderivthree_i}$.
			So, $\typctx = \bigmplus_{i=1}^n \typctx_i = \bigmplus_{i=1}^n (\typctxtwo_i \mplus \typctxthree_i) = \typctxtwo 
			\mplus \typctxthree$ with $\sizem{\tderiv} = n + \sum_{i=1}^n \sizem{\tderiv_i} = n + \sum_{i=1}^n 
			(\sizem{\tderivtwo_i} + \sizem{\tderivthree_i}) = \sizem{\tderivtwo} + \sizem{\tderivthree}$ 
			and $\size{\tderiv} = n + \sum_{i=1}^n \size{\tderiv_i} \leq n + \sum_{i=1}^n (\size{\tderivtwo_i} + 
			\size{\tderivthree_i}) = \size{\tderivtwo} + \size{\tderivthree}$.
		\end{itemize}
		
		\item \emph{Explicit substitution}, \ie $\tm = \tmtwo \esub{\vartwo}{\tmthree}$. 
		We can suppose without loss of generality that $\vartwo \notin \fv{\val} \cup \{\var \}$, hence $\tm 
		\isub{\var}{\val} = \tmtwo\isub{\var}{\val} \esub{\vartwo}{\tmthree\isub{\var}\val}$ and necessarily
		\begin{equation*}
		\tderiv = 
		\begin{prooftree}
		\hypo{}
		\ellipsis{$\tderiv_1$}{\tyjp{}{\tmtwo\isub{\var}{\val}}{\typctx_{1}, \vartwo \hastype \mtypethree}{\mtype}}
		\hypo{}
		\ellipsis{$\tderiv_2$}{\typctx_2 \vdash \tmthree\isub{\var}{\val} \hastype \mtypethree}
		\infer2[\footnotesize$\Es$]{\typctx \vdash \tmtwo \isub{\var}{\val} \esub{\vartwo} {\tmthree \isub{\var}{\val}} 
			\hastype \mtype}
		\end{prooftree}
		\end{equation*}
		with $\sizem{\tderiv} = \sizem{\tderiv_{1}} + \sizem{\tderiv_{2}}$, $\size{\tderiv} = \size{\tderiv_{1}} + 
		\size{\tderiv_{2}} + 1$ and $\typctx = \typctx_1 \mplus \typctx_2$. 
		By \ih applied to $\tderiv_1$ and \refrmk{free-variables}, there are derivations 
		$\concl{\tderivtwo_1}{\typctxtwo_1, \vartwo \hastype \mtypethree, \var \hastype \mtypetwo_1}{\tmtwo}{\mtype}$ and
		$\concl{\tderivthree_1}{\typctxthree_1}{\val}{\mtypetwo_1}$ with $\typctx_1 = \typctxtwo_1 \mplus \typctxthree_1$ 
		such that $\sizem{\tderiv_1} = \sizem{\tderivtwo_1} + \sizem{\tderivthree_1}$ and $\size{\tderiv_1} \leq 
		\size{\tderivtwo_1} + \size{\tderivthree_1}$.
		By \ih applied to $\tderiv_2$ , there are derivations $\concl{\tderivtwo_2}{\typctxtwo_2, \var \hastype 
			\mtypetwo_2}{\tmthree}{\mtype}$ and
		$\concl{\tderivthree_2}{\typctxthree_2}{\val}{\mtypetwo_2}$ with $\typctx_2 = \typctxtwo_2 \mplus \typctxthree_2$ 
		such that $\sizem{\tderiv_2} = \sizem{\tderivtwo_2} + \sizem{\tderivthree_2}$ and $\size{\tderiv_2} \leq 
		\size{\tderivtwo_2} + \size{\tderivthree_2}$.
		According to \reflemmap{typing-value-complete}{merge}, there is a derivation 
		$\namedtyjp{\tderivthree}{}{\val}{\typctxthree}{\mtypetwo}$ with $\typctxthree = \typctxthree_{1} \mplus 
		\typctxthree_{2}$ and $\mtypetwo = \mtypetwo_1 \mplus \mtypetwo_2$ such that $\sizem{\tderivthree} = 
		\sizem{\tderivthree_{1}} + \sizem{\tderivthree_{2}}$ and $\size{\tderivthree} = \size{\tderivthree_{1}} + 
		\size{\tderivthree_{2}}$.
		We can build the derivation (where $\typctxtwo = \typctxtwo_1 \mplus \typctxtwo_2$)
		\begin{equation*}
		\tderivtwo = 
		\begin{prooftree}
		\hypo{}
		\ellipsis{$\tderivtwo_{1}$}{\typctxtwo_1 , \var \hastype \mtypetwo_1 , \vartwo \hastype \mtypethree \vdash \tmtwo 
			\hastype \mtype}
		\hypo{}
		\ellipsis{$\tderivtwo_{2}$}{\typctxtwo_2, \var \hastype \mtypetwo_2 \vdash \tmthree \hastype \mtypethree}
		\infer2[\footnotesize$\Es$]{\typctxtwo, \var \hastype \mtypetwo \vdash \tmtwo \esub{\vartwo}{\tmthree} \hastype 
			\mtype}
		\end{prooftree}
		\end{equation*}
		verifying that $\typctx = \typctx_1 \mplus \typctx_2 = \typctxtwo_1 \mplus \typctxthree_1 \mplus \typctxtwo_2 
		\mplus \typctxthree_2 = \typctxtwo \mplus \typctxthree$ and $\sizem{\tderiv} = \sizem{\tderiv_{1}} + \sizem{\tderiv_{2}} 
		= \sizem{\tderivtwo_{1}} + \sizem{\tderivthree_{1}} + \sizem{\tderivtwo_{2}} + \sizem{\tderivthree_{2}} = 
		\sizem{\tderivtwo} + \sizem{\tderivthree}$ and $\size{\tderiv} = 1 + \size{\tderiv_{1}} + \size{\tderiv_{2}} \leq 1 + 
		(\size{\tderivtwo_{1}} + \size{\tderivthree_{1}}) + (\size{\tderivthree_{2}} + \size{\tderivthree_{2}}) = 
		\size{\tderivtwo} + \size{\tderivthree}$.
		\qedhere
	\end{itemize}	
\end{proof}

\section{Proofs of \Cref{sect:open} (Multi Types for Open \cbv)}

\begin{remark}[Merging and splitting inertness]
	\label{rmk:merge-split-inert}
	Let $\mtype, \mtypetwo, \mtypethree$ be multi types with $\mtype = \mtypetwo \mplus \mtypethree$; then, $\mtype$ is inert iff $\mtypetwo$ and $\mtypethree$ are inert.
	Similarly for type contexts.
\end{remark}

\begin{lemma}
	[Spreading of inertness on judgments]
	\label{lappendix:spread-inert}
	\NoteState{l:spread-inert}
	Let $\concl{\tderiv}{\typctx}{\itm}{\mtype}$ be a derivation and $\itm$ be an inert term. 
	If \,$\typctx$ is a inert type context, then $\mtype$ is a inert multi type.
\end{lemma}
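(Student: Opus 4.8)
The plan is to argue by structural induction on the inert term $\itm$, using the alternative grammar $\itm \grameq \var \mid \itm\fire \mid \itm\esub{\var}{\pitm}$ of \Cref{rmk:inert-alternative}; since the shape of $\itm$ fixes the last rule of $\tderiv$, the three productions give the three cases, and throughout I will freely use \Cref{rmk:merge-split-inert} (a $\mplus$-sum of multi types, or of type contexts, is inert iff all summands are). For the base case $\itm = \var$: as the conclusion $\typctx \vdash \var \hastype \mtype$ is a multi judgment and $\var$ is a value, the last rule of $\tderiv$ must be $\ruleMany$ over a (possibly empty) list of axioms $\ruleAx$, so $\typctx = \var \hastype \mtype$ (discarding other variables via \Cref{rmk:free-variables}); then inertness of $\typctx$ is literally inertness of $\mtype = \typctx(\var)$, with the degenerate sub-case $\mtype = \emptytype$ covered since $\emptytype = 0\mset{\ground}$ is ground, hence inert.

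For the application case $\itm = \itmtwo\fire$ (with $\itmtwo$ inert), the last rule of $\tderiv$ is $\ruleAp$, with premises $\concl{\tderivtwo}{\typctx_1}{\itmtwo}{\mset{\larrow{\mtypetwo}{\mtype}}}$ and $\concl{\tderivthree}{\typctx_2}{\fire}{\mtypetwo}$ and $\typctx = \typctx_1 \mplus \typctx_2$. By \Cref{rmk:merge-split-inert} the context $\typctx_1$ is inert, so the \ih\ applies to $\tderivtwo$ and yields that $\mset{\larrow{\mtypetwo}{\mtype}}$ is an inert multi type, i.e.\ $\larrow{\mtypetwo}{\mtype}$ is an inert linear type; by the grammar of inert linear types the only admissible arrow shape is $\larrow{n\mset{\ground}}{\imtype}$, hence $\mtype$ is an inert multi type. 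For the explicit-substitution case $\itm = \itmtwo\esub{\var}{\pitm}$ (with $\itmtwo$ inert and $\pitm$ a proper inert, hence inert, term), the last rule is $\ruleES$, with premises $\concl{\tderivtwo}{\typctx_1, \var\hastype\mtypetwo}{\itmtwo}{\mtype}$ and $\concl{\tderivthree}{\typctx_2}{\pitm}{\mtypetwo}$ and $\typctx = \typctx_1 \mplus \typctx_2$; splitting inertness gives $\typctx_1$ and $\typctx_2$ inert, applying the \ih\ to $\tderivthree$ shows $\mtypetwo$ is inert, so $\typctx_1, \var\hastype\mtypetwo$ is an inert type context, and a second use of the \ih\ on $\tderivtwo$ gives that $\mtype$ is inert.

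I do not expect a genuine obstacle: the argument is a routine structural induction. The only point that needs a little care is the clause $\inltype \grameq \ground \mid \larrow{n\mset{\ground}}{\imtype}$ of the inert-linear-type grammar, i.e.\ the observation that an arrow type can be inert only when its codomain is an inert multi type. This is exactly what, in the application case, turns the \ih\ about the type of $\itmtwo$ into the conclusion about $\mtype$; all the remaining steps are $\mplus$-bookkeeping handled by \Cref{rmk:merge-split-inert}.
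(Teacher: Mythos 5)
Your proof is correct and follows essentially the same route as the paper's: induction on the alternative grammar of inert terms, with the variable case read off the $\ruleAx$/$\ruleMany$ shape, the application case extracting inertness of $\mtype$ from the inert arrow type of $\itmtwo$, and the ES case first propagating inertness to $\mtypetwo$ via the derivation for $\pitm$ before applying the induction hypothesis to $\itmtwo$. The only difference is that you spell out a couple of points the paper leaves implicit (the grammar clause $\larrow{n\mset{\ground}}{\imtype}$ and the degenerate $\emptytype$ sub-case), which is harmless.
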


\begin{proof}
	By induction on the definition of inert terms $\itm$ (see \Cref{rmk:inert-alternative}).
	Cases:
	\begin{itemize}
		\item \emph{Variable}, \ie $\itm = \var$. Then necessarily, for some $n \in \nat$,
		\begin{equation*}
		\tderiv = 
		\begin{prooftree}
		\infer0[\footnotesize$\ruleAx$]{\tyjp{}{\var}{\var \hastype \mset{\ltype_1}}{\ltype_1}}
		\hypo{\overset{n \in \nat}{\ldots}}
		\infer0[\footnotesize$\ruleAx$]{\tyjp{}{\var}{\var \hastype \mset{\ltype_n}}{\ltype_n}}
		\infer3[\footnotesize$\ruleManyVar$]{\tyjp{}{\var}{\typctx}{\mtype}}
		\end{prooftree}
		\end{equation*}
		where $\mtype = \mset{\ltype_1, \dots, \ltype_n}$ and $\typctx = \var \hastype \mtype$. 
		As $\typctx$ is inert (by hypothesis), so is $\mtype$.
		
		\item \emph{Application}, \ie $\itm = \itmtwo \fire$. Then necessarily
		\begin{equation*}
		\tderiv = 
		\begin{prooftree}
		\hypo{}
		\ellipsis{$\tderivtwo$}{\typctxtwo \vdash \itmtwo \hastype \mult{\ty{\mtypetwo}{\mtype}}}
		\hypo{}
		\ellipsis{$\tderivthree$}{\typctxthree \vdash \fire \hastype\mtypetwo}
		\infer2[\footnotesize$\ruleApp$]{\tyjp{}{\itmtwo \fire}{\typctxtwo \mplus \typctxthree}{\mtype}}
		\end{prooftree}
		\end{equation*}
		where $\typctx = \typctxtwo \mplus \typctxthree$.
		As $\typctx$ is inert, so is $\typctxtwo$, according to \Cref{rmk:merge-split-inert}.
		By \ih applied to $\tderivtwo$, $\mult{\ty{\mtypetwo}{\mtype}}$ is inert and hence $\mtype$ is inert.
		
		\item \emph{Explicit substitution on inert}, \ie $\itm = \itmtwo \esub{\var}{\pitm}$. Then necessarily
		\begin{equation*}
		\tderiv = 
		\begin{prooftree}
		\hypo{}
		\ellipsis{$\tderivtwo$}{\typctxtwo, \var \hastype \mtypetwo \vdash \itmtwo \hastype \mtype}
		\hypo{}
		\ellipsis{$\tderivthree$}{\typctxthree \vdash \pitm \hastype \mtypetwo}
		\infer2[\footnotesize$\ruleES$]{\tyjp{}{\itmtwo \esub{\var}{\pitm}}{\typctxtwo \mplus \typctxthree}{\mtype}}
		\end{prooftree}
		\end{equation*}
		where $\typctx = \typctxtwo \mplus \typctxthree$.
		As $\typctx$ is inert, so are $\typctxtwo$ and $\typctxthree$, according to \Cref{rmk:merge-split-inert}.
		By \ih applied to $\tderivthree$, $\mtypetwo$ is inert. 
		Therefore, $\typctxtwo, \var \hastype \mtypetwo$ is a inert type context.
		By \ih applied to $\tderivtwo$, the multi type $\mtype$ is inert.
		\qedhere
	\end{itemize}
\end{proof}

\paragraph*{Correctness}

\begin{lemma}[Size of fireballs]
	\label{lappendix:size-fireballs}
	\NoteState{l:size-fireballs}
	Let $\fire$ be a fireball. 
	If $\namedtyjp{\tderiv}{}{\fire}{\typctx}{\mtype}$
	then $\sizem{\tderiv} \geq \sizeo{\fire}$.
	If, moreover, $\typctx$ is inert and ($\mtype$ is ground inert or $\fire$ is inert), then 
$\sizem{\tderiv} = \sizeo{\fire}$.
\end{lemma}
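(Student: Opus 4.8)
The plan is to prove both halves by a single mutual induction on the grammar of fireballs $\fire \grameq \val \mid \pitm \mid \fire \esub{\var}{\pitm}$ and on the grammar of inert terms $\itm \grameq \var \mid \itm \fire \mid \itm \esub{\var}{\pitm}$ (the alternative presentation from \Cref{rmk:inert-alternative}), where for inert terms the statement is strengthened to: \emph{if $\concl{\tderiv}{\typctx}{\itm}{\mtype}$ then $\sizem{\tderiv} \geq \sizeo{\itm}$, and if moreover $\typctx$ is inert then $\sizem{\tderiv} = \sizeo{\itm}$}. The inequality $\sizem{\tderiv} \geq \sizeo{\fire}$ (resp. $\geq \sizeo{\itm}$) needs no hypothesis on contexts or types: each clause of the two grammars matches a typing rule --- the head variable of an inert term via $\ruleAx$/$\ruleMany$, an application via $\ruleAp$, an ES via $\ruleES$, a value via $\ruleMany$ --- while $\sizeo{-}$ adds exactly one unit at each application and $\sizem{-}$ adds exactly one at each $\ruleAp$ and is non-negative on every other rule, so the bound follows by summing the inductive hypotheses. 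In particular a value ends with $\ruleMany$ over $n \geq 0$ premises and $\sizeo{\val} = 0$, so $\sizem{\tderiv} \geq 0$ holds trivially.

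For the equalities I would first record two closure facts. By \Cref{rmk:merge-split-inert}, an inert type context splits into inert type contexts along every $\ruleAp$ and $\ruleES$, so whenever the final context is inert the inductive hypotheses apply to all sub-derivations with inert contexts; and by \Cref{l:spread-inert}, an inert sub-term typed in an inert context receives an inert multi type, which in the $\ruleES$ clauses ensures that a context of the form $\typctxtwo, \var \hastype \mtypetwo$ produced from an inert one is again inert, so the IH applies to the body as well. The value clause uses one further observation: an abstraction is typed only via $\ruleFun$, whose conclusion is an arrow, so it can never be assigned the ground type $\ground$; hence if $\mtype = n\mset{\ground}$ and $\val$ is an abstraction the final $\ruleMany$ has no premise and $\sizem{\tderiv} = 0 = \sizeo{\val}$, and if $\val$ is a variable (the only inert value) then $\sizem{\tderiv} = 0 = \sizeo{\val}$ as well.

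The genuinely delicate point is the application clause of an inert term, $\itm = \itmtwo \fire$, with $\tderiv$ ending in $\ruleAp$ on $\concl{\tderivtwo}{\typctxtwo}{\itmtwo}{\mset{\larrow{\mtypetwo}{\mtype}}}$ and $\concl{\tderivthree}{\typctxthree}{\fire}{\mtypetwo}$. The IH on the inert head $\itmtwo$ gives $\sizem{\tderivtwo} = \sizeo{\itmtwo}$, but $\fire$ is merely a fireball, so to obtain $\sizem{\tderivthree} = \sizeo{\fire}$ I need the equality half of the fireball claim, whose hypothesis requires $\fire$ inert \emph{or} $\mtypetwo$ ground --- neither being available a priori. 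The way out: $\itmtwo$ is inert and typed in the inert context $\typctxtwo$, so by \Cref{l:spread-inert} the multi type $\mset{\larrow{\mtypetwo}{\mtype}}$ is inert, and unfolding the grammar of inert linear types the arrow $\larrow{\mtypetwo}{\mtype}$ forces its left component $\mtypetwo$ to be of the form $n\mset{\ground}$, i.e. a ground (hence inert) multi type --- exactly the disjunct the IH asks for. Dually, in the $\ruleES$ clause $\fire = \fire_1 \esub{\var}{\pitm}$ the active disjunct is simply inherited: if $\mtype$ is ground it is still the type of $\fire_1$, and if $\fire$ is inert then so is $\fire_1$. I expect this tracking of which disjunct is active across the mutual induction to be the main obstacle; once it is organised, every remaining case is a one-line computation with $\sizem{-}$ and $\sizeo{-}$.
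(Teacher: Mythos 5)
Your proposal is correct and follows essentially the same route as the paper's proof: a mutual induction on the alternative grammars of fireballs and inert terms, with inertness split across $\ruleAp$/$\ruleES$ premises (\Cref{rmk:merge-split-inert}), spreading of inertness (\Cref{l:spread-inert}) used in the application case to force the argument's multi type to be ground so that the fireball IH applies, and the observation that a ground multi type on an abstraction forces an empty $\ruleMany$. Your reading of the application case is in fact slightly more careful than the paper's (which asserts $\mtypetwo = \emptytype$ where the grammar only forces $\mtypetwo = n\mset{\ground}$), but either form of the conclusion suffices to invoke the inductive hypothesis.
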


\begin{proof}
	By mutual induction on the definition of fireballs $\fire$ and inert terms $\itm$ (see \Cref{rmk:inert-alternative}).
	Cases for inert terms:
	\begin{itemize}
		\item \emph{Variable}, \ie $\fire = \var$. Then necessarily, for some $n \in \nat$,
		\begin{equation*}
		\tderiv = 
		\begin{prooftree}
		\infer0[\footnotesize$\ruleAx$]{\tyjp{}{\var}{\var \hastype \mset{\ltype_1}}{\ltype_1}}
		\hypo{\overset{n \in \nat}{\ldots}}
		\infer0[\footnotesize$\ruleAx$]{\tyjp{}{\var}{\var \hastype \mset{\ltype_n}}{\ltype_n}}
		\infer3[\footnotesize$\ruleManyVar$]{\tyjp{}{\var}{\typctx}{\mtype}}
		\end{prooftree}
		\end{equation*}
		where $\mtype = \mset{\ltype_1, \dots, \ltype_n}$ and $\typctx = \var \hastype \mtype$. 
		Therefore, $\sizeo{\fire} = 0 = \sizem{\tderiv}$.
		
		\item \emph{Application}, \ie $\fire = \itm \firetwo$. Then necessarily
		\begin{equation*}
		\tderiv = 
		\begin{prooftree}
		\hypo{}
		\ellipsis{$\tderivtwo$}{\typctxtwo \vdash \itm \hastype \mult{\ty{\mtypetwo}{\mtype}}}
		\hypo{}
		\ellipsis{$\tderivthree$}{\typctxthree \vdash \firetwo \hastype\mtypetwo}
		\infer2[\footnotesize$\ruleApp$]{\tyjp{}{\itm \firetwo}{\typctxtwo \mplus \typctxthree}{\mtype}}
		\end{prooftree}
		\end{equation*}
		where $\typctx = \typctxtwo \mplus \typctxthree$.
		By \ih applied to both premises, $\sizem{\tderivtwo} \geq \size{\itm}$ and $\sizem{\tderivthree} \geq 
\size{\firetwo}$.
		Therefore, $\sizeo{\fire} = \sizeo{\itm} + \sizeo{\firetwo} + 1 \leq \sizem{\tderivtwo} + \sizem{\tderivthree} + 1 = 
\sizem{\tderiv}$.
		
		If, moreover, $\typctx$ is inert then so are $\typctxtwo$ and $\typctxthree$, according to 
\Cref{rmk:merge-split-inert}.
		By \ih for inert terms applied to $\tderivtwo$, $\sizem{\tderivtwo} = \sizeo{\itm}$.
		By spreading of inertness (\Cref{l:spread-inert}), $\mset{\larrow{\mtypetwo}{\mtype}}$ is inert, hence $\mtypetwo = \emptytype$.
		By \ih for fireballs applied to $\tderivthree$, $\sizem{\tderivthree} = \sizeo{\firetwo}$.
		Therefore, $\size{\fire} = \size{\itm} + \size{\firetwo} + 1 = \sizem{\tderivtwo} + \sizem{\tderivthree} + 1 = 
\sizem{\tderiv}$.
		
		\item \emph{Explicit substitution on inert}, \ie $\fire = \itm \esub{\var}{\pitm}$. Then necessarily
		\begin{equation*}
		\tderiv = 
		\begin{prooftree}
		\hypo{}
		\ellipsis{$\tderivtwo$}{\typctxtwo, \var \hastype \mtypetwo \vdash \itm \hastype \mtype}
		\hypo{}
		\ellipsis{$\tderivthree$}{\typctxthree \vdash \pitm \hastype \mtypetwo}
		\infer2[\footnotesize$\ruleES$]{\tyjp{}{\itm \esub{\var}{\pitm}}{\typctxtwo \mplus \typctxthree}{\mtype}}
		\end{prooftree}
		\end{equation*}
		where $\typctx = \typctxtwo \mplus \typctxthree$.
		We can then apply \ih to both premises: $\sizem{\tderivtwo} \geq \sizeo{\itm}$ and $\sizem{\tderivthree} \geq \sizeo{\pitm}$. 
		Therefore, $\sizeo{\sfire} = \sizeo{\sitm} + \sizeo{\pitm} \leq \sizem{\tderivtwo} + \sizem{\tderivthree} = \sizem{\tderiv}$.	
		
		If, moreover, $\typctx$ is inert then so are $\typctxtwo$ and $\typctxthree$, according to 
\Cref{rmk:merge-split-inert}.
		By spreading of inertness (\Cref{l:spread-inert}), $\mtypetwo$ is inert, hence $\typctxtwo, \var \hastype \mtypetwo$ is a inert type context.
		By \ih for inert terms applied to $\tderivtwo$ and $\tderivthree$, we have $\sizem{\tderivtwo} = \sizeo{\itm}$.
		and $\sizem{\tderivthree} = \sizeo{\pitm}$.
	 	So, $\sizeo{\fire} = \sizeo{\itm} + \sizeo{\pitm} = \sizem{\tderivtwo} + \sizem{\tderivthree} = \sizem{\tderiv}$.
		
	\end{itemize}

	Cases for fireballs that are not inert terms:
	\begin{itemize}
		\item \emph{Explicit substitution on fireball}, \ie $\fire = \firetwo \esub{\var}{\pitm}$. Then necessarily
		\begin{equation*}
		\tderiv = 
		\begin{prooftree}
		\hypo{}
		\ellipsis{$\tderivtwo$}{\typctxtwo, \var \hastype \mtypetwo \vdash \firetwo \hastype \mtype}
		\hypo{}
		\ellipsis{$\tderivthree$}{\typctxthree \vdash \pitm \hastype \mtypetwo}
		\infer2[\footnotesize$\ruleES$]{\tyjp{}{\firetwo \esub{\var}{\pitm}}{\typctxtwo \mplus \typctxthree}{\mtype}}
		\end{prooftree}
		\end{equation*}
		where $\typctx = \typctxtwo \mplus \typctxthree$.
		We can then apply \ih to both premises: $\sizem{\tderivtwo} \geq \sizeo{\firetwo}$ and $\sizem{\tderivthree} \geq \sizeo{\pitm}$. 
		Therefore, $\sizeo{\fire} = \sizeo{\firetwo} + \sizeo{\pitm} \leq \sizem{\tderivtwo} + \sizem{\tderivthree} = 
\sizem{\tderiv}$ 
		
		If, moreover, $\typctx$ is inert and $\mtype$ is ground inert, then $\typctxtwo$ and $\typctxthree$ are inert, according to \Cref{rmk:merge-split-inert}.
		By \ih for inert terms applied to $\tderivthree$, $\sizem{\tderivthree} = \sizeo{\pitm}$.
		By spreading of inertness (\Cref{l:spread-inert}), $\mtypetwo$ is inert, hence $\typctxtwo, \var \hastype \mtypetwo$ is a inert type context.
		By \ih for fireballs applied to $\tderivtwo$, $\sizem{\tderivtwo} = \sizeo{\firetwo}$.
		Therefore, $\sizeo{\fire} = \sizeo{\firetwo} + \sizeo{\pitm} = \sizem{\tderivtwo} + \sizem{\tderivthree} = 
\sizem{\tderiv}$.
		
		\item \emph{Abstraction}, \ie $\fire = \la{\var}{\tm}$. 
		Then necessarily, for some $n \in \nat$,
		\begin{equation*}
		\tderiv = 
		\begin{prooftree}[separation = 1em]
		\hypo{}
		\ellipsis{$\tderivtwo_1$}{\typctx_1, \var \hastype \mtypethree_1 \vdash \tm \hastype \mtypetwo_1}
		\infer1[\footnotesize$\ruleFun$]{\tyjp{}{\la{\var}{\tm}}{\typctx_1}{\ty{\mtypethree_1}{\mtypetwo_1}}}
		\hypo{\overset{n \in \nat}{\ldots}}
		\hypo{}
		\ellipsis{$\tderivtwo_n$}{\typctx_n, \var \hastype \mtypethree_n \vdash \tm \hastype \mtypetwo_n}
		\infer1[\footnotesize$\ruleFun$]{\tyjp{}{\la{\var}{\tm}}{\typctx_n}{\ty{\mtypethree_n}{\mtypetwo_n}}}
		\infer3[\footnotesize$\ruleManyVal$]{\tyjp{}{\la{\var}{\tm}}{\typctx}{\mtype}}
		\end{prooftree}
		\end{equation*}
		where $\mtype = \bigmplus_{i=1}^n\mset{\larrow{\mtypethree_i}{\mtypetwo_i}}$ and $\typctx = 
\bigmplus_{i=1}^n\typctx_i$. 
		Thus, $\sizeo{\fire} = 0 \leq \sum_{i=1}^n(\sizem{\tderivtwo_i} + 1) = \sizem{\tderiv}$.
		
		If, moreover, $\mtype$ is ground inert, then necessarily $\mtype = \emptytype$ and $n = 0$, hence $\tderiv$ consist of the rule $\ruleManyVal$ with $0$ premises.
		Therefore, $\sizeo{\fire}= 0 = \sizem{\tderiv}$. 
		\qedhere
	\end{itemize}
\end{proof}

\begin{proposition}[Open quantitative subject reduction]
	\label{propappendix:weak-subject-reduction}
	\NoteState{prop:weak-subject-reduction}
	Let $\namedtyjp{\tderiv}{}{\tm}{\typctx}{\mtype}$ be a derivation.
	\begin{enumerate}
		\item If $\tm \towm \tm'$ then there exists a derivation $\namedtyjp{\tderiv'}{}{\tm'}{\typctx}{\mtype}$ such that
		$\sizem{\tderiv'} = \sizem{\tderiv} - 2$ and $\size{\tderiv'} = \size{\tderiv} - 1$; 
		\item If $\tm \towe \tm'$ then there exists a derivation $\namedtyjp{\tderiv'}{}{\tm'}{\typctx}{\mtype}$ such that
		$\sizem{\tderiv'} = \sizem{\tderiv}$ and $\size{\tderiv'} < \size{\tderiv}$.
	\end{enumerate}
\end{proposition}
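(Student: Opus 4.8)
The plan is to prove both items by a single induction on the open context $\weakctx$ such that $\tm = \weakctxp{\tmfour}$, $\tm' = \weakctxp{\tmfive}$ and $\tmfour \rtom \tmfive$ (\resp $\tmfour \rtoe \tmfive$) is a root redex. The inductive cases ($\weakctx \neq \ctxhole$, so $\weakctx$ is $\weakctx'\,\tmtwo$, $\tmtwo\,\weakctx'$, $\weakctx'\esub{\var}{\tmtwo}$ or $\tmtwo\esub{\var}{\weakctx'}$) are routine: since open contexts do not enter abstractions, $\weakctxp{\tmfour}$ is never a value nor a variable, so the last rule of $\tderiv$ is $\ruleAp$ or $\ruleES$; I apply the induction hypothesis to the premise of that rule typing the sub-term $\weakctx'\langle\tmfour\rangle$ carrying the hole — which performs the same, shorter open step — and reassemble with the same rule and the unchanged sibling sub-derivation. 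The prescribed (in)equalities on $\size{\cdot}$ and $\sizem{\cdot}$ then follow because the number of reintroduced $\ruleAp$/$\ruleES$ occurrences is unchanged. The real content lies in the two base cases ($\weakctx = \ctxhole$), for which I use a \emph{peeling} observation, proved by a straightforward induction on $\subctx$: a derivation typing a term of the form $\subctxp{\cdot}$ decomposes into a chain of $\ruleES$ rules — one per entry of $\subctx$, each carrying a side sub-derivation — sitting on top of a derivation of the plugged term with the same right-hand type, with both sizes and the type context behaving additively.

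\textbf{Multiplicative root-step:} $\tm = \subctxp{\la\var\tmtwo}\tmthree \rtom \subctxp{\tmtwo\esub{\var}{\tmthree}} = \tm'$. Here $\tderiv$ ends with $\ruleAp$ from a sub-derivation $\tderiv_L$ typing $\subctxp{\la\var\tmtwo}$ with a \emph{singleton} multi type $\mset{\larrow{\mtypethree}{\mtype}}$ and a sub-derivation $\tderiv_R$ typing $\tmthree$ with $\mtypethree$. By peeling, $\tderiv_L$ is an $\ruleES$-chain over a derivation of $\la\var\tmtwo$ with $\mset{\larrow{\mtypethree}{\mtype}}$, which — being a singleton typing of a value — is a $\ruleMany$ with a single premise, itself a $\ruleFun$ whose premise $\tderiv_\tmtwo$ types $\tmtwo$ with $\mtype$ in a context extended by $\var\hastype\mtypethree$. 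I then form $\tderiv'$ by feeding $\tderiv_\tmtwo$ and $\tderiv_R$ into an $\ruleES$ for $\tmtwo\esub{\var}{\tmthree}$ and re-wrapping the same $\ruleES$-chain (with its side sub-derivations) for $\subctx$; the type context is preserved. Going from $\tderiv$ to $\tderiv'$ removes exactly one $\ruleAp$ and one $\ruleFun$ and adds one $\ruleES$, whence $\size{\tderiv'} = \size{\tderiv} - 1$; since $\ruleAp$ and $\ruleFun$ count for $\sizem{\cdot}$ but $\ruleES$ does not, $\sizem{\tderiv'} = \sizem{\tderiv} - 2$.

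\textbf{Exponential root-step:} $\tm = \tmtwo\esub{\var}{\subctxp{\val}} \rtoe \subctxp{\tmtwo\isub{\var}{\val}} = \tm'$. Here $\tderiv$ ends with $\ruleES$ from $\tderiv_L$ typing $\tmtwo$ with $\mtype$ in context $\typctx_L, \var\hastype\mtypetwo$ and $\tderiv_R$ typing $\subctxp{\val}$ with $\mtypetwo$. Peeling $\tderiv_R$ isolates a derivation $\tderiv_\val$ typing $\val$ with $\mtypetwo$ together with the $\ruleES$-chain for $\subctx$. I apply the Substitution Lemma (\Cref{l:substitution}) to $\tderiv_L$ and $\tderiv_\val$, obtaining a derivation of $\tmtwo\isub{\var}{\val}$ with $\mtype$ whose multiplicative size is exactly $\sizem{\tderiv_L}+\sizem{\tderiv_\val}$ and whose general size is at most $\size{\tderiv_L}+\size{\tderiv_\val}$, and I re-wrap the $\ruleES$-chain for $\subctx$; again the type context is preserved. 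The only rule certainly lost is the top $\ruleES$ (not counted by $\sizem{\cdot}$), possibly together with some rules absorbed inside the copies of $\val$ handled by substitution, so $\sizem{\tderiv'} = \sizem{\tderiv}$ and $\size{\tderiv'} < \size{\tderiv}$, the strict inequality due to the vanishing top $\ruleES$.

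\textbf{Main obstacle.} The only genuinely delicate ingredient is coping with the substitution contexts $\subctx$ that wrap both redex patterns — the action ``at a distance'': one must state and prove the peeling observation with exact control of $\size{\cdot}$, $\sizem{\cdot}$ and the type-context decomposition, after which the two base cases reduce to the Substitution Lemma and elementary arithmetic, and the inductive cases are pure bookkeeping.
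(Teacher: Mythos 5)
Your proposal is correct and follows essentially the same route as the paper: induction on the open evaluation context, with the two root cases handled by explicitly decomposing the derivation along the $\mathsf{ES}$-chain induced by the substitution context $\subctx$ (your ``peeling'', which the paper performs inline rather than as a separate lemma) and, for the exponential step, invoking the Substitution Lemma with its exact $\sizem{\cdot}$ and approximate $\size{\cdot}$ accounting. The size bookkeeping in both base cases and the reassembly in the inductive cases match the paper's argument.
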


\begin{proof}
	By induction on the open evaluation context $\weakctx$ such that $\tm = \weakctxp{\tmtwo} \tow \weakctxp{\tmtwo'} = \tm'$ with $\tmtwo \rtom \tmtwo'$ or $\tmtwo' \rtoe \tmtwo'$. 
	Cases for $\weakctx$:
	\begin{itemize}
		\item \emph{Hole}, \ie $\weakctx = \ctxhole$.
		Then there are two sub-cases:
		\begin{enumerate}
			\item \emph{Multiplicative}, \ie $\tm = \subctxp{\la\var\tmtwo}\tmthree \rtom  \subctxp{\tmtwo \esub{\var}{\tmthree}} = \tm'$.
			Then $\tderiv$ has necessarily the form:
			\begin{equation*}
			\tderiv = 
			\begin{prooftree}[separation = 1em]
			\hypo{}
			\ellipsis{$\tderivtwo$}{\typctx', \var \hastype \mtypetwo \vdash \tmtwo \hastype \mtype}
			\infer1[\footnotesize$\ruleFun$]{\typctx' \vdash \la\var\tmtwo \hastype \larrow{\mtypetwo}{\mtype}}
			\infer1[\footnotesize$\ruleManyVal$]{\typctx' \vdash \la\var\tmtwo \hastype \mset{\larrow{\mtypetwo}{\mtype}}}
			\hypo{}
			\ellipsis{$\tderiv_1$}{\quad}
			\infer2[\footnotesize$\ruleES$]{}
			\ellipsis{}{\quad}
			\hypo{}
			\ellipsis{$\tderiv_n$}{\quad}
			\infer2[\footnotesize$\ruleES$]{\typctx \vdash \subctxp{\la\var\tmtwo} \hastype \mset{\larrow{\mtypetwo}{\mtype}}}
			\hypo{}
			\ellipsis{$\tderivthree$}{\typctxtwo\vdash\tmthree \hastype \mtypetwo}
			\infer2[\footnotesize$\ruleApp$]{\typctx \uplus \typctxtwo \vdash \subctxp{\la\var\tmtwo}\tmthree \hastype \mtype}
			\end{prooftree}
			\end{equation*}
			
			with $\sizem{\tderiv} = 2 + \sizem{\tderivtwo} + \sizem{\tderivthree} + \sum_{i=1}^{n} \sizem{\tderiv_{i}}$ and $\size{\tderiv} = 2 + n + \size{\tderivtwo} + \size{\tderivthree} + \sum_{i=1}^{n} \size{\tderiv_{i}}$.
			We can then build $\tderiv'$ as follows:
			\begin{equation*}
			\tderiv' = 
			\begin{prooftree}
			\hypo{}
			\ellipsis{$\tderivtwo$}{\tyjp{}{\tmtwo}{\typctx' ; \var \hastype \mtypetwo}{\mtype}}
			\hypo{}
			\ellipsis{$\tderivthree$}{\tyjp{}{\tmthree}{\typctxtwo}{\mtypetwo}}
			\infer2[\footnotesize$\ruleES$]{\typctx'\uplus\typctxtwo \vdash\tmtwo\esub\var\tmthree \hastype \mtype}
			\hypo{}
			\ellipsis{$\tderiv_1$}{\quad}
			\infer2[\footnotesize$\ruleES$]{}
			\ellipsis{}{\quad}
			\hypo{}
			\ellipsis{$\tderiv_n$}{\quad}
			\infer2[\footnotesize$\Es$]{\typctx\uplus\typctxtwo \vdash \subctxp{\tmtwo\esub\var\tmthree} \hastype \mtype}
			\end{prooftree}
			\end{equation*}
			where $\sizem{\tderiv'} = \sizem{\tderivtwo} + \sizem{\tderivthree} + \sum_{i=1}^{n} \sizem{\tderiv_{i}} = \sizem{\tderiv} - 2$ and $\size{\tderiv'} = 1+ n + \size{\tderivtwo} + \size{\tderivthree} + \sum_{i=1}^{n} \size{\tderiv_{i}} = \size{\tderiv} - 1$.
			
			\item \emph{Exponential}, \ie $\tm = \tmtwo\esub\var{\subctxp{\val}} \rtoe \subctxp{\tmtwo \isub{\var}{\val}} = \tmp$.
			Then the derivation $\tderiv$ has necessarily the form:
			\begin{equation*}
			\tderiv = 
			\begin{prooftree}
			\hypo{}
			\ellipsis{$\tderivtwo$}{\tyjp{}{\tmtwo}{\typctxtwo, \var \hastype \mtypetwo}{\mtype}}
			\hypo{}
			\ellipsis{$\tderivthree$}{\tyjp{}{\val}{\typctxthree'}{\mtypetwo}}
			\hypo{}
			\ellipsis{$\tderiv_1$}{\quad}
			\infer2[\footnotesize$\Es$]{}
			\ellipsis{}{\quad}
			\hypo{}
			\ellipsis{$\tderiv_n$}{\quad}
			\infer2[\footnotesize$\Es$]{\typctxthree \vdash \subctxp{\val} \hastype \mtypetwo}
			\infer2[\footnotesize$\Es$]{\typctxtwo\uplus\typctxthree \vdash\tmtwo\esub\var{\subctxp{\val}}\hastype \mtype}
			\end{prooftree}
			\end{equation*}
			where $\typctx = \typctxtwo \uplus \typctxthree$, $\sizem{\tderiv} = \sizem{\tderivtwo} + \sizem{\tderivthree} + \sum_{i=1}^{n} \sizem{\tderiv_{i}}$ and $\size{\tderiv} = 1 + n + \size{\tderivtwo} + \size{\tderivthree} + \sum_{i=1}^{n} \size{\tderiv_{i}}$.
			By the substitution lemma (\reflemma{substitution}), there is a derivation $\namedtyjp{\tderiv''}{}{\tmtwo\isub{\var}{\val}}{\typctxtwo \mplus \typctxthree'}{\mtype}$
			such that $\sizem{\tderiv''} = \sizem{\tderivtwo} + \sizem{\tderivthree}$ and $\size{\tderiv''} \leq \size{\tderivtwo} + \size{\tderivthree}$.
			We can then build the following derivation $\tderiv'$:
			\begin{equation*}
			\tderiv' = 
			\begin{prooftree}
			\hypo{}
			\ellipsis{$\tderiv''$}{\typctxtwo \mplus\typctxthree' \vdash \tmtwo\isub\var \val \hastype \mtype}
			\hypo{}
			\ellipsis{$\tderiv_1$}{\quad}
			\infer2[\footnotesize{$\Es$}]{}
			\ellipsis{}{}
			\hypo{}
			\ellipsis{$\tderiv_n$}{\quad}
			\infer2[\footnotesize$\Es$]{\typctxtwo\mplus\typctxthree \vdash \subctxp{\tmtwo\isub\var \val} \hastype \mtype}
			\end{prooftree}
			\end{equation*}
			where $\sizem{\tderiv'} = \sizem{\tderiv''} + \sum_{i=1}^{n} \sizem{\tderiv_{i}} = \sizem{\tderivtwo} + \sizem{\tderivthree} + \sum_{i=1}^{n} \sizem{\tderiv_{i}} = \sizem{\tderiv}$ and $\size{\tderiv'} = n + \size{\tderiv''} + \sum_{i=1}^{n} \size{\tderiv_{i}} \leq n + \size{\tderivtwo} + \size{\tderivthree} + \sum_{i=1}^{n} \size{\tderiv_{i}} < 1 + n + \size{\tderivtwo} + \size{\tderivthree} + \sum_{i=1}^{n} \size{\tderiv_{i}} = \size{\tderiv}$ ($\tderiv'$ contains at least one rule $\Es$ less than $\tderiv$).
		\end{enumerate}
		
		\item \emph{Application left}, \ie $\weakctx = \weakctxtwo\tmthree$.
		Then, $\tm = \weakctxp{\tmtwo} = \weakctxtwop{\tmtwo} \tmthree \rootRew{a} \weakctxtwop{\tmtwo'} \tmthree = \weakctxp{\tmtwo'} = \tm'$ with $\tmtwo \rootRew{a} \tmtwo'$ and $a \in \{\msym, \esym\}$.
		The derivation $\tderiv$ is necessarily
		\begin{equation*}
		\tderiv = 
		\begin{prooftree}
		\hypo{}
		\ellipsis{$\tderivtwo$}{\tyjp{}{\weakctxtwop{\tmtwo}}{\typctxtwo}{\mult{\ty{\mtypetwo}{\mtype}}}}
		\hypo{}
		\ellipsis{$\tderivthree$}{\tyjp{}{\tmthree}{\typctxthree}{\mtypetwo}}
		\infer2[\footnotesize$\ruleAp$]{\tyjp{}{\weakctxtwop{\tmtwo} \tmthree}{\typctxtwo \mplus \typctxthree}{\mtype}}
		\end{prooftree}
		\end{equation*}
		where $\typctx = \typctxtwo \uplus \typctxthree$, $\sizem{\tderiv} = 1 + \sizem{\tderivtwo} + \sizem{\tderivthree}$ and $\size{\tderiv} = 1 + \size{\tderivtwo} + \size{\tderivthree}$.
		By \ih, there is a derivation $\namedtyjp{\tderivtwo'}{}{\weakctxtwop{\tmtwo'}}{\typctxtwo}{\mult{\ty{\mtypetwo}{\mtype}}}$ with: 
		\begin{enumerate}
			\item $\sizem{\tderivtwo'} = \sizem{\tderivtwo} - 2$ and $\size{\tderivtwo'} = \size{\tderivtwo} - 1$ if $\tmtwo \rtom \tmtwo'$; 
			\item $\sizem{\tderivtwo'} = \sizem{\tderivtwo}$ and $\size{\tderivtwo'} < \size{\tderivtwo}$ if $\tmtwo \rtoe \tmtwo'$.
		\end{enumerate}
		We can then build the derivation 
		\begin{equation*}
		\tderiv' = 
		\begin{prooftree}
		\hypo{}
		\ellipsis{$\tderivtwo'$}{\tyjp{}{\weakctxtwop{\tmtwo'}}{\typctxtwo}{\mult{\ty{\mtypetwo}{\mtype}}}}
		\hypo{}
		\ellipsis{$\tderivthree$}{\tyjp{}{\tmthree}{\typctxthree}{\mtypetwo}}
		\infer2[\footnotesize$\ruleAp$]{\tyjp{}{\weakctxtwop{\tmtwo'} \tmthree}{\typctxtwo \mplus \typctxthree}{\mtype}}
		\end{prooftree}
		\end{equation*}
		noting that
		\begin{enumerate}
			\item If $\tmtwo \rtom \tmtwo'$ then $\sizem{\tderiv'} = 1 + \sizem{\tderivtwo'} + \sizem{\tderivthree} = 1 + (\sizem{\tderivtwo} - 2) + \sizem{\tderivthree} = \sizem{\tderiv} - 2$ and $\size{\tderiv'} = 1 + \size{\tderivtwo'} + \size{\tderivthree} = 1 + (\size{\tderivtwo} - 1) + \size{\tderivthree} = \size{\tderiv} - 1$; 
			\item If $\tmtwo \rtoe \tmtwo'$ then $\sizem{\tderiv'} = 1 + \sizem{\tderivtwo'} + \sizem{\tderivthree} = 1 + \sizem{\tderivtwo} + \sizem{\tderivthree} = \sizem{\tderiv}$ and $\size{\tderiv'} = 1 + \size{\tderivtwo'} + \size{\tderivthree} < 1 + \size{\tderivtwo} + \size{\tderivthree} = \size{\tderiv}$.
		\end{enumerate}
		
		\item \emph{Application right}, \ie $\weakctx = \tmthree \weakctxtwo$.
		Analogous to the previous case.
		
		\item \emph{Explicit substitution left}, \ie $\weakctx = \weakctxtwo\esub{\var}{\tmthree}$. 
		Then, $\tm = \weakctxp{\tmtwo} = \weakctxtwop{\tmtwo} \esub{\var}{\tmthree} \rootRew{a} \weakctxtwop{\tmtwo'}\esub{\var}{\tmthree} = \weakctxp{\tmtwo'} = \tm'$ with $\tmtwo \rootRew{a} \tmtwo'$ and $a \in \{\msym, \esym\}$.
		The derivation $\tderiv$ is necessarily
		\begin{equation*}
		\tderiv = 
		\begin{prooftree}
		\hypo{}
		\ellipsis{$\tderivtwo$}{\tyjp{}{\weakctxtwop{\tmtwo}}{\typctxtwo ; \var \hastype \mtypetwo}{\mtype}}
		\hypo{}
		\ellipsis{$\tderivthree$}{\tyjp{}{\tmthree}{\typctxthree}{\mtypetwo}}
		\infer2[\footnotesize$\Es$]{\tyjp{}{\weakctxtwop{\tmtwo} \esub{\var}{\tmthree}}{\typctxtwo \mplus \typctxthree}{\mtype}}
		\end{prooftree}
		\end{equation*}
		where $\typctx = \typctxtwo \uplus \typctxthree$, $\sizem{\tderiv} = \sizem{\tderivtwo} + \sizem{\tderivthree}$ and $\size{\tderiv} = 1 + \size{\tderivtwo} + \size{\tderivthree}$.
		By \ih, there is a derivation $\namedtyjp{\tderivtwo'}{}{\weakctxtwop{\tmtwo'}}{\typctxtwo, \var \hastype \mtypetwo}{\mtype}$ with: 
		\begin{enumerate}
			\item $\sizem{\tderivtwo'} = \sizem{\tderivtwo} - 2$ and $\size{\tderivtwo'} = \size{\tderivtwo} - 1$ if $\tmtwo \rtom \tmtwo'$; 
			\item $\sizem{\tderivtwo'} = \sizem{\tderivtwo}$ and $\size{\tderivtwo'} < \size{\tderivtwo}$ if $\tmtwo \rtoe \tmtwo'$.
		\end{enumerate}
		We can then build the derivation 
		\begin{equation*}
		\tderiv' = 
		\begin{prooftree}
		\hypo{}
		\ellipsis{$\tderivtwo'$}{\tyjp{}{\weakctxtwop{\tmtwo'}}{\typctxtwo ; \var \hastype \mtypetwo}{\mtype}}
		\hypo{}
		\ellipsis{$\tderivthree$}{\tyjp{}{\tmthree}{\typctxthree}{\mtypetwo}}
		\infer2[\footnotesize$\Es$]{\typctxtwo \uplus \typctxthree \vdash \weakctxtwop{\tmtwo'} \esub{\var}{\tmthree} \hastype \mtype}
		\end{prooftree}
		\end{equation*}
		noting that 
		\begin{enumerate}
			\item If $\tmtwo \rtom \tmtwo'$ then $\sizem{\tderiv'} = \sizem{\tderivtwo'} + \sizem{\tderivthree} = (\sizem{\tderivtwo} - 2) + \sizem{\tderivthree} = \sizem{\tderiv} - 2$ and $\size{\tderiv'} = \size{\tderivtwo'} + \size{\tderivthree} = (\size{\tderivtwo} - 1) + \size{\tderivthree} = \size{\tderiv} - 1$; 
			\item If $\tmtwo \rtoe \tmtwo'$ then $\sizem{\tderiv'} = \sizem{\tderivtwo'} + \sizem{\tderivthree} = \sizem{\tderivtwo} + \sizem{\tderivthree} = \sizem{\tderiv}$ and $\size{\tderiv'} = \size{\tderivtwo'} + \size{\tderivthree} < \size{\tderivtwo} + \size{\tderivthree} = \size{\tderiv}$.
		\end{enumerate}
		
		\item \emph{Explicit substitution right}, \ie $\weakctx = \tmthree \esub{\var}{\weakctxtwo}$. 
		Analogous to the previous case.
		\qedhere
	\end{itemize}
\end{proof}

\begin{theorem}[Open correctness]
	\label{thmappendix:open-correctness}
	\NoteState{thm:open-correctness}
	Let $\derive{\tderiv}{\tm}$ be a derivation.
	Then there is an $\osym$-normalizing evaluation $\deriv \colon \tm \tovsubo^* \tmtwo$ with $2\sizem{\deriv} + \sizeo{\tmtwo} \leq \sizem{\tderiv}$.
And if $\tderiv$ is tight, then $2\sizem{\deriv} + \sizeo{\fire} = \sizem{\tderiv}$.
\end{theorem}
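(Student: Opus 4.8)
The plan is to prove the statement by induction on the \emph{general} size $\size{\tderiv}$ of the typing derivation, using as the two engines open quantitative subject reduction (\Cref{prop:weak-subject-reduction}) and the size-of-fireballs lemma (\Cref{l:size-fireballs}). The point of choosing $\size{\cdot}$ as the induction measure is that \emph{both} clauses of \Cref{prop:weak-subject-reduction} make it strictly decrease (by $1$ on $\msym$-steps, by an unspecified positive amount on $\esym$-steps), so $\size{\cdot}$ is a well-founded termination order; meanwhile the multiplicative size $\sizem{\cdot}$ is controlled \emph{exactly}, dropping by $2$ on $\msym$-steps and being unchanged on $\esym$-steps, which is what lets us accumulate the quantitative bound.

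\textbf{Base case.} If $\tm$ is $\osym$-normal, then by \Cref{prop:properties-open-reduction} it is a fireball $\fire = \tm$; take the empty evaluation $\deriv \colon \tm \tovsubo^* \tm$, so $\sizem{\deriv} = 0$. Then \Cref{l:size-fireballs} gives $\sizem{\tderiv} \geq \sizeo{\fire}$, i.e. $2\sizem{\deriv} + \sizeo{\tm} \leq \sizem{\tderiv}$; and if $\tderiv$ is tight then $\mtype$ is ground, hence ground inert (every ground multi type is inert), so the equality part of \Cref{l:size-fireballs} (with the "$\mtype$ is ground inert" disjunct) yields $\sizem{\tderiv} = \sizeo{\fire}$, that is $2\sizem{\deriv} + \sizeo{\fire} = \sizem{\tderiv}$.

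\textbf{Inductive step.} If $\tm$ is not $\osym$-normal, pick any step $\tm \tovsubo \tm'$. By \Cref{prop:weak-subject-reduction} there is a derivation $\namedtyjp{\tderiv'}{}{\tm'}{\typctx}{\mtype}$ with $\size{\tderiv'} < \size{\tderiv}$; since its final judgment is the same as that of $\tderiv$, and since tightness depends only on the final judgment, $\tderiv'$ is tight whenever $\tderiv$ is. Apply the i.h. to $\tderiv'$: there is an $\osym$-normalizing $\deriv' \colon \tm' \tovsubo^* \tmtwo$ with $2\sizem{\deriv'} + \sizeo{\tmtwo} \leq \sizem{\tderiv'}$ (resp. $=$, if $\tderiv$ is tight). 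Prepending the step gives $\deriv \colon \tm \tovsubo^* \tmtwo$. If the step is multiplicative then $\sizem{\deriv} = \sizem{\deriv'} + 1$ and $\sizem{\tderiv} = \sizem{\tderiv'} + 2$, so $2\sizem{\deriv} + \sizeo{\tmtwo} = (2\sizem{\deriv'} + \sizeo{\tmtwo}) + 2 \leq \sizem{\tderiv'} + 2 = \sizem{\tderiv}$ (resp. $=$); if the step is exponential then $\sizem{\deriv} = \sizem{\deriv'}$ and $\sizem{\tderiv} = \sizem{\tderiv'}$, so the bound is immediate from the i.h.

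\textbf{Main obstacle.} There is no real obstacle here: all the combinatorial work is already packaged in \Cref{prop:weak-subject-reduction} and \Cref{l:size-fireballs}. The two points needing a little care are (i) using the general size rather than the multiplicative size as the induction measure — since $\sizem{\cdot}$ is constant along $\esym$-steps it cannot witness termination — and (ii) checking that tightness is inherited by $\tderiv'$, which is immediate because subject reduction preserves the conclusion judgment, and then picking the "$\mtype$ ground inert" disjunct of \Cref{l:size-fireballs}. Equivalently one could first argue, from the strict decrease of $\size{\cdot}$, that $\tm$ has no infinite $\tovsubo$-sequence and hence is $\osym$-normalizing, then sum the counts of \Cref{prop:weak-subject-reduction} along a fixed normalizing evaluation ending on a fireball (\Cref{prop:properties-open-reduction}); the inductive presentation above simply fuses these two steps.
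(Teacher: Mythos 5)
Your proof is correct and follows essentially the same route as the paper's: induction on the general size $\size{\tderiv}$, with the base case discharged by the size-of-fireballs lemma and the inductive step by open quantitative subject reduction, tracking $\sizem{\cdot}$ exactly across multiplicative and exponential steps. The only (harmless) addition is your explicit remark that tightness is preserved because subject reduction keeps the final judgment unchanged, which the paper leaves implicit.
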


\begin{proof}
	Given the derivation (\resp tight derivation) $\concl{\tderiv}{\typctx}{\tm}{\mtype}$, we proceed by induction on the general size $\size{\tderiv}$ of $\tderiv$.
	
	If $\tm$ is normal for $\tovsubo$, then $\tm = \fire$ is a fireball.
	Let $\deriv$ be the empty evaluation (so $\sizem{\deriv} = 0$), thus $\sizem{\tderiv} \geq \sizeo{\fire} = \sizeo{\fire} + 2\sizem{\deriv}$ 
	(\resp $\sizem{\tderiv} = \sizeo{\fire} = \sizeo{\fire} + 2\sizem{\deriv}$) by \reflemma{size-fireballs}.
	
	Otherwise, $\tm$ is not normal for $\tomo$ and so $\tm \tovsubo \tmtwo$.
	According to open subject reduction (\Cref{prop:weak-subject-reduction}), there is a derivation $\concl{\tderivtwo}{\typctx}{\tmtwo}{\mtype}$ such that $\size{\tderivtwo} < \size{\tderiv}$  and 
	\begin{itemize}
		\item $\sizem{\tderivtwo} \leq \sizem{\tderiv} - 2$ (\resp $\sizem{\tderivtwo} = \sizem{\tderiv} - 2$) if $\tm \tomo \tmtwo$,
		\item $\sizem{\tderivtwo} = \sizem{\tderiv}$ if $\tm \toeo \tmtwo$.
	\end{itemize}
	By \ih, there exists a fireball $\fire$ and a reduction sequence $\deriv' \colon \tmtwo \tovsubo^* \fire$ with 
	$2\sizem{\deriv'} + \sizeo{\fire} \leq \sizem{\tderivtwo}$ (\resp $2\sizem{\deriv'} + \sizeo{\fire} = \sizem{\tderivtwo}$).
	Let $\deriv$ be the $\osym$-evaluation obtained by concatenating the first step $\tm \tovsubo \tmtwo$ and $\deriv'$.
	There are two cases:
	\begin{itemize}
		\item \emph{Multiplicative:} if $\tm \tomo \tmtwo$ then $\sizem{\tderiv} \geq \sizem{\tderivtwo} + 2 \geq \sizeo{\fire} 
		+ 2\sizem{\deriv'} + 2 = \sizeo{\fire} + 2\sizem{\deriv}$ (\resp $\sizem{\tderiv} = \sizem{\tderivtwo} + 2 = \sizeo{\fire} + 
		2\sizem{\deriv'} + 2 = \sizeo{\fire} + 2\size{\deriv}$), since $\sizem{\deriv} = \sizem{\deriv'} + 1$.
		\item \emph{Exponential:} if $\tm \toeo \tmtwo$ then $\sizem{\tderiv} = \sizem{\tderivtwo} \geq \sizeo{\fire} + 2\sizem{\deriv'} = \sizeo{\fire} + 2\sizem{\deriv}$ (\resp $\sizem{\tderiv} = \sizem{\tderivtwo} = \sizeo{\fire} + 2\sizem{\deriv'} = \sizeo{\fire} + 2\sizem{\deriv}$),  since $\sizem{\deriv} = \sizem{\deriv'}$.
		\qedhere
	\end{itemize} 
\end{proof}

\paragraph*{Completeness}

\begin{proposition}[Tight typability of open normal forms]
	\label{propappendix:precise-open-typability-nf}
	\NoteState{prop:precise-open-typability-nf}
	\begin{enumerate}
		\item \emph{Inert:}\label{pappendix:precise-open-typability-nf-inert} if $\tm$ is an inert term then, for any inert multi type $\mtype$, there is an inert
		type derivation $\concl{\tderiv}{\typctx}{\tm}{\mtype}$.
		\item \emph{Fireball:}\label{pappendix:precise-open-typability-nf-fireball} if $\tm$ is a fireball then there is a tight derivation $\concl{\tderiv}{\typctx}{\tm}{\emptytype}$.		
	\end{enumerate}
\end{proposition}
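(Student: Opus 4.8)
The plan is to establish the two items \emph{simultaneously}, by induction on the structure of $\tm$: for inert terms I use the alternative grammar $\itm \grameq \var \mid \itm \fire \mid \itm \esub{\var}{\pitm}$ of \Cref{rmk:inert-alternative}, and for fireballs the grammar $\fire \grameq \val \mid \pitm \mid \fire \esub{\var}{\pitm}$. The two statements invoke each other, but always on \emph{strictly smaller} subterms, with a single exception: a fireball that is itself a proper inert term, for which item~2 falls back on item~1 at the \emph{same} term. Since item~1 at a proper inert term only triggers further calls on strictly smaller terms (its head proper inert term and the argument/substituted fireballs), this produces no cycle; formally, the mutual induction can be read as an induction on term size with item~1 placed before item~2 at equal size. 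This organisation is the only real subtlety of the proof.

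\emph{Item 1 (inert terms).} If $\itm = \var$ and $\mtype = \mset{\inltype_1, \dots, \inltype_n}$ is inert, I build $\tderiv$ from $n$ instances of $\ruleAx$ followed by $\ruleMany$, with type context $\var \hastype \mtype$, which is inert because $\mtype$ is; hence $\tderiv$ is inert. If $\itm = \itmtwo \fire$, I first observe that $\mset{\larrow{\emptytype}{\mtype}}$ is an inert multi type, since $\larrow{\emptytype}{\mtype} = \larrow{0\mset{\ground}}{\mtype}$ is an inert linear type (the $n=0$ instance of the grammar) and $\mtype$ is inert; then the \ih\ for inert terms applied to $\itmtwo$ with this type gives an inert derivation $\concl{\tderivtwo}{\typctxtwo}{\itmtwo}{\mset{\larrow{\emptytype}{\mtype}}}$, the \ih\ for fireballs applied to $\fire$ gives a tight (hence with inert type context) derivation $\concl{\tderivthree}{\typctxthree}{\fire}{\emptytype}$, and $\ruleAp$ combines them into $\concl{\tderiv}{\typctxtwo \mplus \typctxthree}{\itm}{\mtype}$, whose type context is inert by \Cref{rmk:merge-split-inert}. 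If $\itm = \itmtwo \esub{\var}{\pitm}$, the \ih\ for inert terms applied to $\itmtwo$ with conclusion type $\mtype$ gives an inert derivation $\concl{\tderivtwo}{\typctxtwo, \var \hastype \mtypetwo}{\itmtwo}{\mtype}$ in which $\mtypetwo$ is inert (being part of an inert type context); applying the \ih\ for inert terms to the proper inert term $\pitm$ with $\mtypetwo$ yields an inert $\concl{\tderivthree}{\typctxthree}{\pitm}{\mtypetwo}$, and $\ruleES$ combines them, the sum of inert contexts being inert.

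\emph{Item 2 (fireballs).} If $\fire = \val$, \reflemmap{typing-value-complete}{empty} provides a derivation of $\val \hastype \emptytype$ with empty type context, which is vacuously tight. If $\fire = \pitm$, item~1 with the inert multi type $\emptytype$ gives an inert derivation $\concl{\tderiv}{\typctx}{\pitm}{\emptytype}$; since $\emptytype$ is ground, $\tderiv$ is in fact tight. If $\fire = \firetwo \esub{\var}{\pitm}$, the \ih\ for fireballs applied to $\firetwo$ gives a tight derivation $\concl{\tderivtwo}{\typctxtwo, \var \hastype \mtypetwo}{\firetwo}{\emptytype}$ with $\mtypetwo$ inert, item~1 applied to $\pitm$ with $\mtypetwo$ gives an inert $\concl{\tderivthree}{\typctxthree}{\pitm}{\mtypetwo}$, and $\ruleES$ combines them into a derivation of $\fire \hastype \emptytype$ whose type context $\typctxtwo \mplus \typctxthree$ is inert; since $\emptytype$ is ground, the result is tight.

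Apart from the well-foundedness issue of the mutual induction, the proof is routine bookkeeping; the recurring chore is simply to keep noting that $\emptytype$ and the linear types of shape $\larrow{\emptytype}{\imtype}$ are the $n=0$ instances of the inert-type grammars, so that every type context assembled by $\ruleAp$ and $\ruleES$ remains inert (via \Cref{rmk:merge-split-inert}).
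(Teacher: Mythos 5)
Your proof is correct and follows essentially the same route as the paper's: a mutual induction on the grammars of inert terms and fireballs, typing the left subterm of an application with $\mset{\larrow{\emptytype}{\mtype}}$, reading off the (inert) type of the bound variable from the context produced by the inductive hypothesis in the \ES{} cases, and obtaining the fireball statement for proper inert terms by instantiating the inert statement at $\emptytype$. Your explicit justification of the well-foundedness of the mutual recursion is a small refinement the paper leaves implicit (it merely notes that the inert statement is stronger than the fireball one), but it does not change the argument.
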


\begin{proof}
	We prove simultaneously \Cref{p:precise-open-typability-nf-fireball,p:precise-open-typability-nf-inert} by 
		mutual induction on the definition of fireball and inert term (see \Cref{rmk:inert-alternative}).
		Note that \Cref{p:precise-open-typability-nf-inert} is stronger than \Cref{p:precise-open-typability-nf-fireball}.
		Cases for inert terms:
		\begin{itemize}
			\item \emph{Variable}, \ie $\tm = \var$, which is an inert term. 
			Let $\imtype$ be an inert multi type: hence, $\imtype = \mset{\ltype_1, \dots, \ltype_n}$ for some $n \in \nat$ and some $\ltype_1, \dots, \ltype_n$ inert linear types.
			We can then build  the derivation 
			\begin{equation*}
			\tderiv = 
			\begin{prooftree}[separation = 1em]
			\infer0[\footnotesize$\ruleAx$]{\var \hastype \mset{\ltype_1} \vdash \var \hastype \ltype_1}
			\hypo{\dots}
			\infer0[\footnotesize$\ruleAx$]{\var \hastype \mset{\ltype_n} \vdash \var \hastype \ltype_n}
			\infer3[\footnotesize$\ruleManyVar$]{\var \hastype \mset{\ltype_1, \dots, \ltype_n} \vdash \var \hastype \mset{\ltype_1, \dots, \ltype_n}}
			\end{prooftree}
			\end{equation*}
			where $\Gamma = \var \hastype \mset{\ltype_1, \dots, \ltype_n}$ is an inert type context.

			\item \emph{Inert application}, \ie $\tm = \itm \fire$ for some inert term $\itm$ and fireball $\fire$.
			Let $\imtype$ be a multi type.
			By \ih for fireballs, there is a derivation $\concl{\tderivthree}{\typctxthree}{\fire}{\emptytype}$  for some inert type context $\typctxthree$.
			By \ih for inert terms, since $\mset{\larrow{\emptytype}{\imtype}}$ is an inert multi type, there is a derivation $\concl{\tderivtwo}{\typctxtwo}{\itm}{\mset{\larrow{\emptytype}{\imtype}}}$ for some inert type context $\typctxtwo$.
			We can then build the derivation 
			\begin{equation*}
			\tderiv  =
			\begin{prooftree}
			\hypo{}
			\ellipsis{$\tderivtwo$}{\typctxtwo \vdash \itm \hastype \mset{\larrow{\emptytype}{\imtype}}}
			\hypo{}
			\ellipsis{$\tderivthree$}{\typctxthree \vdash \fire \hastype \emptytype}
			\infer2[\footnotesize$\ruleApp$]{\typctxtwo \mplus \typctxthree \vdash \itm \fire \hastype \imtype}				
			\end{prooftree}
			\end{equation*}
			where $\typctx = \typctxtwo \mplus \typctxthree$ is an inert type context, by \Cref{rmk:merge-split-inert}.
			
			\item \emph{Explicit substitution on inert}, \ie $\tm = \itm \esub{\var}{\pitm}$ for some inert term $\itm$ and proper inert term $\pitm$.
			Let $\imtype$ be an inert multi type.
			By \ih for inert terms applied to $\itm$, there is a derivation $\concl{\tderivtwo}{\typctxtwo, \var \hastype \mtypetwo}{\itm}{\imtype}$ for some inert multi type $\mtypetwo$ and inert type context $\typctxtwo$.
			By \ih for inert terms applied to $\pitm$, there is a derivation $\concl{\tderivthree}{\typctxthree}{\pitm}{\mtypetwo}$  for some inert type context $\typctxthree$.
			We can then build the derivation 
			\begin{equation*}
				\tderiv =
				\begin{prooftree}
				\hypo{}
				\ellipsis{$\tderivtwo$}{\typctxtwo, \var \hastype \mtypetwo \vdash \itm \hastype \imtype}
				\hypo{}
				\ellipsis{$\tderivthree$}{\typctxthree \vdash \pitm \hastype \mtypetwo}
				\infer2[\footnotesize$\ruleES$]{\typctxtwo \mplus \typctxthree \vdash \itm \esub{\var} {\pitm} \hastype \imtype}				
				\end{prooftree}
			\end{equation*}
			where $\typctx = \typctxtwo \mplus \typctxthree$ is an inert type context, by \Cref{rmk:merge-split-inert}.
		\end{itemize}
	
		Cases for fireballs that may not be inert terms:
		\begin{itemize}
			\item \emph{Abstraction}, \ie $\tm  = \la{\var}{\tmtwo}$.
			We can then build the derivation
			\begin{equation*}
				\tderiv = 
				\begin{prooftree}
				\infer0[\footnotesize$\ruleManyVal$]{\vdash \la{\var}{\tmtwo} \hastype \emptytype}
			\end{prooftree}
			\end{equation*}
			where the type context $\typctx$ is empty and hence inert, thus $\tderiv$ is tight.
			
			\item \emph{Explicit substitution on fireball}, \ie $\tm = \fire \esub{\var}{\pitm}$ for some fireball $\fire$ and proper inert term $\pitm$.
			By \ih for fireballs applied to $\fire$, there is a derivation $\concl{\tderivtwo}{\typctxtwo, \var \hastype \imtypetwo}{\fire}{\emptytype}$ for some inert multi type $\imtypetwo$ and inert type context $\typctxtwo$.
			By \ih for inert terms applied to $\pitm$, there is a derivation $\concl{\tderivthree}{\typctxthree}{\pitm}{\imtypetwo}$  for some inert type context $\typctxthree$.
			We can build the derivation 
			\begin{equation*}
			\tderiv = 
			\begin{prooftree}
			\hypo{}
			\ellipsis{$\tderivtwo$}{\typctxtwo, \var \hastype \imtypetwo \vdash \fire \hastype \emptytype}
			\hypo{}
			\ellipsis{$\tderivthree$}{\typctxthree \vdash \pitm \hastype \imtypetwo}
			\infer2[\footnotesize$\ruleES$]{\typctxtwo \mplus \typctxthree \vdash \fire \esub{\var} {\pitm} \hastype \emptytype}				
			\end{prooftree}
			\end{equation*}
			where $\typctx = \typctxtwo \mplus \typctxthree$ is an inert type context, by \Cref{rmk:merge-split-inert}.
			Therefore, $\tderiv$ is tight.
			\qedhere
		\end{itemize}
\end{proof}

\begin{proposition}[Open quantitative subject expansion]
	\label{propappendix:weak-subject-expansion}
	\NoteState{prop:weak-subject-expansion}
	Let $\namedtyjp{\tderiv'}{}{\tm'}{\typctx}{\mtype}$ be a derivation.
	\begin{enumerate}
		\item \emph{Multiplicative step:} if $\tm \towm \tm'$ then there is a derivation $\namedtyjp{\tderiv}{}{\tm}{\typctx}{\mtype}$ with
		$\sizem{\tderiv'} = \sizem{\tderiv} - 2$ and $\size{\tderiv'} = \size{\tderiv} - 1$; 
		\item \emph{Exponential step:} if $\tm \towe \tm'$ then there is a derivation $\namedtyjp{\tderiv}{}{\tm}{\typctx}{\mtype}$ such that
		$\sizem{\tderiv'} = \sizem{\tderiv}$ and $\size{\tderiv'} < \size{\tderiv}$.
	\end{enumerate}
\end{proposition}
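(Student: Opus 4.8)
The statement is the mirror image of \Cref{prop:weak-subject-reduction} (open quantitative subject reduction), so the proof follows the same pattern, reading the inductive analysis of the type derivation ``backwards'' and replacing the use of the Substitution lemma (\reflemma{substitution}) by the Removal lemma (\reflemma{anti-substitution}). Concretely, I would proceed by induction on the open evaluation context $\weakctx$ such that $\tm = \weakctxp{\tmtwo} \tovsubo \weakctxp{\tmtwo'} = \tm'$, with $\tmtwo \rtom \tmtwo'$ or $\tmtwo \rtoe \tmtwo'$ at the root. The inductive cases where $\weakctx$ is an application (left or right) or an explicit substitution (left or right) are routine: the last rule of $\tderiv'$ is the corresponding rule ($\ruleAp$ or $\ruleES$), one applies the \ih to the premise typing $\weakctxtwop{\tmtwo'}$ to get a premise typing $\weakctxtwop{\tmtwo}$ with the stated size change, and then rebuilds $\tderiv$ with the same final rule; the size bookkeeping is inherited verbatim from the premise, exactly as in the corresponding cases of \Cref{prop:weak-subject-reduction}.

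\textbf{Base case, multiplicative.} Here $\tm = \subctxp{\la\var\tmtwo}\tmthree \rtom \subctxp{\tmtwo\esub{\var}{\tmthree}} = \tm'$. Inspecting $\tderiv'$ from the bottom, it consists of a chain of $\ruleES$ rules corresponding to the substitution context $\subctx = \esub{\var_1}{\tm_1}\dots\esub{\var_n}{\tm_n}$, then an $\ruleES$ rule typing $\esub{\var}{\tmthree}$ whose left premise $\tderivtwo$ types $\tmtwo$ with $\mtype$ in a context $\typctx', \var\hastype\mtypetwo$ and whose right premise $\tderivthree$ types $\tmthree$ with $\mtypetwo$ in $\typctxtwo$. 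Since the variables bound by $\subctx$ are not free in $\tmthree$ (the rules work at a distance, up to $\alpha$-renaming), $\tderivthree$ can be slid down past the chain of $\ruleES$ rules; I then build $\tderiv$ by wrapping $\la\var\tmtwo$ with $\ruleFun$ followed by $\ruleManyVal$ (yielding $\mset{\larrow{\mtypetwo}{\mtype}}$), applying the result to $\tderivthree$ via $\ruleApp$, and re-appending the chain of $\ruleES$ rules of $\subctx$. This adds exactly one $\ruleApp$ and one $\ruleFun$ and removes one $\ruleES$ from the count of rules that $\size{\cdot}$ counts --- net $+1$ on $\size{\cdot}$ --- and adds one $\ruleApp$ and one $\ruleFun$ to $\sizem{\cdot}$ --- net $+2$ --- so $\size{\tderiv'} = \size{\tderiv}-1$ and $\sizem{\tderiv'} = \sizem{\tderiv}-2$, as required.

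\textbf{Base case, exponential.} Here $\tm = \tmtwo\esub\var{\subctxp{\val}} \rtoe \subctxp{\tmtwo\isub{\var}{\val}} = \tm'$. From the bottom, $\tderiv'$ is a chain of $\ruleES$ rules for $\subctx$ wrapping a derivation $\tderiv''$ of $\tmtwo\isub{\var}{\val}$ in some context $\typctxtwo \mplus \typctxthree'$. Applying the Removal lemma (\reflemma{anti-substitution}) to $\tderiv''$ yields derivations $\namedtyjp{\tderiv_\tm}{}{\tmtwo}{\typctxtwo, \var\hastype\mtypetwo}{\mtype}$ and $\namedtyjp{\tderiv_\val}{}{\val}{\typctxthree'}{\mtypetwo}$ with $\sizem{\tderiv''} = \sizem{\tderiv_\tm} + \sizem{\tderiv_\val}$ and $\size{\tderiv''} \leq \size{\tderiv_\tm} + \size{\tderiv_\val}$. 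I then rebuild $\tderiv$ by placing $\tderiv_\val$ inside the substitution context $\subctx$ (using the very chain of $\ruleES$ rules already present in $\tderiv'$), and then applying an $\ruleES$ rule for $\esub{\var}{\cdot}$ with left premise $\tderiv_\tm$. For $\sizem{\cdot}$, the extra $\ruleES$ rules do not count, so $\sizem{\tderiv} = \sizem{\tderiv_\tm} + \sizem{\tderiv_\val} + \sum_i\sizem{\tderiv_i} = \sizem{\tderiv''} + \sum_i\sizem{\tderiv_i} = \sizem{\tderiv'}$. For $\size{\cdot}$, we gain one extra $\ruleES$ rule, so $\size{\tderiv} = 1 + \size{\tderiv_\tm} + \size{\tderiv_\val} + \sum_i\size{\tderiv_i} \geq 1 + \size{\tderiv''} + \sum_i\size{\tderiv_i} > \size{\tderiv''} + \sum_i\size{\tderiv_i} = \size{\tderiv'}$, giving the strict inequality.

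\textbf{Main obstacle.} The only delicate point is the exponential base case: one must correctly match the type context split produced by Removal with the placement of $\tderiv_\val$ inside the $\ruleES$-chain of $\subctx$, and confirm that the inequality $\size{\tderiv''} \leq \size{\tderiv_\tm} + \size{\tderiv_\val}$ coming from Removal is oriented the right way for the strict inequality $\size{\tderiv'} < \size{\tderiv}$ to go through (it is, precisely because we add a genuine $\ruleES$ rule). Everything else is a routine transcription, dual to the already-displayed proof of \Cref{prop:weak-subject-reduction}.
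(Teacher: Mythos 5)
Your proposal is correct and follows essentially the same route as the paper's proof: induction on the open evaluation context, with the multiplicative root case handled by restructuring the derivation around the $\subctx$-chain of $\ruleES$ rules (moving the derivation of $\tmthree$ below that chain) and the exponential root case handled by the Removal lemma, with identical size bookkeeping. The only nitpick is presentational: in the multiplicative base case the $\subctx$-chain must be re-attached \emph{between} the $\ruleManyVal$ rule and the final $\ruleApp$ (so as to type $\subctxp{\la\var\tmtwo}\tmthree$ rather than $\subctxp{(\la\var\tmtwo)\tmthree}$), which your ``slide $\tderivthree$ down'' remark already indicates you intend.
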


\begin{proof}
	By induction on 
	the evaluation context $\weakctx$ in the step $\tm = \weakctxp{\tmtwo} \tow \weakctxp{\tmtwo'} = \tm'$ with $\tmtwo \rtom \tmtwo'$ or $\tmtwo \rtoe \tmtwo'$. 
	Cases for $\weakctx$:
	\begin{itemize}
		\item \emph{Hole}, \ie $\weakctx = \ctxhole$.
		Then there are two sub-cases:
		\begin{enumerate}
			\item \emph{Multiplicative}, \ie $\tm = \subctxp{\la\var\tmtwo}\tmthree \rtom  \subctxp{\tmtwo \esub{\var}{\tmthree}} = \tm'$.
			Then $\tderiv'$ has necessarily the form:
			\begin{equation*}
			\tderiv' = 
			\begin{prooftree}
			\hypo{}
			\ellipsis{$\tderivtwo$}{\tyjp{}{\tmtwo}{\typctx' ; \var \hastype \mtypetwo}{\mtype}}
			\hypo{}
			\ellipsis{$\tderivthree$}{\tyjp{}{\tmthree}{\typctxtwo}{\mtypetwo}}
			\infer2[\footnotesize$\ruleES$]{\typctx'\uplus\typctxtwo \vdash\tmtwo\esub\var\tmthree \hastype \mtype}
			\hypo{}
			\ellipsis{$\tderiv_1$}{\quad}
			\infer2[\footnotesize$\ruleES$]{}
			\ellipsis{}{\quad}
			\hypo{}
			\ellipsis{$\tderiv_n$}{\quad}
			\infer2[\footnotesize$\Es$]{\typctx\uplus\typctxtwo \vdash \subctxp{\tmtwo\esub\var\tmthree} \hastype \mtype}
			\end{prooftree}
			\end{equation*}
			with $\sizem{\tderiv'} = \sizem{\tderivtwo} + \sizem{\tderivthree} + \sum_{i=1}^{n} \sizem{\tderiv_{i}}$ and $\size{\tderiv' } = 1 + n + \size{\tderivtwo} + \size{\tderivthree} + \sum_{i=1}^{n} \size{\tderiv_{i}}$.
			
			We can build $\tderiv$ as follows:
			\begin{equation*}
			\tderiv = 
			\begin{prooftree}[separation = 1em]
			\hypo{}
			\ellipsis{$\tderivtwo$}{\typctx', \var \hastype \mtypetwo \vdash \tmtwo \hastype \mtype}
			\infer1[\footnotesize$\ruleFun$]{\typctx' \vdash \la\var\tmtwo \hastype \larrow{\mtypetwo}{\mtype}}
			\infer1[\footnotesize$\ruleManyVal$]{\typctx' \vdash \la\var\tmtwo \hastype \mset{\larrow{\mtypetwo}{\mtype}}}
			\hypo{}
			\ellipsis{$\tderiv_1$}{\quad}
			\infer2[\footnotesize$\ruleES$]{}
			\ellipsis{}{\quad}
			\hypo{}
			\ellipsis{$\tderiv_n$}{\quad}
			\infer2[\footnotesize$\ruleES$]{\typctx \vdash \subctxp{\la\var\tmtwo} \hastype \mset{\larrow{\mtypetwo}{\mtype}}}
			\hypo{}
			\ellipsis{$\tderivthree$}{\typctxtwo\vdash\tmthree \hastype \mtypetwo}
			\infer2[\footnotesize$\ruleApp$]{\typctx \uplus \typctxtwo \vdash \subctxp{\la\var\tmtwo}\tmthree \hastype \mtype}
			\end{prooftree}
			\end{equation*}
			where $\sizem{\tderiv} = 2 + \sizem{\tderivtwo} + \sizem{\tderivthree} + \sum_{i=1}^{n} \sizem{\tderiv_{i}} = \sizem{\tderiv'} + 2$ and $\size{\tderiv} = 2 + n + \size{\tderivtwo} + \size{\tderivthree} + \sum_{i=1}^{n} \size{\tderiv_{i}} = \size{\tderiv'} + 1$.
			
			\item \emph{Exponential}, \ie $\tm = \tmtwo\esub\var{\subctxp{\val}} \rtoe \subctxp{\tmtwo \isub{\var}{\val}} = \tmp$.
			Then the derivation $\tderiv$ has necessarily the form:
			\begin{equation*}
			\tderiv' = 
			\begin{prooftree}
			\hypo{}
			\ellipsis{$\tderiv''$}{\typctxtwo \mplus\typctxthree' \vdash \tmtwo\isub\var \val \hastype \mtype}
			\hypo{}
			\ellipsis{$\tderiv_1$}{\quad}
			\infer2[\footnotesize{$\Es$}]{}
			\ellipsis{}{}
			\hypo{}
			\ellipsis{$\tderiv_n$}{\quad}
			\infer2[\footnotesize$\Es$]{\typctxtwo\uplus\typctxthree \vdash \subctxp{\tmtwo\isub\var \val} \hastype \mtype}
			\end{prooftree}
			\end{equation*}
			where $\typctx = \typctxtwo \uplus \typctxthree$, $\sizem{\tderiv'} = \sizem{\tderiv''} + \sum_{i=1}^{n} \sizem{\tderiv_{i}}$ and $\size{\tderiv'} =  n + \size{\tderiv''} + \sum_{i=1}^{n} \size{\tderiv_{i}}$.
			By the removal lemma (\reflemma{anti-substitution}), there are  derivations $\namedtyjp{\tderivtwo}{}{\tmtwo}{\typctxtwo, \var\hastype\mtypetwo}{\mtype}$ and $\namedtyjp{\tderivthree}{}{\val}{\typctxthree'}{\mtypetwo}$
			such that $\sizem{\tderiv''} = \sizem{\tderivtwo} + \sizem{\tderivthree}$ and $\size{\tderiv''} \leq \size{\tderivtwo} + \size{\tderivthree}$.
			We can then build the following derivation $\tderiv$:
			\begin{equation*}
			\tderiv = 
			\begin{prooftree}
			\hypo{}
			\ellipsis{$\tderivtwo$}{\tyjp{}{\tmtwo}{\typctxtwo, \var \hastype \mtypetwo}{\mtype}}
			\hypo{}
			\ellipsis{$\tderivthree$}{\tyjp{}{\val}{\typctxthree'}{\mtypetwo}}
			\hypo{}
			\ellipsis{$\tderiv_1$}{\quad}
			\infer2[\footnotesize$\Es$]{}
			\ellipsis{}{\quad}
			\hypo{}
			\ellipsis{$\tderiv_n$}{\quad}
			\infer2[\footnotesize$\Es$]{\typctxthree \vdash \subctxp{\val} \hastype \mtypetwo}
			\infer2[\footnotesize$\Es$]{\typctxtwo \mplus \typctxthree \vdash\tmtwo\esub\var{\subctxp{\val}}\hastype \mtype}
			\end{prooftree}
			\end{equation*}
			where $\sizem{\tderiv} = \sizem{\tderivtwo} + \sizem{\tderivthree} + \sum_{i=1}^{n} \sizem{\tderiv_{i}} = \sizem{\tderiv''} + \sum_{i=1}^{n} \sizem{\tderiv_{i}} = \sizem{\tderiv'}$ and $\size{\tderiv} = 1 + n + \size{\tderivtwo} + \size{\tderivthree} + \sum_{i=1}^{n} \size{\tderiv_{i}} > n + \size{\tderiv''} +  \sum_{i=1}^{n} \size{\tderiv_{i}} = \size{\tderiv'}$.
		\end{enumerate}
		
		\item \emph{Application left}, \ie $\weakctx = \weakctxtwo \tmthree$.
		Then, $\tm = \weakctxp{\tmtwo} = \weakctxtwop{\tmtwo} \tmthree \rootRew{a} \weakctxtwop{\tmtwo'} \tmthree = \weakctxp{\tmtwo'} = \tm'$ with $\tmtwo \rootRew{a} \tmtwo'$ and $a \in \{\msym, \esym\}$.
		The derivation $\tderiv'$ is necessarily
		\begin{equation*}
		\tderiv' = 
		\begin{prooftree}
		\hypo{}
		\ellipsis{$\tderivtwo'$}{\tyjp{}{\weakctxtwop{\tmtwo'}}{\typctxtwo}{\mult{\ty{\mtypetwo}{\mtype}}}}
		\hypo{}
		\ellipsis{$\tderivthree$}{\tyjp{}{\tmthree}{\typctxthree}{\mtypetwo}}
		\infer2[\footnotesize$\ruleAp$]{\tyjp{}{\weakctxtwop{\tmtwo'} \tmthree}{\typctxtwo \mplus \typctxthree}{\mtype}}
		\end{prooftree}
		\end{equation*}
		where $\typctx = \typctxtwo \uplus \typctxthree$, $\sizem{\tderiv'} = 1 +  \sizem{\tderivtwo'} + \sizem{\tderivthree}$ and $\size{\tderiv'} = 1 +  \size{\tderivtwo'} + \size{\tderivthree}$.
		By \ih, there is a derivation $\namedtyjp{\tderivtwo}{}{\weakctxtwop{\tmtwo}}{\typctxtwo}{\mult{\ty{\mtypetwo}{\mtype}}}$ with:
		\begin{enumerate}
			\item $\sizem{\tderivtwo} = \sizem{\tderivtwo'} + 2$ and $\size{\tderivtwo} = \size{\tderivtwo'} + 1$ if $\tmtwo \rtom \tmtwo'$; 
			\item $\sizem{\tderivtwo} = \sizem{\tderivtwo'}$ and $\size{\tderivtwo} < \size{\tderivtwo'}$ if $\tmtwo \rtoe \tmtwo'$.
		\end{enumerate}
		We can then build the derivation 
		\begin{equation*}
		\tderiv = 
		\begin{prooftree}
		\hypo{}
		\ellipsis{$\tderivtwo$}{\tyjp{}{\weakctxtwop{\tmtwo}}{\typctxtwo}{\mult{\ty{\mtypetwo}{\mtype}}}}
		\hypo{}
		\ellipsis{$\tderivthree$}{\tyjp{}{\tmthree}{\typctxthree}{\mtypetwo}}
		\infer2[\footnotesize$\ruleAp$]{\tyjp{}{\weakctxtwop{\tmtwo} \tmthree}{\typctxtwo \mplus \typctxthree}{\mtype}}
		\end{prooftree}
		\end{equation*}
		noting that
		\begin{enumerate}
			\item If $\tmtwo \rtom \tmtwo'$ then $\sizem{\tderiv} = 1 + \sizem{\tderivtwo} + \sizem{\tderivthree} = 1 + (\sizem{\tderivtwo'} + 2) + \sizem{\tderivthree} = \sizem{\tderiv'} + 2$ and $\size{\tderiv} = 1 + \size{\tderivtwo} + \size{\tderivthree} = 1 + 1 +\size{\tderivtwo'} + \size{\tderivthree} = \size{\tderiv'} + 1$; 
			\item If $\tmtwo \rtoe \tmtwo'$ then $\sizem{\tderiv} = 1 + \sizem{\tderivtwo} + \sizem{\tderivthree} = 1 + \sizem{\tderivtwo'} + \sizem{\tderivthree} = \sizem{\tderiv'}$ and $\size{\tderiv} = 1 + \size{\tderivtwo} + \size{\tderivthree} > 1 + \size{\tderivtwo'} + \size{\tderivthree} = \size{\tderiv'}$.
		\end{enumerate}
		
		\item \emph{Application right}, \ie $\weakctx = \tmthree \weakctxtwo$.
		Analogous to the previous case.
		
		\item \emph{Explicit substitution left}, \ie $\weakctx = \weakctxtwo\esub{\var}{\tmthree}$. 
		Then, $\tm = \weakctxp{\tmtwo} = \weakctxtwop{\tmtwo} \esub{\var}{\tmthree} \rootRew{a} \weakctxtwop{\tmtwo'}\esub{\var}{\tmthree} = \weakctxp{\tmtwo'} = \tm'$ with $\tmtwo \rootRew{a} \tmtwo'$ and $a \in \{\msym, \esym\}$.
		The derivation $\tderiv'$ is necessarily
		\begin{equation*}
		\tderiv' = 
		\begin{prooftree}
		\hypo{}
		\ellipsis{$\tderivtwo'$}{\tyjp{}{\weakctxtwop{\tmtwo'}}{\typctxtwo ; \var \hastype \mtypetwo}{\mtype}}
		\hypo{}
		\ellipsis{$\tderivthree$}{\tyjp{}{\tmthree}{\typctxthree}{\mtypetwo}}
		\infer2[\footnotesize$\Es$]{\typctxtwo \mplus \typctxthree \vdash \weakctxtwop{\tmtwo'} \esub{\var}{\tmthree} \hastype \mtype}
		\end{prooftree}
		\end{equation*}
		where $\typctx = \typctxtwo \mplus \typctxthree$, $\sizem{\tderiv} = \sizem{\tderivtwo} + \sizem{\tderivthree}$ and $\size{\tderiv} = 1 + \size{\tderivtwo} + \size{\tderivthree}$.
		By \ih, there is a derivation $\namedtyjp{\tderivtwo}{}{\weakctxtwop{\tmtwo}}{\typctxtwo, \var \hastype \mtypetwo}{\mtype}$ with: 
		\begin{enumerate}
			\item $\sizem{\tderivtwo} = \sizem{\tderivtwo'} + 2$ and $\size{\tderivtwo} =  \size{\tderivtwo'} + 1$ if $\tmtwo \rtom \tmtwo'$; 
			\item $\sizem{\tderivtwo} = \sizem{\tderivtwo'}$ and $\size{\tderivtwo} > \size{\tderivtwo'}$ if $\tmtwo \rtoe \tmtwo'$.
		\end{enumerate}
		We can then build the derivation 
		\begin{equation*}
		\tderiv = 
		\begin{prooftree}
		\hypo{}
		\ellipsis{$\tderivtwo$}{\tyjp{}{\weakctxtwop{\tmtwo}}{\typctxtwo ; \var \hastype \mtypetwo}{\mtype}}
		\hypo{}
		\ellipsis{$\tderivthree$}{\tyjp{}{\tmthree}{\typctxthree}{\mtypetwo}}
		\infer2[\footnotesize$\Es$]{\tyjp{}{\weakctxtwop{\tmtwo} \esub{\var}{\tmthree}}{\typctxtwo \mplus \typctxthree}{\mtype}}
		\end{prooftree}
		\end{equation*}
		noting that 
		\begin{enumerate}
			\item If $\tmtwo \rtom \tmtwo'$ then $\sizem{\tderiv} = \sizem{\tderivtwo} + \sizem{\tderivthree} = (\sizem{\tderivtwo'} + 2) + \sizem{\tderivthree} = \sizem{\tderiv'} + 2$ and $\size{\tderiv} = 1 + \size{\tderivtwo} + \size{\tderivthree} = 1 + 1 +  \size{\tderivtwo'} + \size{\tderivthree} = \size{\tderiv'}$; 
			\item If $\tmtwo \rtoe \tmtwo'$ then $\sizem{\tderiv} = \sizem{\tderivtwo} + \sizem{\tderivthree} = \sizem{\tderivtwo'} + \sizem{\tderthree} = \sizem{\tderiv'}$ and $\size{\tderiv} = \size{\tderivtwo} + \size{\tderivthree} > \size{\tderivtwo'} + \size{\tderivthree} = \size{\tderiv'}$.
		\end{enumerate}
		
		\item \emph{Explicit substitution right}, \ie $\weakctx = \tmthree \esub{\var}{\weakctxtwo}$. 
		Analogous to the previous case.
		\qedhere
	\end{itemize}
\end{proof}

\begin{theorem}[Open completeness]
	\label{thmappendix:open-completeness}
	\NoteState{thm:open-completeness}
	Let $\deriv \colon \tm \tovsubo^* \tmtwo$ be an $\osym$-normalizing evaluation.
	Then there is a tight derivation $\concl{\tderiv}{\typctx}{\tm}{\emptytype}$ such that $2\sizem{\deriv} + \sizes{\fire} = \sizem{\tderiv}$.
\end{theorem}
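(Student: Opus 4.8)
The plan is to proceed by induction on the length $\size{\deriv}$ of the given $\osym$-normalizing evaluation $\deriv \colon \tm \tovsubo^* \tmtwo$, reusing the three results assembled above: tight typability of open normal forms (\Cref{prop:precise-open-typability-nf}), the size-of-fireballs lemma (\Cref{l:size-fireballs}), and open quantitative subject expansion (\Cref{prop:weak-subject-expansion}). This is the exact mirror image of the proof of open correctness (\Cref{thm:open-correctness}), run backwards. In the base case $\size{\deriv} = 0$, so $\tm = \tmtwo$ and $\tm$ is $\osym$-normal, hence a fireball by the harmony part of \Cref{prop:properties-open-reduction}. The fireball clause of \Cref{prop:precise-open-typability-nf} yields a tight derivation $\concl{\tderiv}{\typctx}{\tm}{\emptytype}$; since $\tderiv$ is tight, $\typctx$ is inert and $\emptytype$ is ground inert, so \Cref{l:size-fireballs} gives $\sizem{\tderiv} = \sizeo{\tm} = \sizeo{\tmtwo}$, and with $\sizem{\deriv} = 0$ this is exactly the claimed identity $2\sizem{\deriv} + \sizeo{\tmtwo} = \sizem{\tderiv}$.

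For the inductive step, I would split $\deriv$ as a first step $\tm \tovsubo \tm'$ followed by a strictly shorter evaluation $\deriv' \colon \tm' \tovsubo^* \tmtwo$, which is still $\osym$-normalizing because $\tmtwo$ is $\osym$-normal. The induction hypothesis on $\deriv'$ gives a tight derivation $\concl{\tderiv'}{\typctx}{\tm'}{\emptytype}$ with $2\sizem{\deriv'} + \sizeo{\tmtwo} = \sizem{\tderiv'}$. Applying \Cref{prop:weak-subject-expansion} to the step $\tm \tovsubo \tm'$ produces a derivation $\concl{\tderiv}{\typctx}{\tm}{\emptytype}$ with the same final judgment as $\tderiv'$, hence again tight. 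If the step is multiplicative, $\tm \tomo \tm'$, then $\sizem{\tderiv} = \sizem{\tderiv'} + 2$ and $\sizem{\deriv} = \sizem{\deriv'} + 1$, so $2\sizem{\deriv} + \sizeo{\tmtwo} = 2\sizem{\deriv'} + 2 + \sizeo{\tmtwo} = \sizem{\tderiv'} + 2 = \sizem{\tderiv}$; if it is exponential, $\tm \toeo \tm'$, then $\sizem{\tderiv} = \sizem{\tderiv'}$ and $\sizem{\deriv} = \sizem{\deriv'}$ (an $\esym$-step is not counted in $\sizem{\cdot}$ on evaluations), so $2\sizem{\deriv} + \sizeo{\tmtwo} = 2\sizem{\deriv'} + \sizeo{\tmtwo} = \sizem{\tderiv'} = \sizem{\tderiv}$. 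This closes the induction.

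I do not expect a genuine obstacle here: all the combinatorics live in the cited results, in particular \Cref{prop:precise-open-typability-nf} (which carries the stronger inert statement through the mutual induction on fireballs and inert terms) and \Cref{prop:weak-subject-expansion} (which performs the quantitative backward step). The only points requiring care are purely bookkeeping. First, being tight depends only on the final judgment, and subject expansion keeps the final judgment unchanged, so tightness is transported backwards for free. Second, $\sizem{\deriv}$ tallies only $\msym$-steps, which is precisely why the exponential case leaves the target identity balanced while the multiplicative case shifts each side by exactly $2$. (I read the right-hand summand as $\sizeo{\tmtwo}$ for the reached fireball $\tmtwo$, which is the form matching \Cref{thm:open-correctness}.)
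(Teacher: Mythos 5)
Your proof is correct and follows essentially the same route as the paper's: induction on $\size{\deriv}$, using tight typability of fireballs together with \Cref{l:size-fireballs} in the base case and open quantitative subject expansion in the inductive step, with the same multiplicative/exponential case split. Your reading of the statement's right-hand summand as $\sizeo{\tmtwo}$ (the open size of the reached fireball) is also the intended one.
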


\begin{proof}
	By induction on the length $\size{\deriv}$ of the $\osym$-evaluation $\deriv$.
	
	If $\size{\deriv} = 0$ then $\sizem{\deriv} = 0$ and $\tm = \fire$ is $\osym$-normal and hence a fireball.
	According to tight typability of fireballs (\Cref{prop:precise-open-typability-nf}), there is a tight derivation $\concl{\tderiv}{\typctx}{\tm}{\emptytype}$.
	Therefore, $\sizem{\tderiv} = \sizeo{\fire} = \sizeo{\fire} + 2\sizem{\deriv}$ by \Cref{l:size-fireballs}.
	
	Otherwise, $\size{\deriv} > 0$ and $\deriv$ is the concatenation of a first step $\tm \tovsubo \tmtwo$ and an evaluation $\deriv' \colon \tmtwo \tovsubo^* \fire$, with $\size{\deriv} = 1 + \size{\deriv'}$.
	By \ih, there is a tight derivation $\concl{\tderivtwo}{\typctx}{\tmtwo}{\emptytype}$  such that $\sizem{\tderivtwo} = \sizeo{\fire} + 2\sizem{\deriv'}$. 
	According to open subject expansion (\Cref{prop:weak-subject-expansion}), there is a derivation 
	$\concl{\tderiv}{\typctx}{\tm}{\emptytype}$ with 
	\begin{itemize}
		\item $\sizem{\tderiv} = \sizem{\tderivtwo} + 2 = \sizeo{\fire} + 2\sizem{\deriv'} + 2 = \sizeo{\fire} + 2\sizem{\deriv}$ 
		if $\tm \tomo \tmtwo$, since $\sizem{\deriv} = \sizem{\deriv'} + 1$;
		\item $\sizem{\tderiv} = \sizem{\tderivtwo} = \sizeo{\fire} + 2\sizem{\deriv'} = \sizeo{\fire} + 2\sizem{\deriv}$ if $\tm \toeo \tmtwo$, since $\sizem{\deriv} = \sizem{\deriv'}$.
		\qedhere
	\end{itemize} 
\end{proof}

%
%
\section{Proofs of \Cref{sect:solvable} (Multi Types for \cbv Solvability)}

\paragraph*{Correctness}

\begin{lemma}[Size of solvable fireballs]
	\label{lappendix:size-solvable-nf}
	\NoteState{l:size-solvable-nf}
	Let $\solvnf$ be a solvable fireball.
	If $\namedtyjp{\tderiv}{}{\solvnf}{\typctx}{\mtype}$ with $\mtype$ solvable (\resp $\typctx$  inert and $\mtype$ precisely solvable),
	then $\sizem{\tderiv} \geq \sizes{\solvnf}$ (\resp $\sizem{\tderiv} = \sizes{\solvnf}$).
\end{lemma}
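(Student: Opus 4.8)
The plan is to prove both statements simultaneously by structural induction on the solvable fireball $\solvnf$, following its grammar $\solvnf \grameq \itm \mid \la{\var}{\solvnf'} \mid \solvnf'\esub{\var}{\pitm}$.

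For the base case $\solvnf = \itm$, since every inert term is a fireball, I would simply invoke \Cref{l:size-fireballs}: it gives $\sizem{\tderiv} \geq \sizeo{\itm}$ in general, and, because $\itm$ is inert, it gives $\sizem{\tderiv} = \sizeo{\itm}$ as soon as $\typctx$ is inert (the hypothesis of the ``resp.'' statement); then \Cref{l:sizes-inert} turns $\sizeo{\itm}$ into $\sizes{\itm}$. Note that solvability of $\mtype$ plays no role here. For $\solvnf = \la{\var}{\solvnf'}$, the derivation $\tderiv$ ends with a $\ruleMany$ rule whose premises each end with $\ruleFun$, so $\mtype = \mset{\larrow{\mtype_1}{\smtype_1}, \dots, \larrow{\mtype_n}{\smtype_n}}$, each premise $\tderiv_i$ types $\typctx_i, \var \hastype \mtype_i \vdash \solvnf' \hastype \smtype_i$, and $\sizem{\tderiv} = n + \sum_{i=1}^{n}\sizem{\tderiv_i}$. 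Solvability of $\mtype$ forces $n > 0$ and each $\smtype_i$ to be solvable, so the i.h.\ gives $\sizem{\tderiv_i} \geq \sizes{\solvnf'}$ and hence $\sizem{\tderiv} \geq n + n\cdot\sizes{\solvnf'} \geq 1 + \sizes{\solvnf'} = \sizes{\la{\var}{\solvnf'}}$. For the equality, precise solvability forces $n = 1$ (unitary), with $\mtype_1$ inert and $\smtype_1$ precisely solvable (inertly solvable); since $\typctx = \typctx_1$ is inert, so is $\typctx_1, \var \hastype \mtype_1$, and the i.h.\ equality yields $\sizem{\tderiv_1} = \sizes{\solvnf'}$, whence $\sizem{\tderiv} = 1 + \sizes{\solvnf'}$.

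For $\solvnf = \solvnf'\esub{\var}{\pitm}$ the last rule is $\ruleES$, with premises typing $\typctx_1, \var \hastype \mtypetwo \vdash \solvnf' \hastype \mtype$ and $\typctx_2 \vdash \pitm \hastype \mtypetwo$, where $\typctx = \typctx_1 \mplus \typctx_2$ and $\sizem{\tderiv} = \sizem{\tderiv_1} + \sizem{\tderiv_2}$ (the rule $\ruleES$ does not contribute to the multiplicative size). The i.h.\ on $\tderiv_1$ (with $\mtype$ (precisely) solvable) and \Cref{l:size-fireballs} on $\tderiv_2$ (with $\pitm$ a proper inert term) give $\sizem{\tderiv_1} \geq \sizes{\solvnf'}$ and $\sizem{\tderiv_2} \geq \sizeo{\pitm}$; summing gives $\sizem{\tderiv} \geq \sizes{\solvnf'} + \sizeo{\pitm} = \sizes{\solvnf'\esub{\var}{\pitm}}$. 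For the equality, $\typctx$ inert makes $\typctx_1$ and $\typctx_2$ inert (\Cref{rmk:merge-split-inert}); spreading of inertness (\Cref{l:spread-inert}) on $\tderiv_2$ makes $\mtypetwo$ inert, so $\typctx_1, \var \hastype \mtypetwo$ is inert, and then both the i.h.\ equality and the inert-term clause of \Cref{l:size-fireballs} apply.

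The main obstacle is confined to the ``resp.'' (equality) statement: one must ensure that the inertness hypothesis on $\typctx$ really propagates to every sub-derivation, so that the stronger inductive hypothesis and the equality clause of \Cref{l:size-fireballs} become applicable. In the explicit-substitution case this relies on spreading of inertness through the proper inert subterm; in the abstraction case it relies on the combinatorial structure of precisely solvable types—unitariness collapsing the $\ruleMany$ to a single premise, and inert solvability forcing the left-hand multiset, and recursively the typing context of that premise, to be inert. Everything else is routine bookkeeping of sizes.
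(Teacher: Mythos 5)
Your proof is correct and follows essentially the same route as the paper's: induction on the grammar of solvable fireballs, with the inert base case discharged by the size-of-fireballs lemma plus the equality $\sizeo{\itm}=\sizes{\itm}$, the abstraction case using solvability to force $n>0$ (and precise solvability to force $n=1$ with an inert left-hand multiset), and the explicit-substitution case propagating inertness via the merge/split remark and the spreading lemma. The bookkeeping of $\sizem{\cdot}$ in each case matches the paper's computation, so there is nothing to add.
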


\begin{proof}
	By induction on the definition of solvable fireball $\solvnf$.
	For each case, we shall first prove the part of the statement before ``moreover'', and then we shall prove the part of the statement after ``moreover''.
	Cases:
	
	\begin{itemize}
		\item \emph{Inert}, \ie $\solvnf$ is an inert term.
		According to \Cref{l:size-fireballs}, $\sizem{\tderiv} \geq \sizes{\solvnf}$.
		If, moreover, $\typctx$ is inert then $\sizem{\tderiv} = \sizeo{\solvnf} = \sizes{\solvnf}$ by \Cref{l:size-fireballs} and \Cref{l:sizes-inert}.		
		Note that in this case the hypothesis that $\mtype$ is a (precisely) solvable multi type is not used.
		
		\item \emph{Explicit substitution on solvable normal form}, \ie $\solvnf = \solvnftwo \esub{\var}{\pitm}$ for some solvable fireball $\solvnftwo$ and proper inert term $\pitm$. Then necessarily
		\begin{equation*}
		\tderiv = 
		\begin{prooftree}
		\hypo{}
		\ellipsis{$\tderivtwo$}{\typctxtwo, \var \hastype \mtypetwo \vdash \solvnftwo \hastype \mtype}
		\hypo{}
		\ellipsis{$\tderivthree$}{\typctxthree \vdash \pitm \hastype \mtypetwo}
		\infer2[\footnotesize$\ruleES$]{\tyjp{}{\solvnftwo \esub{\var}{\pitm}}{\typctxtwo \mplus \typctxthree}{\mtype}}
		\end{prooftree}
		\end{equation*}
		where $\typctx = \typctxtwo \mplus \typctxthree$.
		According to \Cref{l:size-fireballs} for inert terms, $\sizem{\tderivthree} \geq \sizeo{\pitm}$.
		By \ih applied to $\tderivtwo$, $\sizem{\tderivtwo} \geq \sizes{\solvnftwo}$. 
		Therefore, $\sizes{\solvnf} = \sizes{\solvnftwo} + \sizeo{\pitm} \leq \sizem{\tderivtwo} + \sizem{\tderivthree} = 
		\sizem{\tderiv}$ 
		
		If, moreover, $\typctx$ is inert and $\mtype$ is precisely solvable, then $\typctxtwo$ and $\typctxthree$ are inert, according to \Cref{rmk:merge-split-inert}.
		By \Cref{l:size-fireballs} for inert terms, $\sizem{\tderivthree} = \sizeo{\pitm}$.
		By spreading of inertness (\Cref{l:spread-inert}), $\mtypetwo$ is inert and hence $\typctxtwo, \var \hastype \mtypetwo$ is a inert type context.
		By \ih applied to $\tderivtwo$, $\sizem{\tderivtwo} = \sizes{\solvnftwo}$.
		Therefore, $\sizes{\solvnf} = \sizes{\solvnftwo} + \sizeo{\pitm} = \sizem{\tderivtwo} + \sizem{\tderivthree} = 
		\sizem{\tderiv}$.
		
		\item \emph{Abstraction}, \ie $\solvnf = \la{\var}{\solvnftwo}$ for some solvable fireball $\solvnftwo$. 
		Then $\mtype = \bigmplus_{i=1}^n\mset{\larrow{\mtypethree_i}{\mtypetwo_i}}$ for some $n > 0$ (as $\mtype$ is solvable), and
		\begin{equation*}
		\tderiv = 
		\begin{prooftree}[separation = 1em]
		\hypo{}
		\ellipsis{$\tderivtwo_1$}{\typctx_1, \var \hastype \mtypethree_1 \vdash \solvnftwo \hastype \mtypetwo_1}
		\infer1[\footnotesize$\ruleFun$]{\tyjp{}{\la{\var}{\solvnftwo}}{\typctx_1}{\ty{\mtypethree_1}{\mtypetwo_1}}}
		\hypo{\overset{n \in \nat}{\ldots}}
		\hypo{}
		\ellipsis{$\tderivtwo_n$}{\typctx_n, \var \hastype \mtypethree_n \vdash \solvnftwo \hastype \mtypetwo_n}
		\infer1[\footnotesize$\ruleFun$]{\tyjp{}{\la{\var}{\solvnftwo}}{\typctx_n}{\larrow{\mtypethree_n}{\mtypetwo_n}}}
		\infer3[\footnotesize$\ruleManyVal$]{\tyjp{}{\la{\var}{\solvnftwo}}{\typctx}{\mtype}}
		\end{prooftree}
		\end{equation*}
		where $\typctx = \bigmplus_{i=1}^n\typctx_i$. 
		By \ih, $\sizes{\solvnftwo} \leq \sizem{\tderivtwo_i}$ for all $1 \leq i \leq n$.
		Therefore, $\sizes{\solvnf} =  1 + \sizes{\solvnftwo} = 1 + \sum_{i=1}^n\sizem{\tderivtwo_i} \leq \sum_{i=1}^n(\sizem{\tderivtwo_i} + 1) = \sizem{\tderiv}$, where the inequality holds because $n > 0$.
		
		If, moreover, $\typctx$ is inert and $\mtype$ is precisely solvable, then $n = 1$ and $\typctx = \typctx_1$ and $\mtype = \mset{\larrow{\mtypethree_1}{\mtypetwo_1}}$ with $\mtypethree_1$ inert. 
		Thus, $\typctx_1, \var \hastype \mtypethree_1$ is an inert type context.
		By \ih, $\sizem{\tderivtwo_1} = \sizes{\solvnftwo}$.
		So, $\sizes{\solvnf}= \sizes{\solvnftwo} + 1 = \sizem{\tderivtwo_1} + 1 = \sizem{\tderiv}$. 
		\qedhere
	\end{itemize}
\end{proof}

\begin{proposition}[Solvable quantitative subject reduction]
	\label{propappendix:solvable-subject-reduction}
	\NoteState{prop:solvable-subject-reduction}
	Assume $\concl{\tderiv}{\typctx}{\tm}{\mtype}$ with $\mtype$ solvable (\resp unitary solvable).
	\begin{enumerate}
		\item \emph{Multiplicative step:} If $\tm \tosolvm \tm'$ then there is a derivation 
$\concl{\tderiv'}{\typctx}{\tm'}{\mtype}$ such that $\sizem{\tderiv'} \leq \sizem{\tderiv}-2$ and $\size{\tderiv'} < 
\size{\tderiv}$
		(\resp $\sizem{\tderiv'} = \sizem{\tderiv}-2$ and $\size{\tderiv'} = \size{\tderiv}-1$);
		
		\item \emph{Exponential step:} if $\tm \tosolve \tm'$ then there is a derivation 
$\concl{\tderiv'}{\typctx}{\tm'}{\mtype}$ such that
		$\sizem{\tderiv'} = \sizem{\tderiv}$ and $\size{\tderiv'} < \size{\tderiv}$.
	\end{enumerate}
\end{proposition}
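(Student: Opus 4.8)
The plan is to argue by induction on the solving context $\solvctx$ witnessing the step, using the grammar $\solvctx \grameq \weakctx \mid \la{\var}\solvctx \mid \solvctx\,\tmtwo \mid \solvctx\esub{\var}{\tmtwo}$: write $\tm = \solvctxp{\tmthree} \tosolv \solvctxp{\tmthree'} = \tm'$ where $\tmthree \tovsubo \tmthree'$ is the underlying open step (multiplicative or exponential). In every case the exponential sub-case is treated in parallel with the multiplicative one, since the size bookkeeping runs the same way.

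Base case, $\solvctx = \weakctx$ an open context: then $\tm \tosolv \tm'$ is just an open step $\tm \tovsubo \tm'$, and open quantitative subject reduction (\Cref{prop:weak-subject-reduction}) yields a derivation $\namedtyjp{\tderiv'}{}{\tm'}{\typctx}{\mtype}$ with the very same context and type $\mtype$ --- hence still solvable, and unitary solvable if $\mtype$ was --- and with the exact equalities $\sizem{\tderiv'} = \sizem{\tderiv} - 2$, $\size{\tderiv'} = \size{\tderiv} - 1$ for a multiplicative step, and $\sizem{\tderiv'} = \sizem{\tderiv}$, $\size{\tderiv'} < \size{\tderiv}$ for an exponential one. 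These imply both the plain and the parenthesised statements.

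For the inductive cases I inspect the last rule of $\tderiv$, use the shape of (unitary) solvable types to see that the corresponding sub-judgment again carries a (unitary) solvable type, apply the induction hypothesis to the subderivation typing the smaller solving redex, and recompose. If $\solvctx = \solvctx'\esub{\var}{\tmtwo}$ then $\tm = \tm_0\esub{\var}{\tmtwo}$ and $\tderiv$ ends with $\ruleES$: the left premise types $\tm_0$ with exactly $\mtype$ (in a context extended on $\var$), so the IH applies verbatim; grafting back the unchanged right premise and the $\ruleES$ rule (which adds $0$ to $\sizem{\cdot}$ and $1$ to $\size{\cdot}$) transfers the bounds immediately. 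If $\solvctx = \solvctx'\,\tmtwo$ then $\tderiv$ ends with $\ruleAp$ and the function $\tm_0$ is typed with the singleton $\mset{\larrow{\mtypethree}{\mtype}}$; since a solvable linear type has a solvable codomain, this multiset is solvable, and if $\mtype$ is unitary solvable then so is $\mset{\larrow{\mtypethree}{\mtype}}$. Hence the IH applies to the left subderivation; recomposing with the unchanged argument premise and $\ruleAp$ (adding $1$ to both sizes) yields the required (in)equalities. If $\solvctx = \la{\var}\solvctx'$ then $\tm = \la{\var}\tm_0$ is an abstraction, so $\tderiv$ ends with $\ruleManyVal$ over $k > 0$ instances of $\ruleFun$; because an abstraction can only be assigned arrow linear types, $\mtype = \mset{\larrow{\mtype_1}{\smtype_1}, \dots, \larrow{\mtype_k}{\smtype_k}}$ with each $\smtype_i$ the (solvable) codomain of a solvable linear type. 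Applying the IH to each of the $k$ subderivations typing $\tm_0$ with $\smtype_i$ and recomposing with $\ruleFun$ and $\ruleManyVal$, for the plain statement one sums the per-premise bounds and uses $k \geq 1$ to get $\sizem{\tderiv'} \leq \sizem{\tderiv} - 2$ and $\size{\tderiv'} < \size{\tderiv}$; for the parenthesised statement, unitary solvability forces $k = 1$ with $\smtype_1$ unitary solvable, so the single exact equality propagates.

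The delicate point is the abstraction case. First, one must observe that an abstraction is necessarily typed with arrow linear types, so a solvable type sitting on $\la{\var}\tm_0$ really is a nonempty multiset of arrows with solvable codomains; this is exactly what keeps at least one subderivation typing the body $\tm_0$ alive to carry the inductive step, and it is the reason the statement fails for $\emptytype$ (as the example $\la{\vartwo}{\delta\delta}$ following the proposition shows). Second, the unitary refinement is what collapses the $\ruleManyVal$ rule to a single premise, which is the only way the \emph{exact} counting survives a descent under a head abstraction --- for a merely solvable type the $k$ copies are all counted but possibly more than once, whence the inequality. The remaining cases are routine rule-by-rule size arithmetic, parallel to the proof of \Cref{prop:weak-subject-reduction}.
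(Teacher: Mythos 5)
Your proposal is correct and follows essentially the same route as the paper: induction on the solving context with the base case discharged by open quantitative subject reduction, the application and \ES cases recomposed after applying the induction hypothesis to the left premise (noting that $\mset{\larrow{\mtypethree}{\mtype}}$ inherits (unitary) solvability from $\mtype$), and the abstraction case resting on the observation that an abstraction forces all linear types in a solvable multiset to be arrows with solvable codomains, with $k\geq 1$ giving the inequalities and unitarity forcing $k=1$ for the exact count. The paper merely organises the two regimes differently (proving the unitary case in full and then redoing only the abstraction case for plain solvability), which is a presentational difference, not a mathematical one.
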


\begin{proof}
	First, we prove the unitary solvable version of the claim (\ie under the hypothesis that $\mtype$ is unitary solvable).
	The proof is by induction on the solvable context $\solvctx$ such that $\tm = \solvctxp{\tmtwo} \tosolv \solvctxp{\tmtwo'} = \tm'$ with $\tmtwo \tomo \tmtwo'$ or $\tmtwo' \toeo \tmtwo'$. 
	Cases for $\solvctx$:
	\begin{itemize}
		\item \emph{Hole context}, \ie{} $\solvctx = \ctxhole$ and $\tm \Rew{\wsym a}  \tm'$ with $a \in \{\msym, \esym\}$.
		According to subject reduction for $\tow$ (\Cref{prop:weak-subject-reduction}),
		\begin{itemize}
			\item if $\tm \towm \tm'$ then there exists a derivation $\concl{\tderiv'}{\typctx}{\tm'}{\mtype}$ such that
			$\sizem{\tderiv'} = \sizem{\tderiv}-2$ and $\size{\tderiv'} = \size{\tderiv}-1$;
			\item if $\tm \towe \tm'$ then there is a derivation $\concl{\tderiv'}{\typctx}{\tm'}{\mtype}$ such that 
$\sizem{\tderiv'} = \sizem{\tderiv}$ and
			$\size{\tderiv'} < \size{\tderiv}$.
		\end{itemize}
		Note that in this case the hypothesis that $\mtype$ is a (unitary) solvable multi type is not used.
		
		\item \emph{Abstraction}, \ie $\solvctx = \la{\var}{\solvctxtwo}$. 
		So, $\tm = \solvctxp{\tmtwo} = \la{\var}{\solvctxtwop{\tmtwo}} \Rew{\solvredsym a} \la{\var}{\solvctxtwop{\tmtwo'}} = 
\solvctxp{\tmtwo'} = \tm'$ with $\tmtwo \Rew{\wsym a} \tmtwo'$ and $a \in \{\msym, \esym\}$.
		Since $\mtype$ is a unitary solvable multi type by hypothesis and hence it has the form $\mtype = \mset{\larrow{\mtypethree}{\smtypetwo}}$ where $\smtypetwo$ is unitary solvable.
		Thus, the derivation $\tderiv$ is necessarily
		\begin{equation*}
		\tderiv = 
		\begin{prooftree}
		\hypo{}
		\ellipsis{$\tderivtwo$}{\typctx, \var \hastype \mtypethree \vdash \solvctxtwop{\tmtwo} \hastype \smtypetwo}
		\infer1[\footnotesize$\lambda$]{\typctx \vdash \la{\var}\solvctxtwop{\tmtwo} \hastype \larrow{\mtypethree}{\smtypetwo}}
		\infer1[\footnotesize$\ruleManyVal$]{\typctx \vdash \la{\var}\solvctxtwop{\tmtwo} \hastype 
\mset{\larrow{\mtypethree}{\smtypetwo}}}
		\end{prooftree}
		\end{equation*}
		By \ih, there is a derivation $\concl{\tderivtwo'}{\typctx, \var \hastype 
\mtypetwo}{\solvctxtwop{\tmtwo'}}{\smtypetwo}$ with: 
		\begin{enumerate}
			\item $\sizem{\tderivtwo'} = \sizem{\tderivtwo} - 2$ and $\size{\tderivtwo'} = \size{\tderivtwo}-1$ if $\tmtwo 
\tomo \tmtwo'$ ; 
			\item $\sizem{\tderivtwo'} = \sizem{\tderivtwo}$ and $\size{\tderivtwo'} < \size{\tderivtwo}$ if $\tmtwo \toeo 
\tmtwo'$.
		\end{enumerate}
		We can then build the derivation 
		\begin{equation*}
		\tderiv' = 
		\begin{prooftree}
		\hypo{}
		\ellipsis{$\tderivtwo'$}{\typctx, \var \hastype \mtypethree \vdash \solvctxtwop{\tmtwo'} \hastype \smtypetwo}
		\infer1[\footnotesize$\lambda$]{\typctx \vdash \la{\var}\solvctxtwop{\tmtwo'} \hastype \larrow{\mtypethree}{\smtypetwo}}
		\infer1[\footnotesize$\ruleManyVal$]{\typctx \vdash \la{\var}\solvctxtwop{\tmtwo'} \hastype 
\mset{\larrow{\mtypethree}{\smtypetwo}}}
		\end{prooftree}
		\end{equation*}
		where
		\begin{enumerate}
			\item $\sizem{\tderiv'} = \sizem{\tderivtwo'} +1 = \sizem{\tderivtwo} + 1 - 2 = \sizem{\tderiv} - 2$ and 
$\size{\tderiv'} = \size{\tderivtwo'} +1 = \size{\tderivtwo} + 1 - 1 = \size{\tderiv} - 1$ if $\tmtwo \tomo \tmtwo'$ (\ie $\tm \tosolvm \tm'$); 
			\item $\sizem{\tderiv'} = \sizem{\tderivtwo'} +1 = \sizem{\tderivtwo} + 1 = \sizem{\tderiv}$ and $\size{\tderiv'} 
= \size{\tderivtwo'} +1 < \size{\tderivtwo} + 1 = \size{\tderiv}$ if $\tmtwo \toeo \tmtwo'$ (\ie $\tm \tosolve \tm'$).
		\end{enumerate}

		\item \emph{Explicit substitution}, \ie $\solvctx = \solvctxtwo\esub{\var}{\tmthree}$. 
		Then, $\tm = \solvctxp{\tmtwo} = \solvctxtwop{\tmtwo} \esub{\var}{\tmthree} \Rew{\solvredsym a} 
\solvctxtwop{\tmtwo'}\esub{\var}{\tmthree} = \solvctxp{\tmtwo'} = \tm'$ with $\tmtwo \Rew{\wsym a} \tmtwo'$ and $a \in 
\{\msym, \esym\}$.
		The derivation $\tderiv$ is necessarily
		\begin{equation*}
		\tderiv = 
		\begin{prooftree}
		\hypo{}
		\ellipsis{$\tderivtwo$}{\typctxtwo, \var \hastype \mtypetwo \vdash \solvctxtwop{\tmtwo} \hastype \mtype}
		\hypo{}
		\ellipsis{$\tderivthree$}{\typctxthree \vdash \tmthree \hastype \mtypetwo}
		\infer2[\footnotesize$\Es$]{\typctxtwo \uplus \typctxthree \vdash \solvctxtwop{\tmtwo} \esub{\var}{\tmthree} \hastype \mtype}
		\end{prooftree}
		\end{equation*}
		where $\typctx = \typctxtwo \uplus \typctxthree$.
		By \ih applied to $\tderivtwo$, there is a derivation $\concl{\tderivtwo'}{\typctxtwo, \var \hastype \mtypetwo}{\solvctxtwop{\tmtwo'}}{\mtype}$ with: 
		\begin{enumerate}
			\item $\sizem{\tderivtwo'} = \sizem{\tderivtwo} - 2$ and $\size{\tderivtwo'} = \size{\tderivtwo} - 1$ if $\tmtwo 
\tomo \tmtwo'$; 
			\item $\sizem{\tderivtwo'} = \sizem{\tderivtwo}$ and $\size{\tderivtwo'} < \size{\tderivtwo}$ if $\tmtwo \toeo \tmtwo'$.
		\end{enumerate}
		We can then build the derivation 
		\begin{equation*}
		\tderiv' = 
		\begin{prooftree}
		\hypo{}
		\ellipsis{$\tderivtwo'$}{\typctxtwo, \var \hastype \mtypetwo \vdash \solvctxtwop{\tmtwo'} \hastype \mtype}
		\hypo{}
		\ellipsis{$\tderivthree$}{\typctxthree \vdash \tmthree \hastype \mtypetwo}
		\infer2[\footnotesize$\Es$]{\typctxtwo \uplus \typctxthree \vdash \solvctxtwop{\tmtwo'} \esub{\var}{\tmthree} \hastype \mtype}
		\end{prooftree}
		\end{equation*}
		where $\typctx = \typctxtwo \uplus \typctxthree$ and
		\begin{enumerate}
			\item $\sizem{\tderiv'} = \sizem{\tderivtwo'} + \sizem{\tderivthree} = \sizem{\tderivtwo} + \sizem{\tderivthree} - 2 = \sizem{\tderiv} - 2$ and $\size{\tderiv'} = \size{\tderivtwo'} + \size{\tderivthree} +1 = \size{\tderivtwo} -1 + 
\size{\tderivthree} +1 = \size{\tderiv} - 1$ if $\tmtwo \tomo \tmtwo'$ (\ie $\tm \tosolvm \tm'$); 
			\item $\sizem{\tderiv'} = \sizem{\tderivtwo'} + \sizem{\tderivthree} = \sizem{\tderivtwo} + \sizem{\tderivthree} = \sizem{\tderiv}$ and $\size{\tderiv'} = \size{\tderivtwo'} + \size{\tderivthree} +1 < \size{\tderivtwo} + 
\size{\tderivthree} +1 = \size{\tderiv}$ if $\tmtwo \toeo \tmtwo'$ (\ie $\tm \tosolve \tm'$).
		\end{enumerate}

		\item \emph{Application}, \ie $\solvctx = \solvctxtwo\tmthree$. 
		Then, $\tm = \solvctxp{\tmtwo} = \solvctxtwop{\tmtwo} \tmthree \Rew{\solvredsym a} \solvctxtwop{\tmtwo'} \tmthree = \solvctxp{\tmtwo'} = \tm'$ with $\tmtwo \Rew{\wsym a} \tmtwo'$ and $a \in \{\msym, \esym\}$.
		The derivation $\tderiv$ is necessarily
		\begin{equation*}
		\tderiv = 
		\begin{prooftree}
		\hypo{}
		\ellipsis{$\tderivtwo$}{\typctxtwo \vdash \solvctxtwop{\tmtwo} \hastype \mset{\larrow{\mtypetwo}{\mtype}}}
		\hypo{}
		\ellipsis{$\tderivthree$}{\typctxthree \vdash \tmthree \hastype \mtypetwo}
		\infer2[\footnotesize$\ruleAp$]{\typctxtwo \uplus \typctxthree \vdash \solvctxtwop{\tmtwo} \tmthree \hastype \mtype}
		\end{prooftree}
		\end{equation*}
		where $\typctx = \typctxtwo \mplus \typctxthree$.
		By \ih applied to $\tderivtwo$ (since $\mset{\larrow{\mtypetwo}{\mtype}}$ is a unitary solvable multi type), there 
is 
a derivation $\concl{\tderivtwo'}{\typctxtwo}{\solvctxtwop{\tmtwo'}}{\mset{\larrow{\mtypetwo}{\mtype}}}$ with: 
		\begin{enumerate}
			\item $\sizem{\tderivtwo'} = \sizem{\tderivtwo} - 2$ and $\size{\tderivtwo'} = \size{\tderivtwo} - 1$ if $\tmtwo 
\tomo \tmtwo'$; 
			\item $\sizem{\tderivtwo'} = \sizem{\tderivtwo}$ and $\size{\tderivtwo'} < \size{\tderivtwo}$ if $\tmtwo \toeo 
\tmtwo'$.
		\end{enumerate}
		We can then build the derivation 
		\begin{equation*}
		\tderiv' = 
		\begin{prooftree}
		\hypo{}
		\ellipsis{$\tderivtwo'$}{\typctxtwo \vdash \solvctxtwop{\tmtwo'} \hastype \mset{\larrow{\mtypetwo}{\mtype}}}
		\hypo{}
		\ellipsis{$\tderivthree$}{\typctxthree \vdash \tmthree \hastype \mtypetwo}
		\infer2[\footnotesize$\ruleAp$]{\typctxtwo \uplus \typctxthree \vdash \solvctxtwop{\tmtwo'} \tmthree \hastype 
\mtype}
		\end{prooftree}
		\end{equation*}
		where $\typctx = \typctxtwo \uplus \typctxthree$ and
		\begin{enumerate}
			\item $\sizem{\tderiv'} = \sizem{\tderivtwo'} + \sizem{\tderivthree} +1 = \sizem{\tderivtwo} - 2 + 
\sizem{\tderivthree} +1 = \sizem{\tderiv} - 2$ and $\size{\tderiv'} = \size{\tderivtwo'} + \size{\tderivthree} +1 = 
\size{\tderivtwo} - 1 + \size{\tderivthree} +1 = \size{\tderiv} - 1$ if $\tmtwo \tomo \tmtwo'$ (\ie $\tm \tosolvm \tm'$); 
			\item $\sizem{\tderiv'} = \sizem{\tderivtwo'} + \sizem{\tderivthree} +1 = \sizem{\tderivtwo} + \sizem{\tderivthree} +1 = \sizem{\tderiv}$ and $\size{\tderiv'} = \size{\tderivtwo'} + \size{\tderivthree} +1 <  \size{\tderivtwo} + \size{\tderivthree} +1 = \size{\tderiv}$ if $\tmtwo \toeo \tmtwo'$ (\ie $\tm \tosolve \tm'$).
		\end{enumerate}
	\end{itemize}
	
	This completes the proof for the unitary solvable case.
	
	In the solvable case (\ie under the weaker hypothesis that  $\mtype$ is a solvable multi type), the proof is 
analogous to the one for unitary solvable, except for the \emph{Abstraction} case.
	Indeed, in the base case (\emph{Hole context}) unitary solvability does not play any role, and the other cases follow 
from the \ih in a way analogous to unitary solvable.
	Let us see the only substantially different case: 
	\begin{itemize}
		\item \emph{Abstraction}, \ie $\solvctx = \la{\var}{\solvctxtwo}$. 
		So, $\tm = \solvctxp{\tmtwo} = \la{\var}{\solvctxtwop{\tmtwo}} \Rew{\solvredsym a} \la{\var}{\solvctxtwop{\tmtwo'}} = 
\solvctxp{\tmtwo'} = \tm'$ with $\tmtwo \Rew{\wsym a} \tmtwo'$ and $a \in \{\msym, \esym\}$.
		Since $\mtype$ is a solvable multi type by hypothesis, it has the form $\mtype = \mset{\larrow{\mtype_1}{\smtype_1}, \dots, \larrow{\mtype_n}{\smtype_n}}$ for some $n > 0$, where $\smtype_j$ is solvable for all $1 \leq j \leq n$.
		Thus, the derivation $\tderiv$ is necessarily
		\begin{equation*}
		\tderiv = 
		\begin{prooftree}[separation=1em]
		\hypo{}
		\ellipsis{$\tderivtwo_j$}{\typctx_j, \var \hastype \mtype_j \vdash \solvctxtwop{\tmtwo} \hastype \smtype_j}
		\infer1[\footnotesize$\lambda$]{\typctx_j \vdash \la{\var}\solvctxtwop{\tmtwo} \hastype 
\larrow{\mtype_j}{\smtype_j}}
		\delims{ \left( }{ \right)_{1 \leq j \leq n} }
		\infer1[\footnotesize$\ruleManyVal$]{ \bigmplus_{j=1}^n \typctx_{j} \vdash \la{\var}\solvctxtwop{\tmtwo} \hastype  
\bigmplus_{j=1}^n \mset{\larrow{\mtype_j}{\smtype_j}}}
		\end{prooftree}
		\end{equation*}
		For all $1 \leq j \leq n$, by \ih, there is a derivation $\concl{\tderivtwo_j'}{\typctx_j, \var \hastype 
\mtype_j}{\solvctxtwop{\tmtwo'}}{\smtype_j}$ with: 
		\begin{enumerate}
			\item $\sizem{\tderivtwo_j'} \leq \sizem{\tderivtwo_j} - 2$ and $\size{\tderivtwo_j'} < \size{\tderivtwo_j}$ 
if $\tmtwo \tomo \tmtwo'$ ; 
			\item $\sizem{\tderivtwo_j'} = \sizem{\tderivtwo_j}$ and $\size{\tderivtwo_j'} < \size{\tderivtwo_j}$ if $\tmtwo 
\toeo \tmtwo'$.
		\end{enumerate}
		We can then build the derivation 
		\begin{equation*}
		\tderiv' = 
		\begin{prooftree}[separation=1em]
		\hypo{}
		\ellipsis{$\tderivtwop_j$}{\typctx_j, \var \hastype \mtype_j \vdash \solvctxtwop{\tmtwo'} \hastype \smtype_i}
		\infer1[\footnotesize$\lambda$]{\typctx_j \vdash \la{\var}\solvctxtwop{\tmtwo'} \hastype 
\larrow{\mtype_j}{\smtype_j}}
		\delims{ \left( }{ \right)_{1 \leq j \leq n} }
		\infer1[\footnotesize$\ruleManyVal$]{ \bigmplus_{j=1}^n \typctx_{j} \vdash \la{\var}\solvctxtwop{\tmtwop} \hastype  
\bigmplus_{j=1}^n \mset{\larrow{\mtype_j}{\smtype_j}}}
		\end{prooftree}
		\end{equation*}
		where
		\begin{enumerate}
			\item $\sizem{\tderiv'} = \sum_{j=1}^n(\sizem{\tderivtwo_j'} +1) \leq \sum_{j=1}^n(\sizem{\tderivtwo_j} + 1 - 2) = \sizem{\tderiv} - 2n \leq \sizem{\tderiv} - 2$ (where the last inequality holds because $n >0$) and $\size{\tderiv'} =  \sum_{j=1}^n(\size{\tderivtwo_j'} +1) =  \sum_{j=1}^n(\size{\tderivtwo_j} + 1 - 1) \leq \size{\tderiv} - n < \size{\tderiv}$ (the last inequality holds because $n >0$) if $\tmtwo \tomo \tmtwo'$, \ie $\tm \tosolvm \tm'$; 
			\item $\sizem{\tderiv'} = \sum_{j=1}^n(\sizem{\tderivtwo_j'} +1) = \sum_{j=1}^n(\sizem{\tderivtwo_j} + 1) = \sizem{\tderiv}$ and $\size{\tderiv'} = \sum_{j=1}^n(\size{\tderivtwo_j'} +1) < \sum_{j=1}^n(\size{\tderivtwo_j} + 1) = \size{\tderiv}$ if $\tmtwo \toeo \tmtwo'$, \ie $\tm \tosolve \tm'$ (the inequality holds because $n > 0$).
			\qedhere
		\end{enumerate}
	\end{itemize}	
\end{proof}

\begin{theorem}[Solvable correctness]
	\label{thmappendix:solvable-correctness}
	\NoteState{thm:solvable-correctness}
	Let $\concl{\tderiv}{\typctx}{\tm}{\mtype}$ be a derivation with $\mtype$ solvable (\resp $\typctx$ inert and $\mtype$ precisely solvable).
	Then there is an $\solvredsym$-normalizing evaluation $\deriv \colon \tm \tosolv^* \tmtwo$ with $2\sizem{\deriv} + \sizes{\tmtwo} \leq \sizem{\tderiv}$ (\resp $2\sizem{\deriv} + \sizes{\tmtwo} = \sizem{\tderiv}$).
\end{theorem}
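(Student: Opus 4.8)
The plan is to mirror the proof of open correctness (\Cref{thm:open-correctness}): I would argue by induction on the general size $\size{\tderiv}$ of the derivation, carrying the qualitative ($\leq$) and the quantitative ($=$) statements in parallel. The two tools needed are, for the base case, the size-of-solvable-fireballs lemma (\Cref{l:size-solvable-nf}), and, for the inductive step, solving quantitative subject reduction (\Cref{prop:solvable-subject-reduction}); the harmony part of \Cref{prop:properties-solvable-reduction} identifies the $\solvredsym$-normal forms with the solvable fireballs. A key observation that makes the induction go through is that $\tderiv$ and all the derivations produced by subject reduction share the \emph{same} final judgment $\typctx \vdash \cdot \hastype \mtype$, so the hypotheses ``$\mtype$ solvable'' and ``$\typctx$ inert and $\mtype$ precisely solvable'' propagate unchanged along the whole evaluation (and, since precisely solvable implies unitary solvable, the unitary-solvable variant of \Cref{prop:solvable-subject-reduction} is available in the precise case).

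In the base case, if $\tm$ is $\solvredsym$-normal then by \Cref{prop:properties-solvable-reduction} it is a solvable fireball; I would take $\deriv$ to be the empty evaluation, so $\sizem{\deriv} = 0$ and $\tmtwo = \tm$, and conclude $\sizem{\tderiv} \geq \sizes{\tm} = \sizes{\tmtwo} + 2\sizem{\deriv}$ (\resp with equality) directly from \Cref{l:size-solvable-nf}, using that $\mtype$ is solvable (\resp $\typctx$ inert and $\mtype$ precisely solvable). In the inductive step, if $\tm$ is not $\solvredsym$-normal I would fix some step $\tm \tosolv \tm'$; by \Cref{prop:solvable-subject-reduction} there is a derivation $\concl{\tderivtwo}{\typctx}{\tm'}{\mtype}$ with $\size{\tderivtwo} < \size{\tderiv}$, and with $\sizem{\tderivtwo} \leq \sizem{\tderiv} - 2$ (\resp $= \sizem{\tderiv} - 2$) for a $\tosolvm$ step and $\sizem{\tderivtwo} = \sizem{\tderiv}$ for a $\tosolve$ step. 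Since the final judgment is unchanged, the induction hypothesis applies and yields an $\solvredsym$-normalizing $\deriv' \colon \tm' \tosolv^* \tmtwo$ with $2\sizem{\deriv'} + \sizes{\tmtwo} \leq \sizem{\tderivtwo}$ (\resp with equality); prepending the fixed step gives $\deriv \colon \tm \tosolv^* \tmtwo$, and a short computation distinguishing whether the first step is multiplicative ($\sizem{\deriv} = \sizem{\deriv'} + 1$) or exponential ($\sizem{\deriv} = \sizem{\deriv'}$) closes the goal in both the $\leq$ and the $=$ versions.

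The genuinely hard content is not in this argument but in the two results it relies on --- \Cref{prop:solvable-subject-reduction}, whose proof must verify that each $\tosolv$ step preserves the typing and moves the two sizes by the right amounts under the solvable / unitary solvable constraints, and \Cref{l:size-solvable-nf}, which ties $\sizes{\cdot}$ of a solvable fireball to the multiplicative size of its derivation --- both of which I am assuming here. Within the present proof the only point needing care is the invariance of the final judgment under subject reduction, which is exactly what guarantees that inertness of $\typctx$ and precise solvability of $\mtype$ survive all the way to the $\solvredsym$-normal form, so that the equality variant of the theorem propagates from $\tm$ to $\tmtwo$.
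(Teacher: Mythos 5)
Your proposal is correct and follows essentially the same route as the paper's proof: induction on the general size $\size{\tderiv}$, with the base case handled by the size-of-solvable-fireballs lemma (\Cref{l:size-solvable-nf}) and the inductive step by solving quantitative subject reduction (\Cref{prop:solvable-subject-reduction}), prepending the chosen step and splitting on whether it is multiplicative or exponential. Your explicit remarks that the final judgment is preserved (so the solvability/inertness hypotheses propagate) and that precisely solvable implies unitary solvable are exactly the points the paper's proof uses implicitly.
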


\begin{proof}
	By induction on the general size $\size{\tderiv}$ of $\tderiv$.
	
	If $\tm$ is normal for $\tosolv$, then $\tm = \solvnf$ is a solvable fireball. 
	Let $\deriv$ be the empty evaluation  (so $\sizem{\deriv} = 0$), thus $\sizem{\tderiv} \geq \sizes{\solvnf} = \sizes{\solvnf} + 2\sizem{\deriv}$ 
	(\resp $\sizem{\tderiv} = \sizes{\solvnf} = \sizes{\solvnf} + 2\sizem{\deriv}$) by \reflemma{size-solvable-nf}.
	
	Otherwise, $\tm$ is not normal for $\tosolv$ and so $\tm \tosolv \tmtwo$.
	According to solvable subject reduction (\Cref{prop:solvable-subject-reduction}), there is a derivation $\concl{\tderivtwo}{\typctx}{\tmtwo}{\mtype}$ such that $\size{\tderivtwo} < \size{\tderiv}$  and 
	\begin{itemize}
		\item $\sizem{\tderivtwo} \leq \sizem{\tderiv} - 2$ (\resp $\sizem{\tderivtwo} = \sizem{\tderiv} - 2$) if $\tm \tosolvm \tmtwo$,
		\item $\sizem{\tderivtwo} = \sizem{\tderiv}$ if $\tm \tosolve \tmtwo$.
	\end{itemize}
	By \ih, there is a solvable fireball $\solvnf$ and a reduction sequence $\deriv' \colon \tmtwo \tosolv^* \solvnf$ with 
	$2\sizem{\deriv'} + \sizes{\solvnf} \leq \sizem{\tderivtwo}$ (\resp $2\sizem{\deriv'} + \sizes{\solvnf} = \sizem{\tderivtwo}$).
	Let $\deriv$ be the $\solvredsym$-evaluation obtained by concatenating the first step $\tm \tosolv \tmtwo$ and $\deriv'$.
	There are two cases:
	\begin{itemize}
		\item \emph{Multiplicative:} if $\tm \tosolvm \tmtwo$ then $\sizem{\tderiv} \geq \sizem{\tderivtwo} + 2 \geq \sizes{\solvnf} 
		+ 2\sizem{\deriv'} + 2 = \sizes{\solvnf} + 2\sizem{\deriv}$ (\resp $\sizem{\tderiv} = \sizem{\tderivtwo} + 2 = \sizes{\solvnf} + 
		2\sizem{\deriv'} + 2 = \sizes{\solvnf} + 2\sizem{\deriv}$), since $\sizem{\deriv} = \sizem{\deriv'} + 1$.
		\item \emph{Exponential:} if $\tm \tosolve \tmtwo$ then $\sizem{\tderiv} = \sizem{\tderivtwo} \geq \sizes{\solvnf} + 2\sizem{\deriv'} = \sizes{\solvnf} + 2\sizem{\deriv}$ (\resp $\sizem{\tderiv} = \sizem{\tderivtwo} = \sizes{\solvnf} + 2\sizem{\deriv'} = \sizes{\solvnf} + 2\sizem{\deriv}$),  since $\sizem{\deriv} = \sizem{\deriv'}$.
		\qedhere
	\end{itemize} 
\end{proof}

\paragraph*{Completeness}

\begin{lemma}[Precisely solvable typability of solvable fireballs]
	\label{propappendix:precise-solvable-typability-nf}
	\NoteState{prop:precise-solvable-typability-nf}
	If $\tm$ is a solvable fireball, then there is a derivation 
	$\concl{\tderiv}{\typctx}{\tm}{\mtype}$ with $\typctx$ inert and $\mtype$ precisely solvable.
\end{lemma}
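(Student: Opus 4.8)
The plan is a straightforward structural induction on the grammar of solvable fireballs $\solvnf \grameq \itm \mid \la{\var}{\solvnf} \mid \solvnf\esub{\var}{\pitm}$ (\Cref{prop:properties-solvable-reduction}). The only external facts I will need are: the inert case of tight typability of open normal forms (\Cref{p:precise-open-typability-nf-inert}), which for every inert term $\itm$ and every \emph{inert} multi type $\mtype$ produces an inert derivation $\concl{\tderiv}{\typctx}{\itm}{\mtype}$ (so $\typctx$ is inert); the merging of inert contexts (\Cref{rmk:merge-split-inert}); and the observation, highlighted just before the statement, that $\mset{\ground}$ is at the same time an inert multi type and a precisely solvable one.

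\emph{Base case} $\solvnf = \itm$. Apply \Cref{p:precise-open-typability-nf-inert} with the inert multi type $\mtype \defeq \mset{\ground}$: it gives an inert derivation $\concl{\tderiv}{\typctx}{\itm}{\mset{\ground}}$, whose context $\typctx$ is inert and whose type $\mset{\ground}$ is precisely solvable. This is exactly where the ground type $\ground$ is needed: it is the only way to type an arbitrary inert term (in particular a variable) with a \emph{solvable} type, since the open machinery would otherwise only deliver $\emptytype$.

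\emph{Abstraction} $\solvnf = \la{\var}{\solvnftwo}$. By the IH there is $\concl{\tderivtwo}{\typctxtwo}{\solvnftwo}{\mtype}$ with $\typctxtwo$ inert and $\mtype$ precisely solvable. Write $\typctxtwo = \typctx, \var \hastype \mtypethree$ with $\mtypethree \defeq \typctxtwo(\var)$; since $\typctxtwo$ is inert, both $\typctx$ and $\mtypethree$ are inert. Applying rule $\ruleFun$ and then rule $\ruleMany$ with a single premise yields $\concl{\tderiv}{\typctx}{\la{\var}{\solvnftwo}}{\mset{\larrow{\mtypethree}{\mtype}}}$. Here $\typctx$ is inert, and $\mset{\larrow{\mtypethree}{\mtype}}$ is a singleton whose linear component $\larrow{\mtypethree}{\mtype}$ is unitary solvable (as $\mtype$ is, being precisely solvable) and inertly solvable (as $\mtypethree$ is inert and $\mtype$ is inertly solvable), hence it is precisely solvable. \emph{Explicit substitution} $\solvnf = \solvnftwo\esub{\var}{\pitm}$. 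By the IH there is $\concl{\tderivtwo}{\typctxtwo, \var \hastype \mtypetwo}{\solvnftwo}{\mtype}$ with $\typctxtwo, \var \hastype \mtypetwo$ inert (hence $\typctxtwo$ and $\mtypetwo$ inert) and $\mtype$ precisely solvable. Since $\pitm$ is a proper inert term and $\mtypetwo$ an inert multi type, \Cref{p:precise-open-typability-nf-inert} gives an inert derivation $\concl{\tderivthree}{\typctxthree}{\pitm}{\mtypetwo}$ with $\typctxthree$ inert. Rule $\ruleES$ then produces $\concl{\tderiv}{\typctxtwo\mplus\typctxthree}{\solvnftwo\esub{\var}{\pitm}}{\mtype}$; by \Cref{rmk:merge-split-inert} the context $\typctxtwo\mplus\typctxthree$ is inert, and $\mtype$ is unchanged, hence still precisely solvable.

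There is no genuine obstacle here: everything is a direct induction. The only points demanding a bit of care are extracting the multi type of the abstracted variable as an inert multi type in the abstraction case, and invoking the \emph{inert} (rather than merely tight) version of the open-normal-form typability lemma in the variable and explicit-substitution cases — which, as noted, is precisely what forces the presence of the ground type $\ground$ in the type system.
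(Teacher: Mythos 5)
Your proof is correct and follows essentially the same route as the paper's: a structural induction on solvable fireballs, using the inert case of the open typability lemma with the multi type $\mset{\ground}$ (which is both inert and precisely solvable) for the inert base case, and closing the abstraction and explicit-substitution cases exactly as the paper does. No gaps.
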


\begin{proof}		
		By induction on the definition of solvable fireball. 
		Cases:
		\begin{itemize}
			\item \emph{Inert}, \ie $\tm$ is an inert term. According to \Cref{prop:precise-open-typability-nf}.\ref{p:precise-open-typability-nf-inert}, since the precisely solvable multi type $\mset{\ground}$ is also inert, there is a derivation 
			$\concl{\tderiv}{\typctx}{\tm}{\mset{\ground}}$ for some inert type context $\typctx$.
			
			\item \emph{Abstraction}, \ie $\tm = \la{\var}{\solvnf}$ for some solvable fireball $\solvnf$.
			By \ih, there is a derivation $\concl{\tderivtwo}{\typctx, \var \hastype \imtype}{\solvnf}{\psmtypetwo}$ for some unitary precisely solvable multi type $\psmtypetwo$, inert multi type $\imtype$ and inert type context $\typctx$. 
			We can then build the derivation 
			\begin{equation*}
			\tderiv = 
			\begin{prooftree}
			\hypo{}
			\ellipsis{$\tderivtwo$}{\typctx, \var \hastype \imtype \vdash \solvnf \hastype \psmtypetwo}
			\infer1[\footnotesize$\lambda$]{\typctx \vdash \la{\var}\solvnf \hastype \larrow{\imtype}{\psmtypetwo}}
			\infer1[\footnotesize$\ruleManyVal$]{\typctx \vdash \la{\var}\solvnf \hastype \mset{\larrow{\imtype}{\psmtypetwo}}}
			\end{prooftree}
			\end{equation*}
			where $\mtype = \mset{\larrow{\imtype}{\psmtypetwo}}$ is a precisely solvable multi type.
			
			\item \emph{Explicit substitution}, \ie $\tm = {\solvnf}\esub{\var}{\pitm}$ for some solvable fireball $\solvnf$ and proper inert term $\pitm$.
			By \ih, there is a derivation $\concl{\tderivtwo}{\typctxtwo, \var \hastype \imtype}{\solvnf}{\psmtypetwo}$ for some inert type context $\typctxtwo$, inert multi type $\imtype$ and unitary precisely solvable multi type $\smtypetwo$.
			By \Cref{prop:precise-open-typability-nf}.\ref{p:precise-open-typability-nf-inert}, there is a derivation $\concl{\tderivthree}{\typctxthree}{\pitm}{\imtype}$ for some inert type context $\typctxthree$.
			We can build the derivation 
			\begin{equation*}
			\tderiv = 
			\begin{prooftree}
			\hypo{}
			\ellipsis{$\tderivtwo$}{\typctxtwo, \var \hastype \imtype \vdash \solvnf \hastype \smtype}
			\hypo{}
			\ellipsis{$\tderivthree$}{\typctxthree \vdash \pitm \hastype \imtype}
			\infer2[\footnotesize$\ruleES$]{\typctxtwo \uplus \typctxthree \vdash \solvnf \esub{\var}{\pitm} \hastype \psmtype}
			\end{prooftree}
			\end{equation*}
			where $\typctx = \typctxtwo \mplus \typctxthree$ is a inert type context, by \Cref{rmk:merge-split-inert}.
			\qedhere
		\end{itemize}
\end{proof}

\begin{proposition}[Solvable quantitative subject expansion]
	\label{propappendix:solvable-subject-expansion}
	\NoteState{prop:solvable-subject-expansion}
	Assume $\concl{\tderiv'\!}{\typctx}{\tm'\!}{\mtypetwo}$ with $\mtypetwo$ solvable (\resp unitary solvable).
	\begin{enumerate}
		\item \emph{Multiplicative step:} if $\tm \tosolvm \tm'$ then there is a derivation 
$\concl{\tderiv}{\typctx}{\tm}{\mtypetwo}$ with $\sizem{\tderiv'} \leq \sizem{\tderiv}-2$ and $\size{\tderiv'} < \size{\tderiv}$
		(\resp $\sizem{\tderiv'} = \sizem{\tderiv}-2$ and $\size{\tderiv'} = \size{\tderiv} - 1$);
		\item \emph{Exponential step:} if $\tm \tosolve \tm'$ then there is a derivation 
$\concl{\tderiv}{\typctx}{\tm}{\mtypetwo}$ such that
		$\sizem{\tderiv'} = \sizem{\tderiv}$ and $\size{\tderiv'} < \size{\tderiv}$.
	\end{enumerate}
\end{proposition}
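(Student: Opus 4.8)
The plan is to prove the statement as the mirror image of solving quantitative subject reduction (\Cref{prop:solvable-subject-reduction}), replacing open quantitative subject reduction by open quantitative subject expansion (\Cref{prop:weak-subject-expansion}) and the substitution lemma by the removal lemma (\reflemma{anti-substitution}). As in that proof, I would first establish the unitary solvable version and then deduce the merely solvable one from it, the only real divergence lying in the abstraction case.

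For the unitary solvable version I would proceed by induction on the solving context $\solvctx$ in the decomposition $\tm = \solvctxp{\tmtwo} \tosolv \solvctxp{\tmtwo'} = \tm'$, with $\tmtwo$ performing an open step ($\tmtwo \tomo \tmtwo'$ or $\tmtwo \toeo \tmtwo'$). In the base case $\solvctx = \ctxhole$ the step is globally an open step, so the conclusion is exactly \Cref{prop:weak-subject-expansion} (the exponential subcase being where \reflemma{anti-substitution} enters, to reconstruct the $\ruleES$ preceding the substitution); here solvability of $\mtypetwo$ plays no role. In the abstraction case $\solvctx = \la{\var}{\solvctxtwo}$, since $\mtypetwo$ is unitary solvable and an abstraction can only receive an arrow linear type via $\ruleFun$, necessarily $\mtypetwo = \mset{\larrow{\mtypethree}{\smtypetwo}}$ with $\smtypetwo$ unitary solvable; so $\tderiv'$ ends in a $\ruleManyVal$ with a single premise, a $\ruleFun$ over a derivation of $\solvctxtwop{\tmtwo'}$ of type $\smtypetwo$, to which the induction hypothesis applies; re-wrapping the resulting derivation of $\solvctxtwop{\tmtwo}$ with $\ruleFun$ and $\ruleManyVal$ adds $1$ to both $\sizem{\cdot}$ and $\size{\cdot}$ on the two sides, preserving the required relation. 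In the explicit-substitution case $\solvctx = \solvctxtwo\esub{\var}{\tmthree}$ I would peel the last $\ruleES$ rule of $\tderiv'$: its left premise types $\solvctxtwop{\tmtwo'}$ still with $\mtypetwo$, so the induction hypothesis applies and the $\ruleES$ layer contributes the same constant to both sizes on both sides. In the application case $\solvctx = \solvctxtwo\tmthree$ I would peel the last $\ruleAp$: its left premise types $\solvctxtwop{\tmtwo'}$ with $\mset{\larrow{\mtypethree}{\mtypetwo}}$, which is again unitary solvable (a singleton whose right component is), so the induction hypothesis applies and $\ruleAp$ behaves for the sizes like the $\ruleES$ layer.

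For the solvable version the only modification is the abstraction case: now $\mtypetwo = \mset{\larrow{\mtype_1}{\smtype_1}, \dots, \larrow{\mtype_n}{\smtype_n}}$ with $n > 0$ and each $\smtype_j$ solvable, so $\tderiv'$ ends in a $\ruleManyVal$ with $n$ premises, each a $\ruleFun$ over a derivation of $\solvctxtwop{\tmtwo'}$ of type $\smtype_j$. Applying the induction hypothesis to each of them and recombining yields, for a multiplicative step, $\sizem{\tderiv} = \sizem{\tderiv'} + 2n$ and $\size{\tderiv} = \size{\tderiv'} + n$, whence $\sizem{\tderiv'} \leq \sizem{\tderiv} - 2$ and $\size{\tderiv'} < \size{\tderiv}$ because $n \geq 1$; and for an exponential step $\sizem{\tderiv} = \sizem{\tderiv'}$ and $\size{\tderiv} > \size{\tderiv'}$, again because $n \geq 1$. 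The base case and the other inductive cases are literally insensitive to whether $\mtypetwo$ is read as solvable or as unitary solvable.

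The main obstacle, as in the reduction proof, is the abstraction case together with the role of the shape of solvable types: one needs a solvable multi type at an abstraction to carry at least one arrow component, so that every head $\lambda$ traversed by the solving strategy, and every multiplicative step fired under it, is reflected by a subderivation and thereby counted. This is precisely where the hypothesis that $\mtypetwo$ is solvable (\resp unitary solvable) is indispensable: for a term such as $\la{\vartwo}{\delta\delta}$, which is typable only with $\emptytype$, the size bookkeeping would break --- which is why the proposition cannot be stated for arbitrary multi types.
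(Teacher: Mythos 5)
Your proposal is correct and follows essentially the same route as the paper's proof: induction on the solving context, with the base case delegated to open quantitative subject expansion (and the removal lemma), the unitary case handled first, and the general solvable case differing only in the abstraction step where the $n>0$ premises of $\ruleMany$ yield the inequalities. The size bookkeeping you describe (the $\lambda$/$\ruleMany$ wrapper adding $1$ to both sizes, the $\ruleES$ and $\ruleAp$ layers contributing the same constants on both sides) matches the paper's computations exactly.
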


\begin{proof}
	First, we prove the unitary solvable version of the claim (\ie under the hypothesis that $\mtypetwo$ is unitary solvable).
	The proof is by induction on the solvable context $\solvctx$ such that $\tm = \solvctxp{\tmtwo} \tosolv \solvctxp{\tmtwo'} = \tm'$ with $\tmtwo \tomo \tmtwo'$ or $\tmtwo' \toeo \tmtwo'$. 
	Cases for $\solvctx$:
	\begin{itemize}
		\item \emph{Hole context}, \ie{} $\solvctx = \ctxhole$ and $\tm \Rew{\wsym a} \tm'$ with $a \in \{\msym, \esym\}$.
		According to subject expansion for $\tow$ (\Cref{prop:weak-subject-expansion}),
		\begin{itemize}
			\item if $\tm \towm \tm'$ then there is a derivation $\concl{\tderiv}{\typctx}{\tm}{\mtypetwo}$ such that
			$\sizem{\tderiv'} = \sizem{\tderiv}-2$ and $\size{\tderiv'} = \size{\tderiv} - 1$;
			\item if $\tm \towe \tm'$ then there is a derivation $\concl{\tderiv}{\typctx}{\tm}{\mtypetwo}$ such that
			$\sizem{\tderiv'} = \sizem{\tderiv}$ and $\size{\tderiv'} < \size{\tderiv}$.
		\end{itemize}
		Note that in this case the hypothesis that $\mtypetwo$ is a unitary solvable multi type is not used.
		
		\item \emph{Abstraction}, \ie $\solvctx = \la{\var}{\solvctxtwo}$. 
		Then, $\tm = \solvctxp{\tmtwo} = \la{\var}{\solvctxtwop{\tmtwo}} \Rew{\solvredsym a} \la{\var}{\solvctxtwop{\tmtwo'}} = \solvctxp{\tmtwo'} = \tm'$ with $\tmtwo \Rew{\wsym a} \tmtwo'$ and $a \in \{\msym, \esym\}$.
		Since $\mtypetwo$ is unitary solvable by hypothesis, it has the form $\mtypetwo = 
\mset{\larrow{\mtypethree}{\mtype}}$ where $\mtype$ is unitary solvable.
		Therefore, the derivation $\tderiv'$ is necessarily
		\begin{equation*}
		\tderiv' = 
		\begin{prooftree}
		\hypo{}
		\ellipsis{$\tderivtwo'$}{\typctx, \var \hastype \mtypethree \vdash \solvctxtwop{\tmtwo'} \hastype \mtype}
		\infer1[\footnotesize$\lambda$]{\typctx \vdash \la{\var}\solvctxtwop{\tmtwo'} \hastype \larrow{\mtypethree}{\mtype}}
		\infer1[\footnotesize$\ruleManyVal$]{\typctx \vdash \la{\var}\solvctxtwop{\tmtwo'} \hastype 
\mset{\larrow{\mtypethree}{\mtype}}}
		\end{prooftree}
		\end{equation*}
		By \ih, there is a derivation $\concl{\tderivtwo}{\typctx, \var \hastype \mtypethree}{\solvctxtwop{\tmtwo}}{\mtype}$ 
with: 
		\begin{enumerate}
			\item $\sizem{\tderivtwo'} = \sizem{\tderivtwo} - 2$ and $\size{\tderivtwo'} = \size{\tderivtwo} - 1$ if $\tmtwo \tomo \tmtwo'$; 
			\item $\size{\tderivtwo'} = \size{\tderivtwo}$ and $\size{\tderivtwo'} < \size{\tderivtwo}$ if $\tmtwo \toeo 
\tmtwo'$.
		\end{enumerate}
		We can then build the derivation 
		\begin{equation*}
		\tderiv = 
		\begin{prooftree}
		\hypo{}
		\ellipsis{$\tderivtwo$}{\typctx, \var \hastype \mtypethree \vdash \solvctxtwop{\tmtwo} \hastype \mtype}
		\infer1[\footnotesize$\lambda$]{\typctx \vdash \la{\var}\solvctxtwop{\tmtwo} \hastype \larrow{\mtypethree}{\mtype}}
		\infer1[\footnotesize$\ruleManyVal$]{\typctx \vdash \la{\var}\solvctxtwop{\tmtwo} \hastype 
\mset{\larrow{\mtypethree}{\mtype}}}
		\end{prooftree}
		\end{equation*}
		where
		\begin{enumerate}
			\item $\sizem{\tderiv'} = \sizem{\tderivtwo'} +1 = \sizem{\tderivtwo} - 2 +1 = \sizem{\tderiv} - 2$ and 
$\size{\tderiv'} = \size{\tderivtwo'} +1 = \size{\tderivtwo} -1 +1 = \size{\tderiv} - 1$ if $\tmtwo \tomo \tmtwo'$ (\ie if $\tm \tosolvm \tm'$); 
			\item $\sizem{\tderiv'} = \sizem{\tderivtwo'} +1 = \sizem{\tderivtwo} + 1 = \sizem{\tderiv}$ and $\size{\tderiv'} 
= \size{\tderivtwo'} +1 < \size{\tderivtwo} + 1 = \size{\tderiv}$ if $\tmtwo \toeo \tmtwo'$ (\ie if $\tm \tosolve \tm'$).
		\end{enumerate}

		\item \emph{Explicit substitution}, \ie $\solvctx = \solvctxtwo\esub{\var}{\tmthree}$. 
		Then, $\tm = \solvctxp{\tmtwo} = \solvctxtwop{\tmtwo} \esub{\var}{\tmthree} \Rew{\solvredsym a} 
\solvctxtwop{\tmtwo'}\esub{\var}{\tmthree} = \solvctxp{\tmtwo'} = \tm'$ with $\tmtwo \Rew{\wsym a} \tmtwo'$ and $a \in 
\{\msym, \esym\}$.
		The derivation $\tderiv'$ is necessarily
		\begin{equation*}
			\tderiv' = 
			\begin{prooftree}
			\hypo{}
			\ellipsis{$\tderivtwo'$}{\typctxtwo, \var \hastype \mtype \vdash \solvctxtwop{\tmtwo'} \hastype \mtypetwo}
			\hypo{}
			\ellipsis{$\tderivthree$}{\typctxthree \vdash \tmthree \hastype \mtype}
			\infer2[\footnotesize$\Es$]{\typctxtwo \uplus \typctxthree \vdash \solvctxtwop{\tmtwo'} \esub{\var}{\tmthree} \hastype \mtypetwo}
			\end{prooftree}
		\end{equation*}
		where $\typctx = \typctxtwo \uplus \typctxthree$.
		By \ih applied to $\tderivtwo'$, there is a derivation $\concl{\tderivtwo}{\typctxtwo, \var \hastype \mtypetwo}{\solvctxtwop{\tmtwo}}{\mtypetwo}$ with: 
		\begin{enumerate}
			\item $\sizem{\tderivtwo'} = \sizem{\tderivtwo} - 2$ and $\size{\tderivtwo'} = \size{\tderivtwo} - 1$ if $\tmtwo \tomo \tmtwo'$; 
			\item $\sizem{\tderivtwo'} = \sizem{\tderivtwo}$ and $\size{\tderivtwo'} < \size{\tderivtwo}$ if $\tmtwo \toeo \tmtwo'$.
		\end{enumerate}
		We can then build the derivation 
		\begin{equation*}
			\tderiv = 
			\begin{prooftree}
				\hypo{}
				\ellipsis{$\tderivtwo$}{\typctxtwo, \var \hastype \mtype \vdash \solvctxtwop{\tmtwo} \hastype \mtypetwo}
				\hypo{}
				\ellipsis{$\tderivthree$}{\typctxthree \vdash \tmthree \hastype \mtype}
				\infer2[\footnotesize$\Es$]{\typctxtwo \uplus \typctxthree \vdash \solvctxtwop{\tmtwo} \esub{\var}{\tmthree} \hastype \mtypetwo}
			\end{prooftree}
		\end{equation*}
		where $\typctx = \typctxtwo \uplus \typctxthree$ and
		\begin{enumerate}
			\item $\sizem{\tderiv'} = \sizem{\tderivtwo'} + \sizem{\tderivthree} = \sizem{\tderivtwo} + \sizem{\tderivthree} - 2 = \sizem{\tderiv} - 2$ and $\size{\tderiv'} = \size{\tderivtwo'} + \size{\tderivthree} +1 = \size{\tderivtwo} -1 + \size{\tderivthree} +1 = \size{\tderiv} - 1$ if $\tmtwo \tomo \tmtwo'$ (\ie if $\tm \tosolvm \tm'$); 
			\item $\sizem{\tderiv'} = \sizem{\tderivtwo'} + \sizem{\tderivthree} = \sizem{\tderivtwo} + \sizem{\tderivthree} = \sizem{\tderiv}$ and $\size{\tderiv'} = \size{\tderivtwo'} + \size{\tderivthree} +1 < \size{\tderivtwo} + \size{\tderivthree} +1 = \size{\tderiv}$ if $\tmtwo \toeo \tmtwo'$ (\ie if $\tm \tosolve \tm'$).
		\end{enumerate}

		\item \emph{Application}, \ie $\solvctx = \solvctxtwo\tmthree$. 
		Then, $\tm = \solvctxp{\tmtwo} = \solvctxtwop{\tmtwo} \tmthree \Rew{\solvsym a} \solvctxtwop{\tmtwo'} \tmthree = \solvctxp{\tmtwo'} = \tm'$ with $\tmtwo \Rew{\wsym a} \tmtwo'$ and $a \in \{\msym, \esym\}$.
		The derivation $\tderiv'$ is necessarily
		\begin{equation*}
			\tderiv' = 
			\begin{prooftree}
			\hypo{}
			\ellipsis{$\tderivtwo'$}{\typctxtwo \vdash \solvctxtwop{\tmtwo'} \hastype \mset{\larrow{\mtype}{\mtypetwo}}}
			\hypo{}
			\ellipsis{$\tderivthree$}{\typctxthree \vdash \tmthree \hastype \mtype}
			\infer2[\footnotesize$\ruleAp$]{\typctxtwo \uplus \typctxthree \vdash \solvctxtwop{\tmtwo'} \tmthree \hastype \mtypetwo}
			\end{prooftree}
		\end{equation*}
		where $\typctx = \typctxtwo \uplus \typctxthree$.
		By \ih  applied to $\tderivtwo$ (since $\mset{\larrow{\mtype}{\mtypetwo}}$ is a unitary solvable multi type), 
there 
is a derivation $\concl{\tderivtwo}{\typctxtwo}{\solvctxtwop{\tmtwo}}{\mset{\larrow{\mtype}{\mtypetwo}}}$ with: 
		\begin{enumerate}
			\item $\sizem{\tderivtwo'} = \sizem{\tderivtwo} - 2$ and $\size{\tderivtwo'} = \size{\tderivtwo} - 1$ if $\tmtwo \tomo \tmtwo'$; 
			\item $\sizem{\tderivtwo'} = \sizem{\tderivtwo}$ and $\size{\tderivtwo'} < \size{\tderivtwo}$ if $\tmtwo \toeo 
\tmtwo'$.
		\end{enumerate}
		We can then build the derivation 
		\begin{equation*}
		\tderiv = 
		\begin{prooftree}
		\hypo{}
		\ellipsis{$\tderivtwo$}{\typctxtwo \vdash \solvctxtwop{\tmtwo} \hastype \mset{\larrow{\mtype}{\mtypetwo}}}
		\hypo{}
		\ellipsis{$\tderivthree$}{\typctxthree \vdash \tmthree \hastype \mtype}
		\infer2[\footnotesize$\ruleAp$]{\typctxtwo \uplus \typctxthree \vdash \solvctxtwop{\tmtwo} \tmthree \hastype \mtypetwo}
		\end{prooftree}
		\end{equation*}
		where $\typctx = \typctxtwo \uplus \typctxthree$ and
		\begin{enumerate}
			\item $\sizem{\tderiv'} = \sizem{\tderivtwo'} + \sizem{\tderivthree} +1 = \sizem{\tderivtwo} -2 + \sizem{\tderivthree} +1 = \sizem{\tderiv} - 2$ 
			and $\size{\tderiv'} = \size{\tderivtwo'} + \size{\tderivthree} +1 = \size{\tderivtwo} - 1 + \size{\tderivthree} +1 = \size{\tderiv} - 1$ if $\tmtwo \tomo \tmtwo'$ (\ie if $\tm \tosolvm \tm'$); 
			\item $\size{\tderiv'} = \size{\tderivtwo'} + \size{\tderivthree} +1 = \size{\tderivtwo} + \size{\tderivthree} +1 
= \size{\tderiv}$ and $\size{\tderiv'} = \size{\tderivtwo'} + \size{\tderivthree} +1 < \size{\tderivtwo} + 
\size{\tderivthree} +1 = \size{\tderiv}$ if $\tmtwo \toeo \tmtwo'$ (\ie if $\tm \tosolve \tm'$).
		\end{enumerate}
	\end{itemize}

This completes the proof for the unitary solvable case.

In the solvable case (\ie under the weaker hypothesis that  $\mtypetwo$ is a solvable multi type), the proof is 
analogous to the one for unitary solvable, except for the \emph{Abstraction} case.
Indeed, in the base case (\emph{Hole context}) unitary solvability does not play any role, and the other cases follow 
from the \ih in a way analogous to unitary solvable.
Let us see the only substantially different case: 
\begin{itemize}
	\item \emph{Abstraction}, \ie $\solvctx = \la{\var}{\solvctxtwo}$. 
	So, $\tm = \solvctxp{\tmtwo} = \la{\var}{\solvctxtwop{\tmtwo}} \Rew{\solvredsym a} \la{\var}{\solvctxtwop{\tmtwo'}} = 
	\solvctxp{\tmtwo'} = \tm'$ with $\tmtwo \Rew{\wsym a} \tmtwo'$ and $a \in \{\msym, \esym\}$.
	Since $\mtypetwo$ is a solvable multi type by hypothesis, it has the form $\mtypetwo = \mset{\larrow{\mtype_1}{\mtypetwo_1}, \dots, \larrow{\mtype_n}{\mtypetwo_n}}$ for some $n > 0$, where $\mtypetwo_j$ is solvable for all $1 \leq j \leq n$.
	Thus, the derivation $\tderiv$ is necessarily
	\begin{equation*}
	\tderiv' = 
	\begin{prooftree}[separation=1em]
	\hypo{}
	\ellipsis{$\tderivtwo_j'$}{\typctx_j, \var \hastype \mtype_j \vdash \solvctxtwop{\tmtwo'} \hastype \mtypetwo_j}
	\infer1[\footnotesize$\lambda$]{\typctx_j \vdash \la{\var}\solvctxtwop{\tmtwo'} \hastype 
		\larrow{\mtype_j}{\mtypetwo_j}}
	\delims{ \left( }{ \right)_{1 \leq j \leq n} }
	\infer1[\footnotesize$\ruleManyVal$]{ \bigmplus_{j=1}^n \typctx_{j} \vdash \la{\var}\solvctxtwop{\tmtwo'} \hastype  
		\bigmplus_{j=1}^n \mset{\larrow{\mtype_j}{\mtypetwo_j}}}
	\end{prooftree}
	\end{equation*}
	For all $1 \leq j \leq n$, by \ih, there is a derivation $\concl{\tderivtwo_j}{\typctx_j, \var \hastype \mtype_j}{\solvctxtwop{\tmtwo}}{\mtypetwo_j}$ with: 
	\begin{enumerate}
		\item $\sizem{\tderivtwo_j'} \leq \sizem{\tderivtwo_j} - 2$ and $\size{\tderivtwo_j'} < \size{\tderivtwo_j}$ 
		if $\tmtwo \tomo \tmtwo'$ ; 
		\item $\sizem{\tderivtwo_j'} = \sizem{\tderivtwo_j}$ and $\size{\tderivtwo_j'} < \size{\tderivtwo_j}$ if $\tmtwo \toeo \tmtwo'$.
	\end{enumerate}
	We can then build the derivation 
	\begin{equation*}
	\tderiv = 
	\begin{prooftree}[separation=1em]
	\hypo{}
	\ellipsis{$\tderivtwo_j$}{\typctx_j, \var \hastype \mtype_j \vdash \solvctxtwop{\tmtwo} \hastype \mtypetwo_i}
	\infer1[\footnotesize$\lambda$]{\typctx_j \vdash \la{\var}\solvctxtwop{\tmtwo} \hastype 
		\larrow{\mtype_j}{\mtypetwo_j}}
	\delims{ \left( }{ \right)_{1 \leq j \leq n} }
	\infer1[\footnotesize$\ruleManyVal$]{ \bigmplus_{j=1}^n \typctx_{j} \vdash \la{\var}\solvctxtwop{\tmtwo} \hastype  
		\bigmplus_{j=1}^n \mset{\larrow{\mtype_j}{\mtypetwo_j}}}
	\end{prooftree}
	\end{equation*}
	where
	\begin{enumerate}
		\item $\sizem{\tderiv'} = \sum_{j=1}^n(\sizem{\tderivtwo_j'} +1) \leq \sum_{j=1}^n(\sizem{\tderivtwo_j} + 1 - 2) = \sizem{\tderiv} - 2n \leq \sizem{\tderiv} - 2$ (where the last inequality holds because $n >0$) and $\size{\tderiv'} =  \sum_{j=1}^n(\size{\tderivtwo_j'} +1) =  \sum_{j=1}^n(\size{\tderivtwo_j} + 1 - 1) \leq \size{\tderiv} - n < \size{\tderiv}$ (the last inequality holds because $n >0$) if $\tmtwo \tomo \tmtwo'$, \ie $\tm \tosolvm \tm'$; 
		\item $\sizem{\tderiv'} = \sum_{j=1}^n(\sizem{\tderivtwo_j'} +1) = \sum_{j=1}^n(\sizem{\tderivtwo_j} + 1) = \sizem{\tderiv}$ and $\size{\tderiv'} = \sum_{j=1}^n(\size{\tderivtwo_j'} +1) < \sum_{j=1}^n(\size{\tderivtwo_j} + 1) = \size{\tderiv}$ if $\tmtwo \toeo \tmtwo'$, \ie $\tm \tosolve \tm'$ (the inequality holds because $n > 0$).
		\qedhere
	\end{enumerate}
\end{itemize}	
\end{proof}

\begin{theorem}[Solvable completeness]
	\label{thmappendix:solvable-completeness}
	\NoteState{thm:solvable-completeness}
	Let $\deriv \colon \tm \tosolv^* \tmtwo$ be an $\solvredsym$-normalizing evaluation. 
	Then there is a derivation $\concl{\tderiv}{\typctx}{\tm}{\mtypetwo}$ with $\typctx$ inert, $\mtypetwo$ precisely solvable and $2\sizem{\deriv} + \sizes{\tmtwo} = \sizem{\tderiv}$.
\end{theorem}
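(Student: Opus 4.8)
The plan is to prove the statement by induction on the length $\size{\deriv}$ of the given $\solvredsym$-normalizing evaluation, following exactly the two-stage scheme already used for open completeness (\Cref{thm:open-completeness}) and solving correctness (\Cref{thm:solvable-correctness}): typability of normal forms as the base case, and quantitative subject expansion as the inductive step. Since $\tosolv$ is diamond (\Cref{prop:properties-solvable-reduction}), the quantities $\size{\deriv}$, $\sizem{\deriv}$ and the normal form $\tmtwo$ do not depend on the chosen normalizing evaluation, so working with the given $\deriv$ loses no generality.

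For the base case $\size{\deriv} = 0$, we have $\sizem{\deriv} = 0$ and $\tm = \tmtwo$ is $\solvredsym$-normal, hence a solvable fireball by \Cref{prop:properties-solvable-reduction}. Precisely solvable typability of solvable fireballs (\Cref{prop:precise-solvable-typability-nf}) gives a derivation $\concl{\tderiv}{\typctx}{\tm}{\mtypetwo}$ with $\typctx$ inert and $\mtypetwo$ precisely solvable, and then the ``moreover'' part of the size of solvable fireballs (\Cref{l:size-solvable-nf}), applicable precisely because $\typctx$ is inert and $\mtypetwo$ is precisely solvable, yields $\sizem{\tderiv} = \sizes{\tmtwo} = 2\sizem{\deriv} + \sizes{\tmtwo}$. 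For the inductive step, I would decompose $\deriv$ as a first step $\tm \tosolv \tmthree$ followed by an evaluation $\deriv' \colon \tmthree \tosolv^* \tmtwo$ with $\size{\deriv} = 1 + \size{\deriv'}$, obtain by the induction hypothesis a derivation $\concl{\tderivtwo}{\typctx}{\tmthree}{\mtypetwo}$ with $\typctx$ inert, $\mtypetwo$ precisely solvable and $\sizem{\tderivtwo} = 2\sizem{\deriv'} + \sizes{\tmtwo}$, and then apply the unitary-solvable formulation of quantitative solving subject expansion (\Cref{prop:solvable-subject-expansion}) — which is licit since a precisely solvable type is in particular unitary solvable. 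This produces a derivation $\concl{\tderiv}{\typctx}{\tm}{\mtypetwo}$ with the very same context $\typctx$ (still inert) and type $\mtypetwo$ (still precisely solvable), with $\sizem{\tderiv} = \sizem{\tderivtwo} + 2$ for a multiplicative step and $\sizem{\tderiv} = \sizem{\tderivtwo}$ for an exponential step; combining with $\sizem{\deriv} = \sizem{\deriv'} + 1$ (multiplicative) or $\sizem{\deriv} = \sizem{\deriv'}$ (exponential) closes the equality $2\sizem{\deriv} + \sizes{\tmtwo} = \sizem{\tderiv}$ in both cases.

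The genuinely hard work is already encapsulated in the auxiliary results invoked above, so the main obstacle here is not a computation but a matter of keeping the right invariant: one must ensure that the predicate ``$\typctx$ inert and $\mtypetwo$ precisely solvable'' is both established at the leaf (that is the content of \Cref{prop:precise-solvable-typability-nf}) and transported backwards along each $\tosolv$-step. The latter is immediate once one observes that subject expansion returns a derivation with exactly the same final judgment, and that ``precisely solvable'' is unitary solvable, which is exactly the hypothesis under which the \emph{exact} (rather than merely inequational) size bookkeeping of \Cref{prop:solvable-subject-expansion} is available. No further case analysis on the structure of $\tm$ is needed at the level of this theorem, since that analysis has been pushed into the subject expansion proposition.
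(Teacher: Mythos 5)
Your proof is correct and follows essentially the same route as the paper's: induction on $\size{\deriv}$, with \Cref{prop:precise-solvable-typability-nf} plus the exact case of \Cref{l:size-solvable-nf} at the base, and the unitary-solvable (exact) case of \Cref{prop:solvable-subject-expansion} for the inductive step, observing that precisely solvable implies unitary solvable and that subject expansion preserves the final judgment. The added remark on the diamond property is harmless but not needed, since the statement is about the given evaluation $\deriv$.
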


\begin{proof}
	By induction on the length $\size{\deriv}$ of the $\solvredsym$-evaluation $\deriv$.
	
	If $\size{\deriv} = 0$ then $\sizem{\deriv} = 0$ and $\tm = \tmtwo$ is $\solvredsym$-normal and so a solvable fireball.
	By typability of solvable fireballs (\Cref{prop:precise-solvable-typability-nf}), there is a derivation $\concl{\tderiv}{\typctx}{\tm}{\mtypetwo}$ with $\typctx$ inert and $\mtypetwo$ precisely solvable.
	Thus, $\sizem{\tderiv} = \sizes{\tmtwo} = \sizes{\tmtwo} + 2\sizem{\deriv}$ by \Cref{l:size-solvable-nf}.
	
	Otherwise, $\size{\deriv} > 0$ and $\deriv$ is the concatenation of a first step $\tm \tosolv \tmthree$ and an evaluation $\deriv' \colon \tmthree \tosolv^* \tmtwo$, with $\size{\deriv} = 1 + \size{\deriv'}$.
	By \ih, there is a derivation $\concl{\tderivtwo}{\typctx}{\tmthree}{\mtypetwo}$ with $\typctx$ inert, $\mtypetwo$ precisely solvable and $\sizem{\tderivtwo} = \sizes{\tmtwo} + 2\sizem{\deriv'}$. 
	By solvable subject expansion (\Cref{prop:solvable-subject-expansion}), there is a derivation 	$\concl{\tderiv}{\typctx}{\tm}{\mtypetwo}$ with 
	\begin{itemize}
		\item $\sizem{\tderiv} = \sizem{\tderivtwo} + 2 = \sizes{\tmtwo} + 2\sizem{\deriv'} + 2 = \sizes{\tmtwo} + 2\sizem{\deriv}$ 
		if $\tm \tosolvm \tmthree$, since $\sizem{\deriv} = \sizem{\deriv'} + 1$;
		\item $\sizem{\tderiv} = \sizem{\tderivtwo} = \sizes{\tmtwo} + 2\sizem{\deriv'} = \sizes{\tmtwo} + 2\sizem{\deriv}$ if $\tm \tosolve \tmthree$, since $\sizem{\deriv} = \sizem{\deriv'}$.
		\qedhere
	\end{itemize} 
\end{proof}

%
%




\end{document}